\def\fontsettingup{1} 
\newcolumntype{L}[1]{>{\raggedright\arraybackslash}p{#1}}
\newcolumntype{C}[1]{>{\centering\arraybackslash}m{#1}}
\newcolumntype{R}[1]{>{\raggedleft\arraybackslash}p{#1}}
\renewcommand{\epsilon}{\varepsilon}
\declaretheorem[name=Theorem,numberwithin=section]{theorem}
\declaretheorem[sibling=theorem,name=Lemma]{lemma}
\declaretheorem[sibling=theorem,name=Corollary]{corollary}
\declaretheorem[sibling=theorem,name=Observation]{observation}
\declaretheorem[sibling=theorem,name=Condition]{condition}
\declaretheorem[sibling=theorem,name=Fact]{fact}
\declaretheorem[sibling=theorem,name=Property]{property}
\declaretheorem[name=Claim,numbered=no]{claim*}
\declaretheorem[name=Remark,style=remark,numbered=no]{remark*}
\declaretheorem[sibling=theorem,name=Definition,style=definition]{definition}
\declaretheorem[sibling=theorem,name=Remark,style=remark]{remark}
\renewcommand{\Pr}[2][]{ \ifthenelse{\isempty{#1}}
  {{\textnormal{Pr}}\left[#2\right]} {{\textnormal{Pr}}_{#1}\left[#2\right]} }
\newcommand{\Var}[2][]{ \ifthenelse{\isempty{#1}}
  {{\textnormal{Var}}\left[#2\right]} {{\textnormal{Var}}_{#1}\left[#2\right]} }
\newcommand{\Cov}[2][]{ \ifthenelse{\isempty{#1}}
  {{\textnormal{Cov}}\left[#2\right]} {{\textnormal{Cov}}_{#1}\left[#2\right]} }
\newcommand{\E}[2][]{ \ifthenelse{\isempty{#1}}
  {{\mathbb{E}}\left[#2\right]} {{\mathbb{E}}_{#1}\left[#2\right]} }
\newcommand{\norm}[1]{\left\Vert#1\right\Vert}
\newcommand{\abs}[1]{\left\vert#1\right\vert}
\newcommand{\set}[1]{\left\{#1\right\}}
\newcommand{\tuple}[1]{\left(#1\right)} 
\newcommand{\inbr}[1]{\left[#1\right]}
\newcommand{\tp}{\tuple}
\newcommand{\defeq}{\triangleq}
\newcommand{\supp}{\mathrm{supp}}
\renewcommand{\d}{\,\-d}
\newcommand{\dmu}[1]{\mu^{\-{GD}}_{T, #1, \*b}}
\def\*#1{\bm{#1}} 
\newcommand{\dpsi}[1]{\psi^{\-{GD}}_{T, #1, \*b}}
\def\*#1{\bm{#1}} 
\def\+#1{\mathcal{#1}} 
\def\-#1{\mathrm{#1}} 
\def\=#1{\mathbb{#1}} 
\def\`#1{\mathfrak{#1}}
\newcommand{\yxtodo}[1]{\todo[color=gray!30]{yyx: #1}}
\def\prob#1#2#3{\goodbreak\begin{list}{}{\labelwidth\z@ \itemindent-\leftmargin
                        \itemsep\z@  \topsep6\p@\@plus6\p@
                        \let\makelabel\descriptionlabel}
                \item[\it Name]#1
               \item[\it Instance]                #2
                \item[\it Output]#3
                \end{list}}
\newcommand{\true}{\textnormal{\textsf{True}}}
\newcommand{\false}{\textnormal{\textsf{False}}}
\newcommand{\upd}{\textnormal{\textsf{pred}}}
\newcommand{\ts}{\textnormal{\textsf{TS}}}
\newcommand{\vbl}{\mathsf{vbl}}
\newcommand{\vbad}{V^{\-{bad}}}
\newcommand{\cbad}{\+C^{\-{bad}}}
\newcommand{\tbad}{\+T^{\-{bad}}}
\newcommand{\ebad}{E^{\-{bad}}}
\newcommand{\sbad}{\+S^{\-{bad}}}
\newcommand{\dist}{\mathrm{dist}}
\newcommand{\Erdos}{Erd\H{o}s\xspace}
\newcommand{\Lovasz}{Lov\'asz\xspace}
\newcommand{\BadTS}[1]{\textnormal{\textsf{BadTS}}\tuple{#1}}
\newcommand{\decompositionScheme}{$\*b$}
\newcommand{\psiAdaptive}[2]{\psi_{#1}^{#2, \bot}}
\newcommand{\conditionalDecayConstant}{\widehat{N}}
\newcommand{\conditionalTriangleConstant}{\widehat{M}}
\newcommand{\kcnfPFO}{0^\blacktriangle}
\newcommand{\kcnfPFI}{1^\blacktriangle}
\newcommand{\kcnfPF}{\blacktriangle}
\newcommand{\coF}{\blacktriangle}
\newcommand{\coS}{\star}
\newcommand{\coRHO}{608 q \Delta^2 k^5}
\newcommand{\coGamma}{\frac{1}{16 \Delta^2 k^5}}
\newcommand{\complexColorSet}[1]{f^{-1}_{#1}(1^{\coF})}
\newcommand{\cnfGamma}{\frac{1}{2000\Delta^2 k^5}}
\newcommand{\cnfS}{2000\Delta^2 k^5}
\newcommand{\fsRHO}{40 \Delta^2 (k+1)^5}
\newcommand{\fsGamma}{\frac{1}{32\Delta^2 (k+1)^5}}
\newcommand{\dependencyGraph}[1]{G^{\textnormal{dep}}_{#1}}
\newcommand{\witnessGraph}{G_{\Phi}^{c^*}}
\newcommand{\witnessTreeRoot}{(\ts(\vbl(c^*),0), c^*)}
\newcommand{\markedK}{k_{\textnormal{mk}}}
\newcommand{\unmarkedK}{k_{\textnormal{umk}}}
\newcommand{\coloringOneRandomVariable}{\+X_{\textnormal{co}}}
\newcommand{\coloringCondition}{q\ge 700\Delta^{\frac{5}{k-10}}}
\newcommand{\kcnfCondition}{k\ge 12\log(\Delta) + 24\log(k) + 57}
\newcommand{\connectedComponentSize}{(\Delta k + 1)\log (4\Delta |V|^2)}
\newcommand{\connectedComponentVarSize}{k(\Delta k + 1)\log (4\Delta |V|^2)}
\newcommand{\lcltConcentrationProb}{C_{\text{\ref{lemma:lclt-J-is-good}}}}
\newcommand{\lcltExpectation}{C_{\text{\ref{lemma:lclt-concentration-expectation-lower-bound}}}}
\newcommand{\lcltHighPhaseConstant}{C_{\text{\ref{lemma:lclt-high-fourier-phases}}}}
\newcommand{\lcltConditionVioProb}{p_{\-{cond}}}
\newcommand{\lcltProjectedVioProb}{p_{\-{proj}}}
\newcommand{\lcltCltTail}{\xi(q, q^*, \lambda, \Delta, k, N^*)}
\title{Zero-free regions and concentration inequalities for hypergraph colorings in the local lemma regime}
\author{Jingcheng Liu, Yixiao Yu}
\address{State Key Laboratory for Novel Software Technology, New Cornerstone Science Laboratory, Nanjing University, 163 Xianlin Avenue, Nanjing, Jiangsu Province, 210023, China. \textnormal{E-mail: \texttt{liu@nju.edu.cn}, \texttt{yixiaoyu@smail.nju.edu.cn}}. JL is supported by the National Natural Science Foundation of China under Grant No. 62472212.}
\begin{document}

\begin{abstract}
We show that for $q$-colorings in $k$-uniform hypergraphs with maximum degree $\Delta$, if $k\ge 50$ and $\coloringCondition$, there is a ``Lee-Yang'' zero-free strip around the interval $[0,1]$ of the partition function, which includes the special case of uniform enumeration of hypergraph colorings.
As an immediate consequence, we obtain Berry-Esseen type inequalities for hypergraph $q$-colorings under such conditions, demonstrating the asymptotic normality for the size of any color class in a uniformly random coloring. Our framework also extends to the study of ``Fisher zeros'', leading to deterministic algorithms for approximating the partition function in the zero-free region.

Our approach is based on extending the recent work of [Liu, Wang, Yin, Yu, STOC 2025] to general constraint satisfaction problems (CSP).
We focus on partition functions defined for CSPs by introducing external fields to the variables. A key component in our approach is a projection-lifting scheme, which enables us to essentially lift information percolation type analysis for Markov chains from the real line to the complex plane.
 Last but not least, we also show a Chebyshev-type inequality under the sampling LLL condition for atomic CSPs.

\end{abstract}
\maketitle
\tableofcontents
\setcounter{page}{0}
\clearpage

\section{Introduction}
The location of complex zeros of partition functions is intimately connected to the study of phase transitions in statistical physics, dating back to the famous Lee-Yang program~\cite{Lee1952StatisticalTO}. Since then, it has also found applications in combinatorics~\cite{Heilmann1972TheoryOM,wagner2009weighted}, Chernoff bounds~\cite{kyng2018matrix}, asymptotic normality~\cite{godsil1981matching}, central limit theorems~\cite{lebowitz2016central,michelen2024central,jain2022approximate}, and in designing efficient approximate counting algorithms~\cite{barvinok2016combinatorics,patel2017deterministic,Liu2017TheIP}. 
Establishing a zero-free region has therefore attracted a lot of attention, and tight zero-free results around the real axis have been obtained for numerous graph polynomials, notably the independent set polynomial~\cite{Scott2003TheRL,peter2019conjecture} and monomer-dimer model~\cite{Heilmann1972TheoryOM}. The case of hypergraphs is much less understood, especially for spin systems on $k$-uniform hypergraphs.

Hypergraph $q$-colorings, introduced by \Erdos{}~\cite{erdos1963combinatorial}, have long been one of the most fruitful testbeds in probabilistic combinatorics. To name a few, they featured in the \Lovasz local lemma (LLL)~\cite{erdHos1975problems}, the nibble technique~\cite{rodl1985packing}, the container method~\cite{saxton2015hypergraph,balogh2018method}, and frequented the textbook by Alon and Spencer~\cite{alon2016probabilistic}. 
A standard application of LLL shows that for $k$-uniform  hypergraphs with maximum degree $\Delta$, there is a proper $q$-coloring as soon as $q> C \Delta^{\frac{1}{k-1}}$ for some constant $C$, and such coloring can be found by efficient algorithms~\cite{beck1991algorithmic,moser2009constructive,moser2010constructive}. For simple hypergraphs (two hyperedges can share at most one vertex), Frieze et al.~\cite{frieze2011randomly,frieze2017randomly} showed that standard Glauber dynamics mixes rapidly provided that $q\ge \max\set{C_k \ln n, 500k^3 \Delta^{\frac{1}{k-1}}}$. For non-simple hypergraphs, however, the space of $q$-colorings can become disconnected~\cite{frieze2011randomly}, causing Glauber dynamics to fail. Such a connectivity barrier is not unique to hypergraph coloring but appears broadly in CSPs under local lemma like conditions.

More recently, there is a beautiful line of work studying a counting/sampling variant of LLL for CSPs.
Formally, one considers a uniformly random assignment to the CSP: let $p$ be the \emph{maximum violation probability} over all constraints, and $D$ be the maximum degree of the dependency graph (the number of constraints that share variables with a given constraint).
While the classical LLL~\cite{erdHos1975problems,She85} says that the satisfiability threshold is asymptotically $\mathrm{e} p D \le 1$,  numerous exciting recent developments suggest that $p D^c \lesssim 1$ for some constant $c\ge 2$ would be sufficient so that uniform sampling of solutions under LLL can be done efficiently.
Examples such as hypergraph independent sets~\cite{hermon2019rapid,qiu2022perfect},  $k$-CNF~\cite{moitra2019approximate,feng2021fast}, hypergraph $q$-colorings~\cite{guo2018counting}, random CNF formulas~\cite{galanis2022fast,he2023improved}, as well as general CSP instances~\cite{feng2021sampling,jain2021sampling,Vishesh21towards,he2021perfect,feng2023towards}, have been considered.
Two notable cases are $k$-CNF and hypergraph $q$-colorings, representative for boolean CSP and for CSPs with sufficiently large domains, respectively.
A recent breakthrough result by Wang and Yin~\cite{wang2024sampling} established the sampling LLL under the condition $p D^{2+o_q(1)} \lesssim 1$ for domain size $q$. This is asymptotically optimal for large enough $q$, as NP-hardness has been shown by~\cite{bezakova2019approximation,galanis2023inapproximability} for $pD^2 \gtrsim 1 $.

In this paper, we study ``Lee-Yang zeros'' of the hypergraph $q$-coloring partition functions in the sampling LLL regime.
An interesting prospect from establishing zero-free regions is the concentration phenomenon and central limit theorems. Specifically, in the case of hypergraph $q$-colorings, we are interested in the size of any color class under a uniformly random coloring.
To some extent, one could view the LLL conditions as a relaxation of the independence assumption, but does the bounded dependence structure of LLL give rise to concentration? 
Near the existence threshold of $\mathrm{e}pD \le 1$, concentration would be rare to find, because many random CSPs exhibit a rigid phenomenon with high probability near the satisfiability threshold, where nearly all the variables are frozen (see, e.g.,~\cite{achlioptas2008algorithmic}). We show that under the stronger condition of $p D^{2+o_q(1)} \lesssim 1/q$, a Chebyshev type inequality can still be derived for atomic CSPs: in a uniformly random satisfying assignment, the size of any color class lies within $1\pm O(\frac1n)$ times its expectation. 
This leads to a natural question, can one also prove central limit theorems (CLT) and normal approximations  in similar sampling LLL regimes? CLTs under various local dependence structures have also been studied before. Indeed, if one is only concerned about the number of violated constraints in a random assignment, then Stein's method would already suffice (e.g.,~\cite{chen2004normal,chatterjee2008new}). To obtain a CLT for the size of color classes, we follow a different route through complex zero-freeness~\cite{lebowitz2016central,michelen2024central,jain2022approximate,liu2024phase}.
Specifically, we introduce ``external fields'' to the variables of a CSP to form a partition function and show that in a complex-plane neighborhood of the point $1$, the partition function does not vanish. 
Then, non-asymptotic central limit theorems (CLT) and local central limit theorems can be derived analytically  in a neighborhood of $1$.
CLT and local CLT are desirable in many applications as they combine both concentration aspects (the random variable does not deviate too much from the mean) and anti-concentration aspects (the random variable is non-degenerate and can still fluctuate like a Gaussian).
We also mention in passing that while concentration bounds for the number of resampling steps in the Moser-Tardos algorithm have also been established~\cite{kolipaka2011moser,achlioptas2016random,haeupler2017parallel}, these do not seem to translate to the concentration of the LLL distribution itself in general, unless for extremal LLL instances.

Classical techniques for establishing zero-free regions include the Lee-Yang approach~\cite{Lee1952StatisticalTO} and its extensions such as \cite{Asano1970LeeYangTA,ruelle1971extension}, as well as the contraction method~\cite{peter2019conjecture,liu2019correlation,Shao2019ContractionAU} based on self-avoiding walk tree construction~\cite{godsil1981matchings,Scott2003TheRL} and the decay of correlation~\cite{weitz06counting,bandyopadhyay08counting}.
More recently, there have also been several works extending these approaches beyond pairwise interactions. 
Notably, \cite{barvinok2024zeros} studied \emph{Fisher} zeros under many-body \emph{soft} interactions, where one relaxes hard constraints and introduces a ``penalty factor'' for every violated constraint instead. Then, one studies complex zeros in terms of the ``penalty factor'' around the point $1$. Partition functions in terms of Fisher zeros are better suited for studying optimization landscapes in Max-CSPs as opposed to solutions in CSPs. This is mainly due to relaxing  with a ``penalty factor'' around the point $1$ (corresponding to ``almost no interactions''), and it is unclear how to extract useful information about the solution space of the original CSP (corresponding to ``maximum interactions'').
To encode the solutions of the original CSP, the standard approach is to consider the \emph{Lee-Yang} zeros, where  an external field is introduced to the variables. Then, a neighborhood around the point $1$ still corresponds to ``almost-uniform random distribution over CSP solutions''. Two notable works along the same line concern the hypergraph independent set models. \cite{galvin2024zeroes} studied zero-free disk mainly around the origin, while~\cite{liu2024phase} showed a zero-free strip around the real interval $[0,1]$ under the asymptotically optimal sampling LLL condition $pD^2 \lesssim 1$. Notably, \cite{liu2024phase} introduced the framework of complex extensions of the Markov chain for locating complex zeros of the hypergraph independent set partition function, which matches, up to lower order factors, where Glauber dynamics mixes rapidly~\cite{hermon2019rapid}. Lee-Yang zeros will also be the focus of our work, while our approach can be extended to the study of Fisher zeros around the entire interval of $[0,1]$.

Indeed, the framework of complex extensions of Markov chain~\cite{liu2024phase} is a promising starting point for studying Lee-Yang zeros of CSPs under LLL-like conditions. An obvious obstacle, however, is that CSPs under LLL-like conditions may have disconnected solution space, ruling out standard Glauber dynamics Markov chains (see, e.g., \cite[Section 5]{frieze2011randomly} for the case of hypergraph $q$-coloring). 
Existing Markov chain approaches crucially rely on Glauber dynamics on a projection of the original instance, in which every step of the Glauber dynamics is (approximately) simulated with a rejection sampling~\cite{feng2021sampling,jain2021sampling}. 
By applying the complex extension of Glauber dynamics on the projected CSP instance, one only gains control on the projected measure, while proving zero-freeness requires lifting it back to the original measure. 
For hypergraph $q$-colorings, we are able to do both projection and lifting without loss, and also carry out the analysis of the complex extension of Glauber dynamics on the projection, leading to a zero-free region around the point $1$ of any given color under LLL-like conditions.
For more general CSPs like $k$-CNF, we are currently able to show zero-freeness around the point $1$ only for a large sub-instance. We leave as an interesting open problem how to get zero-freeness on the original instance without requiring projection. 

\subsection{Our results}
We formally state our results next. We focus on hypergraph $q$-colorings, while we also remark that our framework in~\Cref{section:CSP} is applicable to more general CSPs that are \emph{atomic} (that is, each constraint can be violated by at most one assignment to its variables). For general CSPs, we turn them into atomic CSPs by adding atomic constraints. 


Let $H=(V, \+E)$ be a $k$-uniform hypergraph with maximum degree $\Delta$.  A coloring on the vertices of $H$ is said to be \emph{proper} if every hyperedge is not monochromatic.
 To study Lee-Yang zeros, 
we introduce a complex external field on the color $1$. 
The choice of color $1$ is arbitrary due to symmetry.
Formally, the partition function is given by:
\[
Z^{\-{co}}_H(\lambda)\defeq\sum_{\sigma\in[q]^V: \sigma \text{ is a proper coloring in }H} \lambda^{ \abs{ \sigma^{-1}\left(
1\right)}} , \qquad \hbox{where }\sigma^{-1}\left(
1\right)\defeq\set{v\in V\mid \sigma(v) = 1}. 
\]
\begin{restatable}{theorem}{colorOneSpecialZeroIntro}
\label{theorem:coloring-one-special-zero-free-intro}
Fix any integers $k\ge 50$ and $\coloringCondition$.
Let $H=(V, \+E)$ be a $k$-uniform hypergraph with maximum degree $\Delta$.
Then $Z^{\-{co}}_H(\lambda)\neq 0$ for any $\lambda\in \=C$ satisfying $\exists \lambda_c\in[0,1]$, such that $|\lambda - \lambda_c|\le \coGamma$. 
\end{restatable}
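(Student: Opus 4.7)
The plan is to reduce the zero-free claim for $Z^{\mathrm{co}}_H$ to a contraction-type statement for a complex extension of a Markov chain on a projected instance, following the projection-lifting framework introduced in Liu-Wang-Yin-Yu (STOC 2025) for hypergraph independent sets and extended here (as advertised in the introduction) to general atomic CSPs.

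First I would set up the marking/projection scheme: partition the colors $[q]$ into a ``simple'' part and a ``complex'' part, marking vertices according to whether their color lies in the complex part, in such a way that every hyperedge retains sufficiently many unmarked vertices (on the order of $k/10$). Marginalizing the unmarked coordinates yields a projected partition function whose per-hyperedge violation probability is exponentially small in the unmarked arity, placing the projected instance in a strong sampling LLL regime even when the original condition $q\ge 700\Delta^{5/(k-10)}$ is only barely satisfied. Since only color $1$ carries the external field, one can choose the partition so that conditioning on any marked configuration $\tau$ gives a conditional unmarked partition function $Z^{\mathrm{unmk}}_\tau(\lambda)$ that is nonvanishing for all $\lambda$ in the disk of radius $\coGamma$ around any $\lambda_c\in[0,1]$ (e.g.\ by a direct expansion plus a Barvinok/Kotecky-Preiss cluster-expansion bound, exploiting the strong LLL condition on the conditional instance). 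This is the lifting step: it gives $Z^{\mathrm{co}}_H(\lambda) = \sum_{\tau} w(\tau;\lambda)\,Z^{\mathrm{unmk}}_\tau(\lambda)$ and reduces the theorem to showing that the projected partition function $\widetilde Z(\lambda)\defeq\sum_\tau w(\tau;\lambda)$ does not vanish in the same disk.

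Next I would analyze a complex extension of a single-site Glauber dynamics on the projected measure $\mu^{\mathrm{proj}}_\lambda$, in which each update of a marked vertex is realized by a bounded rejection sampler on the unmarked part. Writing $\lambda=\lambda_c+z$ with $|z|\le\coGamma$ and expanding the transition kernel as a perturbation of the real-valued kernel at $\lambda_c$, I would couple two copies of the complex dynamics via an information-percolation argument: the disagreements at a given site are traced back through a witness structure in space-time whose size is controlled by a bad-event tree exactly as in the independent-set analysis. The strong LLL condition on the projection makes these witness trees exponentially rare with rate $\rho=\coRHO$, and the standard equivalence between strict contraction of the complex Markov chain and the non-vanishing of the associated partition function then yields $\widetilde Z(\lambda)\neq 0$ in the target disk.

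The main obstacle, compared to the independent-set case, is handling the interaction between the external field on color $1$ and the projection. The rejection probabilities, the effective single-site distributions, and hence the contraction coefficient of the complex dynamics all depend on $\lambda_c$; I must check that the argument is uniform as $\lambda_c$ ranges over the entire interval $[0,1]$ (especially near the endpoints, where color $1$ becomes over- or under-represented in the projected kernel). A secondary obstacle is bookkeeping the numerical constants: the witness-tree decay rate $\coRHO=608q\Delta^2k^5$ must be large enough relative to the target analytic radius $\coGamma=\frac{1}{16\Delta^2 k^5}$, which is where the specific coloring threshold $\coloringCondition$ and the hypothesis $k\ge 50$ are expected to be used. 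Once these uniform estimates are in place, the theorem follows by combining the lifting identity with the zero-freeness of $\widetilde Z$.
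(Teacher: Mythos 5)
Your overall architecture (project the colors, run a complex extension of Glauber dynamics on the projected measure, control disagreements via witness structures in space-time) is the right one and matches the paper's strategy in spirit. However, two of your load-bearing steps do not work as stated.

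First, the lifting step is logically broken. You write $Z^{\mathrm{co}}_H(\lambda)=\sum_\tau w(\tau;\lambda)\,Z^{\mathrm{unmk}}_\tau(\lambda)$ and claim that establishing $Z^{\mathrm{unmk}}_\tau(\lambda)\neq 0$ for every $\tau$ together with $\widetilde Z(\lambda)=\sum_\tau w(\tau;\lambda)\neq 0$ proves the theorem. It does not: a sum of products of nonzero complex numbers can vanish, and $\sum_\tau w(\tau)Z^{\mathrm{unmk}}_\tau$ is not determined by $\sum_\tau w(\tau)$ unless the $Z^{\mathrm{unmk}}_\tau$ are all equal. (Also, nonvanishing of every conditional unmarked partition function via cluster expansion is itself dubious: shattering under the projection is a high-probability statement, and there are pinnings $\tau$ for which the conditional instance is large and connected.) The paper's lifting operates at the level of a single complex \emph{marginal}, not the partition function: after a constraint-wise telescoping $Z_{i+1}/Z_i = 1-\mu_{H_i}(e_{i+1}\text{ monochromatic})$, it partitions the projected space into events $S_j$ (defined by ``bad clusters'') chosen so that $\mu_{H_i}(e_{i+1}\text{ monochromatic}\mid S_j)$ is a \emph{real number in} $[0,1]$ — because conditioning on the bucket assignment inside the bad cluster fixes all external-field contributions, leaving a ratio of nonnegative counts. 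Then $|\mu_{H_i}(\cdot)|\le\sum_j|\psi_{H_i}(S_j)|$ by the triangle inequality, and only the projected measure needs to be controlled. This is the step that replaces the false ``product of nonzero factors'' reduction, and it is the main new idea you are missing.

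Second, you invoke ``the standard equivalence between strict contraction of the complex Markov chain and the non-vanishing of the associated partition function.'' No such equivalence exists, and the paper does not use one. What it actually proves is (a) convergence of the complex dynamics to the projected measure on the relevant events (via showing that contributions from large bad components vanish as $T\to\infty$), and (b) an explicit bound $\sum_{S,\tau}|\psi(\sbad=S\wedge\tau)|<1$ by enumerating bad components together with the update values inside them (needed to restore a zero-one law, since for colorings the bad component alone does not determine the outcome), and summing over $2$-trees in the witness graph. The conclusion $Z\neq 0$ then comes from the telescoping product, not from contraction per se. A smaller but real mismatch: your marking scheme (leaving $\sim k/10$ unmarked vertices per hyperedge) is the CNF-style mark/unmark method; for colorings the paper instead projects \emph{every} vertex's color into buckets, with color $1$ alone in the special bucket, and the relaxed local uniformity it needs (all non-special buckets roughly uniform, special bucket either $0$ or $\approx 1/q$) is tailored to that scheme.
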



Let $\mu$ be the uniform distribution over all hypergraph $q$-colorings, and let $\coloringOneRandomVariable$ be the random variable counting the number of vertices whose color is $1$ under $\mu$.
\begin{restatable}[name=CLT for a color class in hypergraph $q$-coloring]{theorem}{CLTColorOneIntro}
\label{theorem:coloring-one-special-clt-intro}
Fix any integers $k\ge 50$ and $\coloringCondition$. Let $\bar{\mu}$ be the expectation of $\coloringOneRandomVariable$, and $\bar{\sigma}$ be the standard deviation of $\coloringOneRandomVariable$.
Then, it holds that 
\[
\sup_{t\in \=R}|\Pr{(\coloringOneRandomVariable - \bar{\mu})\bar{\sigma}^{-1} \le t} - \Pr{\+Z \le t}| \le O_{q, \Delta, k}\tp{\frac{\ln |V|}{\sqrt{|V|}}},
\]
where $\+Z\sim N(0, 1)$ is a standard Gaussian random variable.
\end{restatable}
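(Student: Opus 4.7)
The plan is to translate the Lee--Yang zero-free strip of \Cref{theorem:coloring-one-special-zero-free-intro} into a quantitative CLT through the probability generating function (PGF) interpretation of the partition function. Define
\[
f_H(\lambda) \defeq \frac{Z^{\-{co}}_H(\lambda)}{Z^{\-{co}}_H(1)} = \mathbb{E}_\mu\bigl[\lambda^{\coloringOneRandomVariable}\bigr],
\]
so that $f_H$ is a polynomial with non-negative coefficients, of degree at most $|V|$, normalised to $f_H(1)=1$, and is precisely the PGF of $\coloringOneRandomVariable$. By \Cref{theorem:coloring-one-special-zero-free-intro}, $f_H$ does not vanish on the complex $\coGamma$-neighbourhood of $[0,1]$; in particular, $\log f_H$ is analytic in a disk of $(q,\Delta,k)$-dependent radius around each point of the unit interval.

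I would then invoke the now-standard ``zero-freeness $\Rightarrow$ CLT'' machinery developed in \cite{lebowitz2016central, michelen2024central, jain2022approximate, liu2024phase}. Writing the cumulant generating function $\psi(t) \defeq \log f_H(\mathrm{e}^t)$ and bounding derivatives by Cauchy's integral formula on a circle of radius $\delta=\delta(q,\Delta,k)$ around $t=0$, one obtains cumulant bounds $|\kappa_j| = |\psi^{(j)}(0)| \le j!\, C^{j-2}\, \bar\sigma^2$ for some $C=C(q,\Delta,k)$ and all $j\ge 2$. Feeding these estimates into an Esseen smoothing inequality produces a Berry--Esseen bound of the form
\[
\sup_{t\in\=R} \bigl|\Pr{(\coloringOneRandomVariable - \bar\mu)\bar\sigma^{-1} \le t} - \Pr{\+Z \le t}\bigr| \le C'(q,\Delta,k) \cdot \frac{\log \bar\sigma}{\bar\sigma}.
\]

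For the rate to match the claimed $O_{q,\Delta,k}(\log|V|/\sqrt{|V|})$, the main remaining task is a matching variance lower bound $\bar\sigma^2 = \Omega_{q,\Delta,k}(|V|)$, and I expect this to be the principal technical obstacle. I would obtain it by writing $\coloringOneRandomVariable = \sum_{v\in V} Y_v$ with $Y_v = \mathbf{1}[\sigma(v)=1]$ and expanding
\[
\bar\sigma^2 = \sum_v \Var{Y_v} + \sum_{u\neq v} \Cov{Y_u, Y_v}.
\]
Marginal control inherited from the projection-lifting scheme used to prove \Cref{theorem:coloring-one-special-zero-free-intro} gives $\mu_v(1) = \Theta(1/q)$ uniformly, so the diagonal sum is $\Theta(|V|/q)$. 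For the off-diagonal piece, I would apply \Cref{theorem:coloring-one-special-zero-free-intro} to the sub-instances obtained by pinning a vertex $u$ to each colour, and convert the zero-free strip into exponential covariance decay $|\Cov{Y_u,Y_v}| \le \mathrm{e}^{-c\,\dist(u,v)}$ along the dependency graph; since balls of radius $r$ therein have size at most $(k\Delta)^r$, the off-diagonal sum is of lower order than the diagonal one, leaving $\bar\sigma^2 = \Omega_{q,\Delta,k}(|V|/q)$.

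The delicate step is the passage from the single-parameter zero-freeness of $Z^{\-{co}}_H(\lambda)$ to the pointwise covariance decay, which in the hypergraph setting is not supplied by a direct Weitz-style self-avoiding-walk recursion. I expect it to follow from the projection-lifting construction itself: pinning a vertex's colour yields another instance in the same family to which the main theorem still applies, and a perturbation-theoretic argument then converts zero-freeness at $\lambda=1$ into the required spatial mixing on the original measure.
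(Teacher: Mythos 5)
Your overall skeleton—interpret $Z^{\-{co}}_H(\lambda)/Z^{\-{co}}_H(1)$ as the PGF of $\coloringOneRandomVariable$, feed the zero-free strip of \Cref{theorem:coloring-one-special-zero-free-intro} into the zero-freeness-to-CLT machinery, and reduce everything to a variance lower bound $\bar\sigma^2=\Omega_{q,\Delta,k}(|V|)$—is exactly the paper's route (the paper cites \Cref{lemma:zero-free-to-clt-co} from \cite{michelen2024central} as a black box rather than re-deriving the cumulant/Esseen estimates, but that is a cosmetic difference). The per-vertex variance bound via local uniformity also matches.

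The genuine gap is your treatment of the off-diagonal sum $\sum_{u\neq v}\Cov{Y_u,Y_v}$. You propose to extract pointwise exponential decay $|\Cov{Y_u,Y_v}|\le \mathrm{e}^{-c\,\dist(u,v)}$ from the zero-freeness of pinned sub-instances by "a perturbation-theoretic argument," and you yourself flag this as the delicate step without supplying it. As stated, the inference does not go through: \Cref{theorem:coloring-one-special-zero-free-intro} is zero-freeness in a \emph{single global} parameter $\lambda$, which controls the cumulants of the aggregate variable $\coloringOneRandomVariable$ but carries no information about two-point correlations between specific vertices $u,v$. Converting zero-freeness into spatial decay of correlations generically requires zero-freeness in a \emph{multivariate} polydisc of per-vertex fields (or an independent correlation-decay argument), neither of which is provided by the theorem or by the projection-lifting construction. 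Moreover, pinning $\sigma(u)=c$ for $c\neq 1$ leaves the family of instances covered by the theorem, but knowing that the pinned partition function is nonzero still does not bound $\Pr{\sigma_v=1\mid\sigma_u=c}-\Pr{\sigma_v=1}$ for an individual $v$. The paper closes this hole by a purely real-valued probabilistic argument: the recursive coupling of \cite{wang2024sampling} yields a \emph{total influence} bound $\sum_{v\neq u}\abs{\Pr{\sigma_v=1\mid \sigma_u=x}-\Pr{\sigma_v=1}}=O_{q,\Delta,k}(1)$ (\Cref{lemma:clt-bounded-total-influence}, after first atomizing the coloring constraints, which turns $\Delta$ into $q\Delta$ and is why the hypothesis $\coloringCondition$ is needed), and this summed bound—no pointwise decay required—is already enough to make the covariance sum at most half the diagonal sum in \Cref{lemma:clt-variance-bound}. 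To repair your argument you would either need to establish multivariate zero-freeness, or replace the perturbation step with a coupling/influence bound of this kind.
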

\begin{restatable}[name=Local CLT for a color class in hypergraph $q$-coloring]{theorem}{LCLTColorOneIntro}
\label{theorem:coloring-one-special-lclt-intro}
Fix any integers $k\ge 50$ and $\coloringCondition$. Let $\bar{\mu}$ be the expectation of $\coloringOneRandomVariable$, and $\bar{\sigma}$ be the standard deviation of $\coloringOneRandomVariable$. It holds that 
\[
\sup_{t\in \=Z} \abs{\Pr{\coloringOneRandomVariable = t} - \bar{\sigma}^{-1}\+N((t - \bar{\mu})/\bar{\sigma})} \le O_{q, \Delta, k}\tp{\frac{\ln^{7/2}|V|}{|V|}},
\]
where $\+N(x) = \mathrm{e}^{-x^2/2}/\sqrt{2\pi}$ denote the density of the standard normal distribution.
\end{restatable}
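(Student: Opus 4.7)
The plan is to establish the local CLT by Fourier inversion, combining the zero-free region of \Cref{theorem:coloring-one-special-zero-free-intro} with a separate ``high Fourier phase'' estimate. Setting
\[
\phi(\theta) \defeq Z^{\-{co}}_H(\emm^{\mathrm{i}\theta}) / Z^{\-{co}}_H(1),
\]
which is the characteristic function of $\coloringOneRandomVariable$, one has
\[
\Pr{\coloringOneRandomVariable = t} = \frac{1}{2\pi} \int_{-\pi}^{\pi} \emm^{-\mathrm{i} t \theta} \phi(\theta) \,\mathrm{d}\theta,
\]
and the goal is to compare this with the inverse Fourier transform of the target Gaussian density $\bar{\sigma}^{-1}\+N((t - \bar{\mu})/\bar{\sigma})$, namely $\emm^{\mathrm{i}\bar{\mu}\theta - \bar{\sigma}^2\theta^2/2}$. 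I would split the integral at a threshold $\theta_0$ of order $\poly\log(|V|)/\bar{\sigma}$.

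In the low-phase region $|\theta| \le \theta_0$, the point $\emm^{\mathrm{i}\theta}$ lies inside the zero-free strip of \Cref{theorem:coloring-one-special-zero-free-intro}, so $\log \phi(\theta)$ is analytic on a definite disk around $\theta = 0$. Cauchy's integral formula on this disk gives quantitative bounds on the third and higher derivatives of $\log Z^{\-{co}}_H$ at $\lambda = 1$, hence on the third and higher cumulants of $\coloringOneRandomVariable$. Combined with a variance lower bound $\bar{\sigma}^2 = \Omega(|V|)$ (the expected content of the lemma referenced by \lcltExpectation), a third-order Taylor expansion yields $\log \phi(\theta) = \mathrm{i}\bar{\mu}\theta - \bar{\sigma}^2\theta^2/2 + O(|V|\,|\theta|^3)$, and direct integration against the Gaussian kernel over $|\theta| \le \theta_0$ recovers the target normal density up to an error controlled by the polylog overshoot.

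For the high-phase region $|\theta| > \theta_0$, the point $\emm^{\mathrm{i}\theta}$ leaves the zero-free neighborhood of $[0,1]$, so the analytic approach fails and one needs a combinatorial bound on $|\phi(\theta)|$. I would exhibit a linear-in-$|V|$ set of ``flippable'' vertices in a typical coloring, roughly those whose color can be toggled between $1$ and some fixed alternative without creating a monochromatic hyperedge, and show that, conditional on the rest of the assignment, each such flip contributes an essentially independent factor of order $|\cos(\theta/2)|$ to $|\phi(\theta)|$. This is the anticipated role of the lemma referenced by \lcltHighPhaseConstant{}, and it gives $|\phi(\theta)| \le \exp(-\Omega(|V|))$ on the high-phase range, which is negligible compared to the low-phase error. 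Summing the two contributions produces the stated $O(\ln^{7/2}|V|/|V|)$ bound.

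The main obstacle is the high-phase estimate, because under LLL-like conditions the space of proper colorings can be disconnected and a naive vertex-flip argument fails. The projection-lifting scheme developed earlier in the paper is the key tool: I would carry out the flip analysis in a projected measure on a large sub-instance, essentially a two-spin system recording only whether each vertex is colored $1$, where flippable coordinates can be identified cleanly, and then lift the characteristic-function estimate back to the original hypergraph. Ensuring that the projected variance survives conditioning on a ``good'' event analogous to the one captured by \lcltConcentrationProb{} and \lcltProjectedVioProb{}, and aligning the resulting tail factor \lcltCltTail{} with the claimed $\ln^{7/2}|V|/|V|$ rate, is where the bulk of the technical work will be concentrated.
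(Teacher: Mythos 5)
Your overall architecture matches the paper's: Fourier inversion (the paper's \Cref{lemma:establish-lclt-by-characteristic-function}), a split at a threshold of order $\mathrm{polylog}(|V|)/\bar{\sigma}$, and a combinatorial bound on the characteristic function in the high-phase regime obtained by shattering the instance through a projection and extracting many independent ``flippable'' coordinates — this is exactly the role of the two-step projection, the good event $J$, and \Cref{lemma:lclt-X_j-bound} in \Cref{section:lclt}. Where you genuinely diverge is the low-phase regime: you propose to expand $\log \phi(\theta)$ to third order using Cauchy estimates on the zero-free disk around $\lambda=1$ (a cumulant argument in the style of Lebowitz et al.), whereas the paper never touches cumulants and instead converts the already-proved Kolmogorov-distance CLT (\Cref{theorem:coloring-one-special-clt-intro}) into a characteristic-function bound by an integration-by-parts trick (\Cref{lemma:lclt-low-fourier-phases-bound}). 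Both routes are sound given the zero-free strip and the variance lower bound $\bar{\sigma}^2=\Omega(|V|)$ (which comes from \Cref{lemma:clt-variance-bound}, not from the expectation lemma you cite); your cumulant route is more self-contained and would in fact give a clean $O(1/|V|)$ low-phase error, while the paper's route reuses machinery it already has and is what produces the extra $\ln^{7/2}$ factor.

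One claim you should weaken: the high-phase bound $|\phi(\theta)|\le \exp(-\Omega(|V|))$ is not achievable by this method, for two reasons. First, near the threshold $\theta\approx\theta_0$ the product of $\Theta(|V|/\log^3|V|)$ factors of size $|\cos(\theta/2)|$ is only $\exp(-\Omega(|V|\theta_0^2/\log^3|V|))=\exp(-\Omega(\mathrm{polylog}|V|))$, which is quasi-polynomially, not exponentially, small. Second, the conditioning on the shattering event fails with probability $\Theta(1/|V|)$ (Chebyshev, as in \Cref{lemma:lclt-J-is-good}), so the bound necessarily carries an additive $O(1/|V|)$ term. Neither issue breaks the proof — both contributions integrate to $O(1/|V|)$, within the target — but the estimate you should aim for is the one in \Cref{lemma:lclt-high-fourier-phases}, not a uniform $\exp(-\Omega(|V|))$.
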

We remark that both CLT and local CLT extend to a constant external field $\lambda$, provided that $q$ is  sufficiently large and satisfies $q\gtrsim \max\tp{\tp{\frac{1}{\lambda}}^{\frac{1+o_q(1)}{k-2}},\lambda} \Delta^{\frac{5}{k-10}}$. 
We also show a Chebyshev-type inequality that works more generally.
Specifically, we consider any \emph{atomic} $(k, \Delta)$-CSP formula $\Phi=(V,[q]^V, C)$, where each constraint involves $k$ variables, each variable appears in at most $\Delta$ constraints, and each constraint forbids at most one of its assignments (atomic). 
Let $\mu$ be the uniform distribution over all satisfying assignments of $\Phi$.
We will be interested in how many variables get assigned value $1$. The choice of value $1$ is arbitrary and works with any other value.
Let $\sigma\sim \mu$.
For each variable $v\in V$, we write $\sigma_v$ for its assignment under $\sigma$, and we define the random variable $X_v \defeq \=I[\sigma_v = 1]$, $\+X \defeq \sum_{v\in V}X_v$.
\begin{theorem}[Chebyshev type inequality for atomic CSP formulas]
\label{theorem:concentration-atomic-csp}
Let $\Phi=(V,[q]^V, C)$ be an atomic $(k, \Delta)$-CSP formula. Suppose $(8\mathrm{e})^3\cdot \frac{1}{q^{k-1}}\cdot (\Delta k+ 1)^{2 + \zeta} \le 1$, where $\zeta = \frac{2\ln{(2 - 1/q)}}{\ln{(q)} - \ln{(2 - 1/q)}} = o_q(1)$. Then,
\[
\forall \delta > 0,\quad \Pr[\mu]{|\+X - \=E_{\mu}[\+X]|\ge \delta\cdot \=E_{\mu}[\+X]}\le \frac{1}{\delta^2}\cdot O_{q,\Delta, k}\tp{\frac{1}{|V|}}.
\]
\end{theorem}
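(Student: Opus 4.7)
The plan is the second moment method: I aim to show $\Var[\mu]{\+X}/\tp{\E[\mu]{\+X}}^2 = O_{q,\Delta,k}(1/|V|)$ and then invoke Chebyshev's inequality. The hypothesis is a strengthening-by-$q$ of the standard sampling-LLL condition: writing $p=q^{-k}$ for the atomic-constraint violation probability under the uniform random assignment and $D\le \Delta k$ for the dependency degree, it reads $q\,p\,D^{2+\zeta}\le (8\mathrm{e})^{-3}$, i.e., $pD^{2+\zeta}\lesssim 1/q$ with $\zeta=o_q(1)$.

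I begin with the expectation. Since the hypothesis comfortably implies the classical $\mathrm{e}pD\le 1$, an LLL calculation yields $\Pr[\mu]{\sigma_v=1} = (1\pm o_q(1))/q$ uniformly in $v\in V$, hence $\E[\mu]{\+X} = \Theta(|V|/q)$. Consequently, it suffices to prove $\Var[\mu]{\+X} = O_{q,\Delta,k}(|V|/q)$, as this gives $\Var[\mu]{\+X}/\tp{\E[\mu]{\+X}}^2 = O(q/|V|)$.

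For the variance, decompose $\Var[\mu]{\+X} = \sum_v \Var[\mu]{X_v} + \sum_{u\ne v}\Cov[\mu]{X_u,X_v}$. The diagonal contribution is automatically $O(|V|/q)$, since $\Var[\mu]{X_v}\le \E[\mu]{X_v} = O(1/q)$. The real task is the off-diagonal term, for which I would prove a covariance-decay estimate of the form
\[
\left|\Cov[\mu]{X_u,X_v}\right| \le \frac{C}{q}\cdot \rho^{d(u,v)},\qquad \rho \cdot \Delta k < 1,
\]
where $d(\cdot,\cdot)$ is the hyperedge-distance on variables in the dependency hypergraph of $\Phi$. Since the ball of radius $\ell$ around any variable contains at most $(\Delta k)^\ell$ others, summing the geometric series yields $\sum_{v\ne u}|\Cov[\mu]{X_u,X_v}| = O(1/q)$, so the off-diagonal sum is $O(|V|/q)$ as required.

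The hard part is establishing the covariance decay with a rate $\rho < 1/(\Delta k)$ whose constants actually match the $(8\mathrm{e})^3$ prefactor. I would use a coupling/witness-tree argument in the style of the sampling-LLL analyses referenced in the introduction: couple two copies of $\mu$ that agree outside a neighborhood of $u$, and observe that any propagation of disagreement to $v$ must be carried by a ``bad'' connected subgraph of size at least $d(u,v)$ in the dependency hypergraph. Atomicity enters crucially, since each resampling node in such a witness contributes a weight of exactly $p=q^{-k}$ (not a larger quantity coming from several forbidden assignments per constraint); and the $2+\zeta$ exponent is precisely calibrated so that a Shearer/abstract-LLL enumeration of size-$\ell$ witnesses beats the $(\Delta k)^\ell$ sphere-growth by a geometric factor, with the extra $1/q$ in the hypothesis accounting for the ``marginal at $u$ equals $1$'' factor. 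Once the decay bound is established, combining the variance and expectation estimates with Chebyshev closes the argument.
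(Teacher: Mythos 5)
Your overall architecture matches the paper's: Chebyshev's inequality, a lower bound $\E[\mu]{\+X}=\Theta(|V|/q)$ from a (lopsided) LLL marginal estimate, the decomposition $\Var[\mu]{\+X}=\sum_v\Var[\mu]{X_v}+\sum_{u\ne v}\Cov[\mu]{X_u,X_v}$, and a coupling/witness argument for the off-diagonal term. The paper proves exactly these ingredients (\Cref{lemma:general-csp-expectation-bound}, \Cref{lemma:general-csp-local-variance}, \Cref{lemma:general-csp-covariance}), in a version with a general external field $\lambda$.

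There is, however, a genuine gap in the key step. You propose the pointwise estimate $|\Cov[\mu]{X_u,X_v}|\le \frac{C}{q}\rho^{d(u,v)}$ with $\rho\cdot\Delta k<1$ and then sum a geometric series over spheres of size $(\Delta k)^\ell$. That decay rate is not available under the stated hypothesis. The witness/$2$-tree enumeration that the condition $(8\mathrm{e})^3 q^{-(k-1)}(\Delta k+1)^{2+\zeta}\le 1$ is calibrated for gives a factor strictly less than, say, $1/2$ \emph{per witness vertex}, and a witness of size $j$ can certify disagreement out to variable-distance roughly $k(\Delta k+1)\cdot j$. So the decay per unit distance is only about $2^{-1/(k(\Delta k+1))}$, which is close to $1$ and in particular nowhere near $1/(\Delta k)$; your sphere-by-sphere geometric series $\sum_\ell(\Delta k)^\ell\rho^\ell$ would diverge. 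The correct organization is by witness size rather than by distance: since a size-$j$ witness covers only $O(k\Delta k\, j)$ variables (linear in $j$, not exponential in the distance), one bounds the \emph{total influence} $\sum_{v\ne u}\abs{\Pr[\mu]{X_v=1\mid X_u=1}-\Pr[\mu]{X_v=1}}$ directly by the expected Hamming distance of the recursive coupling of $\mu$ and $\mu(\cdot\mid\sigma_u=1)$, which is $O_{\Delta,k}(1)$ by the tail bound $\Pr{d_{\-{Ham}}\ge k(\Delta k+1)K}\le 2^{-K}$ from \cite{wang2024sampling} together with a telescoping over the $O(\Delta)$ constraints affected by the pinning. This is what the paper does, and it yields $\Var[\mu]{\+X}=O_{\Delta,k}(|V|)$, which (combined with $\E[\mu]{\+X}=\Theta(|V|/q)$) already suffices for the claimed $O_{q,\Delta,k}(1/|V|)$ bound; your stronger target $\Var[\mu]{\+X}=O(|V|/q)$ is not needed.
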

It is worth noting that the above theorem essentially says that $p D^{2+o_q(1)} \lesssim 1/q$ is sufficient for a Chebyshev type inequality in atomic CSP.


Last but not least, our framework in \Cref{section:CSP} can also be used to study Fisher zeros (see \Cref{section:fisher-zeros} for a reduction to Lee-Yang zeros). 
While deterministic FPTASes for counting hypergraph $q$-colorings under local lemma conditions are known~\cite{guo2018counting,Vishesh21towards,wang2024sampling} through extending Moitra's LP approach~\cite{moitra2019approximate} and~\cite{feng2023towards} through the ``coupling towards the past'' (CTTP) framework, our zero-freeness result gives an alternative FPTAS through Barvinok's interpolation method~\cite{barvinok2016combinatorics,patel2017deterministic,Liu2017TheIP}, which works on the complex plane as well.
Currently, our FPTAS based on zero-freeness works only in the regime $q\gtrsim \Delta^{\frac{5}{k-2.5}}$. We leave it as an open problem to close the gap to $q\gtrsim \Delta^{\frac{2 + o_q(1)}{k-2}}$ achieved in~\cite{wang2024sampling}.

\subsection{Technical overview}
Our starting point for proving zero-freeness is the framework of~\cite{liu2024phase}. We explain our main techniques in the context of hypergraph $q$-coloring in the following. 



\subsubsection{Zero-freeness via constraint-wise self-reduction}
A standard self-reduction can reduce proving zero-freeness of the whole instance into establishing upper bounds on certain complex marginal measures~\cite{Lee1952StatisticalTO,Asano1970LeeYangTA,ruelle1971extension,peter2019conjecture,liu2019correlation,Shao2019ContractionAU,liu2024phase}. Let $\+E=\{e_1,\dots,e_m\}$ and, for $0\le i\le m$, write $\+E_i=\{e_1,\dots,e_i\}$ and $H_i=(V,\+E_i)$.  To show $Z^{\-{co}}_H(\lambda)\neq 0$, it suffices to check $Z^{\-{co}}_{H_0}(\lambda)\neq 0$ (trivial since $H_0$ has no edges) and to show, for every $0\le i<m$, $\frac{Z^{\-{co}}_{H_{i+1}}(\lambda)}{Z^{\-{co}}_{H_i}(\lambda)}\neq 0$.
Assuming inductively that $Z^{\-{co}}_{H_i}(\lambda)\neq 0$, the ratio admits the interpretation that $
\frac{Z^{\-{co}}_{H_{i+1}}(\lambda)}{Z^{\-{co}}_{H_i}(\lambda)}=\mu_{H_i}\bigl(e_{i+1}\text{ is not monochromatic}\bigr)$,
where $\mu_{H_i}$ is the corresponding complex Gibbs measure and is well-defined provided that $Z^{\-{co}}_{H_i}(\lambda)\neq 0$.
Hence it suffices to prove that for each $i$,
\begin{equation}\label{eq:overview-induction}
\bigl|\mu_{H_i}\bigl(e_{i+1}\text{ is monochromatic}\bigr)\bigr|< 1,
\end{equation}
which completes the induction.

\subsubsection{Projection and lifting scheme}
Liu et~al.~\cite{liu2024phase} introduced complex extensions of Glauber dynamics as a proxy for similar marginals for hypergraph independent sets.  As alluded to in the introduction, standard Glauber dynamics is not ergodic for hypergraph colorings~\cite{frieze2011randomly,wigderson2019mathematics}, 
and one has to work with a \emph{state-compression scheme} instead~\cite{feng2021sampling,jain2021sampling,he2021perfect}.  
Specifically, a projection is designed to compress the domain alphabet into smaller ones so that: a) the Glauber dynamics on the projected alphabet is rapid mixing; b) the projected Glauber dynamics can be lifted back to the original instance through rejection sampling. Crucially, conditioned on the projected symbols, the CSP formula shatters into small connected components with high probability. This enables efficient rejection sampling.

To work with complex measures, we start by analyzing a \emph{projected} complex measure.  We consider the following projection $f$, which maps a coloring $\sigma\in[q]^V$ to a color bucket assignment $\tau=f(\sigma)\in\set{1^\coF, 2^\coF, \dots, B^\coF}^V$, where $\tau_v=f(\sigma_v)$ for any $v\in V$. 
The first color is projected to the first color bucket $1^\coF$ (we use $\coF$ to denote the color bucket). The remaining $q-1$ colors are evenly partitioned among the other $B-1$ buckets, so that each bucket (except $1^\coF$) contains approximately the same number of original colors.
Then, the projected measure on $\set{1^\coF, 2^\coF, \dots, B^\coF}^V$ is given by $\psi_{H_i}(\tau)=\sum_{\sigma:\,f(\sigma)=\tau}\mu_{H_i}(\sigma)$, and will be analyzed through complex Markov chains. But the trickier part is to extract bounds on  $\mu_{H_i}$ from $\psi_{H_i}$.

To do so, we introduce a \emph{projection--lifting} scheme (formalized in \Cref{subsection:projection-lifting-scheme}).  Concretely, we partition the projected space into disjoint events $
S_1,\dots,S_\ell\subseteq\set{1^\coF, 2^\coF, \dots, B^\coF}^V$, with $\bigcup_{j=1}^\ell S_j=\set{1^\coF, 2^\coF, \dots, B^\coF}^V$. Notice that by definition, $\psi_{H_i}(S_j)=\mu_{H_i}(S_j)$ for every $j$.   By the law of total measure,
\begin{equation}\label{eq:overview-projection-law-of-total-measure}
\mu_{H_i}\bigl(e_{i+1}\text{ is monochromatic}\bigr)
=\sum_{j=1}^\ell \psi_{H_i}(S_j)\cdot
\mu_{H_i}\bigl(e_{i+1}\text{ is monochromatic}\mid S_j\bigr).
\end{equation}
One of the main difficulties in generalizing probabilistic proofs to the complex plane is that, many probabilistic arguments rely on monotonicity\footnote{in the sense of $\abs{\mu(A)} \le \abs{\mu(B)}$ for every $A \subseteq B$.} in a crucial way, while complex measures do not enjoy monotonicity.   To regain control, we design the partitions $(S_j)$ so that, for every $j$ with $\psi_{H_i}(S_j)\neq 0$,
\[
\mu_{H_i}\bigl(e_{i+1}\text{ is monochromatic}\mid S_j\bigr) \in [0,1].
\]
Combining this with \eqref{eq:overview-projection-law-of-total-measure} and the triangle inequality gives
\begin{equation}\label{eq:overview-projection-lifting-scheme}
\bigl|\mu_{H_i}\bigl(e_{i+1}\text{ is monochromatic}\bigr)\bigr|
\le \sum_{j=1}^\ell \bigl|\psi_{H_i}(S_j)\bigr|.
\end{equation}
Thus the problem reduces to upper-bounding the total norm of a collection of projected events.

Next, we give some intuitions about how we define these events.
We define the events $S_j$ by a local combinatorial object called the \emph{bad cluster} (see \Cref{definition:csp-bad-things}). 
The bad cluster $\sbad$ is a subset of connected hyperedges containing $e_{i+1}$, and all vertices in $\sbad$ are in the same color bucket, while each hyperedge crossing $\sbad$ contains different color buckets (see \Cref{fig:bad-cluster}). 
The events $S_j$'s are defined by the bad cluster $\sbad$, together with the color bucket assignments within.
For a subset of hyperedges $S$, we use $\vbl(S)$ to denote the set of vertices in $S$.
Fix a subset of hyperedges $S$ and a color bucket assignment $\tau_{\vbl(S)}\in \set{1^\coF, 2^\coF, \dots, B^\coF}^{\vbl(S)}$, we will show that if $\psi_{H_i}(\sbad=S \land \tau_{\vbl(S)}) \neq 0$, then we have 
\[
\mu_{H_i}\bigl(e_{i+1}\text{ is monochromatic}\mid \sbad=S\land\tau_{\vbl(S)}\bigr) \in [0,1].
\]
\begin{figure}[ht]
    \centering
    \includegraphics[trim={0 12.5cm 0 10.5cm},clip,width=.7\linewidth]{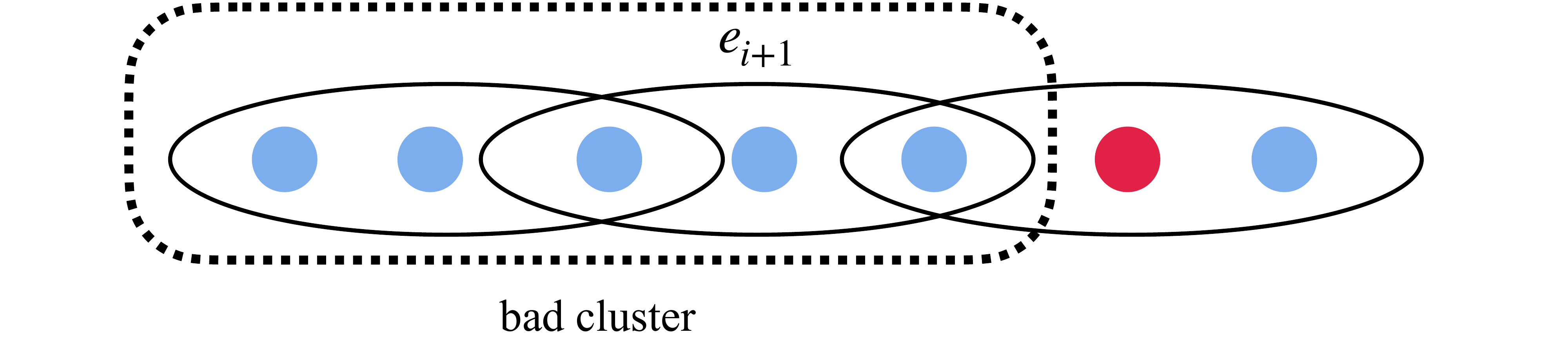}
    \caption{Each displayed color represents one color bucket in the state-compression scheme. The two leftmost hyperedges form a bad cluster since their vertices belong to the same color bucket.}
    \label{fig:bad-cluster}
\end{figure}

Intuitively, conditioning on $\sbad=S$ and the color bucket assignments on $\vbl(S)$ fixes all contributions from the external field (because external fields are only on color buckets), so the conditional marginal of the event ``$e_{i+1}$ is monochromatic'' is real and lies in $[0,1]$ (see \Cref{lemma:csp-conditional-measure-probabilty}).  Consequently, we obtain 
\begin{equation}\label{eq:overview-projection-lifting-scheme-bad-cluster}
\bigl|\mu_{H_i}\bigl(e_{i+1}\text{ is monochromatic}\bigr)\bigr|
\le \sum_{S:|S|\ge 1}\ \sum_{\tau_{\vbl(S)}\in\set{1^\coF, \dots, B^\coF}^{\vbl(S)}}
\bigl|\psi_{H_i}\bigl(\sbad=S\land \tau_{\vbl(S)}\bigr)\bigr|.
\end{equation}

\subsubsection{Analyzing the projected measure via complex Glauber dynamics}
To bound the right-hand side of \eqref{eq:overview-projection-lifting-scheme-bad-cluster}, we extend the complex Markov chain framework developed in \cite{liu2024phase} to general CSPs. Intuitively,   to analyze a marginal measure $\psi$ of an event $A$, one analyzes the event $A$ with respect to the resulting state of a complex Glauber dynamics instead. 

Fix any integer $T > 0$, we consider a $T$-step complex Glauber dynamics with stationary measure $\psi_{H_i}$.
 Let $\mu: \Omega\to \=C$ be any \emph{complex measure} (formally defined in the Preliminary),
and $P\in \=C^{\Omega\times\Omega}$ be a \emph{complex-valued transition matrix}, the complex measure of the $T$-step complex Markov chain is given by $\mu P^T$, and will be denoted by $\psi^{\-{GD}}_T$. 
We think of time $0$ as the ``current time'', and the Markov chain is started at time $-T$.
Then, for any ``outcome'' $\sigma_0$ at time $0$, $\psi^{\-{GD}}_T (\sigma_0)$ is well-defined, and we will simply write $\psi^{\-{GD}}_T (\sigma_0 \in A)$ for the summation over the measures of outcomes leading to the event $A$.
If one can show convergence, in the sense that 
\begin{equation}
\label{eq:overview-convergence}
\psi_{H_i}(A) = \lim_{T\to\infty} \psi^{\-{GD}}_{T}(\sigma_0\in A),
\end{equation}
then, bounding the measure $\psi^{\-{GD}}_T$ also bounds the right-hand side of \eqref{eq:overview-projection-lifting-scheme-bad-cluster}.

A key insight from \cite{liu2024phase} is to use a ``percolation'' style argument for both establishing the convergence and bounding $\psi^{\-{GD}}_T$.
In their setting of hypergraph independent sets, they first decompose each transition of the Markov chain into an \emph{oblivious} update part and an \emph{adaptive} (non-oblivious) update part.  Roughly, this corresponds to decomposing the complex transition matrix into the sum of two parts.
Then, convergence follows from showing that the oblivious updates ``dominate'' the contributions to the measure $\psi^{\-{GD}}_T$. Furthermore, contributions from the oblivious updates are easier to control, so that essentially, one can upper bound $\abs{\psi^{\-{GD}}_T}$ by summing over trajectories that end with an oblivious update.

For a ``percolation'' style argument to work in the complex plane, \cite{liu2024phase} formulated the notion of \emph{witness sequence}, conditioned on which the complex measure $\psi^{\-{GD}}_T$ exhibits a \emph{zero-one law}: $\psi^{\-{GD}}_T( \sigma_0\in A \mid \hbox{witness}) \in \set{0,1}$. Intuitively, a witness sequence has two objectives: a) it encodes the ``trajectory'' of a complex Markov chain so that the final event $\sigma_0\in A$ is independent of the initial state; b) the measure of the encoding cannot be much larger than the measure of oblivious updates.

To analyze our projected measure $\psi_{H_i}$, we follow~\cite{liu2024phase} closely in spirit, but we also have to innovate in the following aspects.

\textbf{Decomposition of transition measures:} The first step  is to carefully decompose transition measures into oblivious and non-oblivious transitions, so that oblivious ones are close to the true transition measures.

In the context of hypergraph $q$-colorings for real probability measures, similar decomposition is known through \emph{local uniformity}~\cite{guo2018counting,he2021perfect,feng2021sampling,jain2021sampling}, which roughly means that under local lemma conditions, local marginals on variables are close to the uniform product distribution. A complex analogue of local uniformity seems highly non-trivial, as existing arguments rely on LLL~\cite{haeupler2011new} on the real line.
Furthermore, in order to introduce external fields to a given color (as opposed to external fields on color buckets), our state-compression scheme requires designating one special color, under which standard local uniformity no longer holds even in the real line.

We bypass these difficulties with two observations. The first is to work with ``local uniformity'' of the \emph{transition measure} (as opposed to that of the \emph{stationary measure}). 
In the transition measure, the difference between real-valued or complex-valued external field $\lambda$ can be bounded in a neighborhood of the interval $[0,1]$.
Concretely, given a bucket assignment $\tau\in \set{1^\coF, 2^\coF, \dots, B^\coF}^{V\setminus {v}}$, for each $i\in [B]$, let $X_i$ be the number of  colorings consistent with $\tau$ that assign $v$ with bucket $i^\coF$.  Then the transition measure $\psi^{\tau}_v$ is given by: 
\begin{equation}
\label{eq:overview-transition-expression}
\psi^{\tau}_v(1^\coF) = \frac{\lambda X_1}{\lambda X_1 + \sum_{i=2}^B X_i},\qquad \hbox{ and }\forall i\in [B]\setminus \set{1},\quad \psi^{\tau}_v(i^\coF) = \frac{X_i}{\lambda X_1 + \sum_{i=2}^B X_i}.
\end{equation}
Therefore,  $\lambda \approx 1 \implies \psi^{\tau}_v(i^\coF)\approx \frac{X_i}{\sum_{i=1}^B X_i}$. It remains to bound $\frac{X_i}{\sum_{i=1}^B X_i}$. As noted earlier,  the first bucket $1^\coF$ contains only one color in our state-compression scheme, and standard local uniformity no longer holds. 

Our second observation is that a relaxed local uniformity still holds: instead of asking the marginals to be close to uniform, we ask uniformity for every color bucket $i^\coF, j^\coF$ other than the special color $1^\coF$, that is, $\psi^{\tau}_v(i^\coF) \approx \psi^{\tau}_v(j^\coF)$; and for $1^\coF$, we ask that either $\psi^{\tau}_v(1^\coF) =0$ or $\psi^{\tau}_v(1^\coF) \approx 1/q$. To see why this holds, under a bucket assignment $\tau$, if there is a hyperedge containing $v$ and $k-1$ vertices in bucket $1^\coF$, one can remove this hyperedge and restrict that $v$ is not in $1^\coF$.
On this reduced hypergraph without $1^\coF$,  each vertex still has large buckets and standard local uniformity holds on this reduced hypergraph.

We design a decomposition of transition measure specifically tailored to the relaxed local uniformity.
It is worth noting that $\psi^{\-{GD}}_T$, which is generated by the vector-matrix product $\mu P^T$, can also be seen as summing over walks of length $T$ over a ``space-time slab'', on which every node is weighted by the corresponding entry in the transition matrix.
Intuitively, we would like to show that, when revealing these ``space-time walks'' (corresponding to ``trajectories'' of the Glauber dynamics) backwards in time, the dominating contributions to $\psi^{\-{GD}}_T$ come from those that have encountered certain oblivious updates before reaching the starting state. This is formalized through {witness sequences} in~\cite{liu2024phase}.

\textbf{Design of witness sequences}: 
We start by reviewing the intuition behind \emph{witness sequences}. \cite{liu2024phase} use them to encode ``space-time walks'' that encounter oblivious updates before reaching the starting state. 
\vspace{-15pt}
\begin{property}
\label{prop:desirable-witness}\emph{Witness sequences} designed by~\cite{liu2024phase} are such that if $W$ is a witness sequence, then
\begin{enumerate}
    \item $\psi^{\-{GD}}_T(\sigma_0 \in A \land W)$ is independent of the starting state;
    \item ``Zero-one'' law: by viewing $W$ as an ``event'', it fully determines the event $(\sigma_0 \in A)$. In particular,
    $\psi^{\-{GD}}_T(\sigma_0 \in A \mid W)\in \set{0,1}$ when well-defined;
    \item $\abs{\psi^{\-{GD}}_T(W)}$ can be explicitly calculated (or bounded), and  is
    independent of the starting state; 
    \item $\lim_{T\to \infty}\abs{\psi^{\-{GD}}_T(\hbox{non-witness sequences})} = 0$.
\end{enumerate}
\end{property}
Given these desirable properties, one can proceed as follows:
\begin{align*}
\psi^{\-{GD}}_{T}(\sigma_0 \in A) =& \sum_{W: \hbox{witnesses}} \psi^{\-{GD}}_{T}(\sigma_0 \in A \land W) +  \psi^{\-{GD}}_T(\hbox{non-witness sequences})\\
=& \sum_{W: \hbox{witnesses}} \psi^{\-{GD}}_{T}(W)\psi^{\-{GD}}_{T}(\sigma_0 \in A | W) +  \psi^{\-{GD}}_T(\hbox{non-witness sequences}).
\end{align*}

Then, by a triangle inequality one can deduce
\begin{align}
\label{eq:overview-01}
\abs{\psi^{\-{GD}}_{T}(\sigma_0 \in A)} \le& \sum_{W: \hbox{witnesses}} \abs{\psi^{\-{GD}}_{T}(W)} \cdot \abs{\psi^{\-{GD}}_{T}(\sigma_0 \in A | W)}+  \abs{\psi^{\-{GD}}_T(\hbox{non-witness sequences})}\nonumber \\ 
\hbox{(by ``zero-one'' law)} \qquad\le& \sum_{W: \hbox{witnesses}} \abs{\psi^{\-{GD}}_{T}(W)}+  \abs{\psi^{\-{GD}}_T(\hbox{non-witness sequences})}.  \end{align}
Taking limits, we show in~\Cref{lemma:convergence} that $\psi^{\-{GD}}_{T}$ converges to $\psi_{H_i}$, with a proof in~\Cref{section:missing-analytic-percolation}. Therefore,
\begin{align*}
\abs{\psi_{H_i}(A)} = \lim_{T\to \infty}  \abs{\psi^{\-{GD}}_{T}(\sigma_0 \in A)} \le \lim_{T\to\infty} \sum_{W: \hbox{witnesses}} \abs{\psi^{\-{GD}}_{T}(W)}.
\end{align*}
It is worth noting that, the above derivation, specifically in \cref{eq:overview-01}, uses a ``zero-one'' law (part (2) in~\Cref{prop:desirable-witness}) instead of the monotonicity of measures. In contrast, for standard probability measures, one can directly bound $\psi^{\-{GD}}_{T}(\sigma_0 \in A \land W) \le \psi^{\-{GD}}_{T}(W)$ thanks to the monotonicity. 

\cite{liu2024phase} characterized witness sequences in the setting of hypergraph independent sets formally through the notion of \emph{bad component}. This naturally generalizes to hypergraph colorings with one special color. However,  the witness sequences characterized by analogous bad components only satisfy part (1) in~\Cref{prop:desirable-witness}. To restore the ``zero-one'' law, we also need to encode extra information so that combined with bad components, will play the role of part (2) and (3), which can then be used to establish part (4). Essentially, the extra encoding only introduces an additional conditioning during the derivation of~\cref{eq:overview-01}.

Next, we describe some intuitions for how to satisfy part (1) in~\Cref{prop:desirable-witness}, and restoring the ``zero-one'' law in hypergraph colorings.
We recall the relaxed local uniformity, which says that all but the special color $1^\coF$ have approximately equal marginals. So the oblivious transitions would be when trying to update to a non-special color bucket uniformly at random.
We use $r_t \in \set{2^\coF, 3^\coF, \dots, B^\coF, \bot}$ to denote the result of the oblivious transition at time $t$, where $\bot$ is a placeholder indicating that the oblivious transition alone at time $t$ cannot fully determine the update value, and a non-oblivious update may be needed. 


To see how to satisfy part (1) of~\Cref{prop:desirable-witness}, we consider a transition step at time $t$ with a vertex $v$ to update, and we reveal the oblivious color bucket assignments of $r_{s}$ for all neighboring vertices of $v$ at time $s$ just prior to $t$. If all hyperedges containing $v$ are not monochromatic by these $r_s$, then we know the update at time $t$ is independent of the initial state. In fact, conditioned on these $r_s$, since all the neighboring constraints are already satisfied, the complex measure \emph{factorizes} (see \Cref{observation:csp-conditional-independence-cbad-o_t} for a formal statement). 
Similarly, for an update at time $t$, if for all hyperedges containing $v$, their last updates are independent of the initial state, then we know the update at time $t$ is also independent of the initial state.
By revealing the oblivious color bucket assignments backwards in time, this forms a percolation on the space-time diagram, and its structure is formally captured by a
 \emph{bad component} $\cbad$  (see \Cref{definition:csp-conditional-measure-bad-things}). Intuitively, a bad component is ``small'' exactly when the corresponding $r_t$'s is  a witness sequence (that is, satisfying part (1)).

\textbf{Zero-one law}: Compared to~\cite{liu2024phase}, there are two complications in trying to restore a ``zero-one'' law. Firstly, \emph{bad component} itself does not provide enough information for general CSP, and $\psi^{\-{GD}}_{T}(\sigma_0 \in A \land \cbad)$ is still a highly non-trivial complex measure; secondly, the fact that we have to work with a projected measure also leads to much more sophisticated events $A$ (corresponding to a bad cluster during projection-lifting), for which we need to restore the ``zero-one'' law.
Our idea is to also encode the actual updated values within a bad component. Let $\BadTS{\cbad}$ denote the update times involved in the bad component $\cbad$, and $\*o_{\BadTS{\cbad}}$ denote the updated values in $\BadTS{\cbad}$. We show that given $\cbad$ and $\*o_{\BadTS{\cbad}}$, we can still recover a zero-one law (\Cref{lemma:csp-determination-bad-cluster-by-bad-component}).

Then, to satisfy part (3) in~\Cref{prop:desirable-witness}, we also need a much more refined analysis compared to~\cite{liu2024phase} in order to have room for both the more complicated events $A$ (which can depend on an unbounded number of variables) and the extra encoding.

We restore part (4) in~\Cref{prop:desirable-witness} by combining part (1), (2) and (3) to show that contributions from ``large'' bad components diminish to $0$ as $T\to \infty$. This part of the property is formalized as~\Cref{condition:convergence}, which we formally verify in~\Cref{subsubsection:convergence}.
\subsubsection{CLTs, LCLTs and concentration inequalities under LLL-like conditions}
Given our zero-freeness, central limit theorems follow from~\cite{michelen2024central} together with a lower bound on the variance. By the definition of variance, it suffices to show a lower bound of variance on each vertex and a lower bound on the covariance (see \cref{eq:clt-variance-decomposition}). 
The first lower bound follows from local uniformity (\Cref{lemma:clt-marginal-bound,lemma:clt-local-variance}). For the second one, we make use of a recent breakthrough in sampling LLL~\cite{wang2024sampling} to bound the \emph{total influence} (\Cref{lemma:clt-bounded-total-influence}) which leads to a lower bound of the covariance (see \cref{eq:clt-covariace-lower-bound}).
In \Cref{section:lclt}, following the spirit of~\cite{jain2022approximate}, we boost CLTs into LCLTs through a Fourier inversion (see \Cref{lemma:establish-lclt-by-characteristic-function}). We extend their analysis to general CSPs for constant $\lambda$. 

Our simple Chebyshev-type inequality consists of two components: a lower bound on the expectation and an upper bound on the variance.
The lower bound on expectation follows from local uniformity, while the upper bound again makes use of a recent breakthrough in sampling LLL~\cite{wang2024sampling}. Among others, they showed a constant total influence bound with respect to a constraint. Translating this to a bound with respect to a variable only loses factors of $k$, which can then be used to bound the covariances between variables. We also lose a factor of $q$ due to pinning on vertices introduced in the covariances calculation. This leads to a Chebyshev inequality under $p D^{2+o_q(1)} \lesssim 1/q$.

\section{Preliminaries}
\subsection{Complex normalized measures}
We recall the notions of complex normalized measure~\cite{liu2024phase}.
Let $\mu:\Omega\to\=C$ be a complex measure over a measurable space $(\Omega, \+F)$, where $\Omega$ is a finite set and elements in $\+F$ are called events. 
Let $\supp(\mu)$ be the support of $\mu$, i.e. $\supp(\mu)\defeq \{x\in \Omega\mid \mu(x)\neq 0\}$. We say $\mu$ is a \emph{complex normalized measure} if $\sum_{\omega\in \Omega} \mu(\omega) = 1$. 
For any event $A\in \+F$, the measure on $A$ is defined as $\mu(A)\defeq \sum_{\omega\in A}\mu(\omega)$. 

Similar to probability, for any event $A\in \+F$ with $\mu(A)\neq 0$, we define \emph{conditional measure} of $\mu$ on $A$ as a restricted measure $\mu(\cdot\mid A)$ over the measurable space $(\Omega, \+F_A)$ where $\+F_A = \{B\cap A\mid B\in \+F\}$ such that for any event $B\in \+F$,
$\mu(B\mid A) = \frac{\mu(B\cap A)}{\mu(A)}$.
Note that the conditional measure $\mu(\cdot\mid A)$ is always normalized when well-defined.

We say that two events $A_1, A_2 \in \+F$ are \emph{independent} if  $\mu(A_1\cap A_2) = \mu(A_1)\cdot \mu(A_2)$.
More generally, for a finite sequence of events $A_1, A_2,\dots, A_m\in\+F$, we say that they are \emph{mutually independent} if, for any finite subset $I\subseteq \{1, 2,\dots, m\}$, it holds that
$\mu\tp{\bigcap_{i\in I} A_i} = \prod_{i\in I} \mu(A_i)$.

For a finite sequence of events $A_1,A_2,\dots, A_m \in \+F$, we say that they are \emph{mutually disjoint} if for any $i\neq j$, $A_i\cap A_j = \emptyset$. The \emph{law of total measure} also holds for these complex measures. Let $A_1,A_2,\dots, A_m \in \+F$ be a finite sequence of mutually disjoint events with $\bigcup_{i=1}^m A_i = \Omega$. Then for any event $B \in \+F$, we have that 
\[
\mu(B) = \sum_{i = 1}^m \mu(B\cap A_i).
\]

\subsection{Graphical models and their conditional measure}
Let $V$ be a finite set, $\*Q = (Q_v)_{v\in V}$ be a series of finite sets.
Let $H = (V, \+E)$ be a hypergraph, where each vertex $v\in V$ represents a random variable that takes the value from the set $Q_v$ and each hyperedge $e\in \+E$ represents a local constraint on the set of variables $e\subseteq V$. 
For each $v\in V$, there is a function $\phi_v: Q_v\to \=C$ that expresses the external field, and for each $e\in \+E$, there is a function $\phi_e: \bigotimes_{v\in e}Q_v \to \=C$ that expresses the interaction. 
A \emph{graphical model} is specified by the tuple $\+G = (H, (\phi_v)_{v\in V}, (\phi_e)_{e\in \+E})$. For each configuration $\sigma\in \*Q$, we define its weight as 
\[
w_{\+G}(\sigma)\defeq \prod_{e\in \+E} \phi_e(\sigma_e) \cdot \prod_{v\in V}\phi_v(\sigma_v).
\]
And the \emph{partition function} $Z_{\+G}$ of the graphical model $\+G$ is given by 
$Z = Z_{\+G} \defeq \sum_{\sigma\in \*Q} w_{\+G}(\sigma)$.

When the partition function $Z_{\+G}$ is non-zero, we define the \emph{Gibbs measure} $\mu = \mu_{\+G}$ on the measurable space $(\*Q, 2^{\*Q})$, where
$\forall \sigma\in \*Q, \mu(\sigma)\defeq \frac{w_{\+G}(\sigma)}{Z_{\+G}}$.
For any subset of variables $\Lambda \subseteq V$, we define $\*Q_{\Lambda} \defeq \bigotimes_{v\in \Lambda}Q_v$ and for a partial restriction $\sigma\in \*Q_{\Lambda}$ over $\Lambda$, we say $\sigma$ is a \emph{pinning} on $\Lambda$. And we say $\sigma$ is \emph{admissible} if its measure is non-zero. 
For a pinning $\tau\in \*Q_\Lambda$ on $\Lambda$, we will also denote $\mu(X_\Lambda = \tau)\defeq \sum_{\sigma \in \*Q:\ \sigma \succeq \tau} \mu(\sigma)$.
For any disjoint $S, \Lambda \subseteq V$ and any admissible pinning $\sigma\in \*Q_\Lambda$ on $\Lambda$, we use $\mu_S^\sigma$ to denote the marginal measure induced by $\mu$ on $S$ conditioned on $\sigma$, i.e.,
$\forall \tau \in \*Q_S, 
\mu^\sigma_S(\tau) = \frac{\mu(X_S = \tau\land X_\Lambda = \sigma)}{\mu(X_\Lambda = \sigma)}$.

\subsection{Glauber dynamics: random scan and systematic scan}
We recall the standard Glauber dynamics. Let $\mu$ be a distribution on $\*Q$ with $V = \{v_1, v_2, \dots, v_n\}$. The \emph{Glauber dynamics} is a canonical construction of Markov chains with stationary distribution $\mu$. Starting from an initial state $X_0\in \*Q$ with $\mu(X_0) \neq 0$, the chain proceeds as follows at each step $t$:
\begin{itemize}
    \item pick a variable $v \in V$ uniformly at random and set $X_t(u) = X_{t-1}(u)$ for all $u\neq v$;
    \item update $X_t(v)$ by sampling from the distribution $\mu_v^{X_{t-1}(V\setminus \{v\})}$.
\end{itemize}

The \emph{systematic scan Glauber dynamics} is a variant of the standard Glauber dynamics. In each step, instead of updating a variable at random, it updates them in a canonical order. Specifically, at each step $t$, we choose the variable $v = v_{i(t)}$, where
\begin{equation}
\label{eq:scan-i(t)}
i(t)\defeq (t\mod n) + 1.
\end{equation}
Then $X_{t-1}$ is updated to $X_t$ using the same rule as in Glauber dynamics, based on the chosen $v$.

The Glauber dynamics is well known to be both aperiodic and reversible with respect to $\mu$. The systematic scan Glauber dynamics is not time-homogeneous, as variables are accessed in a cyclic order. However, by bundling $n$ consecutive updates, we obtain a time-homogeneous Markov chain that is aperiodic and reversible.

\subsection{CSP formulas}
A constraint-satisfaction problem (CSP) formula $\Phi$ is a tuple $(V, \*Q, C)$.
$V$ is the \emph{variable set}. 
$\*Q = (Q_v)_{v\in V}$ is the \emph{domain set} and for each variable $v\in V$, $Q_v$ is finite and $v$ takes a value in $Q_v$.
And $C$ is the \emph{constraint set}.
For any subset $S\subseteq V$ of variables, we use $\bm{Q}_S = \bigotimes_{v\in S} Q_v$ to denote the domain on $S$.
Each constraint $c\in C$ is defined on a subset of variables $\vbl(c)\subseteq V$ and maps every assignment $\sigma_{\vbl(c)}\in \*Q_{\vbl(c)}$ to ``True'' or ``False'', which indicates whether $c$ is \emph{satisfied} or \emph{violated}. 
Let $c^{-1}(\-{False})$ be the set of partial assignments on $\vbl(c)$ that violate the constraint $c$.
Let $\Omega_{\Phi}$ be the set of all satisfying assignments, i.e.,
\[
\Omega_{\Phi} = \left\{\sigma \in \bigotimes_{v\in V} Q_v~\right \vert \left. \bigwedge_{c\in C} c(\sigma_{\vbl(c)}) \right\}.
\]

Note that every CSP formula can be expressed as a graphical model.
We focus on $k$-uniform bounded degree CSP formulas.
Let $k, \Delta$ be two integers. A \emph{$(k, \Delta)$-CSP formula} is a CSP formula $\Phi = (V, \*Q, C)$ such that for each constraint $c\in C$, it holds that $|\vbl(c)|=k$ and each variable belongs to at most $\Delta$ constraints.

The \emph{dependency graph} $\dependencyGraph{\Phi}$ of a CSP formula $\Phi = (V, \*Q, C)$ is defined on the vertex set $C$, such that any two constraints $c, c'\in C$ are adjacent if $\vbl(c)\cap\vbl(c')\neq \emptyset$ and $c \neq c'$. We use $\Gamma(c)\defeq \{c'\in C\setminus \{c\}\mid \vbl(c)\cap\vbl(c')\neq \emptyset\}$ to denote the neighborhood of $c\in C$.

\subsection{Lov\'asz local lemma}
Let $\+R = \{R_1, R_2, \dots, R_n\}$ be a collection of mutually independent random variables. For any event $E$, let $\vbl(E)\subseteq \+R$ be the set of variables determining the event $E$. Let $\+B = \{B_1, B_2, \dots, B_m\}$ be a collection of ``bad'' events. For each event $B\in\+B$, we define $\Gamma(B)\defeq \{B'\in \+B\mid B'\neq B \text{ and } \vbl(B')\cap \vbl(B)\neq \emptyset\}$. Let $\+P[\cdot]$ denote the product distribution of variables in $\+R$. The Lov\'asz local lemma gives a sufficient condition for avoiding all bad events with positive probability.
\begin{theorem}[{\cite{erdHos1975problems}}]
\label{theorem:LLL}
If there is a function $x:\+B\to(0,1)$ such that
\begin{equation}
\label{eq:LLL}
\forall B\in \+B, \quad \+P[B]\le x(B)\prod_{B'\in \Gamma(B)}(1 - x(B')),
\end{equation}
then it holds that
\[
\+P\left[  \bigwedge_{B\in \+B} \bar{B} \right] \ge \prod_{B\in\+B}(1 - x(B)) > 0. 
\]
\end{theorem}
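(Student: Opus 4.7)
The plan is to prove by induction on $|S|$ the following stronger claim: for every $B \in \+B$ and every $S \subseteq \+B \setminus \{B\}$ with $\+P\!\left[\bigwedge_{B' \in S} \bar{B'}\right] > 0$, one has $\+P\!\left[B \mid \bigwedge_{B' \in S} \bar{B'}\right] \le x(B)$. Once this is in hand, the desired lower bound on $\+P\!\left[\bigwedge_{B \in \+B} \bar{B}\right]$ follows from the chain rule, as I outline at the end. This is the standard inductive route to LLL and I expect it to work cleanly in the present setting.

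The base case $S = \emptyset$ is immediate from hypothesis (\ref{eq:LLL}), since $\+P[B] \le x(B)\prod_{B' \in \Gamma(B)}(1 - x(B')) \le x(B)$ as every factor $1 - x(B')$ lies in $(0,1)$. For the inductive step, I would split the conditioning set as $S = S_1 \cup S_2$ with $S_1 = S \cap \Gamma(B)$ and $S_2 = S \setminus \Gamma(B)$, and apply Bayes' rule to write
\[
\+P\!\left[B \mid \bigwedge_{B' \in S} \bar{B'}\right] = \frac{\+P\!\left[B \land \bigwedge_{B' \in S_1} \bar{B'} \;\middle|\; \bigwedge_{B' \in S_2} \bar{B'}\right]}{\+P\!\left[\bigwedge_{B' \in S_1} \bar{B'} \;\middle|\; \bigwedge_{B' \in S_2} \bar{B'}\right]}.
\]
For the numerator, the event $B$ is determined by variables disjoint from those of every event in $S_2$, so $B$ is independent of $\bigwedge_{B' \in S_2} \bar{B'}$ under $\+P$; dropping the $S_1$ conjuncts only increases the numerator, giving the bound $\+P[B] \le x(B)\prod_{B' \in \Gamma(B)}(1 - x(B'))$. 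For the denominator, I would enumerate the elements of $S_1$ and expand $\bigwedge_{B' \in S_1} \bar{B'}$ as a chain of conditional probabilities, applying the inductive hypothesis (whose conditioning sets are strictly smaller) to each factor to obtain $\prod_{B' \in S_1}(1 - x(B')) \ge \prod_{B' \in \Gamma(B)}(1 - x(B'))$. Dividing, the $\prod_{B' \in \Gamma(B)}(1 - x(B'))$ factors cancel and the inductive claim $\le x(B)$ follows.

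Finally, fixing any enumeration $\+B = \{B_1, \dots, B_m\}$ and applying the chain rule,
\[
\+P\!\left[\bigwedge_{i=1}^m \bar{B_i}\right] = \prod_{i=1}^m \+P\!\left[\bar{B_i} \;\middle|\; \bigwedge_{j<i} \bar{B_j}\right] \ge \prod_{i=1}^m (1 - x(B_i)) > 0,
\]
which is the target bound. The one subtle point that the induction must quietly track is well-definedness of every conditional probability appearing above; since each inductively derived inequality yields $\+P\!\left[\bar{B_i} \mid \bigwedge_{j<i}\bar{B_j}\right] \ge 1 - x(B_i) > 0$, all relevant conditioning events have positive probability, so no division by zero ever occurs. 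I expect no analytic obstacle here — the main care needed is purely combinatorial bookkeeping in choosing the splitting $S = S_1 \cup S_2$ and cleanly invoking the independence of $B$ from events in $S_2$.
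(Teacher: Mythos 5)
The paper states this theorem as a cited classical result (Erd\H{o}s--Lov\'asz) and gives no proof of its own, so there is nothing to compare against. Your argument is the standard inductive proof of the asymmetric Local Lemma and is correct as written: the base case, the split $S = S_1 \cup S_2$ with independence of $B$ from the events in $S_2$, the telescoping lower bound on the denominator via the inductive hypothesis, and the remark on positivity of all conditioning events are all handled properly.
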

We also include the following lopsided version of the Lov\'asz local lemma. For a constraint set $C$, we use $\+P[\cdot \mid C]$ to denote the product distribution $\+P[\cdot]$ conditioned on the event that all constraints in $C$ are satisfied.
\begin{theorem}[{\cite[Theorem 2.1]{haeupler2011new}}]
    \label{theorem:HSS}
    Given a CSP formula $\Phi = (V, \*Q, C)$, let $C = \set{c_1, c_2, \dots, c_{|C|}}$ and $\+B = \set{B_1, B_2, \dots, B_{|C|}}$ where $B_i$ is the event that the constraint $c_i$ is violated, if there exists a function $x: \+B \to (0, 1)$ such that \cref{eq:LLL} holds, then for any event $\+A$ that is determined by the assignment on a subset of variables $\vbl(\+A) \subseteq V$,
    \[
    \+P[\+A\mid C]\le \+P[\+A]\cdot \prod_{\substack{c\in C\\ \vbl(c)\cap \vbl(\+A) \neq \emptyset}} (1 - x(c))^{-1}.
    \]
\end{theorem}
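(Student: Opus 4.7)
The plan is to adapt the classical \Erdos{}--\Lovasz{} inductive proof of the Lov\'asz local lemma, treating $\+A$ as if it were a bad event except that its base probability is the given $\+P[\+A]$ rather than some value $x(\+A)$. The key structural observation is that $\+A$ depends only on variables in $\vbl(\+A)$ and is therefore independent, under the product distribution $\+P$, of every constraint whose variable set is disjoint from $\vbl(\+A)$; consequently no ``LLL penalty'' should be paid for such constraints, which is precisely why the product on the right-hand side is restricted to $\{c\in C:\vbl(c)\cap\vbl(\+A)\neq\emptyset\}$.

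Concretely, I would partition $C=C_1\sqcup C_2$ with $C_1=\{c\in C:\vbl(c)\cap\vbl(\+A)\neq\emptyset\}$ and $C_2=C\setminus C_1$, and write $\ol{C_i}$ for the event that every constraint in $C_i$ is satisfied. Bayes's rule then gives
\[
\+P[\+A\mid\ol{C_1}\wedge\ol{C_2}]\;=\;\frac{\+P[\+A\wedge\ol{C_1}\mid\ol{C_2}]}{\+P[\ol{C_1}\mid\ol{C_2}]},
\]
so the task reduces to upper-bounding the numerator and lower-bounding the denominator.

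For the numerator, I drop the conjunction with $\ol{C_1}$, giving $\+P[\+A\wedge\ol{C_1}\mid\ol{C_2}]\le\+P[\+A\mid\ol{C_2}]$. Since $\+A$ is determined by $\vbl(\+A)$ while $\ol{C_2}$ is determined by $\vbl(C_2)=\bigcup_{c\in C_2}\vbl(c)$, and these variable sets are disjoint by the construction of $C_2$, the mutual independence of the base variables in $\+R$ implies that $\+A$ and $\ol{C_2}$ are independent under $\+P$; hence $\+P[\+A\mid\ol{C_2}]=\+P[\+A]$. For the denominator, I would invoke the Shearer-style inequality that is the core of the inductive proof of \Cref{theorem:LLL}: under hypothesis \eqref{eq:LLL}, for every $c\in C$ and every $T\subseteq C\setminus\{c\}$ one has $\+P[\ol{c}\mid\bigwedge_{c'\in T}\ol{c'}]\ge 1-x(c)$. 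Applying this via the chain rule to $\+P[\ol{C_1}\mid\ol{C_2}]$, peeling constraints of $C_1$ off the conditioning one at a time while keeping $\ol{C_2}$ fixed, yields $\+P[\ol{C_1}\mid\ol{C_2}]\ge\prod_{c\in C_1}(1-x(c))$.

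Combining the two bounds produces exactly the claimed inequality. I do not anticipate a genuine obstacle: the derivation is a direct ``lopsided'' variant of the \Erdos{}--\Lovasz{} induction, and the only point that warrants careful verification is the independence step in the numerator, which relies essentially on both the mutual independence of the variables in $\+R$ and the hypothesis that $\+A$ is determined by $\vbl(\+A)$. Without the disjointness $\vbl(\+A)\cap\vbl(c)=\emptyset$ for every $c\in C_2$, one would be forced to pay a $(1-x(c))^{-1}$ factor for those constraints as well, recovering only the usual LLL bound rather than the sharper sampling bound stated here.
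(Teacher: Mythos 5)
The paper does not prove this statement; it is imported verbatim as \cite[Theorem 2.1]{haeupler2011new}. Your argument is a correct, self-contained reconstruction of the standard proof of that result: the Bayes decomposition, the independence of $\+A$ from $\ol{C_2}$ under the product measure, and the chain-rule application of the core inductive inequality $\+P[\ol{c}\mid\bigwedge_{c'\in T}\ol{c'}]\ge 1-x(c)$ are all exactly as in the cited reference (the only quibble is that this inequality is the \Erdos--\Lovasz{} inductive claim rather than a ``Shearer-style'' bound, but the content you use is the right one).
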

The celebrated Moser-Tardos algorithm~\cite{moser2010constructive}  constructs an assignment of all random variables in $\+P$ that avoids all the bad events in $\+B$ under the same assumption.
\begin{theorem}[{\cite{moser2010constructive}}]
\label{theorem:moser-tardos}
Suppose the asymmetric local lemma condition \cref{eq:LLL} in \Cref{theorem:LLL} holds with the function $x:\+B\to(0,1)$. Upon termination, the Moser-Tardos algorithm returns an assignment that avoids all the bad events. The expected total resampling steps for the Moser-Tardos algorithm are at most $\sum_{B\in\+B} \frac{x(B)}{1 - x(B)}$.
\end{theorem}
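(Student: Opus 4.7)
The plan is to follow the classical witness-tree argument of Moser and Tardos. Correctness (that termination implies all bad events are avoided) is immediate from the resampling rule, so the substantive content is the expected-time bound. I focus on proving $\=E[N_B]\le x(B)/(1-x(B))$ for each $B\in\+B$, where $N_B$ counts the number of times the algorithm resamples $B$; summing over $B$ then yields the claimed total.

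First, I would represent a run as a log $L=(B_{i_1},B_{i_2},\ldots)$ of events resampled in order, and to each occurrence of $B$ at position $t$ attach a \emph{witness tree} $\tau_t$ rooted at $B$, constructed by scanning positions $s<t$ backwards and attaching $B_{i_s}$ to the deepest already-placed node whose variable scope intersects $\vbl(B_{i_s})$, defaulting to the root when forced. A routine check shows the resulting trees are \emph{proper} (siblings at the same depth have pairwise disjoint variable neighborhoods) and that the mapping $t\mapsto\tau_t$ is injective on occurrences of $B$, giving
\[
\=E[N_B]\le \sum_{T\ \text{proper, rooted at }B}\Pr{T \text{ occurs in }L}.
\]

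The key probabilistic step is to couple the algorithm with an independent \emph{resampling table}: for each variable $v$, pre-sample an infinite i.i.d.\ sequence from $\+P$ restricted to $v$, and let the algorithm consume these in the order variables are resampled. Under this coupling, the occurrence of $T$ forces, for every node $B'\in V(T)$, a specific contiguous block of table cells indexed by $\vbl(B')$ to satisfy $B'$; the way $T$ is built ensures these blocks are disjoint across nodes, so $\Pr{T \text{ occurs in }L}\le \prod_{B'\in V(T)}\+P[B']$.

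The main obstacle is the combinatorial bound $\sum_T \prod_{B'\in V(T)}\+P[B']\le x(B)/(1-x(B))$ over proper witness trees rooted at $B$. The cleanest route is to interpret the hypothesis $\+P[B']\le x(B')\prod_{B''\in\Gamma(B')}(1-x(B''))$ as the per-node survival estimate of a Galton--Watson process whose offspring distribution at a node labelled $B'$ draws an independent subset of $\Gamma(B')$ including each $B''$ with probability $x(B'')$. A recursive identity for the generating function of proper-tree weights then simultaneously yields $\sum_{T\text{ rooted at }B'}\prod \+P[\cdot]\le x(B')/(1-x(B'))$ for every $B'$, closing the argument. This recursive inversion is the delicate combinatorial step; everything before it is bookkeeping about the execution log.
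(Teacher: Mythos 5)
The paper does not prove this statement: it is quoted verbatim from Moser--Tardos \cite{moser2010constructive} as a black box, so there is no internal proof to compare against. Your outline is the standard original argument (execution log, witness trees, resampling-table coupling, Galton--Watson domination) and each of the three steps you isolate is the right one; as a reconstruction of the cited proof it is essentially correct.

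One detail in your final step needs fixing before the combinatorial bound closes. The offspring of a node labelled $B'$ in the branching process must be drawn from the \emph{inclusive} neighborhood $\Gamma^+(B')=\Gamma(B')\cup\{B'\}$, not from $\Gamma(B')$: a proper witness tree can contain a child carrying the same label as its parent (this happens whenever the same event is resampled twice with no deeper intersecting occurrence in between), and a process that never begets a same-labelled child assigns probability zero to such trees, so the inequality $\sum_T\prod_{B'\in V(T)}\+P[B']\le \frac{x(B)}{1-x(B)}\sum_T p_T\le \frac{x(B)}{1-x(B)}$ would not dominate the full sum. With $\Gamma^+$ in place, the telescoping identity $p_T=\frac{1-x(B)}{x(B)}\prod_{v\in T}x'([v])$ with $x'(B')=x(B')\prod_{B''\in\Gamma(B')}(1-x(B''))$ goes through, and the hypothesis $\+P[B']\le x'(B')$ is exactly what converts the tree weight into the branching-process probability. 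This is a standard, easily repaired slip rather than a structural gap.
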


\subsection{$2$-tree}

We also need the notion of $2$-trees~\cite{Alon91}. 
Given a graph $G=(V,E)$, its square graph $G^2=(V,E_2)$ has the same vertex set, while an edge $(u,v)\in E_2$ if and only if $1\leq \dist_G(u,v)\leq 2$. 

\begin{definition}[$2$-tree]\label{definition:2-tree}
Let $G=(V, E)$ be a graph. A set of vertices $T\subseteq V$ is called a \emph{$2$-tree} of $G$, if
\begin{itemize}
    \item  for any $u,v\in T$, $\text{dist}_G(u,v)\geq 2$, and
    \item  $T$ is connected on $G^2$.
\end{itemize}
\end{definition}

Intuitively, a $2$-tree is an independent set that does not spread far away. We can construct a large $2$-tree in any connected graph as follows.

\begin{definition}[Construction of a maximal $2$-tree {\cite[Lemma 4.5]{Vishesh21towards}}]\label{definition:2-tree-construction}
Let $G = (V, E)$ be a connected graph of maximum degree $D$ and $v\in V$. We can deterministically construct a $2$-tree $T$ of $V$ containing $v$
such that $\abs{T} \geq \lfloor |V|/(D + 1) \rfloor$ as follows:
\begin{itemize}
\item order the vertices in $V$ in the lexicographical order. Start with $T=\{v\}$ and $U=V\setminus N^+(v)$, where $N^+(v)\defeq N(v)\cup \{v\}$ and $N(v)\defeq \{u\in V\mid (u,v)\in E\}$ ;
\item repeat until $U=\emptyset$: let $u$ be the vertex in $U$ with the smallest distance to $T$, with ties broken by the order on $V$. Set $T\gets T\cup \{u\}$ and $U\gets U\setminus N^+(u)$.
\end{itemize}
\end{definition}
The following lemma bounds the number of subtrees and $2$-trees of a certain size containing a given vertex, respectively.

\begin{lemma}[\text{\cite[Lemma 2.1]{borgs2013left}},{\cite[Corollary 5.7]{feng2021rapid}}]\label{lemma:2-tree-number-bound}
Let $G=(V, E)$ be a graph with maximum degree $D$, and $v\in V$ be a vertex. 
The number of subtrees in $G$ of size $k \ge 2$ containing $v$ is at most $\frac{(\mathrm{e}D)^{k-1}}{2}$, and the number of 2-trees in $G$ of size $k \geq 2$ containing $v$ is at most $\frac{(\mathrm{e}D^2)^{k-1}}{2}$.
\end{lemma}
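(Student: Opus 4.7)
The plan is to establish the subtree bound first and then reduce the 2-tree bound to it by passing to the square graph.

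\textbf{Step 1: subtree bound.} To bound the number of size-$k$ subtrees of $G$ containing $v$, I would use a rooted-tree / exploration encoding. Root every such subtree $T$ at $v$; this makes it a rooted tree with at most $D$ children per vertex. The natural counting is by an edge-by-edge growth sequence: start with $\{v\}$ and, at each of the $k-1$ steps, add one edge from the current set to a new vertex. The number of such growth sequences is at most $\prod_{i=1}^{k-1} i D = (k-1)!\,D^{k-1}$, since at step $i$ the current subtree has $i$ vertices and hence at most $iD$ boundary edges to new vertices. Each size-$k$ subtree is generated by exactly $(k-1)!/\prod_u s_u(T)$ distinct growth sequences, where $s_u(T)$ is the size of the subtree rooted at $u$. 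A slicker, sharper argument uses Lagrange inversion on the generating function for rooted trees of max degree $D$, namely $T(x)=x(1+T(x))^D$, giving $[x^k]T(x)\le \frac{1}{k}\binom{Dk}{k-1}$, which by Stirling-type estimates is at most $\frac{(\mathrm{e}D)^{k-1}}{2}$. I would simply invoke this classical bound as stated in~\cite{borgs2013left}.

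\textbf{Step 2: 2-tree bound.} For the 2-tree count, the key observation is that the square graph $G^2$ has maximum degree at most $D^2$: from any vertex $u$, there are at most $D$ neighbors at distance exactly $1$ in $G$, and each such neighbor contributes at most $D-1$ additional vertices at distance $2$, so $|N_{G^2}(u)|\le D + D(D-1)=D^2$. By \Cref{definition:2-tree}, any 2-tree $T$ of $G$ is an independent set in $G$ that is connected in $G^2$. In particular, $T$ is the vertex set of some connected subgraph of $G^2$ containing $v$, so it contains a spanning tree in $G^2$ of size $k$ on the same vertex set. Since distinct 2-trees have distinct vertex sets, fixing any canonical choice of spanning subtree in $G^2$ (e.g., the lexicographically minimal one) gives an injection from the set of size-$k$ 2-trees of $G$ containing $v$ into the set of size-$k$ subtrees of $G^2$ containing $v$. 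Applying Step~1 to $G^2$ with $D$ replaced by $D^2$ yields the bound $\frac{(\mathrm{e}D^2)^{k-1}}{2}$.

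\textbf{Main obstacle.} The principal technical difficulty lies in Step~1: obtaining the sharp constant $\mathrm{e}$ rather than, say, $4$ in the $(\cdot D)^{k-1}$ factor. A naive DFS/plane-tree encoding gives a Catalan factor $C_{k-1}\le 4^{k-1}$ instead of $\mathrm{e}^{k-1}$. Replacing it by the Lagrange-inversion estimate on the rooted-tree generating function (or equivalently a careful average over linear extensions of the growth order) is what produces the sharper constant. Since this lemma is cited from \cite{borgs2013left} and \cite{feng2021rapid}, I would rely on their proof for Step~1 and present only the clean reduction in Step~2, which is the only new content needed here.
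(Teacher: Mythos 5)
Your proposal is correct and matches the standard derivation: the paper itself only cites this lemma from \cite{borgs2013left} and \cite{feng2021rapid}, and the reduction you give in Step 2 (bounding $\Delta(G^2)\le D+D(D-1)=D^2$, then injecting each 2-tree into a canonical spanning subtree of $G^2$ and invoking the subtree bound) is exactly how the cited Corollary 5.7 of \cite{feng2021rapid} obtains the 2-tree count from the subtree count. Nothing further is needed.
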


\section{Controlling marginal measures via analytic percolation}
\label{section:analytic-percolation}

A crucial step in establishing zero-freeness is to bound the norm of certain complex marginal measures.
In this section, we review the theory of complex Markov chains developed in~\cite{liu2024phase}, which provides a novel tool for analyzing complex normalized measures via complex Markov chains.
We demonstrate how to analyze the convergence of complex Markov chains (\Cref{lemma:convergence}) and how to bound the norm of marginal measures (\Cref{lemma:bounding-marginal-measure}).

For cases where the theory of complex Markov chains breaks down—such as when the support set is disconnected—we propose a \emph{projection-lifting scheme}.
In this approach, we first analyze a ``projected complex normalized measure'' using the complex Markov chain theory, and then lift the result back to the original complex normalized measure via the law of total measure.



\subsection{Projection-lifting scheme}
\label{subsection:projection-lifting-scheme}
To introduce our \emph{projection-lifting scheme}, we first define the \emph{projected measure}.
Let $\*\Sigma = (\Sigma_v)_{v\in V}$ where for any $v\in V$, $\Sigma_v$ is a finite \emph{alphabet}.
Let $\*f = (f_v)_{v\in V}$ be a series of projections where $f_v: Q_v\to \Sigma_v$. 
We naturally interpret $\*f$ as a function, for any $\sigma\in\*Q$, we have $\*f(\sigma) = (f_v(\sigma_v))_{v\in V}$.
We refer to $\*\Sigma$ as \emph{the projected alphabet} or \emph{projected domain}.
The \emph{projected measure} $\psi$ on the projected alphabet $\*\Sigma$ is defined as:
\[
\forall \sigma\in\*\Sigma,\quad \psi(\sigma) = \sum_{\tau\in \*Q: \sigma = \*f(\tau)} \mu(\tau).
\] 

Assume that we can analyze the complex marginal measure of $\psi$, for example, through the theory of complex Markov chains.
And recall that a crucial step to prove zero-freeness is to bound the norm of complex marginal measures. Let $A\subseteq \*Q$ be an event. Now we demonstrate how to upper bound $|\mu(A)|$ through the marginals of the projected measure $\psi$.

We first define a series of events $\+B_1, \+B_2, \dots, \+B_m$, where for each $i\le m$, $\+B_i \subseteq \*\Sigma$. We remark that by the definition of the projected measure, for any $1\le i\le m$, the marginal $\mu(\+B_i)$ is well-defined and $\mu(\+B_i) = \psi(\+B_i)$. We give the following condition about our projection-lifting scheme. Next, we first show how to use this condition to bound $|\mu(A)|$, then we outline its intuition.
\begin{condition}
\label{condition:projection-lifting-scheme}
Let $A\subseteq \bm{Q}$ be an event.
It holds that
\begin{enumerate}
\item $\+B_1, \+B_2, \dots, \+B_m$ are mutually disjoint, that is,
 for any $i\neq j$,  $\+B_i \cap \+B_j = \emptyset; $\label{item:projection-lifting-scheme-partition}
\item the event $A\cap (\*\Sigma \setminus (\bigcup_{j=1}^m \+B_j))$ has zero measure, that is, any outcome in $A\cap \*\Sigma$ must occur in one of the events $\+B_1, \+B_2,\dots, \+B_m$;\label{item:projection-lifting-scheme-cover}
\item for any $1\le i\le m$, we have 
\[
\psi(\+B_i) = 0 \Rightarrow \mu(A\land \+B_i) = 0,\quad \psi(\+B_i) \neq 0 \Rightarrow \mu(A\mid \+B_i) \in [0,1].
\]\label{item:projection-lifting-scheme-condition}
\end{enumerate}
\end{condition}
By \Cref{condition:projection-lifting-scheme} and the triangle inequality, we can upper bound the $|\mu(A)|$ as,
\[
|\mu(A)| = \abs{\sum_{i=1}^m \mu(A \land \+B_i)} \le \sum_{i=1}^m |\mu(A \land \+B_i)| \le \sum_{i=1}^m |\mu(\+B_i)| = \sum_{i=1}^m |\psi(\+B_i)|,
\]
where the first equality is due to \Cref{condition:projection-lifting-scheme}-(\ref{item:projection-lifting-scheme-partition}) and \Cref{condition:projection-lifting-scheme}-(\ref{item:projection-lifting-scheme-cover}), the first inequality is due to the triangle inequality, and the second inequality is due to \Cref{condition:projection-lifting-scheme}-(\ref{item:projection-lifting-scheme-condition}) and the last equality is due to the definition of $\psi$.

The intuition for \Cref{condition:projection-lifting-scheme} is that we find disjoint events (\ref{item:projection-lifting-scheme-partition}) on the projected measure that ``cover the event $A$'' (\ref{item:projection-lifting-scheme-cover}). So that we can use the law of total measure and the triangle inequality to upper bound the norm of $A$. Next, if we can fix the contributions from complex external fields by the events in $\psi$, then condition on that event, the conditional measure becomes a probability which is in $[0,1]$ (\ref{item:projection-lifting-scheme-condition}). 

Later, we use a percolation-style analysis to construct these events $\+B_1, \+B_2, \dots, \+B_m$. Then, in \Cref{subsection:csp-zero-freeness}, we use the theory of complex Markov chains for the analysis of $|\psi(\+B_i)|$.


\subsection{Complex Extensions of Markov Chains}
We review the notion of complex extension of Markov chains from~\cite{liu2024phase}, where we make the modifications in this paper: the Glauber dynamics is applied only to a projected measure.
\subsubsection{Complex Markov Chain}
Let $\Omega$ be a finite state space. 
A complex-valued transition matrix $P \in \=C^{\Omega \times \Omega}$ satisfies that
$\forall \sigma \in \Omega,  \sum_{\tau \in \Omega} P(\sigma, \tau) = 1$.
Fix $T \ge 1$. For a measurable space $(\Omega, \+F)$ with finite $\Omega$, 
we write $\Omega^T$ for the Cartesian product, and $\+F^{T}$ for the product $\sigma$-algebra. 
Let $\`P$ be a  complex normalized measure on $(\Omega^{T}, \+F^{T})$  and 
let $X_1, X_2, \ldots, X_T$  be a sequence  of measurable functions taking values over $\Omega$ following the measure~$\`P$. 
The sequence $(X_t)_{t=1}^T$ is said to be a $T$-step discrete-time complex Markov chain if there exists a complex-valued transition matrix $P \in \=C^{\Omega \times \Omega}$ such that
for any $1< j \le T$ and any $x_1, x_2, \dots, x_{j} \in \Omega$,  
\begin{align*}
 \`P\tp{X_{j} = x_{j} \mid \bigwedge\limits_{i=1}^{j-1} X_{i}=x_i} &= \`P\tp{X_{j} = x_{j} \mid X_{j-1}=x_{j-1}}
 = P(x_{j-1}, x_{j}).
\end{align*}

We use $P$ to refer to the corresponding Markov chain. 
For a complex normalized measure  $\nu \in \=C^{\Omega}$ on~$\Omega$,
the measure $\nu P$ obtained via  a one-step transition of the Markov chain from $\nu$ is given by
\[\forall x \in \Omega,\quad (\nu P)(x) = \sum\limits_{y \in \Omega} \nu(y) P(y, x).\]
A complex measure $\pi$  over $\Omega$  is a \emph{stationary measure} of $P$ if $\pi = \pi P$. 
We define convergence next.

\begin{definition}[Convergence of the complex Markov chains]
\label{definition:convergence-MC}
    A Markov chain with a complex-valued transition matrix $P$ and state space $\Omega$ is said to be \emph{convergent} if,
    for any two complex normalized measures $\mu$ and $\mu^*$ over $\Omega$, it holds that
    $\lim_{T \to \infty} \norm{\mu P^T - \mu^* P^T}_1 = 0$.
\end{definition}


\subsubsection{Complex Glauber Dynamics}
 We include the definition of the complex extension of systematic scan Glauber dynamics for complex normalized measures. 

\begin{definition}[Complex extension of systematic scan Glauber dynamics] 
\label{definition:complex-MCMC-transition-form}
Let $\mu\in \=C^{\*Q_V}$ be a complex normalized measure.
The complex systematic scan Glauber dynamics for the target measure $\mu$ is defined by a sequence of complex-valued transition matrices $P_t \in \=C^{\Omega \times \Omega}$ for $t\ge 1$, 
where with $v = v_{i(t)}$ (and $i(t)$ is as defined in \cref{eq:scan-i(t)}), the transition matrix  $P_t$ is defined as
\[
P_t(\sigma, \tau) \defeq \begin{cases}
    \mu_v^{\sigma(V \backslash \{v\})}(\tau_v) & \text{ if } \forall u \neq v, \sigma_u = \tau_u,\\
    0 & \text{ otherwise.}
\end{cases}
\]

\end{definition}
Starting from an initial state $\tau\in \supp(\mu)$, 
the complex Markov chain generates an induced complex measure $\mu_t\in \=C^{\*Q_V}$ which we define next.   At time $t=0$, we define $\mu_0(\tau) = 1$ and $\mu_0(\sigma) = 0$ for all $\sigma \in \*Q_V\setminus\{\tau\}$, and for $t\ge 1$, we define 
$\mu_{t} \defeq \mu_{t-1} P_t$.

\begin{remark}[Well-definedness of the complex systematic scan Glauber dynamics]\label{remark:MCMC-well-defined}
The complex systematic scan Glauber dynamics, as defined in \Cref{definition:complex-MCMC-transition-form}, 
is well-defined as long as the conditional measures $\mu_{v}^{\sigma(V\setminus \{v\})}$ are well-defined for each $\sigma\in \supp(\mu)$ and every $v\in V$.
Then, the induced complex measures $\mu_t$ remain normalized at any time.
\end{remark}

We include the dynamics-based formulation of the complex systematic scan Glauber dynamics in \Cref{Alg:complex-GD}. Recall that the dynamics have a stationary measure $\mu$, an initial starting state $\tau$, and we denote the associated induced complex measure by $\mu^{\-{GD}}_{T,\tau}$. For technical convenience, we shift the timeline of the dynamics so that we are starting with a state $\sigma_{-T}$ and the final state is $\sigma_0$.

  We remark that \Cref{Alg:complex-GD} (as well as \Cref{Alg:complex-GD-decomposed}, which are introduced later) serves as an analytic tool instead of an efficient algorithm.

\begin{algorithm}
\caption{Complex systematic scan Glauber dynamics} \label{Alg:complex-GD}
\SetKwInOut{Input}{Input}
\Input{An arbitrary initial configuration $\tau\in\supp(\mu)\subseteq \*Q$ and an integer $T\ge 1$;} 
Set $\sigma_{-T}\gets\tau$\;
\For{$t=-T+1,-T+2,\ldots,0$}{
let $\sigma_{t}\gets\sigma_{t-1}$ and $v\gets v_{i(t)}$, where $i(t)=(t \mod \abs{V}) + 1$\;
let $o_t$ follow the conditional measure $\mu_{v}^{\sigma_{t-1}(V\setminus \{v\})}$\;
update $\sigma_t(v)\gets o_t$\;
}
\end{algorithm}


\begin{remark} \label{remark:remark-equivalence-matrix-dynamics}
It is straightforward to verify that the processes described in \Cref{definition:complex-MCMC-transition-form} and \Cref{Alg:complex-GD} are essentially equivalent in the following sense: for any $\tau^* \in \supp(\mu)$,  we have $\mu^\-{GD}_{T, \tau}(\sigma_t = \tau^*) = \mu_{t+T}(\tau^*)$ for all $-T \le t \le 0$.
This can be routinely verified through induction on $t$.
    \end{remark}


We also include the definition of the stationary systematic scan Glauber dynamics. 

\begin{definition}[Stationary systematic scan Glauber dynamics]\label{definition:stationary-start}
Consider the process defined in \Cref{Alg:complex-GD}, 
but now with the initial state $\sigma_{-T}$ following the measure $\mu$. 
We call this modified process the $T$-step stationary systematic scan Glauber dynamics,
and denote its induced measure as $\mu^{\-{GD}}_{T}$.
\end{definition}

It is straightforward to verify that for all $t\in [-T,0]$, the measure induced on $\sigma_{t}$  under $\mu^{\-{GD}}_{T}$ precisely follows  the measure $\mu$.
As it lacks general convergence theorems for Markov chains with complex-valued transition matrices, we include a sufficient condition (\Cref{condition:convergence}) and prove (in \Cref{lemma:convergence}) by comparing with the stationary process in \Cref{definition:stationary-start}. And under this condition, the Glauber dynamics starting from any initial state converges to a unique limiting measure, which is precisely the stationary measure $\mu$.
\subsubsection{Decomposition scheme}

We also include the decomposition scheme in~\cite[Definition 3.7]{liu2024phase}.
We decompose each transition of a complex systematic scan Glauber dynamics into two steps: an oblivious step, where this step does not depend on the current configuration; and an adaptive step, which tries to make up to the right transition measure.

For $t\in \=Z_{\le 0}$, let $i(t)$ be the index of the vertex updated at time $t$, and let $v = v_{i(t)}$. 
To describe the first oblivious step, we associate the vertex $v$ with a complex normalized measure $b_v$ whose support set is $Q_v\cup \set{\bot}$. 
And the choice of $b_v$ and $S_v$ depends on specific models.
In the first oblivious step, let $r_t \in Q_v\cup\set{\bot}$ be a random variable that follows the complex normalized measure $b_v$. 

The second step is an adaptive step depending on $r_t$ sampled in the first step and the current configuration. 
For $t\in\=Z_{\le t}$, let $\sigma_t$ be the configuration at time $t$ of the complex Markov chain.
Let $v = v_{i(t)}$.

For any $r_t\in Q_v\cup\set{\bot}$ and any admissible $\sigma_{t-1}$, let $\tau$ be the configuration of $\sigma_{t-1}$ projected on $V\setminus \{v\}$.
If $r_t \neq \bot$, we set $\forall u\in V\setminus\{v\}$, $\sigma_t(u) = \sigma_{t-1}(u)$ and $\sigma_t(v) = r_t$ directly.
Otherwise, let $\mu_v^{\tau, \bot}: Q_v\to \=C$ be a function, and if $b_v(r_t)\neq 0$, it further holds that $\mu_v^{\tau, \bot}$ is a normalized complex measure on $Q_v$. 
If $b_v(r_t)\neq 0$, we use $\mu_v^{\tau, \bot}$ to describe the second adaptive step. 
Let $o_t\in Q_v$ be a random variable that follows the complex normalized measure $\mu_v^{\tau, \bot}$.
Then we set $\forall u\in V\setminus\{v\}$, $\sigma_t(u) = \sigma_{t-1}(u)$ and $\sigma_t(v) = o_t$.
To make $\sigma_t(v)$ follow the correct marginal measure, we further restrict that 
\[
\forall c\in Q_v, \quad \mu^\tau_v(c) = b_v(c) + b_v(\bot)\cdot \mu_v^{\tau, \bot}(c).
\]

For any subset $C\subseteq V$, let $\+F_C$ be the set of all \emph{extendable} configurations (can be extended to a feasible configuration) on the set $C$.
By selecting $(b_v)_{t\in V}$ and $(\mu_v^{\tau, \bot})_{v\in V, \tau \in \+F_{V\setminus\{v\}}}$ properly, it holds that $o_t$ generated by the above two steps follows the transition measure of our complex Markov chain.

The above intuition leads to the following definition.

\begin{definition}[{Decomposition scheme ~\cite[Definition 3.7]{liu2024phase}}]\label{definition:decomposition-scheme}
Let $\mu\in \mathbb{C}^{\*Q}$ be a complex normalized measure. 
For each $v\in V$, we associate a complex normalized measure $b_v: Q_v\cup\set{\bot} \to \=C$, and let $\bm{b}=(b_v)_{v\in V}$.
We define the $\*b$-decomposition scheme on $\mu$ as follows. For each $v\in V$ and each extendable $\tau\in \+F_{V\setminus \set{v}}$, we define the measure $\mu_v^{\tau, \bot}$ as
\begin{equation}
\label{eq:decomposition-adatptive-measure}
\forall c\in Q_v,\quad \mu_v^{\tau, \bot}(c) \defeq \frac{\mu_v^{\tau}(c)-b_v(c)}{b_v(\bot)}.
\end{equation}
Then, the marginal measure $\mu_v^\tau$ can be decomposed as:
\begin{equation}\label{eq:decomposition}
\forall c\in Q_v,\quad \mu_v^{\tau}(c) = b_v(c) + b_v(\bot)\cdot \mu_v^{\tau, \bot}(c),
\end{equation}
where we assume the convention $0\cdot \infty =0$ to ensure that \cref{eq:decomposition} still holds when $b_v(\bot)=0$.
\end{definition}


Given a $\bm{b}$-decomposition scheme, the complex systematic scan Glauber dynamics can be reinterpreted as the process in \Cref{Alg:complex-GD-decomposed}. For $T\ge 1$ and $\tau \in \supp(\mu)$, let $\dmu{\tau}$ be its induced measure. 
By \Cref{definition:decomposition-scheme}, it holds that the Glauber dynamics is equivalent to the decomposed Glauber dynamics.

\begin{algorithm}
\caption{$\*b$-decomposed complex systematic scan Glauber dynamics} \label{Alg:complex-GD-decomposed}
\SetKwInOut{Input}{Input}
\Input{ An arbitrary initial configuration $\tau\in\supp(\mu)\subseteq \*Q$ and an integer $T\ge 1$;} 
Set $\sigma_{-T}\gets\tau$\;
\For{$t=-T+1,-T+2,\ldots,0$}{
let $\sigma_{t}\gets\sigma_{t-1}$ and $v\gets v_{i(t)}$, where $i(t)=(t \mod \abs{V}) + 1$\;
let $r_t \in Q_v\cup\set{\bot}$ follow the complex normalized measure $b_{v}$\;
if $r_t\neq \bot$, let $o_t\gets r_t$, otherwise let $o_t$ follow the complex normalized measure $\mu_v^{\tau, \bot}$ , where $\tau$ is the configuration of $\sigma_{t-1}$ projected on $V\setminus \{v\}$\;
update $\sigma_t(v)\gets o_t$\;
}
\end{algorithm}

As shown in~\cite{liu2024phase}, by choosing $\*b$ properly, we can infer the outcome of a complex Glauber dynamics $\sigma_0$ without knowing the initial configuration $\sigma_{-T}$, as $T\to\infty$, so that we can show the convergence of the complex Glauber dynamics and bound the norm of the complex marginal measure, which we will use to prove the zero-freeness.

\subsubsection{Convergence of systematic scan Glauber dynamics}
In this subsubsection, we outline the idea to show the convergence of the systematic scan Glauber dynamics in~\cite{liu2024phase}.

Recall in the complex systematic scan Glauber dynamics, for $t\in \{-T+1,-T+2, \dots, 0\}$, we use $i(t)$ to denote the index of the vertex updated at time $t$.
Given a $\*b$-decomposed systematic scan Glauber dynamics, let $\*\rho$ be the realization of $\*r = \{r_t\}_{t = -T + 1}^0$ where $r_t\in Q_{v_{i(t)}}\cup\set{\bot}$, as used in \Cref{Alg:complex-GD-decomposed}. We call $\*\rho$ a \emph{witness sequence} if it holds that given $\*\rho$, the outcome $\sigma_0$ does not depend on the initial state.

\begin{definition}[Witness sequence]
\label{definition:witness-sequence}
Fix $T\ge 1$. Consider a $\*b$-decomposed systematic scan Glauber dynamics as in \Cref{Alg:complex-GD-decomposed}. For any event $A\subseteq \*Q$, we say a sequence $\*\rho = \{\rho_t\}_{t = -T + 1}^0$ where $\rho_t \in Q_{v_{i(t)}}\cup\set{\bot}$ is a \emph{witness sequence} with respect to the event $A$ if it satisfies exactly one of the following conditions:
\begin{enumerate}
\item the event $\*r = \*\rho$ has zero measure (note this event is independent of the initial state), i.e.,
\[
  \dmu{\sigma}(\*r = \*\rho) = 0,
\]
\item the event $\*r = \*\rho$ has non-zero measure and
 \[
\forall \sigma, \tau \in \supp(\mu),\quad \dmu{\sigma}(\sigma_0 \in A\mid \*r = \*\rho) = \dmu{\tau}(\sigma_0\in A\mid \*r = \*\rho).
\]
\end{enumerate}
When $\*\rho$ is a \emph{witness sequence} with respect to the event $A$, we denote it as $\*\rho \Rightarrow A$. Otherwise, we denote $\*\rho \nRightarrow A$.
\end{definition}
Instead of showing the convergence of a complex systematic scan Glauber dynamics, we only interested in the convergence on the marginals. For any event $A\subseteq \bm{Q}$, it suffices to show that the contributions from witness sequences dominate while contributions from non-witness sequences diminish to $0$ as $T\to \infty$. The intuition is that only contributions from non-witness sequences depend on the initial state. The following definition formalizes the above intuition.

\begin{condition}[Sufficient condition for convergence]
  \label{condition:convergence}
Let $A\subseteq \bm{Q}$ be an event.
Assuming the $\*b$-decomposed systematic scan Glauber dynamics in \Cref{Alg:complex-GD-decomposed} is well-defined, there exists a sequence of sets $\{B(T)\}_{T\ge 1}$ such that for each $T\ge 1$, $B(T) \subseteq \bigotimes_{t=-T+1}^0 \tp{Q_{v_{i(t)}}\cup\set{\bot}}$ satisfying the following conditions:
\begin{itemize}
  \item For all $\*\rho \nRightarrow A$, it holds that $\*\rho \in B(T)$; thus, $B(T)$ contains all non-witness sequences for $A$.
  \item For any initial configuration $\sigma\in \supp(\mu)$, the following limits exits and satisfies:
  \[
\lim_{T\to \infty}\abs{\sum_{\*\rho \in B(T)}  \dmu{\sigma}(\sigma_0 \in A \land \*r = \*\rho) } = 0.
  \]
\end{itemize} 
\end{condition}

The above condition is a sufficient condition for the convergence of the event $A$; we also include the following lemma which is implicit in~\cite{liu2024phase}. And we include its proof in \Cref{section:missing-analytic-percolation} for completeness.

\begin{lemma}[Convergence of complex systematic scan Glauber dynamics]
  \label{lemma:convergence}
  Let $A\subseteq \bm{Q}$ be an event.
  Assume that there exists a $\*b$-decomposition scheme such that \Cref{condition:convergence} holds. Then, it holds that for any $\tau \in \supp(\mu)$, $\mu(A) = \lim_{T\to \infty} \dmu{\tau} \tp{\sigma_0 \in A}$.
\end{lemma}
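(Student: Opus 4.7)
The plan is to compare the complex Glauber dynamics started from $\tau$ with the stationary systematic scan dynamics of \Cref{definition:stationary-start}, and to use the witness sequence decomposition to argue that the dependence on the initial state washes out as $T \to \infty$. First I would note that, under the stationary dynamics, the marginal on $\sigma_0$ equals $\mu$ at every time, so $\mu(A) = \sum_{\tau \in \supp(\mu)} \mu(\tau)\, \dmu{\tau}(\sigma_0 \in A)$ for every $T \ge 1$. Since $\sum_{\tau \in \supp(\mu)} \mu(\tau) = 1$ and $\supp(\mu)$ is finite, it suffices to prove that for any $\tau, \tau' \in \supp(\mu)$ we have $\dmu{\tau}(\sigma_0 \in A) - \dmu{\tau'}(\sigma_0 \in A) \to 0$ as $T \to \infty$; the identity above then forces each $\dmu{\tau}(\sigma_0 \in A)$ to converge to $\mu(A)$.

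The main step is to decompose this difference according to the realization $\*\rho$ of the oblivious sequence $\*r = (r_t)_{t=-T+1}^{0}$:
\[
\dmu{\tau}(\sigma_0 \in A) - \dmu{\tau'}(\sigma_0 \in A) = \sum_{\*\rho}\left[ \dmu{\tau}(\sigma_0 \in A \land \*r = \*\rho) - \dmu{\tau'}(\sigma_0 \in A \land \*r = \*\rho) \right].
\]
For any witness $\*\rho \Rightarrow A$, the corresponding summand vanishes: in case (1) of \Cref{definition:witness-sequence}, both terms are $0$; in case (2), each term factorizes as $\dmu{\tau}(\*r=\*\rho) \cdot \dmu{\tau}(\sigma_0 \in A \mid \*r = \*\rho)$, where the first factor equals $\prod_{t=-T+1}^{0} b_{v_{i(t)}}(\rho_t)$ by \Cref{Alg:complex-GD-decomposed} (hence independent of the initial state), while the second factor is independent of the initial state by the very definition of a witness sequence.

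Consequently, only non-witness sequences survive in the difference. Since $\{\*\rho: \*\rho \nRightarrow A\} \subseteq B(T)$, and since the witness sequences that happen to lie in $B(T)$ contribute $0$ to the difference by the same factorization argument, I can rewrite
\[
\dmu{\tau}(\sigma_0 \in A) - \dmu{\tau'}(\sigma_0 \in A) = \sum_{\*\rho \in B(T)} \dmu{\tau}(\sigma_0 \in A \land \*r = \*\rho) - \sum_{\*\rho \in B(T)} \dmu{\tau'}(\sigma_0 \in A \land \*r = \*\rho).
\]
Applying the triangle inequality and then invoking \Cref{condition:convergence} for both $\tau$ and $\tau'$ sends the right-hand side to $0$ as $T \to \infty$, which completes the argument.

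I expect the main subtlety to be the bookkeeping around witness sequences lying inside $B(T)$: \Cref{condition:convergence} bounds the full sum over $B(T)$ rather than only its non-witness part, so one needs the cancellation of the witness contributions in the difference $\dmu{\tau} - \dmu{\tau'}$ in order to apply the condition cleanly. Everything else is routine---finiteness of $\supp(\mu)$ makes the interchange of summation and limit trivial, and the factorization $\dmu{\tau}(\*r = \*\rho) = \prod_t b_{v_{i(t)}}(\rho_t)$ is immediate from the decomposed dynamics.
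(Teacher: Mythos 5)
Your proposal is correct and follows essentially the same route as the paper's proof: reduce to showing that $\dmu{\tau}(\sigma_0\in A)-\dmu{\tau'}(\sigma_0\in A)\to 0$ for any two initial states, compare with the stationary chain of \Cref{definition:stationary-start} to identify $\mu(A)$ with the weighted average $\sum_{\tau'}\mu(\tau')\dmu{\tau'}(\sigma_0\in A)$, cancel the witness-sequence contributions in the difference using the initial-state independence of $\dmu{\tau}(\*r=\*\rho)$ and of the conditional measure, and bound the remaining sum over $B(T)$ via the triangle inequality and \Cref{condition:convergence}. The bookkeeping point you flag (witness sequences inside $B(T)$ contributing zero to the difference, so the difference equals the full sum over $B(T)$) is handled identically in the paper.
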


\subsubsection{Bounding the marginal measures}
Next, we outline how to bound the marginal measure of an event $A\subseteq \*Q$ assuming \Cref{condition:convergence} in \cite{liu2024phase}. 
Later, we use the edge-wise self-reducibility to reduce the zero-freeness to bounding the norm of a marginal measure.
For any $T\ge 1$, we consider the stationary complex systematic scan Glauber dynamics (\Cref{definition:stationary-start}). 
For an event $A$, we consider enumerating all possible sequences $\*\rho$'s as realizations of $\*r$ in \Cref{Alg:complex-GD-decomposed} and consider the measure of $\sigma_0 \in A$ conditioned on $\*r = \*\rho$.

By \Cref{condition:convergence}, we consider the following two cases: $\*\rho \in B(T)$; and $\*\rho \notin B(T)$. For $\*\rho \in B(T)$, by \Cref{condition:convergence}, it holds that its contributions diminish to $0$ when $T\to\infty$. So in order to bound $\mu(A)$, it suffices to bound the contributions from $\*\rho \notin B(T)$.
To formalize this argument, we include the following lemma and we also include its proof in \Cref{section:missing-analytic-percolation}.


\begin{lemma}[Bounding the marginal measure]
\label{lemma:bounding-marginal-measure}
Let $A\subseteq \bm{Q}$ be an event.
    Given a \decompositionScheme-decomposition scheme satisfying \Cref{condition:convergence}. For 
    any initial configuration $\tau \in \supp(\mu)$, it holds that

\begin{equation*}
\abs{\mu(A)} \le  \lim_{T\to\infty} \abs{\sum\limits_{\rho \not\in B(T)} \dmu{\tau}(\sigma_0 \in A \land  \*r = \*\rho)}.
\end{equation*}

\end{lemma}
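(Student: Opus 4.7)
The plan is to derive the bound directly from \Cref{lemma:convergence} together with a triangle-inequality decomposition of the complex measure over the auxiliary randomness $\*r$. By \Cref{lemma:convergence} (which applies since we assume \Cref{condition:convergence}), we have $\mu(A) = \lim_{T\to\infty}\dmu{\tau}(\sigma_0 \in A)$ for any $\tau \in \supp(\mu)$. Continuity of the complex modulus then gives
\[
\abs{\mu(A)} = \lim_{T\to\infty}\abs{\dmu{\tau}(\sigma_0 \in A)}.
\]

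Next, I would use the law of total measure to partition the right-hand side over the finitely many realizations $\*\rho \in \bigotimes_{t=-T+1}^0 (Q_{v_{i(t)}}\cup\set{\bot})$ of the oblivious sequence $\*r$, splitting according to whether $\*\rho\in B(T)$ or not:
\[
\dmu{\tau}(\sigma_0 \in A) = \sum_{\*\rho \in B(T)} \dmu{\tau}(\sigma_0 \in A \land \*r = \*\rho) + \sum_{\*\rho \notin B(T)} \dmu{\tau}(\sigma_0 \in A \land \*r = \*\rho).
\]
Applying the triangle inequality and then taking $T\to\infty$, the $B(T)$-contribution vanishes by the second bullet of \Cref{condition:convergence}, leaving exactly the stated bound
\[
\abs{\mu(A)} \le \lim_{T\to\infty}\abs{\sum_{\*\rho\notin B(T)} \dmu{\tau}(\sigma_0 \in A \land \*r = \*\rho)}.
\]

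There is no substantive obstacle here: the argument is essentially a one-line calculation once \Cref{lemma:convergence} is granted, and the only mild point to verify is that the limit on the right-hand side exists (which follows from the existence of $\lim_{T\to\infty}\dmu{\tau}(\sigma_0\in A)$ combined with the vanishing of the $B(T)$-part). Thus the real content of this lemma lies entirely in \Cref{condition:convergence} and in the existence of a well-chosen witness-sequence scheme; the present statement merely packages that information into a usable upper bound on $\abs{\mu(A)}$ that can be fed into the projection-lifting argument of \Cref{subsection:projection-lifting-scheme}.
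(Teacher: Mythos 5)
Your proof is correct, and it reaches the conclusion by a slightly different (and arguably more economical) route than the paper. The paper does not invoke \Cref{lemma:convergence} at all: it starts from the exact finite-$T$ identity $\mu(A)=\sum_{\sigma\in\supp(\mu)}\mu(\sigma)\,\dmu{\sigma}(\sigma_0\in A)$ coming from the stationary start (\Cref{definition:stationary-start}), splits each inner sum over $\*\rho\in B(T)$ versus $\*\rho\notin B(T)$, and then uses the defining property of witness sequences (\Cref{definition:witness-sequence}) to observe that the non-$B(T)$ part is the same for every initial state, which lets it collapse the $\mu(\sigma)$-weighted average onto a single arbitrary $\tau$ before sending $T\to\infty$. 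You instead black-box \Cref{lemma:convergence} to write $\abs{\mu(A)}=\lim_{T}\abs{\dmu{\tau}(\sigma_0\in A)}$ for the given $\tau$, decompose by the law of total measure over realizations of $\*r$, and kill the $B(T)$ part with the second bullet of \Cref{condition:convergence}; the initial-state-independence of witness sequences is then not needed in this proof because it is already baked into \Cref{lemma:convergence}. Both arguments rest on the same two ingredients (\Cref{condition:convergence} and the witness-sequence machinery), and you correctly flag and resolve the only delicate point in your version, namely that $\lim_T\abs{\sum_{\*\rho\notin B(T)}\dmu{\tau}(\sigma_0\in A\land\*r=\*\rho)}$ exists — indeed your argument shows it equals $\abs{\mu(A)}$, which is slightly stronger than the stated inequality.
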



\section{Zero-freeness for CSP formulas}
\label{section:CSP}
In this section, we apply our decomposition scheme (\Cref{definition:decomposition-scheme}) to CSP formulas and prove a sufficient condition for zero-freeness (\Cref{theorem:sufficient-condition-zero-freeness}).
We remark that we work with general CSPs, but \Cref{condition:csp-conditional-measure-analysis} implicitly transforms any general CSPs into atomic CSPs.
To establish zero-freeness, we start with a constraint-wise self-reduction, which reduces zero-freeness to analyzing the norm of a complex marginal measure.
Then we apply our \emph{projection-lifting scheme} (\Cref{subsection:projection-lifting-scheme}) to analyze a projected measure constructed by the standard \emph{state-compression} scheme.
The \emph{state-compression} scheme and its special case ``mark/unmark'' method are widely used for sampling satisfying assignments of CSP formulas~\cite{moitra2019approximate, guo2018counting,  feng2021fast, feng2021sampling, jain2021sampling, Vishesh21towards, he2021perfect, galanis2022fast, he2022sampling, he2023deterministic, he2023improved,chen2023algorithms}.
For the real case, the projected distribution has many good properties such as the \emph{local uniformity} which can be used to prove the rapid mixing of Markov chains on the projected distribution and construct local samplers.
For our complex case, we use our decomposition scheme (\Cref{definition:decomposition-scheme}) to analyze the complex systematic scan Glauber dynamics on the projected measure.
Then we lift the analysis for the projected measure back to the original measure and bound the norm of a complex marginal measure.

We start by defining the framework of the \emph{state-compression} scheme.
\begin{definition}[State compression]
\label{definition:state-compression}
For a CSP formula $\Phi = (V, \*Q, C)$, let $\*f = (f_v)_{v\in V}$. And for each variable $v\in V$, $f_v$ is a mapping from its domain $Q_v$ to a finite \emph{alphabet} $\Sigma_v$.
For any $\Lambda\subseteq V$, we define $\Sigma_{\Lambda}=\bigotimes_{v\in \Lambda}\Sigma_v$.
For simplicity, we assume that $\bigcup_{v\in V}Q_v$ and $\bigcup_{v\in V} \Sigma_v$ are disjoint.
\end{definition}

We use $\*Q$ to denote the collection of the original alphabets and $\*\Sigma$ for the projected alphabets.  
Throughout the rest of this paper, these notations will be used consistently without further reminder.  
Later, in \Cref{section:applications}, we will use the symbol $\blacktriangle$ to indicate the projected symbol in the projected alphabet $\*\Sigma$.

Next, we define the \emph{consistency}. 
Let $\Lambda'\subseteq \Lambda\subseteq V$. For $\sigma\in \bigotimes_{v\in \Lambda'}\tp{Q_v\cup \Sigma_v} $ and $ \tau \in \bigotimes_{v\in \Lambda}\tp{Q_v\cup \Sigma_v}$, we say $\sigma$ is \emph{consistent} with $\tau$ and denote it as $\tau \trianglelefteq \sigma$ if $\forall v\in \Lambda'$, we have $\tau_v = \sigma_v$ or $\tau_v = f_v(\sigma_v)$.

For any constraint $c\in \+C$, and any projected partial assignment $\tau\in \*\Sigma_{S}$ on the variables $S$, we say $c$ is \emph{satisfied by the projected partial assignment} $\tau$ iff for any $\sigma\in \*Q_{S}$ with $\tau \trianglelefteq \sigma$, it holds that $c$ is satisfied by $\sigma$.

Let $1^\coF \in \bigcup_{v\in V}\Sigma_v$, and let $\lambda \in \=C$ be a complex number. 
Our framework allows us to add a complex external field $\lambda$ for each variable $v$ with the projected symbol $1^\coF$.

\begin{definition}[Complex extensions of CSP formulas]
\label{definition:complex-extensions-csp}
Given the projection $\*f$ of the state-compression scheme, let $1^\coF \in \bigcup_{v\in V}\Sigma_v$, and let $\lambda \in \=C$ be a complex number (complex external field).  
We define the weight function $w = w(\Phi, \lambda, \*f, 1^\coF)$ as follows:
\[
\forall \sigma\in\Omega_{\Phi}, \quad w(\sigma) \defeq \lambda^{|\{v \in V\mid f_v(\sigma_v) = 1^\coF\}|}.
\] 
We also denote $Z = Z(\Phi, \lambda,  \*f, 1^\coF)$ as the sum of all assignments' weights, i.e.  $Z\defeq \sum_{\sigma\in \Omega_{\Phi}} w(\sigma)$.
When $Z\neq 0$, we naturally define a complex measure $\mu = \mu(\Phi, \lambda, \*f, 1^\coF)$, i.e.,$\forall \sigma\in\Omega_{\Phi}, \mu(\sigma)\defeq \frac{w(\sigma)}{Z}$.
And we denote its support set as $\supp(\mu) \defeq \{\sigma\in \Omega_{\Phi}\mid \mu(\sigma)\neq 0\}$.
\end{definition}
We now define the marginal measure.
For any $\Lambda \subseteq V$ and any $\sigma\in \bigotimes_{v\in \Lambda} \tp{Q_v\cup \Sigma_v}$, let $\mu(\sigma)$ be defined as 
\[
\mu(\sigma)\defeq \sum_{\tau\in\Omega_{\Phi}: \sigma\trianglelefteq \tau} \mu(\tau).
\]

We also consider the \emph{projected measure} under the projection $\*f$, which plays an important role in our framework for deriving zero-freeness. 
Let $\Omega_{\Phi}^{\*f}$ be the set of satisfying assignments under the projection $\*f$, i.e., $\Omega_{\Phi}^{\*f} = \{\*f(\sigma)\mid \sigma\in \Omega_{\Phi}\}$. 
\begin{definition}[Projected measure]
\label{definition:csp-projected-measure-under-f}
Suppose $Z\neq 0$, let $\psi = \psi(\Phi, \lambda, \*f, 1^\coF)$ be a complex normalized measure, i.e.,
\[
\forall \sigma\in \Omega_{\Phi}^{\*f},\quad \psi(\sigma)\defeq \sum_{\tau \in \Omega_{\Phi}: \*f(\tau) = \sigma} \mu(\tau).
\]
And we denote $\supp(\psi)$ as the support set of $\psi$, i.e., $\supp(\psi) = \{\sigma\in \Omega_{\Phi}^{\*f} \mid \psi(\sigma)\neq 0\}$.
\end{definition}

For any $\Lambda \subseteq V$ and for any $\sigma\in \Sigma_{\Lambda}$, we define $\psi(\sigma)$ as
\[
\psi(\sigma)\defeq \sum_{\tau \in \Omega^{\*f}_{\Phi}: \sigma\trianglelefteq \tau} \psi(\tau).
\]
We also remark that for any $\Lambda \subseteq V$, and for any $\sigma\in \Sigma_{\Lambda}$, the conditional measure $\psi^{\sigma}$ is well-defined if and only if $\psi(\sigma) \neq 0$.
For the consistency of notations, we use $\mu$ to denote the original measure and $\psi$ for the projected measure throughout the rest of this paper.  

Later, we establish some properties of the projected measure $\psi$ which we will use to show the zero-freeness of the original measure $\mu$.
And we provide a sufficient condition that can lift the properties of $\psi$ back to $\mu$.

Here we do not aim to provide a zero-free condition for general CSP formulas. 
We only focus on $(k, \Delta)$-CSP formulas and provide a sufficient condition for our zero-freeness (\Cref{theorem:sufficient-condition-zero-freeness}).
We first introduce some useful notations.
Recall that for any constraint $c\in C$, we use $\vbl(c)$ to denote the variables that $c$ depends on and we use $c^{-1}(\-{False})$ to denote the set of partial assignments on $\vbl(c)$ that violate the constraint $c$.
For any assignment $\sigma\in\bm{\Sigma}$ under the projection, and any constraint $c$, recall that we say $c$ is \emph{satisfied} by $\sigma$ iff for any violating assignment $\tau\in c^{-1}\tp{\false}$, there exists $v\in \vbl(c)$ such that $f_v(\tau_v) \neq \sigma_v$. This means that any original assignment whose projected symbol is $\sigma$ satisfies the constraint $c$.
Let $\Lambda \subseteq V$. 
For an assignment $\sigma \in \*\Sigma_\Lambda$ under the projection, we say $\sigma$ is an \emph{admissible} partial assignment under the projection, if its projected measure is not $0$, i.e., $\psi(\sigma)\neq 0$.
And we say $\sigma$ is an \emph{extendable} partial assignment under the projection, if it can be extended into an admissible projected assignment. 

In this section, we give a sufficient condition for establishing zero-freeness (\Cref{theorem:sufficient-condition-zero-freeness}). 
Our proof is based on the constraint-wise self-reduction.
Let $\Phi = (V, \*Q, C)$ be a $(k, \Delta)$-CSP formula, and let $C = \set{c_1, c_2, \dots, c_m}$.
For any $0\le i \le m$, we define $C_i$ as the subset of $C$ containing the first $i$ constraints, i.e., $C_i = \set{c_1, c_2, \dots, c_{i}}$. Let $\Phi_i = (V, \*Q, C_i)$ be a new CSP formula, and let $Z_i = Z(\Phi_i, \lambda, \*f, 1^\coF)$ be its corresponding partition function.

The base case of the induction is $Z_0\neq 0$, which can be verified immediately, as there are no constraints. 

The induction hypothesis is that for an integer $1\le i\le m$, we assume that $Z_{i-1} \neq 0$. 

For the induction step, we show $Z_i \neq 0$ by showing $\frac{Z_i}{Z_{i-1}}\neq 0$. Next, based on the induction hypothesis $Z_{i-1}\neq 0$, let $\mu_{i-1}$ be the corresponding complex Gibbs measure, and let $\psi_{i-1}$ be the projected measure, we express $\frac{Z_i}{Z_{i-1}}$ as a complex marginal of $\mu_{i-1}$:
\begin{equation}\label{eq:partition-function-divide-to-maginal-measure}
\frac{Z_i}{Z_{i-1}} = \frac{Z_{i-1} - \sum\limits_{\sigma\in\Omega_{\Phi_{i-1}}: \sigma \text{ violates } c_i} w(\sigma)}{Z_{i-1}} = 1 - \mu_{i-1}(c_i \text{ is violated}).
\end{equation}
Hence, we use the projection-lifting scheme (\Cref{subsection:projection-lifting-scheme}) to analyze the projected measure $\psi_{i-1}$ and show $|\mu_{i-1}(c_i \text{ is violated})| < 1$. So that $|\frac{Z_i}{Z_{i-1}}| > 0$ and $Z_i \neq 0$.
The next condition contains two key properties for analyzing the projected measure through analytic percolation and completing the induction step.

For any subset $S\subseteq V$, let $\+F_{S}$ be the set of extendable (partial) assignments on $S$ under the projection.
\begin{condition}[Induction step]\label{condition:csp-conditional-measure-analysis}
Let $\Phi = (V, \*Q, C)$ be a $(k, \Delta)$-CSP formula with $|C|\ge 1$.
Let $\*f$ be a state-compression scheme (\Cref{definition:state-compression}).
Let $c^* \in C$ be a constraint with the largest index number, and let $\Phi' = (V, \*Q, C\setminus \set{c^*})$.
If $Z(\Phi', \lambda, \*f, 1^\coF)\neq 0$, let $\psi = \psi(\Phi', \lambda, \*f, 1^\coF)$ be its projected measure, it holds that
\begin{enumerate}
\item (Well-definedness of complex transitions) for any variable $u$ and for any extendable partial assignment $\tau\in \*\Sigma_{V\setminus \{u\}}$ on $V\setminus \{u\}$ under the projection $\*f$, the marginal $\psi^{\tau}_{u}$ on $u$ is well-defined.\label{item:csp-transition-well-define}

\item (Nice decomposition scheme)
There exists a \decompositionScheme-decomposition scheme.
Define
    \begin{align}
        \conditionalDecayConstant \defeq& \max_{c\in C_i} \sum_{\sigma \in c^{-1}(\-{False})} \prod_{u\in \vbl(c)}\tp{|b_u(f_u(\sigma_u))| + |b_{u}(\bot)|\cdot \max_{\tau\in \+F_{V\setminus\{u\}}}\sum_{x\in \Sigma_u} |\psi_{u}^{\tau,\bot}(x)| };\label{item:csp-conditional-meaure-analysis-decay}\\
         \conditionalTriangleConstant \defeq& \max_{u\in V} \tp{\sum_{x\in \Sigma_u} |b_u(x)| + |b_{u}(\bot)|\cdot \max_{\tau\in \+F_{V\setminus\{u\}}}\sum_{x\in \Sigma_u} |\psi_{u}^{\tau,\bot}(x)|}.\label{item:csp-conditional-meaure-analysis-triangle}
    \end{align}
It holds that:
\begin{align}
    4\mathrm{e}\Delta^2 k^4 \cdot \conditionalDecayConstant\cdot \conditionalTriangleConstant^{4\Delta^2 k^5} \le \frac{1}{4}.
    \label{item:csp-conditional-meaure-analysis-numerical}
\end{align}
\label{item:csp-decomposition-decay}
\end{enumerate}
\end{condition}

The next lemma establishes the induction step.

\begin{lemma}[Induction step]
\label{lemma:csp-induction-step}
Let $\Phi = (V, \*Q, C)$ be a $(k, \Delta)$-CSP formula with $|C|\ge 1$. 
Let $c^* \in C$ be a constraint with the largest index number, and let $\Phi' = (V, \*Q, C\setminus \set{c^*})$.

If $Z = Z(\Phi', \lambda, \*f, 1^\coF)\neq 0$ and \Cref{condition:csp-conditional-measure-analysis} hold, let $\mu = \mu(\Phi', \lambda,\*f, 1^\coF)$ be its complex Gibbs measure, then it holds that 
$|\mu(c^*\text{ is violated})| < 1$.
\end{lemma}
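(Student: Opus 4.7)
The plan is to apply the projection-lifting scheme from \Cref{subsection:projection-lifting-scheme} to the projected measure $\psi$, reducing the bound $|\mu(c^* \text{ is violated})|<1$ to a quantitative control on projected masses of local combinatorial events. Define the bad cluster $\sbad$ (as in the technical overview) to be the maximal connected component of constraints containing $c^*$ whose variables are all assigned the same projected symbol, and whose boundary constraints are already satisfied under the projection. For every admissible pair $(S,\tau_{\vbl(S)})$ with $\psi(\sbad=S\land\tau_{\vbl(S)})\neq 0$, I would argue that conditioning on $\{\sbad=S\land \tau_{\vbl(S)}\}$ pins down every occurrence of the special projected symbol $1^\coF$, so the contributions of the complex external field $\lambda$ are entirely determined and the conditional marginal of $\{c^*\text{ is violated}\}$ reduces to a real number in $[0,1]$. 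With the partition $\{\sbad=S\land\tau_{\vbl(S)}\}$ thus verifying \Cref{condition:projection-lifting-scheme}, the triangle inequality gives
\[
|\mu(c^*\text{ is violated})|\;\le\;\sum_{S}\sum_{\tau_{\vbl(S)}\in\Sigma_{\vbl(S)}}\bigl|\psi(\sbad=S\land\tau_{\vbl(S)})\bigr|.
\]

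To bound each term, I would invoke the $\*b$-decomposed complex systematic scan Glauber dynamics targeting $\psi$ from \Cref{condition:csp-conditional-measure-analysis}. Its well-definedness is guaranteed by item (\ref{item:csp-transition-well-define}), so \Cref{lemma:bounding-marginal-measure} reduces the task to controlling
\[
\lim_{T\to\infty}\Bigl|\sum_{\*\rho\notin B(T)}\dmu{\tau}\bigl(\sigma_0\in\{\sbad=S\land\tau_{\vbl(S)}\}\land\*r=\*\rho\bigr)\Bigr|
\]
for a suitable witness complement $B(T)$. Following the technical overview, witnesses are characterized by a ``bad component'' $\cbad$ on the space-time witness graph rooted at the $k$ variables of $c^*$, together with the updates $\*o_{\BadTS{\cbad}}$ inside it; jointly, the pair $(\cbad,\*o_{\BadTS{\cbad}})$ determines the bad cluster $\sbad(\sigma_0)$ and the labeling $\tau_{\vbl(S)}$ on its vertices, which gives the ``zero-one'' law needed to sum over non-witness sequences by triangle inequality rather than by monotonicity. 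Convergence, i.e.\ \Cref{condition:convergence}, then reduces to showing that contributions from bad components of size exceeding the truncation threshold $\algTruncateSize$ vanish as $T\to\infty$.

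The main bookkeeping step is to enumerate $(\cbad,\*o_{\BadTS{\cbad}})$ and bound their contributions. I plan to use a $2$-tree argument (\Cref{definition:2-tree}, \Cref{lemma:2-tree-number-bound}) on the dependency graph among constraints intersecting $\vbl(c^*)$, so that a $2$-tree of size $s$ containing $c^*$ appears at most $(\mathrm{e}\Delta^2 k^2)^{s-1}/2$ times. Each admissible $(\cbad,\*o_{\BadTS{\cbad}})$ contributes a product of one ``seed'' factor at most $\conditionalDecayConstant$ coming from (\ref{item:csp-conditional-meaure-analysis-decay}) (the constraint-violation slot that forces a monochromatic hyperedge at $c^*$) and one factor at most $\conditionalTriangleConstant$ from (\ref{item:csp-conditional-meaure-analysis-triangle}) per additional space-time slot touched by the percolation. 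Bad components of size up to $\algTruncateSize$ are summed by a geometric series whose total is controlled by (\ref{item:csp-conditional-meaure-analysis-numerical}), while the tail past the threshold decays in $T$ and contributes zero in the limit, establishing both \Cref{condition:convergence} and the desired strict bound.

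The hard part will be verifying the zero-one law and the accompanying factorization property of the complex measure once $(\cbad,\*o_{\BadTS{\cbad}})$ is revealed: one must show that, conditioned on the bad component and its internal updates, the complex dynamics inside and outside $\cbad$ decouples sufficiently that the event $\{\sbad(\sigma_0)=S\land\tau_{\vbl(S)}\}$ is determined by the revealed data alone, despite the absence of monotonicity for complex measures. This is where the design of the projection $\*f$ (so that external fields vanish once the bad cluster is fixed) interacts non-trivially with the $\*b$-decomposition: the ``adaptive'' transitions inside $\cbad$ must be producible independently of the initial state, and the oblivious updates outside $\cbad$ must suffice to certify the bad-cluster boundary. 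Once this factorization is in place, the analytic percolation machinery of \Cref{section:analytic-percolation} combined with the enumeration above closes the induction step.
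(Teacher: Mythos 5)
Your proposal follows essentially the same route as the paper: partition by the bad cluster $\sbad$ and its projected labeling so that \Cref{lemma:csp-conditional-measure-probabilty} makes the conditional marginal a real number in $[0,1]$, then bound $\sum_{S,\tau}|\psi(\sbad=S\land\tau)|$ via the $\*b$-decomposed complex Glauber dynamics, the zero-one law from $(\cbad,\*o_{\BadTS{\cbad}})$, and a $2$-tree enumeration closed by the numerical condition. The only (cosmetic) deviations are that the paper's $2$-tree enumeration lives on the space-time witness graph (degree $\le 2\Delta k^2-2$, hence the count $(4\mathrm{e}\Delta^2k^4)^{j-1}$ of \Cref{lemma:csp-bad-tree-num-bound}) rather than the dependency graph, and the witness/non-witness cutoff is $|\tbad|\le T/(2|V|)-2$ rather than a fixed size threshold.
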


Assuming the base case of the induction $Z_0 \neq 0$, the above lemma is the induction step for establishing the following zero-freeness.
Recall that the constraint set is $C = \set{c_1, c_2, \dots, c_m}$. Also recall that for any $0\le i \le m$, we set $C_i = \set{c_1, c_2, \dots, c_i}$ containing the first $i$ constraints, $\Phi_i = (V, \*Q, C_i)$ and $Z_i = Z(\Phi_i, \lambda, \*f, 1^\coF)$.
\begin{theorem}[Zero-freeness]
\label{theorem:sufficient-condition-zero-freeness}
Let $\Phi = (V, \*Q, C)$ be a $(k, \Delta)$-CSP formula. Let $\bm{f}$ be the projection in the state-compression scheme (\Cref{definition:state-compression}), and the complex external field $\lambda$ on the projected symbol $1^\coF$ (\Cref{definition:complex-extensions-csp}).
Suppose
\begin{equation}
\label{eq:csp-Z_0-is-not-0}
    Z_0 = \prod_{v\in V} \sum_{\sigma\in Q_v } \tp{ \=I[f_v(\sigma)\neq 1^\coF] +  \lambda\cdot  \=I[f_v(\sigma) = 1^\coF]} \neq 0,
\end{equation}
and $\Phi_1, \Phi_2, \dots, \Phi_m$ satisfy \Cref{condition:csp-conditional-measure-analysis}, we have $Z = Z(\Phi, \lambda, \*f, 1^\coF)\neq 0$.
\end{theorem}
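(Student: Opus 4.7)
The plan is to prove \Cref{theorem:sufficient-condition-zero-freeness} by a straightforward induction on the number of constraints, using the constraint-wise self-reduction outlined in the paragraphs preceding \Cref{condition:csp-conditional-measure-analysis} together with the induction step \Cref{lemma:csp-induction-step} as a black box. Concretely, for $i \in \{0, 1, \dots, m\}$ I will prove the statement $P(i): Z_i \neq 0$. The base case $P(0)$ is exactly hypothesis \cref{eq:csp-Z_0-is-not-0}, since with no constraints the partition function factorizes as the product displayed there and is given to be nonzero. At the end, $P(m)$ is the desired conclusion $Z = Z_m \neq 0$.

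For the inductive step, fix $1 \le i \le m$ and assume $P(i-1)$, i.e., $Z_{i-1} \neq 0$. Since $Z_{i-1} \neq 0$, the complex Gibbs measure $\mu_{i-1}$ on $\Phi_{i-1} = (V, \*Q, C_{i-1})$ is well-defined. Using \cref{eq:partition-function-divide-to-maginal-measure}, the ratio can be written as
\[
\frac{Z_i}{Z_{i-1}} \;=\; 1 - \mu_{i-1}\bigl(c_i \text{ is violated}\bigr).
\]
Now apply \Cref{lemma:csp-induction-step} with $\Phi$ taken to be $\Phi_i$ and $c^* = c_i$ (so that $\Phi' = \Phi_{i-1}$): its hypotheses are that $Z(\Phi', \lambda, \*f, 1^\coF) = Z_{i-1} \neq 0$ (by the inductive assumption $P(i-1)$) and that $\Phi_i$ satisfies \Cref{condition:csp-conditional-measure-analysis}, which is granted by the theorem's assumption. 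The conclusion gives $\lvert \mu_{i-1}(c_i \text{ is violated})\rvert < 1$, hence $Z_i / Z_{i-1} \neq 0$, and therefore $Z_i \neq 0$, establishing $P(i)$.

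There is essentially no additional obstacle at this stage: all of the real work is hidden in \Cref{lemma:csp-induction-step}, whose proof requires the projection-lifting scheme of \Cref{subsection:projection-lifting-scheme} combined with the complex Glauber dynamics analysis and the decay estimates from \Cref{condition:csp-conditional-measure-analysis}-(\ref{item:csp-decomposition-decay}). The only subtlety I would highlight in writing the theorem proof itself is to check that the induction hypothesis of \Cref{lemma:csp-induction-step} is satisfied in the correct form: namely, that $Z(\Phi_{i-1}, \lambda, \*f, 1^\coF) \neq 0$ is exactly what $P(i-1)$ supplies, and that the statement ``$\Phi_i$ satisfies \Cref{condition:csp-conditional-measure-analysis}'' is precisely what is assumed for each $1 \le i \le m$. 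With these verifications, the induction closes cleanly and yields $Z = Z_m \neq 0$.
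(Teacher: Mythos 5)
Your proposal is correct and follows essentially the same argument as the paper: induction on the number of constraints, with the base case given by \cref{eq:csp-Z_0-is-not-0} and the inductive step obtained by applying \Cref{lemma:csp-induction-step} with $\Phi \gets \Phi_i$ and $c^* \gets c_i$ together with \cref{eq:partition-function-divide-to-maginal-measure}. No gaps.
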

\begin{proof}
We use induction to show that $Z_m \neq 0$. The base case of the induction is $Z_0\neq 0$ which follows from \cref{eq:csp-Z_0-is-not-0}.
Now, we consider the induction step on $Z_i$ with $1\le i \le m$, by the induction hypothesis $Z_{i-1} \neq 0$, let $\mu_{i-1}$ be its complex Gibbs measure. By \Cref{condition:csp-conditional-measure-analysis} and \Cref{lemma:csp-induction-step}, we set the parameters in \Cref{lemma:csp-induction-step} as $\Phi\gets \Phi_i$ and $c^* \gets c_i$, it holds that $|\mu_{i-1}(c_{i} \text{ is violated})| < 1$. Recall that $\frac{Z_i}{Z_{i-1}} = 1 - \mu_{i-1}(c_i \text{ is violated})$ (\cref{eq:partition-function-divide-to-maginal-measure}).
Combined, we have $|\frac{Z_i}{Z_{i-1}}| > 0$, hence $Z_i \neq 0$ which completes the induction. So we have $Z = Z(\Phi, \lambda, \*f, 1^\coF)\neq 0$.
\end{proof}

\subsection{Proof of the induction step}
\label{subsection:csp-zero-freeness}
In this subsection, we prove \Cref{lemma:csp-induction-step}. 
For simplicity, we consider a $(k, \Delta)$-CSP formula $\Phi = (V, \*Q, C)$ with an extra constraint $c^*$ such that $(V, \*Q, C\cup\set{c^*})$ is also a $(k, \Delta)$-CSP formula.
In order to prove the induction step (\Cref{lemma:csp-induction-step}), we assume the partition function $Z = Z(\Phi, \lambda, \*f, 1^\coF) \neq 0$. Let $\mu = \mu(\Phi, \lambda, \*f, 1^\coF)$ be the Gibbs distribution. It suffices to show $|\mu(c^* \text{ is violated})| < 1$.

Next, we use our \emph{projection-lifting scheme} (\Cref{condition:projection-lifting-scheme}) to bound this norm.
Recall that in our \emph{projection-lifting scheme}, we need to design events with respect to the projected measure $\psi$ such that fixing the contributions from complex external fields (\Cref{condition:projection-lifting-scheme}). Here we use pinnings to fix those contributions, and we use the following combinatorial structures to construct these events
 and later we use the complex Markov chain to bound the norm of each event. We remark that this argument can also be seen as a complex analogue of the lazy perfect sampling scheme~\cite{anand2022perfect,feng2023towards}.

Recall that for any assignment $\sigma\in\bm{\Sigma}$ under the projection, and any constraint $c$, we say $c$ is \emph{satisfied} by $\sigma$ iff for any violating assignment $\tau\in c^{-1}\tp{\false}$ there exists $v\in \vbl(c)$ such that $f_v(\tau_v) \neq \sigma_v$. This means that any original assignment whose projected symbol is $\sigma$ satisfies the constraint $c$. Otherwise, we say $c$ is \emph{not satisfied} by $\sigma$.
\begin{definition}[Bad constraint, bad cluster]
\label{definition:csp-bad-things}
Let $\Phi = (V, \*Q, C)$ be a CSP formula,
with the projection $\*f$ in the state-compression scheme (recall \Cref{definition:state-compression}), and the complex external field $\lambda$ on the symbol $1^\coF$ (\Cref{definition:complex-extensions-csp}). 

Let $c^*$ be an additional constraint. Given an assignment $\sigma\in \*\Sigma_{V}$ under the projection, we define the following structures on the dependency graph  $\dependencyGraph{\Phi'}$ of $\Phi' = (V, \*Q, C\cup\{c^*\})$.
\begin{enumerate}
    \item The set of \emph{bad constraint} $\ebad = \ebad(\sigma)$ is defined as:
    \[
    \ebad \defeq \{c\in C\cup\{c^*\}\mid c \text{ is not satisfied by } \sigma \}.
    \]
    \item Let $\dependencyGraph{\Phi'}[\ebad]$ be the subgraph of $\dependencyGraph{\Phi'}$ induced by $\ebad$.
    \item The \emph{bad cluster} $\sbad = \sbad(\sigma)$ is defined as the maximal connected component in $\dependencyGraph{\Phi'}[\ebad]$ containing the additional constraint $c^*$. If $c^* \notin \ebad$, then we define $\sbad = \emptyset$.
\end{enumerate}
\end{definition}

Now, we use bad clusters to define the events in our \emph{projection-lifting scheme} (\Cref{subsection:projection-lifting-scheme}) and note that if ``$c^*$ is violated'', then the bad cluster must not be $\emptyset$.
By the above definition, \Cref{condition:projection-lifting-scheme}-(\ref{item:projection-lifting-scheme-partition}) and  \Cref{condition:projection-lifting-scheme}-(\ref{item:projection-lifting-scheme-cover}) hold.
We use the next lemma to verify \Cref{condition:projection-lifting-scheme}-(\ref{item:projection-lifting-scheme-condition}).
Given a bad cluster $\sbad$, we use $\vbl(\sbad) \defeq \bigcup_{c\in \sbad} \vbl(c)$ to denote the set of variables in $\sbad$.
\begin{lemma}
\label{lemma:csp-conditional-measure-probabilty}
For any partial assignment $\tau \in \*\Sigma_{\vbl(\sbad)}$ under the projection, it holds that 
\begin{enumerate}
\item if $\mu(\sbad = \+S \land \tau)= 0$, then $\mu(c^*\text{ is violated}\land \sbad = \+S \land \tau) = 0$,\label{item:csp-conditional-measure-result-1}
\item if $\mu(\sbad = \+S \land \tau)\neq 0$, then $\mu(c^* \text{ is violated} \mid \sbad = \+S \land \tau) \in [0,1]$. \label{item:csp-conditional-measure-result-2}
\end{enumerate}
\end{lemma}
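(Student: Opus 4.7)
The plan is to exploit a clean product structure of the measure on the two sides of $\vbl(\sbad)$. Set $V_1 \defeq \vbl(S)$ and $V_2 \defeq V \setminus V_1$, and write any $\sigma \in \Omega_\Phi$ as $(\sigma_1, \sigma_2)$ with $\sigma_i = \sigma|_{V_i}$. The central observation is that once $\sbad(\*f(\sigma)) = S$ is imposed, every constraint $c \in C$ either lies inside $V_1$, inside $V_2$, or straddles the two sides; in the straddling case $c$ must belong to $\partial S$ (the outer neighbourhood of $S$ in $\dependencyGraph{\Phi'}$) and is therefore \emph{satisfied by the projection} $\*f(\sigma)$. Consequently such a constraint imposes no extra restriction on the original assignment beyond a condition on $\*f(\sigma_2)$, since the $V_1$-part of the projection is already fixed to $\tau$. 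In short, under $\{\sbad = S\} \land \tau$ the two halves $\sigma_1$ and $\sigma_2$ decouple.

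Concretely, I would sort the constraints of $C$ into three groups: (i) those with $\vbl(c) \subseteq V_1$ (which is exactly $S \cap C$ together with the elements of $\partial S$ entirely inside $V_1$); (ii) constraints in $\partial S$ that straddle $V_1$ and $V_2$; and (iii) constraints with $\vbl(c) \subseteq V_2$. Group (i)-constraints in $S \cap C$ are genuine conditions on $\sigma_1$; the remaining group (i) ones are auto-satisfied whenever $\*f(\sigma_1)=\tau$ because they lie in $\partial S$. Group (ii) becomes a condition on $\*f(\sigma_2)$ alone, given $\tau$. Group (iii) constraints restrict $\sigma_2$ directly. The event $\sbad = S$ additionally forces every $c \in S$ to be \emph{bad} under the projection, which is a pure condition on $\tau$ (since $\vbl(S) \subseteq V_1$); if $\tau$ fails this then the event is empty and there is nothing to prove. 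Since $w(\sigma) = \lambda^{|\{v\colon f_v(\sigma_v) = 1^\coF\}|}$ splits as $\lambda^{N_1(\tau)} \cdot \lambda^{N_2(\sigma_2)}$ with $N_1(\tau) \defeq |\{v \in V_1\colon \tau_v = 1^\coF\}|$ a fixed complex constant, I get
\[
\mu\tp{\sbad = S \land \tau} = \frac{\lambda^{N_1(\tau)}}{Z}\, X'\, Y, \qquad \mu\tp{c^* \text{ violated} \land \sbad = S \land \tau} = \frac{\lambda^{N_1(\tau)}}{Z}\, X\, Y,
\]
where $X'$ is the (non-negative) integer counting $\sigma_1 \in \*Q_{V_1}$ with $\*f(\sigma_1) = \tau$ that satisfy every $c \in S \cap C$, $X \le X'$ counts the subset that additionally violates $c^*$, and $Y$ is the common (complex) weighted sum over $\sigma_2$ implementing groups (ii) and (iii).

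Part (\ref{item:csp-conditional-measure-result-1}) then follows by noting that $\mu(\sbad = S \land \tau) = 0$ forces at least one of $\lambda^{N_1(\tau)}, X', Y$ to vanish (recall $Z \neq 0$), and in each case the ``$c^*$ violated'' version also vanishes, using $X \le X'$ to handle $X' = 0$. Part (\ref{item:csp-conditional-measure-result-2}) is immediate: if $\mu(\sbad = S \land \tau) \ne 0$ then all three factors are nonzero, so the conditional measure is simply $X/X' \in [0,1]$, being a ratio of non-negative integers with $0 \le X \le X'$. I expect the main obstacle to be purely bookkeeping: one must verify that every $c \in C$ falls into exactly one of the three groups above and that the ``satisfied by projection'' property of $\partial S$ genuinely removes all cross-dependence between $\sigma_1$ and $\sigma_2$. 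The degenerate case $S = \emptyset$, where $c^*$ is satisfied by $\*f(\sigma)$ and hence by every consistent original $\sigma$, should be disposed of separately at the start.
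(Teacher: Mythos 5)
Your proposal is correct and follows essentially the same route as the paper's proof: both factorize the weight of the event $\{\sbad=S\}\land\tau$ into a fixed external-field factor $\lambda^{N_1(\tau)}$, a non-negative integer count over $\*Q_{\vbl(S)}$ (your $X',X$ are the paper's $A_1,A_2$), and a common complex sum over the complement (your $Y$ is the paper's $Z'$), using the fact that every constraint crossing $\vbl(S)$ lies in the boundary of $S$ and is satisfied at the projected level, so the conditional measure reduces to the real ratio $X/X'\in[0,1]$. Your explicit case analysis for part (1) and the separate treatment of $S=\emptyset$ are minor elaborations of the same argument.
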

\begin{proof}
To prove this lemma, we consider the weight sum of all satisfying assignments that also satisfy $\sbad =\+S$ and $\tau$. 
We claim that this sum can be factorized into a product between contributions from $V\setminus \vbl(\+S)$ and $\vbl(\+S)$.

To see this, let $\+S^*$ be the set of constraints that are adjacent to some constraints in $\+S$ but not in $\+S$. By \Cref{definition:csp-bad-things}, we have that constraints in $\+S^*$ are all satisfied. 
And by $\tau$, for each constraint $c \in \+S^*$, we know that $c$ is satisfied by $\tau$ on $\vbl(\+S)$ or $c$ is satisfied by $V\setminus \vbl(\+S)$. 

We now construct a new CSP formula $\Phi' = (V', \*Q', C')$ that encodes the contribution from $V\setminus \vbl(\+S)$. Let $C' = C\setminus \+S$ and $V' = V\setminus \vbl(\+S)$ and $\*Q' = \*Q_{V'}$. For each constraint $c$ in $\+S^*$, if $c$ is satisfied by $\tau$, we remove $c$ from $C'$, otherwise, we restrict that this constraint is satisfied by the value on $\vbl(c)\cap V'$ under the projection.
We denote the partition function of $\Phi'$ as $Z'$.

For the contributions from $\vbl(\+S)$, 
by the fact that $\+S^*$ are all satisfied, we only need to consider the constraints in $\+S$. Note that from $\tau$ and our complex extension of CSP formulas (\Cref{definition:complex-extensions-csp}), we can determine the contributions of the complex external field. 
So the contribution of this part can be expressed as the product between $\lambda^{|\{v\in \vbl(\+S)\mid \tau_v = 1^\coF\}|}$ and the number of satisfying assignments with respect to $\+S$ under the partial assignment $\tau$. Let $A_1$ be the number of satisfying assignments on variables $\vbl(\+S)$ with respect to $\+S$ under the partial assignment $\tau$, and  $A_2$ be the number of satisfying assignments on variables $\vbl(\+S)$ with respect to $\+S$ and an additional constraint ``$c^*$ is violated'' under the partial assignment $\tau$.

For (\ref{item:csp-conditional-measure-result-1}), by $\mu(\sbad = \+S\land \tau) = 0$, we have that $A_1\cdot \lambda^{|\{v\in \vbl(\+S)\mid \tau_v = 1^\coF\}|}\cdot Z' = 0$. Let $Z$ be the partition function of $\Phi$, combined with the fact that $A_2 \le A_1$, we have $\mu(c^*\text{ is violated}\land \sbad = \+S \land \tau) = \frac{A_2\cdot \lambda^{|\{v\in \vbl(\+S)\mid \tau_v = 1^\coF\}|}\cdot Z'}{Z} = 0$.

For (\ref{item:csp-conditional-measure-result-2}), we have that $\mu(c^* \text{ is violated} \mid \sbad = \+S \land \tau) = \frac{A_2\cdot \lambda^{|\{v\in \vbl(\+S)\mid \tau_v = 1^\coF\}|}\cdot Z'}{A_1\cdot \lambda^{|\{v\in \vbl(\+S)\mid \tau_v = 1^\coF\}|}\cdot Z'} = \frac{A_2}{A_1}\in [0,1]$. 
\end{proof}
Now we are ready to prove \Cref{theorem:sufficient-condition-zero-freeness} using our \emph{projection-lifting scheme} (\Cref{subsection:projection-lifting-scheme}) and the following lemma that bounds the norm of bad events. 
\begin{lemma}
\label{lemma:csp-bad-events-sum-bound}
Assuming the conditions in \Cref{theorem:sufficient-condition-zero-freeness}, it holds that $\sum_{\+S:\abs{S}\ge 1}\sum_{\tau\in\Sigma_{\vbl(\+S)}}\abs{\psi(\+S^{\-{bad}} = S \land \tau)} < 1$. 
\end{lemma}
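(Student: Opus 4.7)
The plan is to apply the complex Markov chain framework of Section 3 to the projected measure $\psi$, using the $\*b$-decomposition guaranteed by \Cref{condition:csp-conditional-measure-analysis}-(\ref{item:csp-decomposition-decay}). First I would invoke \Cref{lemma:bounding-marginal-measure} to rewrite, for each $(S,\tau)$,
\[
\abs{\psi(\sbad = S\land\tau)}\le \lim_{T\to\infty} \abs{\sum_{\*\rho\notin B(T)} \dpsi{\*b}\tp{\sigma_0\in\set{\sbad=S\land\tau}\land \*r=\*\rho}},
\]
so the problem reduces to controlling a single object in the space-time diagram of the $\*b$-decomposed complex Glauber dynamics on $\psi$.

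Next I would introduce a space-time combinatorial encoding: the \emph{bad component} $\cbad$ rooted at $(\vbl(c^*),0)$ (to be formalized later in the paper, consistent with \Cref{prop:desirable-witness}), together with the realized update values $\*o_{\BadTS{\cbad}}$ inside it. The key point, which I would prove as a separate determination lemma, is that the pair $(\cbad,\*o_{\BadTS{\cbad}})$ fully determines both the final bad cluster $\sbad(\sigma_0)$ and the pinning $\sigma_{\vbl(\sbad)}$; this is the zero-one law needed to replace the missing monotonicity. Consequently,
\[
\sum_{S:\abs{S}\ge 1}\sum_{\tau}\abs{\psi(\sbad=S\land\tau)} \;\le\; \lim_{T\to\infty}\sum_{(C,\*v)}\abs{\dpsi{\*b}(\cbad=C\land\*o_{\BadTS{C}}=\*v)},
\]
where the sum ranges over space-time bad components $C$ rooted at $c^*$ together with their update values. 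Each summand factorizes across space-time vertices of $C$: an oblivious vertex contributes $|b_v(\cdot)|$ and a non-oblivious one contributes $|b_v(\bot)|\cdot|\psi^{\tau,\bot}_v(\cdot)|$, so summing values at each vertex picks up a factor $\conditionalTriangleConstant$, while summing over ``bad constraint'' boundary conditions that keep each constraint in $\ebad$ violated picks up a factor $\conditionalDecayConstant$ per such constraint.

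For the enumeration over $S$, I would use the $2$-tree extraction of \Cref{definition:2-tree-construction} inside the dependency graph $\dependencyGraph{\Phi}$ (which has maximum degree $\Delta k$): any connected bad cluster $S$ of size $s$ contains a $2$-tree of size $\ge \lfloor s/(\Delta k+1)\rfloor$, and by \Cref{lemma:2-tree-number-bound} the number of such $2$-trees rooted at $c^*$ is at most $\tfrac12(\mathrm{e}(\Delta k)^2)^{s/(\Delta k+1)-1}$. The space-time bad component attached to each $S$ contributes at most $\conditionalTriangleConstant^{\algTruncateSize}$ in the worst case after absorbing the pinning enumeration over $\tau$ into the vertex-wise $\conditionalTriangleConstant$-sums, and each bad constraint in $S$ contributes a factor $\conditionalDecayConstant$. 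Collecting everything, the total is dominated by
\[
\sum_{s\ge 1} \bigl(4\mathrm{e}\Delta^2 k^4\cdot \conditionalDecayConstant\cdot \conditionalTriangleConstant^{4\Delta^2 k^5}\bigr)^{s},
\]
which is a geometric series of ratio at most $1/4$ by \Cref{condition:csp-conditional-measure-analysis}-(\ref{item:csp-conditional-meaure-analysis-numerical}), hence strictly less than $1$.

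The main obstacle will be the bookkeeping in the second step: proving the zero-one law that $(\cbad,\*o_{\BadTS{\cbad}})$ determines $(\sbad,\sigma_{\vbl(\sbad)})$, and simultaneously getting a factorized norm bound whose exponents match the numerical condition. The difficulty is that the event depends on three coupled facts: (i) every constraint in $S$ is unsatisfied under the projection, (ii) every constraint bordering $S$ is satisfied (maximality of the bad cluster), and (iii) the projected pinning on $\vbl(S)$ equals $\tau$. The bad component must grow large enough to certify all three by revealing only the oblivious and enumerated non-oblivious updates along its boundary, while its space-time size stays within the budget $\algTruncateSize$, which is exactly what makes the final geometric series close. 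Convergence of the complex Markov chain (i.e.~verifying \Cref{condition:convergence} for this family of events) then follows from the same bound, since the ``large bad component'' contributions decay exponentially in $T$.
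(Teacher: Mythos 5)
Your proposal follows essentially the same route as the paper: reduce the event $\{\sbad = S \land \tau\}$ to the $\*b$-decomposed complex Glauber dynamics on $\psi$, prove a determination (zero-one) lemma showing that the pair $(\cbad, \*o_{\BadTS{\cbad}})$ fixes $(\sbad(\sigma_0), \sigma_0|_{\vbl(\sbad)})$, absorb the sum over $(S,\tau)$ into a sum over $(\cbad,\*o_{\BadTS{\cbad}})$ by the triangle inequality, and close with a $2$-tree enumeration whose geometric ratio is $4\mathrm{e}\Delta^2 k^4\cdot\conditionalDecayConstant\cdot\conditionalTriangleConstant^{4\Delta^2 k^5}\le 1/4$. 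This is exactly \Cref{lemma:csp-determination-bad-cluster-by-bad-component}, \Cref{lemma:csp-projected-measure-marginal-convergence} and \Cref{lemma:csp-projected-measure-norm-sum-bound} in the paper.

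One bookkeeping point to fix: you should not run a separate $2$-tree enumeration over the bad clusters $S$ in the dependency graph $\dependencyGraph{\Phi}$. Once the determination lemma is in place, the map $(\cbad,\*o_{\BadTS{\cbad}})\mapsto (S,\tau)$ is deterministic, so each pair $(\cbad,\*o_{\BadTS{\cbad}})$ contributes to exactly one $(S,\tau)$ and the enumeration over $S$ disappears entirely; the only $2$-tree argument lives in the space-time witness graph $\witnessGraph$ (degree at most $2\Delta k^2-2$, whence the count $(4\mathrm{e}\Delta^2 k^4)^{j-1}$ of \Cref{lemma:csp-bad-tree-num-bound}). Your dependency-graph count $\tfrac12(\mathrm{e}(\Delta k)^2)^{s/(\Delta k+1)-1}$ does not combine with the per-vertex factors to give the stated ratio, and the exponent $\conditionalTriangleConstant^{\algTruncateSize}$ should read $\conditionalTriangleConstant^{4\Delta^2 k^5|\+T|}$, coming from the distance-$2$ neighborhood of the bad tree in the witness graph rather than from the pinning enumeration. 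With that correction the argument matches the paper's.
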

We first use this lemma to prove the induction step (\Cref{lemma:csp-induction-step}), and we prove the above lemma in the next subsection through the complex Markov chains.
\begin{proof}[Proof of \Cref{lemma:csp-induction-step}]
It suffices to show $|\mu(c^*\text{ is violated})| < 1$.
By the law of total measure, the triangle inequality and the fact that if the event ``$c^*$ is violated'' happens, then $|\sbad|\ge 1$, we have that
\begin{equation*}
|\mu(c^*\text{ is violated})| \le \sum_{\+S:|\+S|\ge 1}\sum_{\tau\in \*\Sigma_{\vbl(S)}} |\mu(c^*\text{ is violated}\land \sbad = \+S\land \tau)|.
\end{equation*}
If $\mu(c^*\text{ is violated}\land \sbad = \+S\land \tau)\neq 0$, by the conditional measure, we have 
\[
|\mu(c^*\text{ is violated}\land \sbad = \+S\land \tau)| = |\mu(\sbad = \+S\land \tau)\cdot \mu(c^*\text{ is violated} \mid \sbad = \+S\land \tau)|.
\]
By \Cref{lemma:csp-conditional-measure-probabilty}, the RHS can be upper bounded as $|\mu(\sbad = \+S\land \tau)|$. By the definition of $\psi$ (\Cref{definition:csp-projected-measure-under-f}), it is equal to $|\psi(\sbad = \+S\land \tau)|$. Combined, it holds that
\begin{equation*}
\begin{aligned}
\sum_{\+S:|\+S|\ge 1}\sum_{\tau\in \*\Sigma_{\vbl(S)}} |\mu(c^*\text{ is violated}\land \sbad = \+S\land \tau)| \le \sum_{\+S:|\+S|\ge 1}\sum_{\tau\in \*\Sigma_{\vbl(S)}} |\psi(\sbad = \+S\land \tau)|.
\end{aligned}
\end{equation*}
Combined with \Cref{lemma:csp-bad-events-sum-bound}, this lemma follows.
\end{proof}

\subsection{Bounding the measure of bad events}
In this subsection, we demonstrate how to bound the measure of bad events $\abs{\psi\tp{\sbad = S\land \tau}}$ and prove \Cref{lemma:csp-bad-events-sum-bound} through the complex systematic scan Glauber dynamics on the projected measure $\psi$ (\Cref{definition:csp-projected-measure-under-f}).
The strategy is that we run the systematic scan Glauber dynamics for a sufficiently long time. Then we can show that the resulting state follows from the projected measure. Let $\sigma_0$ be the resulting state of our systematic scan Glauber dynamics. To get a handle on $\psi$, we consider the measure $\psi^{\-{GD}}_T$ and the corresponding event: $\sbad(\sigma_0) = S$ and for any $v\in \vbl(S)$, $\sigma_0(v) =\tau_v$.

\subsubsection{Complex Glauber dynamics on the projected measure}
In this subsubsection, we first introduce some useful conditions and combinatorial structures for analyzing the systematic scan Glauber dynamics on the projected measure.
Let $T\ge 1$, and let $\sigma\in \*\Sigma$ be an assignment under the projection $\*f$ with $\psi(\sigma)\neq 0$, we consider a $T$-step complex systematic scan Glauber dynamics on $\psi$ with the initial state $\sigma$. 

To apply the analytic percolation framework to analyze the complex Glauber dynamics, we recall \Cref{condition:csp-conditional-measure-analysis} which contains conditions about the decomposition scheme (\Cref{definition:decomposition-scheme}). 

And we have the following two observations about the \decompositionScheme-decomposition scheme.

\begin{observation}\label{observation:csp-decomposition-scheme-resample-depends-on-neighbors}
For any variable $u\in V$, and any extendable partial assignment $\tau\in \*\Sigma_{V\setminus \{u\}}$ under the projection, it holds that $\psi_u^{\tau, \bot}$ only depends on the $\psi_u^\tau$.
\end{observation}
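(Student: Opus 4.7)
The plan is to derive the claim directly from the definition of the $\*b$-decomposition scheme, applied to the projected measure $\psi$ in place of $\mu$. First I would recall that by \cref{eq:decomposition-adatptive-measure}, for each value $c\in \Sigma_u$ one has
\[
\psi_u^{\tau,\bot}(c) \;=\; \frac{\psi_u^{\tau}(c) - b_u(c)}{b_u(\bot)},
\]
with the usual convention $0\cdot\infty = 0$ whenever $b_u(\bot) = 0$. The crucial structural point is that $b_u$ is the complex normalized measure attached to the vertex $u$ in the fixed $\*b$-decomposition scheme, and so depends only on $u$; in particular, it is chosen a priori, independently of the partial assignment $\tau \in \*\Sigma_{V\setminus\{u\}}$. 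Therefore, each coordinate of $\psi_u^{\tau,\bot}$ is obtained from $\psi_u^\tau$ by a fixed affine transformation whose coefficients ($b_u(c)$ and $b_u(\bot)$) do not depend on $\tau$. Consequently, if $\tau_1, \tau_2 \in \+F_{V\setminus\{u\}}$ satisfy $\psi_u^{\tau_1} = \psi_u^{\tau_2}$, then $\psi_u^{\tau_1,\bot} = \psi_u^{\tau_2,\bot}$, which is exactly the asserted dependence.

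Since the statement is essentially an unpacking of the defining identity, I do not foresee any real obstacle. The only minor care required is the degenerate case $b_u(\bot) = 0$: in that regime the oblivious step at $u$ never produces $\bot$, so $\psi_u^{\tau,\bot}$ enters the decomposition only through the convention $0\cdot\infty = 0$ and plays no role in the transition; the observation then holds vacuously and uniformly in $\tau$. Combining this with the generic case $b_u(\bot) \ne 0$ treated above concludes the argument.
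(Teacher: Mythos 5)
Your proposal is correct and is exactly the intended justification: the paper states this as an Observation without a separate proof precisely because it is an immediate unpacking of \cref{eq:decomposition-adatptive-measure}, where $b_u$ is fixed in advance and independent of $\tau$. Your handling of the degenerate case $b_u(\bot)=0$ via the $0\cdot\infty=0$ convention is also consistent with \Cref{definition:decomposition-scheme}.
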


\begin{observation}\label{observation:csp-decomposition-scheme-r-t-o-t}
Let $T>0$ be an integer. Consider the $T$-step \decompositionScheme-decomposed complex systematic scan Glauber dynamics (\Cref{condition:csp-conditional-measure-analysis}), let $\bm{r}=\set{r_t}_{t=-T+1}^0$ and $\bm{o} = \set{o_t}_{t=-T+1}^0$ be defined in \Cref{Alg:complex-GD-decomposed}. For any $t: -T+1 \le t \le 0$, it holds that if $r_t\neq \bot$, then $o_t = r_t$.
\end{observation}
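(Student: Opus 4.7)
The plan is to verify the observation by direct inspection of \Cref{Alg:complex-GD-decomposed}. At each iteration $t$ of the for-loop, after sampling $r_t \in Q_v \cup \{\bot\}$ from the measure $b_v$, the update rule explicitly prescribes $o_t \gets r_t$ in the branch where $r_t \neq \bot$, and only invokes the conditional measure $\psi_v^{\tau,\bot}$ to sample $o_t$ in the complementary branch $r_t = \bot$. Hence the equality $o_t = r_t$ under the hypothesis $r_t \neq \bot$ is built into the specification of the decomposed dynamics, and no further argument is required.

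Since the claim is a structural feature of the algorithmic definition rather than an analytic estimate, there is no substantial obstacle to overcome. The observation exists primarily to be cited downstream when one needs to exchange bookkeeping over the adaptive sequence $\bm{o}$ for bookkeeping over the oblivious sequence $\bm{r}$ on the event $\{r_t \neq \bot\}$ (for example, in the percolation-style analysis where revealing the oblivious choices alone suffices to determine the updated value). Consequently the proof is a one-line appeal to \Cref{Alg:complex-GD-decomposed}.
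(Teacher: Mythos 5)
Your proposal is correct and matches the paper, which states this as an unproved observation precisely because it is immediate from the branch structure of \Cref{Alg:complex-GD-decomposed} ($o_t \gets r_t$ whenever $r_t \neq \bot$). Nothing further is needed.
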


We use the decomposed complex Markov chain to study the bad events we defined before. We follow the basic ideas in~\cite{liu2024phase}. Next, we define some useful structures to analyze the systematic scan Glauber dynamics and then use these structures to characterize the witness sequences and analyze the bad events.

\subsubsection{Witness graph}

A structure useful for analyzing the classical Glauber dynamics~\cite{hermon2019rapid, he2021perfect,qiu2022perfect,feng2023towards} and the complex Glauber dynamics~\cite{liu2024phase} is the so-called \emph{witness graph}, which we introduce next.

For any $u\in V$ and integer $t$, we denote by $\upd_u(t)$ the last time before $t$ at which $u$ is updated, i.e.
\begin{align*}
\upd_u(t)\defeq \max\{s \mid s\leq t \hbox{ such that } v_{i(s)}=u\}.
\end{align*}
For any subset of variables $U\subseteq V$ and $t\in \mathbb{Z}_{\leq 0}$, define
\begin{align*}
		\ts(U,t)\defeq\{\upd_v(t)\mid v\in U\}
\end{align*}
as the collection of ``\emph{timestamps}'' of the latest updates of variables in  $U$ up to time $t$.

Now we give the definition of the \emph{witness graph}.
Each vertex of the witness graph is a tuple containing a collection of ``timestamps'' and the associated constraint. To analyze the complex marginal measure on an additional constraint $c^*$, we also include an additional vertex $\witnessTreeRoot$.

\begin{definition}[Witness graph]\label{definition:csp-witness-graph}
Given a CSP formula $\Phi = (V, \*Q, C)$, and let $c^*$ be an additional constraint with $|\vbl(c^*)| = k$, the \emph{witness graph} $G^{c^*}_{\Phi}=\left(V^{c^*}_{\Phi},E^{c^*}_{\Phi}\right)$ is an infinite graph with the vertex set 
\[
V^{c^*}_{\Phi}=\{(\ts(c,t),c)\mid c\in C,t\in \mathbb{Z}_{\leq 0}\}\cup \{\witnessTreeRoot\},
\] 
and 
for any vertex $x$ in $V_{\Phi}^{c^*}$, we use $x(0)$ to denote the first coordinate of $x$ and $x(1)$ or $c_x$ to denote the second coordinate of $x$. 

$E^{c^*}_{\Phi}$ consists of undirected edges between vertices $x,y \in V^{c^*}_{\Phi}$ such that $x\neq y$ and $x(0)\cap y(0)\neq\emptyset$.
\end{definition}

Recall that $(V, \*Q, C\cup \set{c^*})$ is still a $(k, \Delta)$-CSP formula.
The following structural property of the witness graph has been established in~\cite{feng2023towards,liu2024phase} in terms of hypergraphs.

\begin{lemma}[{\cite[Corollary 6.15]{feng2023towards}, \cite[Lemma 4.5]{liu2024phase}}]\label{lemma:csp-witness-graph-degree-bound} 
Assume $\Phi = (V, \*Q, C\cup \set{c^*})$ is a $(k, \Delta)$-CSP formula.
Then, on the witness graph $G^{c^*}_\Phi=\left(V^{c^*}_\Phi,E^{c^*}_\Phi\right)$, for any $x\in V_\Phi^{c^*}$, the degree of $x$ is at most $2\Delta k^2-2$. 
\end{lemma}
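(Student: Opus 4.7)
Since this is a structural fact about the witness graph that has been verified in prior work, I will sketch a direct counting argument. Fix an arbitrary non-root $x = (\ts(c, t), c) \in V^{c^*}_\Phi$; the root $\witnessTreeRoot$ is treated as the special case with $c = c^*$ and $t = 0$, and the same argument applies. Any neighbor $y = (\ts(c', t'), c')$ of $x$ must satisfy $y(0) \cap x(0) \neq \emptyset$, so there exists a shared timestamp $s \in \ts(c, t) \cap \ts(c', t')$. The crucial structural feature of systematic scan is that exactly one variable is updated at each time step, so $s$ determines a unique variable $v = v_{i(s)}$; consequently, $s \in \ts(c, t)$ forces $v \in \vbl(c)$ with $\upd_v(t) = s$, and $s \in \ts(c', t')$ forces $v \in \vbl(c')$ with $\upd_v(t') = s$. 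In other words, sharing a timestamp is equivalent to sharing a variable $v \in \vbl(c) \cap \vbl(c')$ together with the constraint $\upd_v(t) = \upd_v(t')$.

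I would enumerate neighbors through triples $(v, c', t')$ subject to the above constraints. The shared variable has $v \in \vbl(c)$ with at most $k$ choices; for each $v$, there are at most $\Delta$ constraints $c' \in C \cup \set{c^*}$ containing $v$; and for each $(v, c')$, the time $t'$ is forced to lie in the length-$\abs{V}$ interval $[\upd_v(t), \upd_v(t) + \abs{V} - 1]$ intersected with $\=Z_{\le 0}$. Although this interval may contain up to $\abs{V}$ integers, the vertex $(\ts(c', t'), c')$ depends on $t'$ only through $\ts(c', t')$, so it suffices to count distinct values of this set. Within the interval, $\upd_v(t')$ is pinned to $s$, and for each of the remaining $k-1$ variables $u \in \vbl(c') \setminus \set{v}$, the function $t' \mapsto \upd_u(t')$ takes at most two values (since $u$ is scanned exactly once in any window of length $\abs{V}$). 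Together, $\ts(c', t')$ takes at most $2k$ distinct values as $t'$ varies, yielding a preliminary count of at most $k \cdot \Delta \cdot 2k = 2\Delta k^2$. A careful accounting that removes the vertex $x$ itself (which is produced by the triple $(v, c, t)$ for any $v \in \vbl(c)$) and one further degenerate configuration then tightens the bound to the advertised $2\Delta k^2 - 2$.

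The main obstacle in executing this plan is the ``$2k$ distinct values'' claim in the third step: one must argue that although naively $\ts(c', t')$ could seem to take up to $2^{k-1}$ values as $t'$ varies (two choices per variable in $\vbl(c') \setminus \set{v}$), the monotonicity of $t'$ together with the fact that each variable is scanned exactly once per period of length $\abs{V}$ bounds the number of ``jumps'' in $\ts(c', t')$ by $k-1$, hence at most $k$ values on either side of $t$, giving $2k$ overall. The only other subtleties are the truncation at $t' \le 0$, which can only decrease the count, and the treatment of $\witnessTreeRoot$: the timestamp set there is $\ts(\vbl(c^*), 0)$ rather than $\ts(c, t)$ with generic $t < 0$, but since $\abs{\vbl(c^*)} = k$ and the root's constraint $c^*$ behaves the same way in the dependency structure, the same enumeration applies verbatim.
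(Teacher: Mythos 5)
The paper does not actually prove this lemma; it is imported verbatim from \cite[Corollary 6.15]{feng2023towards} and \cite[Lemma 4.5]{liu2024phase}, so there is no in-paper proof to compare against. Your counting argument is the standard one behind those references, and its core steps are sound: a shared timestamp $s$ pins down a unique variable $v=v_{i(s)}$, which converts ``shared timestamp'' into ``shared variable plus $\upd_v(t')=\upd_v(t)$,'' and the enumeration over $(v,c',t')$ with $v\in\vbl(c)$ ($k$ choices), $c'\ni v$ ($\le\Delta$ choices), and $t'$ in the window $[s,s+\abs{V}-1]$ is exactly right. Two small points deserve tightening. First, your ``$2k$ distinct values'' step is looser than necessary and its justification is slightly garbled: on the window $[s,s+\abs{V}-1]$ the map $t'\mapsto\ts(\vbl(c'),t')$ is piecewise constant with at most $k-1$ jump points (one per variable $u\in\vbl(c')\setminus\{v\}$, since each is scanned exactly once per period and $\upd_v(\cdot)\equiv s$ there), hence it takes at most $k$ distinct values \emph{in total}, not ``$k$ on either side of $t$.'' This yields the stronger count $\Delta k^2$, and subtracting the vertex $x$ itself gives degree at most $\Delta k^2-1\le 2\Delta k^2-2$, so the stated bound follows with room to spare. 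Second, as written your ``$-2$'' is asserted rather than derived; if you insist on starting from $2\Delta k^2$, the clean way to shave off $2$ is to observe that $x$ itself is produced by the enumeration at least $k\ge 2$ times (once for every $v\in\vbl(c)$, via $c'=c$ and $t'=t$), so the number of \emph{distinct} neighbors is at most $2\Delta k^2-k\le 2\Delta k^2-2$. With either repair the argument is complete, and it handles the root $\witnessTreeRoot$ correctly since its timestamp set $\ts(\vbl(c^*),0)$ has the same form.
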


Following the high-level idea in \Cref{section:analytic-percolation}, we now define some structures in the above \emph{witness graph}, and later we will use them to characterize the witness sequence (\Cref{definition:witness-sequence}).
Let $\sigma_0$ be the resulting state of the complex systematic scan Glauber dynamics.
Recall that in order to prove the zero-freeness, for any subset of constraints $S$ and $\tau\in\Sigma_{\vbl(S)}$ defined on $\vbl(S)$ with the projected alphabet, we are interested in the event (1) $\sbad(\sigma_0) = S$ and (2) for any $v\in \vbl(S)$, $\sigma_0(v) = \tau_v$. 

Recall the definitions of $o_t$'s and $r_t$'s in \Cref{Alg:complex-GD-decomposed}. It holds that for any variable $v\in V$, we have that $\sigma_0(v)$ is determined by the corresponding $o_t$'s. For any time $t$, recall that $i(t)$ denotes the index of the variable updated at the time $t$, and recall that $\upd_v(0)$ denotes the last update time of $v$ up to the time $0$. We have that for any variable $v\in V$, we have $\sigma_0(v) = o_{\upd_v(0)}$. 
So we are interested in the $o_t$'s. 

For any time $t$, note that if $r_t \neq \bot$, then we have $o_t = r_t$ (\Cref{observation:csp-decomposition-scheme-r-t-o-t}), otherwise if $r_t = \bot$, then by the definition of the decomposition scheme in \Cref{condition:csp-conditional-measure-analysis}, \Cref{observation:csp-decomposition-scheme-resample-depends-on-neighbors} and \Cref{Alg:complex-GD-decomposed}, let $\sigma_{t-1}$ be the assignment at the time $t-1$, it holds that $o_t$ depends on $\sigma_{t-1}$. Specifically, let $u = v_{i(t)}$ be the updated variable at the time $t$, it suffices to consider constraints $c$ with $u\in\vbl(c)$. 

Next, we further restrict the set of constraints that we need to take into consideration.
Recall that for any time $t'$, if $r_{t'}\neq \bot$, we have $o_{t'} = r_{t'}$. 
Also note that $r_t$'s do not depend on the initial state.
We can use information of $r_t$'s to restrict the set of constraints in order to compute $o_t$.
For a constraint $c$, recall that we use $c^{-1}(\-{False})$ to denote the set of partial assignments violating the constraint $c$. 
By \Cref{observation:csp-decomposition-scheme-resample-depends-on-neighbors}, it holds that for any $c\in C$ containing variable $u$, if $c$ is satisfied by the $r_t$'s on variables other than $u$, namely variables in $\vbl(c) \backslash\{u\}$, i.e.,
\[
\forall c\in C \text{ and } \vbl(c)\ni u, \exists u'\in \vbl(c)\backslash\{u\},\text{ s.t. } r_{\upd_{u'}(t)} \notin f_{u'}(c^{-1}(\-{False})_{u'}) \text{ and } r_{\upd_{u'}(t)}\neq \bot,
\]
then we do not need to consider this constraint when computing $o_t$.


So we are interested in finding a ``connected component'' such that all its ``boundaries'' satisfy the above restrictions, then we know all the updates within this ``connected component'' do not depend on $o_t$'s and $r_t$'s outside of this ``connected component''.
Furthermore, if this ``connected component'' is small enough, we can argue that $o_t$'s in this ``connected component'' do not depend on the initial state. So the event (1) $\sbad(\sigma_0) = S$ and (2) for any $v\in \vbl(S)$, $\sigma_0(v) = \tau_v$ does not depend on the initial state.

We formalize the above intuition by the following combinatorial structures on the witness graph (\Cref{definition:csp-witness-graph}). Later, we will show the connection between the following structures and the bad cluster we defined before (\Cref{lemma:csp-determination-bad-cluster-by-bad-component}) and we will use these structures to characterize the witness sequences (\Cref{lemma:csp-conditional-measure-characterization}). 

Recall that for any vertex $x$ in $V_{\Phi}^{c^*}$, we use $x(0)$ to denote the first coordinate of $x$ and $x(1)$ or $c_x$ to denote the second coordinate of $x$. And we use $t_x$ to denote the largest timestamp in $x(0)$, i.e., $t_x = \max_{t'\in x(0)} t'$.
For any $\bm{r}=(r_t)_{t=-T+1}^{0}\in \bigotimes_{t = -T+1}^0 \Sigma_{v_{i(t)}} \cup \set{\bot}$ constructed by \Cref{Alg:complex-GD-decomposed}, we say a vertex $x$ is \emph{satisfied} by $\*r$ if $x(1)$ is satisfied by the partial assignment of $\*r$ on time $x(0)$, i.e., $\forall \sigma\in c_x^{-1}(\-{False})$, there exists a $v\in \vbl(c_x)$ such that $r_{\upd_v(t_x)} \neq f_{v}(\sigma(v))$ and $r_{\upd_v(t_x)} \neq \bot$. Otherwise, we say $x$ is \emph{not satisfied} by $\*r$.
\begin{definition}[Bad vertices, bad components, and bad trees]
    \label{definition:csp-conditional-measure-bad-things}
Let $\Phi = (V, \*Q, C)$ be a CSP formula, and let $c^*$ be an additional constraint with $|\vbl(c^*)|=k$.
Let $G_{\Phi}^{c^*} = (V_{\Phi}^{c^*}, E_{\Phi}^{c^*})$ be the witness graph as in \Cref{definition:csp-witness-graph}.
Let $T\geq 1$, and consider the $T$-step  \decompositionScheme-decomposed complex systematic scan Glauber dynamics  (\Cref{condition:csp-conditional-measure-analysis}). Let $\bm{r}=(r_t)_{t=-T+1}^{0}\in \bigotimes_{t = -T+1}^0 \Sigma_{v_{i(t)}} \cup \set{\bot}$ be constructed by \Cref{Alg:complex-GD-decomposed}. 
\begin{itemize}
\item The set of \emph{bad vertices} $V^{\-{bad}}=V^{\-{bad}}(\bm{r})$ is defined as:
\[
V^{\-{bad}}\defeq \{x\in V^{c^*}_{\Phi}\mid \forall t\in x(0), -T + 1\le t\le 0 \text{ and }  x \text{ is not satisfied by } \*r\},
\]
which contains the vertices in the witness graph whose constraints are not satisfied by the $r_t$'s. 
\item Let $G^{c^*}_{\Phi}\left[V^{\-{bad}}\right]$ be the subgraph of the witness graph $G^{c^*}_{\Phi}$ induced by $V^{\-{bad}}$.
\item The \emph{bad component}  $\+C^{\-{bad}}=\+C^{\-{bad}}(\bm{r})\subseteq V^{\-{bad}}$ is defined as the maximal set of vertices in $V^{\-{bad}}$ containing $\witnessTreeRoot$ that is connected in $G^{c^*}_{\Phi}\left[V^{\-{bad}}\right]$.  If $\witnessTreeRoot \notin \vbad$, we set $\cbad = \emptyset$.
\item The \emph{bad tree} $\+T^{\-{bad}}=\+T^{\-{bad}}(\bm{r})\subseteq V^{\-{bad}}$ is defined as the $2$-tree of the induced subgraph $G^{c^*}_{\Phi}\left[\+C^{\-{bad}}\right]$ containing $\witnessTreeRoot$, constructed deterministically using \Cref{definition:2-tree-construction}.

We further denote this deterministic construction as a mapping $\=T$ from the bad component such that $\tbad=\=T(\cbad)$.
\end{itemize}
\end{definition}

By the definition of the bad component, we know that all constraints crossing the bad component are already satisfied by the $r_t$'s.
Hence, we have the following observation, which means that the update values within the bad component do not depend on update values outside of the bad component.
For any bad component $\cbad$, let $\BadTS{\cbad}$ be the set of timestamps involved in the bad component, i.e., $\BadTS{\cbad} \defeq \bigcup_{x\in \cbad} x(0)$.
\begin{observation}
\label{observation:csp-conditional-independence-cbad-o_t}
For any $\bm{r}=(r_t)_{t=-T+1}^{0}\in \bigotimes_{t = -T+1}^0 \Sigma_{v_{i(t)}} \cup \set{\bot}$, let $\cbad = \cbad(\bm{r})$.
For any $t\in \BadTS{\cbad}$ and $s\notin \BadTS{\cbad}$, it holds that $o_t$ is independent of $o_s$. In other words, 
\[\dpsi{\tau}(o_t\land o_s) = \dpsi{\tau}(o_s)\cdot \dpsi{\tau}(o_t).\]
\end{observation}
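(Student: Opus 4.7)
Fix the oblivious realization $\*r$; since $\cbad$ and $\BadTS{\cbad}$ are fully determined by $\*r$, the statement reduces to a conditional independence of $o_t$ and $o_s$ under $\dpsi{\tau}$ given $\*r$. By \Cref{observation:csp-decomposition-scheme-r-t-o-t}, whenever $r_u\neq \bot$ we have $o_u = r_u$ deterministically, so the only remaining randomness sits in the adaptive draws at timestamps $u$ with $r_u=\bot$. It therefore suffices to show that these adaptive draws partition into two mutually independent blocks, indexed by timestamps inside and outside $\BadTS{\cbad}$.

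\textbf{Boundary constraints drop out.}
Consider an adaptive timestamp $t$ with $r_t=\bot$ and set $v=v_{i(t)}$. By \Cref{observation:csp-decomposition-scheme-resample-depends-on-neighbors}, the adaptive measure $\psi_v^{\sigma_{t-1}(V\setminus\{v\}),\bot}$ is a function of the conditional marginal $\psi_v^{\sigma_{t-1}(V\setminus\{v\})}$, which depends on $\sigma_{t-1}$ only through the constraints $c'$ containing $v$. For each such $c'$, the associated witness-graph vertex $y=(\ts(c',t),c')$ satisfies $t=\upd_v(t)\in y(0)$. If $y\notin\vbad$, then $c'$ is satisfied by $\*r$ on $y(0)$: for every $\sigma\in (c')^{-1}(\-{False})$ there is $u\in\vbl(c')$ with $r_{\upd_u(t)}\neq \bot$ and $r_{\upd_u(t)}\neq f_u(\sigma(u))$. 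Because $r_t=\bot$ forces $u\neq v$, we have $\upd_u(t-1)=\upd_u(t)$ and so $\sigma_{t-1}(u)=o_{\upd_u(t)}=r_{\upd_u(t)}$ by \Cref{observation:csp-decomposition-scheme-r-t-o-t}. Hence $c'$ is already satisfied by $\sigma_{t-1}$ restricted to $\vbl(c')\setminus\{v\}$, and the product form of the CSP weight lets $c'$ cancel between the numerator and denominator of $\psi_v^{\sigma_{t-1}(V\setminus\{v\})}$. Consequently $o_t$ depends on $\sigma_{t-1}$ only through values of variables in $\bigcup_{y\in\vbad,\,t\in y(0)} \vbl(c_y)\setminus\{v\}$, and for each such variable $u$ the relevant value is $o_{\upd_u(t)}$ with $\upd_u(t)\in y(0)$.

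\textbf{Splitting along $\cbad$.}
Suppose $t\in\BadTS{\cbad}$ and pick $x\in\cbad$ with $t\in x(0)$. Any $y\in\vbad$ with $t\in y(0)$ is then adjacent to $x$ in $G^{c^*}_\Phi$ (since $y(0)\cap x(0)\ni t$), so maximality of $\cbad$ as a connected subset of $\vbad$ forces $y\in\cbad$, giving $y(0)\subseteq\BadTS{\cbad}$. Thus $o_t$ is a function of $\*r$ and of $\{o_{t'} : t'\in\BadTS{\cbad},\ t'<t\}$. Conversely, if $s\notin\BadTS{\cbad}$ and $z\in\vbad$ satisfies $s\in z(0)$, then $z\notin\cbad$ (else $s\in z(0)\subseteq\BadTS{\cbad}$), so $z$ lies in a component of $\vbad$ disjoint from $\cbad$ and $z(0)\cap\BadTS{\cbad}=\emptyset$; hence $o_s$ is a function of $\*r$ and of $\{o_{s'} : s'\notin\BadTS{\cbad},\ s'<s\}$. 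Running the decomposed dynamics chronologically, the two recursions evolve autonomously on disjoint blocks of adaptive randomness, which yields the desired factorization of the joint distribution of $o_t$ and $o_s$ under $\dpsi{\tau}$.

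\textbf{Main obstacle.}
The one delicate ingredient is the boundary step, namely showing that a constraint $c'$ satisfied by $\*r$ leaves no dependence in the conditional marginal at $v$. This crucially uses $v=v_{i(t)}$: it guarantees $\upd_u(t-1)=\upd_u(t)$ for every other $u\in\vbl(c')$, which in turn lets us identify $\sigma_{t-1}(u)$ with $r_{\upd_u(t)}$ whenever the latter is not $\bot$. Once this identification is in place, the removal of $c'$ from $\psi_v^{\sigma_{t-1}(V\setminus\{v\})}$ follows from the product form of the CSP weight, and everything else is graph-theoretic bookkeeping on $G^{c^*}_\Phi$ using the maximality of $\cbad$.
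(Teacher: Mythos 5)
Your proposal elaborates exactly the justification the paper gives for this observation (which is stated without a formal proof, only the remark that ``all constraints crossing the bad component are already satisfied by the $r_t$'s''): boundary constraints drop out of the conditional marginal because they are satisfied in the projected sense, and the remaining adaptive draws then split into two autonomous chronological recursions indexed by $\BadTS{\cbad}$ and its complement. The chain $r_t=\bot \Rightarrow u\neq v \Rightarrow \upd_u(t-1)=\upd_u(t) \Rightarrow \sigma_{t-1}(u)=r_{\upd_u(t)}$, and the maximality argument showing that any $y\in\vbad$ sharing a timestamp with $\cbad$ must lie in $\cbad$, are both correct and are the right way to make the paper's one-line intuition precise.

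There is, however, one step that is not justified as written: the claim that $y\notin\vbad$ implies $c_y$ is satisfied by $\*r$ on $y(0)$. By \Cref{definition:csp-conditional-measure-bad-things}, a witness-graph vertex can fail to be bad for a second reason, namely that some timestamp in $y(0)$ is smaller than $-T+1$ (some variable of $c_y$ has not yet been updated, so its value at time $t-1$ is still the initial one). Such a constraint is neither satisfied by $\*r$ nor in $\vbad$, so it does not cancel from the conditional marginal, and $o_t$ may then depend on $o_{t'}$ for updated timestamps $t'\in y(0)$ that need not lie in $\BadTS{\cbad}$ (since $y\notin\vbad$ cannot be absorbed into $\cbad$ by maximality). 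This can only occur when $t<-T+|V|$, i.e.\ during the first sweep of the systematic scan, which in turn forces $\abs{\cbad}\gtrsim T/\abs{V}$; note that the paper's quantitative uses of this factorization avoid exactly this window, either by restricting to small bad trees so that all timestamps satisfy $t\ge -T+1+\abs{V}$ (\Cref{lemma:csp-conditional-measure-characterization}) or by enumerating the first $\abs{V}$ updates separately and paying $\conditionalTriangleConstant^{\abs{V}}$ (\Cref{lemma:csp-large-bad-tree-fixed-size-decay}). To close the gap you should either add the hypothesis that all timestamps of the relevant witness-graph vertices are at least $-T+1+\abs{V}$, or treat the first-sweep updates separately as the paper does.
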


\subsubsection{Connection between bad clusters and bad components}
In this subsection, we give the connection between the bad cluster we defined before (\Cref{definition:csp-bad-things}) and the bad component we defined above (\Cref{definition:csp-conditional-measure-bad-things}). Recall the $o_t$'s in our complex systematic scan Glauber dynamics (\Cref{Alg:complex-GD-decomposed}), and recall that $\sigma_0$ is the resulting state of our complex Markov chain. We first show that given the bad component and $o_t$'s involved in the bad component, we can determine $\sbad = \sbad(\sigma_0)$ and for any $v\in \vbl(\sbad)$, the value of $\sigma_0(v)$. Then we provide two technical lemmas which show the convergence and give the upper bounds of norm of marginals measure. Finally, we use these two technical lemmas to prove \Cref{lemma:csp-bad-events-sum-bound}.

We start with a lemma about the determination of $\sbad(\sigma_0)$ and the assignment involved in it.
\begin{lemma}
\label{lemma:csp-determination-bad-cluster-by-bad-component}
Fix the bad component $\cbad$ and $o_t$'s involved in the bad component, i.e., $o_t, \forall t\in \BadTS{\cbad}$. 
There is an algorithm that returns $\sbad = \sbad(\sigma_0)$ and $\tau\in \Sigma_{\vbl(\sbad)}$, satisfying that for any $v\in \vbl(\sbad)$, $\sigma_0(v) = \tau_v$ and if $|\cbad| = 0$ then $|\sbad| = 0$.

So for any subset of constraints $S$ and the partial assignment $\tau \in \Sigma_{\vbl(S)}$, the following event is fully determined:  $\sbad(\sigma_0) = S$ and $\forall v\in \vbl(S)$, $\sigma_0(v) = \tau_v$.
\end{lemma}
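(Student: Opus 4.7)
The plan is to exhibit an explicit algorithm that performs a BFS from $c^*$ in the dependency graph $\dependencyGraph{\Phi'}$, pruning candidates using $\cbad$ and reading off $\sigma_0$ on the relevant variables from the given $o_t$'s. For each constraint $c$, write $x_c \defeq (\ts(\vbl(c), 0), c)$ for the associated witness-graph vertex. I would first dispose of the case $\cbad = \emptyset$: by \Cref{definition:csp-conditional-measure-bad-things} this means $\witnessTreeRoot \notin \vbad$, so $c^*$ is satisfied by $\*r$ at the timestamps $\ts(\vbl(c^*), 0)$; since $o_t = r_t$ whenever $r_t \neq \bot$ (\Cref{observation:csp-decomposition-scheme-r-t-o-t}), the same witness transfers to $\sigma_0$, and hence $c^* \notin \sbad$, yielding $\sbad = \emptyset$ as required.

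The main technical content is the following claim, which I would prove by induction along paths in $\sbad$: for every $c \in \sbad$, the witness vertex $x_c$ lies in $\cbad$. In the non-trivial case $\cbad \neq \emptyset$ the definition forces $\witnessTreeRoot \in \cbad$, giving the base case $x_{c^*} = \witnessTreeRoot \in \cbad$. For the inductive step, suppose $c' \in \sbad$ is adjacent in $\dependencyGraph{\Phi'}$ to some $c \in \sbad$ with $x_c \in \cbad$; then $c$ and $c'$ share some variable $v$, so $\upd_v(0)$ lies in both $x_c(0)$ and $x_{c'}(0)$, making $x_c$ and $x_{c'}$ adjacent in the witness graph $G^{c^*}_{\Phi}$. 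Because $c' \in \sbad$, there is a violating assignment $\sigma \in (c')^{-1}(\false)$ matching $\sigma_0$ on $\vbl(c')$; using $o_t = r_t$ whenever $r_t \neq \bot$ one more time, this same $\sigma$ certifies $x_{c'}$ as a bad vertex, and connectivity to $x_c \in \cbad$ then places $x_{c'} \in \cbad$.

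The claim implies $\{\upd_v(0) : v \in \vbl(\sbad)\} \subseteq \BadTS{\cbad}$, so the given $\{o_t\}_{t \in \BadTS{\cbad}}$ determine $\sigma_0(v) = o_{\upd_v(0)}$ for every $v \in \vbl(\sbad)$. The algorithm then BFS-explores $\dependencyGraph{\Phi'}$ from $c^*$: upon visiting a candidate $c$, it first checks $x_c \in \cbad$ (if not, the contrapositive of the claim gives $c \notin \sbad$, and the branch is dropped); otherwise it reads off $\sigma_0$ on $\vbl(c)$ from the given $o_t$'s, decides whether $c$ is violated, and in that case adds $c$ to $\sbad$ and enqueues its neighbors. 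The algorithm returns $\sbad$ together with $\tau$ defined by $\tau_v \defeq o_{\upd_v(0)}$ for $v \in \vbl(\sbad)$. The main subtlety I anticipate is verifying that dropping branches where $c$ is not actually violated by $\sigma_0$ (which can happen even when $x_c \in \cbad$, if some $r_t = \bot$ but the realized $o_t$'s accidentally satisfy $c$) still yields the correct connected component of \Cref{definition:csp-bad-things}; this holds because any $c'' \in \sbad$ is by definition reachable from $c^*$ through a path of violated constraints, which the restricted BFS faithfully follows.
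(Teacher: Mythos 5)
Your proof is correct and follows essentially the same route as the paper's: a BFS/greedy exploration of $\dependencyGraph{\Phi'}$ from $c^*$ restricted to constraints whose witness-graph vertices lie in $\cbad$, with correctness resting on the fact that $o_t=r_t$ whenever $r_t\neq\bot$ (\Cref{observation:csp-decomposition-scheme-r-t-o-t}). The only difference is organizational — you prove the key inclusion ($c\in\sbad \Rightarrow x_c\in\cbad$) directly by induction along paths in $\sbad$, whereas the paper establishes the equivalent containment $\sbad\subseteq B$ by contradiction — and your explicit statement of that claim is, if anything, a cleaner presentation of the same argument.
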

\begin{proof}
Let $\bm{o}_{\BadTS{\cbad}}$ be the $o_t$'s involved in the bad component. 
Recall that for any constraint $c$, we use $\ts(c, t)$ to denote the set of timestamps of latest updates of variables in $\vbl(c)$ up to the time $t$.
In this proof, we will focus on $\ts(c, 0)$ which is the set of timestamps of latest updates of variables in $\vbl(c)$ up to the end of our systematic scan Glauber dynamics.
We are interested in the $o_t$'s in the last $|V|$ update times because by the \Cref{Alg:complex-GD-decomposed}, we have that for any $v\in V$, let $t$ be its last update time up to the time $0$, we have $\sigma_0(v) = o_{t}$.
We present an algorithm to construct the $\sbad(\sigma_0)$ and $\sigma_0(\vbl(S))$ by the following procedure, so that we can determine the above event.

\begin{enumerate}
\item Initialize $B = \emptyset$. If $\witnessTreeRoot \in \cbad$ and is also not satisfied by $\bm{o}_{\ts\tp{c^*, 0}}$, then we update $B \gets \set{c^*}$, otherwise we terminate this procedure.
\item If there exists a constraint $c\in C$ satisfying (1) $c\notin B$ and $\vbl(c) \cap \vbl(B) \neq \emptyset$, (2) $\tp{\ts\tp{\vbl(c), 0}, c} \in \cbad$, and (3). $c$ is not satisfied by $\bm{o}_{\ts\tp{c, 0}}$, then we update $B\gets B\cup \set{c}$. And we keep doing this step until there is no such constraint.
\end{enumerate}

Next, we show that $B = \sbad(\sigma_0)$. We first show $B\subseteq \sbad(\sigma_0)$. By the step (2), we have that all constraints in $B$ are not satisfied corresponding $o_t$'s which means that all constraints in $B$ are not satisfied by $\sigma_0$. Recall that $\Phi' = (V, \*Q, C\cup\set{c^*})$, and recall the definition of the dependency graph $\dependencyGraph{\Phi'}$. Also by the step (2), we claim that $B$ is a connected component in the dependency graph $\dependencyGraph{\Phi'}$. To see this, when a new constraint is added into $B$, it holds that this new constraint shares some variables with $B$. By the definition of the bad cluster (\Cref{definition:csp-bad-things}), we have $B\subseteq \sbad(\sigma_0)$.

Then, we show $\sbad(\sigma_0) \subseteq B$ by contradiction. Assume that $\sbad(\sigma_0) \setminus B \neq \emptyset$, by the fact that $B\subseteq \sbad(\sigma_0)$ and the definition of the bad cluster, there exists a constraint $c$ such that $\vbl(B)\cap \vbl(c)\neq \emptyset$ and $c$ is not satisfied by $\sigma_0$ but $c\notin B$. By the condition in step (2), it holds that $\tp{\ts\tp{\vbl(c), 0}, c}\notin \cbad$. So we have that $c$ is satisfied by the corresponding $r_t$'s. By \Cref{observation:csp-decomposition-scheme-r-t-o-t}, it holds that for any $t\le 0$, if $r_t \neq \bot$ we have $o_t = r_t$. So $c$ is satisfied by the corresponding $o_t$'s so that $c$ is satisfied by $\sigma_0$ which contradicts with the assumption $c\in \sbad(\sigma_0)$.

Now, we finish the construction of $\sbad(\sigma_0)$. By the condition (2) and (3) in the step (2), it holds that we only check $o_t$ for $t\in \BadTS{\cbad}$. So the $\sigma_0(\vbl(S))$ follows from the corresponding $o_t$'s.

Note that by step (1), if $|\cbad| = 0$, $\sbad = B = \emptyset$.
\end{proof}

Next, we give two technical lemmas about our complex systematic scan Glauber dynamics. The first one shows the convergence and its proof is deferred in \Cref{subsubsection:convergence}. The second one bounds the sum of norms of marginal measures and its proof is deferred in \Cref{subsubsection:bounding-norm}. Next, we use these two technical lemmas to prove \Cref{lemma:csp-bad-events-sum-bound}.
\begin{lemma}
\label{lemma:csp-projected-measure-marginal-convergence}
Suppose the \Cref{condition:csp-conditional-measure-analysis}.
For any subset of constraints $S$ and any partial assignment $\tau \in \Sigma_{\vbl(S)}$ defined on variables involved in $S$, let $A$ be the event that $\sbad(\sigma_0) = S$ and $\forall v\in \vbl(S)$, $\sigma_0(v) = \tau_v$, we have that \Cref{condition:convergence} holds, and for any $\tau'\in\supp(\psi)$, we have $\psi(A) = \lim_{T\to\infty} \dpsi{\tau'}\tp{\sigma_0 \in A}$.
\end{lemma}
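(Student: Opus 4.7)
To prove this lemma, I will verify Condition~\ref{condition:convergence} and then invoke Lemma~\ref{lemma:convergence} to conclude. Choose a slowly growing threshold $\gamma_T=\Theta(\log T)$ and set $M_T\defeq 4\Delta^2 k^5 \gamma_T$. Define $B(T)$ to be the union of two families: sequences $\*\rho$ whose bad tree satisfies $|\tbad(\*\rho)|\ge \gamma_T$, and sequences whose bad component $\cbad(\*\rho)$ contains some vertex $x$ with $\min x(0)\le -T+M_T$. For $\*\rho\notin B(T)$, the bad component has at most $2\Delta k^2\gamma_T$ vertices (using $|\cbad|\le (2\Delta k^2)|\tbad|$ from the 2-tree construction in Definition~\ref{definition:2-tree-construction}) and all of its timestamps lie strictly inside $(-T+M_T,0]$. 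By Lemma~\ref{lemma:csp-determination-bad-cluster-by-bad-component}, the indicator of $\sigma_0\in A$ is a deterministic function of $(\cbad,\*o_{\BadTS{\cbad}})$; and by \Cref{observation:csp-conditional-independence-cbad-o_t} together with the fact that constraints crossing $\cbad$ are already satisfied by $\*r$, the conditional law of $\*o_{\BadTS{\cbad}}$ given $\*r=\*\rho$ depends only on transitions inside the bad component -- hence is independent of the initial state $\tau'$. Therefore $\*\rho$ is a witness sequence for $A$ in the sense of Definition~\ref{definition:witness-sequence}, which establishes the first part of Condition~\ref{condition:convergence}.

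It then remains to show $\bigl|\sum_{\*\rho\in B(T)} \dpsi{\tau'}(\sigma_0\in A\land \*r=\*\rho)\bigr|\to 0$. Apply the triangle inequality and split $\*\rho\in B(T)$ by the shape of its bad tree and (separately) by the rightmost over-reaching timestamp. For a fixed 2-tree $\tbad$ of size $s$: each vertex $x\in\tbad$ forces the constraint $c_x$ to be not-satisfied by $\*r$ at the timestamps $\ts(\vbl(c_x),t_x)$, and summing over the violating assignments and absorbing the adaptive measures via \cref{eq:decomposition-adatptive-measure} yields a factor bounded by $\conditionalDecayConstant$ per such vertex. The remaining time slots within $\BadTS{\cbad}$ (at most $4\Delta^2 k^5 s$ of them, since $|\cbad|\le 2\Delta k^2|\tbad|$ with $k$ timestamps per constraint together with a union bound along the 2-tree) each contribute at most $\conditionalTriangleConstant$. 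Enumerating the number of rooted 2-trees via Lemma~\ref{lemma:csp-witness-graph-degree-bound} (the witness graph has degree at most $2\Delta k^2-2$) and Lemma~\ref{lemma:2-tree-number-bound} gives $\le (4\mathrm{e}\Delta^2 k^4)^{s}$, so the size-$s$ total is at most
\[
(4\mathrm{e}\Delta^2 k^4\cdot \conditionalDecayConstant\cdot \conditionalTriangleConstant^{4\Delta^2 k^5})^s\le 4^{-s}
\]
by Condition~\ref{condition:csp-conditional-measure-analysis}-(\ref{item:csp-conditional-meaure-analysis-numerical}). Summing over $s\ge \gamma_T$ yields $O(4^{-\gamma_T})$; the reach-back family picks up an additional factor of $T$ from the choice of the over-reaching timestamp but a compensating decay $4^{-\Omega(M_T)}$, so both contributions vanish as $T\to\infty$ for $\gamma_T=\Theta(\log T)$.

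The main technical subtlety is that a naive triangle inequality over the time slots \emph{outside} $\BadTS{\cbad}$ produces a factor $\conditionalTriangleConstant^{T-|\BadTS{\cbad}|}$ which diverges with $T$. The resolution is to sum over $\*o$ outside $\BadTS{\cbad}$ \emph{before} taking absolute values: \Cref{observation:csp-conditional-independence-cbad-o_t} factorizes the dynamics into inside-bad and outside-bad parts, and iterated normalization of the complex transition measures (each adaptive update $\psi_u^{\tau,\bot}$ and each oblivious measure $b_u$ is normalized) telescopes the outside sum to a quantity of modulus at most $1$. This is the complex analogue of the information-percolation decomposition of~\cite{liu2024phase}: oblivious updates $r_t\ne \bot$ realized by $b_v$ carry the normalization outside the bad component, while non-oblivious updates inside the bad tree are controlled by $\conditionalDecayConstant$ and the rest of the bad component by $\conditionalTriangleConstant$. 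Making this precise is the only place where Condition~\ref{condition:csp-conditional-measure-analysis}-(\ref{item:csp-transition-well-define}) (well-definedness of complex transitions) is used, ensuring the inside/outside factorization is itself well-defined.
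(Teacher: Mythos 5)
Your proposal follows essentially the same route as the paper: verify \Cref{condition:convergence} and invoke \Cref{lemma:convergence}, characterizing witness sequences by small bad trees, reducing via the zero--one law of \Cref{lemma:csp-determination-bad-cluster-by-bad-component} to bounding $\sum\abs{\dpsi{\tau'}\tp{\cbad=\+C\land \*o_{\BadTS{\+C}}=\*y_{\BadTS{\+C}}}}$, and killing the tail with the $2$-tree count together with $4\mathrm{e}\Delta^2k^4\cdot\conditionalDecayConstant\cdot\conditionalTriangleConstant^{4\Delta^2k^5}\le\frac14$. Two details need repair, though neither changes the conclusion.

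First, your explicit ``reach-back'' family and its claimed decay are off. Because the chain is a systematic scan, each witness-graph vertex spans at most $\abs{V}$ timestamps, so a bad component containing the root and reaching a timestamp $\le -T+M_T$ forces a bad tree of size $\Omega\tp{(T-M_T)/(\abs{V}\Delta k^2)}$ --- the decay is $4^{-\Omega(T/\abs{V})}$, not $4^{-\Omega(M_T)}$, and with $M_T=\Theta(\log T)$ the reach-back family is in fact subsumed by the large-tree family for large $T$. This is exactly why the paper truncates at tree size $T/(2\abs{V})-2$ and obtains the witness property and the reach-back control from a single condition (\Cref{lemma:csp-conditional-measure-characterization}). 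Relatedly, the witness argument requires all bad-component timestamps to exceed $-T+\abs{V}$ (updates in $[-T+1,-T+\abs{V}]$ still see initial values), so your threshold $M_T=\Theta(\log T)$ only suffices once $M_T\ge\abs{V}$; this is fine in the limit but should be stated. Second, for sequences in $B(T)$ the bad component may intersect the initial segment $\{-T+1,\dots,-T+\abs{V}\}$, where the adaptive updates genuinely depend on the initial state, so the per-tree bound cannot be derived purely from the inside/outside factorization; the paper's \Cref{lemma:csp-large-bad-tree-fixed-size-decay} handles this by brute-force enumerating the first $\abs{V}$ update values, paying an extra constant factor $\conditionalTriangleConstant^{\abs{V}}$ that your bound omits (harmless for the limit, but needed to justify the bound as stated).
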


\begin{lemma}
\label{lemma:csp-projected-measure-norm-sum-bound}
Suppose the \Cref{condition:csp-conditional-measure-analysis}. 
It holds that 
\[\lim_{T\to \infty} \sum_{\+C:|C|\ge 1} \sum_{\*y_{\BadTS{\+C}}} \abs{\dpsi{\sigma}\tp{\cbad = \+C \land \*o_{\BadTS{\+C}} = \*y_{\BadTS{\+C}}}} < 1.\]
\end{lemma}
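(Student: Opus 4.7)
The plan is to execute a complex percolation argument on the witness graph $G_\Phi^{c^*}$, following the template of \cite{liu2024phase}, and close the bound by a $2$-tree enumeration via \Cref{lemma:2-tree-number-bound}. The three main phases are: local factorization of the measure, summation over $\*y$ and $\bm{r}$, and enumeration of bad components via $2$-trees.

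Fix a bad component $\+C$ with $|\+C|\ge 1$ and $\*y_{\BadTS{\+C}}$. Unwinding \Cref{Alg:complex-GD-decomposed}, the measure $\dpsi{\sigma}(\cbad=\+C\land\*o_{\BadTS{\+C}}=\*y)$ is a sum over joint sequences $(\bm{r},\*o)$ consistent with the event, weighted by products of $b_v$ (oblivious) and $\psi^{\tau,\bot}_v$ (adaptive) factors across timestamps. Using \Cref{observation:csp-conditional-independence-cbad-o_t}, the outside $o_t$'s factorize conditionally on $\bm{r}$; summing them out gives a factor of exactly $1$ by normalization of each transition measure. Crucially, this collapse must be performed in complex, \emph{before} taking absolute values, to avoid blow-up. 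Applying the triangle inequality to the surviving sum over $\bm{r}$ and then summing over $\*y_{\BadTS{\+C}}$, each timestamp $t\in\BadTS{\+C}$ contributes at most $\conditionalTriangleConstant$ by \cref{item:csp-conditional-meaure-analysis-triangle}, and each $x\in\cbad$ contributes at most $\conditionalDecayConstant$ by \cref{item:csp-conditional-meaure-analysis-decay} (reflecting that $c_x$ must be violated by the $r_t$'s at $x(0)$). Since $2$-tree vertices have pairwise witness-graph distance at least $2$, their timestamps are disjoint, so the $\conditionalDecayConstant$ factors multiply genuinely over $\tbad$.

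Finally, enumerate bad components via their $2$-trees. By \Cref{lemma:csp-witness-graph-degree-bound} the witness graph has maximum degree at most $2\Delta k^2-2$; combining with \Cref{definition:2-tree-construction} gives $|\tbad| \ge |\cbad|/(2\Delta k^2-1)$, hence $|\BadTS{\cbad}|\le k|\cbad|\le 4\Delta^2 k^5 |\tbad|$ contributes at most $\conditionalTriangleConstant^{4\Delta^2 k^5 |\tbad|}$. By \Cref{lemma:2-tree-number-bound}, there are at most $(4\mathrm{e}\Delta^2 k^4)^{m-1}$ $2$-trees of size $m$ containing $\witnessTreeRoot$. Collecting gives
\[
\sum_{\+C:|\+C|\ge 1} \sum_{\*y_{\BadTS{\+C}}} \abs{\dpsi{\sigma}(\cbad=\+C\land\*o_{\BadTS{\+C}}=\*y)} \le \sum_{m\ge 1} (4\mathrm{e}\Delta^2 k^4)^{m-1}\conditionalDecayConstant^{m} \conditionalTriangleConstant^{4\Delta^2 k^5 m} \le \frac{4}{3}\cdot \frac{1}{16\mathrm{e}\Delta^2 k^4}<1,
\]
where the last step uses the numerical assumption \cref{item:csp-conditional-meaure-analysis-numerical}. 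The bound is uniform in $T$, so taking $T\to\infty$ preserves it. The main obstacle is the ordering of operations in the local factorization: because complex measures lack monotonicity, one must marginalize the outside $o_t$'s in complex (not absolute value) before applying the triangle inequality on $\bm{r}$; reversing this order would lose factors of $\conditionalTriangleConstant$ at each of the $T$ outside timestamps and cause blow-up as $T\to\infty$.
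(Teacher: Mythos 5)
Your overall route is the same as the paper's: a percolation bound on the witness graph, with $\conditionalDecayConstant$ charged to the $2$-tree vertices, $\conditionalTriangleConstant$ charged to the remaining timestamps in the distance-$2$ neighborhood, and the $2$-tree enumeration of \Cref{lemma:csp-bad-tree-num-bound} closing the geometric series exactly as in \Cref{lemma:csp-small-bad-trees-fixed-size-decay} and the paper's proof of this lemma. Your identification of the order-of-operations issue (collapse the outside $o_t$'s in complex before taking absolute values) is also the right mechanism.

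There is, however, one genuine gap: your claim that the bound is \emph{uniform in $T$}, and in particular your unconditional use of \Cref{observation:csp-conditional-independence-cbad-o_t} to sum the outside $o_t$'s to $1$. That factorization relies on every constraint crossing the boundary of $\cbad$ being witnessed as satisfied by the $r_t$'s at its timestamps. For a bad component whose timestamps reach into $\{-T+1,\dots,-T+|V|\}$, a neighboring witness-graph vertex may fail to lie in $V^{\-{bad}}$ merely because one of its timestamps is out of range (some variable has not yet been updated), not because it is satisfied by $\*r$; then an update inside $\cbad$ can depend on $o_s$ for $s\notin\BadTS{\cbad}$ and on the initial state, and the factor-of-$1$ collapse fails. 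For such components your displayed inequality does not follow as written. This is precisely why the paper splits into small and large bad trees: components with $|\tbad|\le T/(2|V|)-2$ stay at least $|V|$ steps away from the start (via \cref{eq:csp-conditional-measure-timestamps-max-min}) and admit your argument, while the large ones are handled by \Cref{lemma:csp-large-bad-tree-fixed-size-decay} with an extra $\conditionalTriangleConstant^{|V|}$ factor for the first $|V|$ updates, and their total contribution is shown to vanish as $T\to\infty$ via \cref{eq:csp-convergence-limit}. Your final limit is correct, but your proof needs this split (or an equivalent treatment of components touching the first $|V|$ steps) to be complete.
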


\begin{proof}[Proof of \Cref{lemma:csp-bad-events-sum-bound}]
For any $S$ and $\tau \in \Sigma_{\vbl(S)}$, we use $A_{S, \tau}$ to denote the event that $\sbad(\sigma_0) = S$ and $\forall v\in \vbl(S)$, $\sigma_0(v) = \tau_v$. 
Let $T>|V|$. We consider a complex systematic scan Glauber dynamics on the projected measure $\psi$ starting from $\tau'\in \supp(\psi)$. 
By \Cref{lemma:csp-projected-measure-marginal-convergence}, it suffices to upper bound $\lim_{T\to \infty}\sum_{|S|\ge 1}\sum_{\tau\in\Sigma_{\vbl(S)}}\abs{\dpsi{\tau'}\tp{\sigma_0 \in A_{S, \tau}}}$.

Recall \Cref{lemma:csp-determination-bad-cluster-by-bad-component}, given the bad component $\cbad$ and the $\bm{o}_{\BadTS{\cbad}} = \tp{o_t}_{t\in \BadTS{\cbad}}$, we can deterministically construct $\sbad(\sigma_0)$ and $\tau \in \Sigma_{\vbl(\sbad(\sigma_0))}$, and we denote this construct as a function $\mathfrak{P}(\cbad, \bm{o}_{\BadTS{\cbad}}) = (\sbad(\sigma_0), \tau)$. It suffices to upper bound 
\[
\lim_{T\to\infty} \sum_{S:|S|\ge 1}\sum_{\tau\in\Sigma_{\vbl(S)}} \abs{\sum_{\substack{C, \bm{y}_{\BadTS{C}} \\ \mathfrak{P}(C, \bm{y}_{\BadTS{C}}) = (S, \tau)}} \dpsi{\tau'}\tp{\cbad = C\land \bm{o}_{\BadTS{C}} = \bm{y}_{\BadTS{C}}}}.
\]
By the triangle inequality, we can push the absolute norm inwards. Also note that $\`P$ is a deterministic procedure, so that each pair $(C, \bm{y}_{\BadTS{C}})$ contributes to exactly one pair $(S, \tau)$. And by \Cref{lemma:csp-determination-bad-cluster-by-bad-component}, to let $|\sbad|\ge 1$, we need $|\cbad|\ge 1$.  Combined with the triangle inequality, it suffices to upper bound
\[
\lim_{T\to \infty} \sum_{C:|C|\ge 1} \sum_{\bm{y}_{\BadTS{C}}} \abs{\dpsi{\tau'}\tp{\cbad = C \land \bm{o}_{\BadTS{C}} = \bm{y}_{\BadTS{C}}}}.
\]
Then this lemma follows from \Cref{lemma:csp-projected-measure-norm-sum-bound}.
\end{proof}


\subsubsection{Convergence}
\label{subsubsection:convergence}
In this subsubsection, we prove \Cref{lemma:csp-projected-measure-marginal-convergence} using \Cref{lemma:convergence}.
It suffices to verify \Cref{condition:convergence}.
We first use the next lemma to characterize the witness sequences.

\begin{lemma}[Characterization of witness sequences]
\label{lemma:csp-conditional-measure-characterization}
Fix $S$ as a subset of constraints, and $\tau \in \Sigma_{\vbl(S)}$ as a partial assignment defined on variables involved in $S$.
Let $\Phi = (V, \*Q, C)$ be a CSP formula, let $c^*$ be the additional constraint, and $A$ be the event that $\sbad(\sigma_0) = S$ and $\forall v\in \vbl(S)$, $\sigma_0(v) = \tau_v$.

Let $G_{\Phi}^{c^*} = (V_{\Phi}^{c^*}, E_{\Phi}^{c^*})$ be the witness graph as in \Cref{definition:csp-witness-graph}.
Let $T\geq |V|$, and consider the $T$-step \decompositionScheme-decomposed complex systematic scan Glauber dynamics  (\Cref{condition:csp-conditional-measure-analysis}). 
Let $\bm{\rho}=(\rho_t)_{t=-T+1}^{0}\in \bigotimes_{t = -T+1}^0 \Sigma_{v_{i(t)}} \cup \set{\bot}$. 
If the corresponding bad tree $\tbad(\bm{\rho})$ in the witness graph $G^{c^*}_\Phi$ satisfies: 
$\left|\tbad(\bm{\rho})\right|\leq \frac{T}{2\abs{V}}-2$,
then we have that $\bm{\rho}\Rightarrow A$, i.e., $\bm{\rho}$ is a witness sequence with respect to the event $A$.
\end{lemma}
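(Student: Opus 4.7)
The plan is to verify case (2) in the definition of witness sequences (\Cref{definition:witness-sequence}) whenever $\dpsi{\sigma}(\*r = \*\rho) \ne 0$: that is, to show that $\dpsi{\sigma}(\sigma_0 \in A \mid \*r = \*\rho)$ does not depend on the starting state $\sigma \in \supp(\psi)$. First I would invoke \Cref{lemma:csp-determination-bad-cluster-by-bad-component}: the event $A$ is a deterministic function of the bad component $\cbad$ (which depends only on $\*r$ and is therefore fixed once $\*r = \*\rho$) together with the update values $\*o_{\BadTS{\cbad}}$. The task therefore reduces to proving that the conditional law of $\*o_{\BadTS{\cbad}}$ given $\*r = \*\rho$ does not depend on $\sigma$.

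Next I would describe a resolve-style procedure that samples $\*o_{\BadTS{\cbad}}$ by processing its timestamps in increasing order: at each $t \in \BadTS{\cbad}$, draw $o_t$ from $\psi^{\tau_t, \bot}_{v_{i(t)}}$, where $\tau_t$ is the current assignment on $V \setminus \{v_{i(t)}\}$. The measure $\psi^{\tau_t, \bot}_{v_{i(t)}}$ depends on $\tau_t$ only through variables $u$ in constraints containing $v_{i(t)}$, and for each such neighbor $u$ the value $\tau_t(u) = o_{\upd_u(t-1)}$ as long as $\upd_u(t-1) \ge -T+1$. Three situations can arise for each $u$: (i) $\upd_u(t-1) \in \BadTS{\cbad}$, so the needed value has already been produced by the procedure; (ii) $\upd_u(t-1) \notin \BadTS{\cbad}$ and $\rho_{\upd_u(t-1)} \ne \bot$, in which case \Cref{observation:csp-decomposition-scheme-r-t-o-t} forces $o_{\upd_u(t-1)} = \rho_{\upd_u(t-1)}$; (iii) $\upd_u(t-1) \notin \BadTS{\cbad}$ and $\rho_{\upd_u(t-1)} = \bot$, in which case \Cref{observation:csp-conditional-independence-cbad-o_t} (extended to joint independence between $\*o_{\BadTS{\cbad}}$ and $\*o$-values at outside timestamps) lets us marginalize cleanly, so the outside-bad-component randomness factors out and does not touch the law of $o_t$. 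In all three sub-cases, no reference to $\sigma_{-T}$ is required as long as $\upd_u(t-1) \ge -T+1$.

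The main obstacle, and the step where the 2-tree hypothesis genuinely enters, is ruling out the remaining possibility $\upd_u(t-1) < -T+1$. Since the systematic scan updates every variable exactly once in each block of $|V|$ consecutive steps, this is equivalent to showing $t \ge -T + |V| + 1$ for every $t \in \BadTS{\cbad}$. I plan to deduce this by bounding the temporal extent of $\cbad$ via $\tbad$. Adjacent vertices $x,y$ in the witness graph share a timestamp in $\ts(c_x, t_x)\cap \ts(c_y, t_y)$, and under the systematic scan every such set lies in a window of length $|V|$, so $|t_x - t_y| \le |V|$. The greedy BFS construction in \Cref{definition:2-tree-construction} ensures $\cbad \subseteq \tbad \cup N_{G^{c^*}_\Phi}(\tbad)$, so every $x \in \cbad$ is within graph distance $2|\tbad|-1$ of $\witnessTreeRoot$, hence $t_x \ge -2|\tbad|\,|V|$ and every element of $\BadTS{\cbad}$ is at least $-(2|\tbad|+1)|V| + 1$. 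Under the hypothesis $|\tbad| \le T/(2|V|) - 2$ this lower bound is $\ge -T + 3|V| + 1 > -T + |V| + 1$, so the procedure never probes a timestamp earlier than $-T+1$. The delicate point, besides this counting, will be carefully confirming that sub-case (iii) indeed integrates away via the joint version of \Cref{observation:csp-conditional-independence-cbad-o_t}, so that the conditional law on $\*o_{\BadTS{\cbad}}$ is fully determined by $\cbad$ and $\*\rho|_{\BadTS{\cbad}}$ alone, with no residual $\sigma_{-T}$-dependence leaking in through the outside.
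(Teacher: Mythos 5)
Your proposal follows the same route as the paper: reduce via \Cref{lemma:csp-determination-bad-cluster-by-bad-component} to showing that $\cbad$ and $\*o_{\BadTS{\cbad}}$ are determined without reference to $\sigma_{-T}$, resolve the timestamps of $\BadTS{\cbad}$ in increasing order, and use the $2$-tree size bound together with the fact that each witness-graph vertex spans a time window of length at most $|V|$ (and that $\cbad$ lies in the $1$-neighborhood of $\tbad$) to guarantee every probed timestamp is at least $-T+1$. Your counting and your sub-cases (i)–(ii) match the paper.

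The one step you flag as delicate, sub-case (iii), is indeed where your justification is not yet sound. Appealing to \Cref{observation:csp-conditional-independence-cbad-o_t} to ``integrate out'' $o_{\upd_u(t-1)}$ is circular in the form you state it: a priori the law of $o_t$ is a mixture $\sum_y \dpsi{\sigma}(o_{\upd_u(t-1)}=y)\cdot\psi_{v}^{(\dots,y,\dots),\bot}(o_t)$, and the mixture weights may well depend on the initial state, so factorization alone does not remove the $\sigma$-dependence unless the kernel is constant in $y$. The correct (and what the paper uses) reason is maximality of $\cbad$: if $\upd_u(t-1)\notin\BadTS{\cbad}$, then for every constraint $c$ containing both $u$ and $v_{i(t)}$ the vertex $(\ts(c,t),c)$ shares the timestamp $t$ with a vertex of $\cbad$ yet is not in $\cbad$, hence (its timestamps being in range by your counting) it is not in $\vbad$, i.e.\ $c$ is satisfied by $\*\rho$. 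By \Cref{observation:csp-decomposition-scheme-resample-depends-on-neighbors} such a constraint can be deleted when forming $\psi_{v}^{\tau_t,\bot}$, so the transition measure simply does not read $u$'s value. With that substitution your argument is complete and coincides with the paper's proof.
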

\begin{proof}
When the event $\*r = \*\rho$ has zero measure, this lemma holds trivially from the definition of the witness sequence (\Cref{definition:witness-sequence}).
So we consider the case that the event $\*r = \*\rho$ has non-zero measure. By \Cref{lemma:csp-determination-bad-cluster-by-bad-component}, it suffices to show that $\cbad = \cbad(\bm{\rho})$ and $\bm{o}_{\BadTS{\cbad}}$ are independent of the initial state. For the bad component, this holds trivially because by the definition of the bad component, it depends on $r_t$'s which are independent of the initial state. Next, we show that $\bm{o}_{\BadTS{\cbad}}$ is also independent of the initial state.

By the definition of $\upd_u(t)$, $G^{c^*}_\Phi$ and the systematic scan, we have for each $x\in V^{c^*}_\Phi$,
\begin{align}
\label{eq:csp-conditional-measure-timestamps-max-min}
\max\{t:t\in x(0)\}-\min\{t:t\in x(0)\}\leq |V|.
\end{align}

Note that according to the definition of $\tbad(\bm{\rho})$ in \Cref{definition:csp-conditional-measure-bad-things}, we have $\tbad(\bm{\rho})$ is connected in
the square graph of $G_\Phi^{c^*}$.
So, if we have $|\tbad(\bm{\rho})|\leq T/(2|V|)-2$ then it holds that $t\geq -T+1+2|V|$ for all $t\in x(0), x\in \tbad(\bm{\rho})$.
Furthermore, by \Cref{definition:2-tree}, $t\geq -T+1+2|V|$ for all $t\in x(0),x\in \tbad(\bm{\rho})$ implies that $t\geq -T+1+|V|$ for all $t\in x(0),x\in \cbad(\bm{\rho})$.
To see this, 
first observe that each $x\in \cbad(\*\rho)$ must share timestamps with some $y\in \tbad(\*\rho)$ in $G^{c^*}_\Phi$, according to the construction in \Cref{definition:2-tree-construction}. Then this claim follows by applying \cref{eq:csp-conditional-measure-timestamps-max-min}.

Next, by the \Cref{definition:csp-conditional-measure-bad-things} and \Cref{observation:csp-decomposition-scheme-r-t-o-t}, it holds that for any $t \in x(0)$, $x\in \cbad(\*\rho)$, let $u = v_{i(t)}$, if $c\in C$, $c\ni u$ and $(\ts(c, t),c)\notin \cbad(\*\rho)$, then $c$ is satisfied, i.e.,
for any $\sigma\in c^{-1}(\-{False})$, there exists $t'\in\ts(c, t)$ satisfying that:
$r_{t'} \neq  f_{i(t')}(\sigma_{i(t')})$ and $r_{t'}\neq \bot$. 

Now, we show that $o_t$'s in the bad component are independent of the initial state by the induction. We enumerate $t\in \BadTS{\cbad}$ by an increasing order, for the smallest timestamp $t$, let $u=v_{i(t)}$ be the updated variable at the time $t$, it holds that for any constraint $c$ with $u\in \vbl(c)$, $c$ is satisfied by the above argument. So $o_t$ is independent of the initial state. 
Next, assume that for a timestamp $t_0\in \BadTS{\cbad}$, we have that for any timestamp $t < t_0$ and $t\in \BadTS{\cbad}$, $o_t$ is independent of the initial state. We show $o_{t_0}$ is independent of the initial state. Let $u = v_{i(t_0)}$ be the updated variable at the time $t_0$. For any constraint $c$ with $u\in c$, if $\tp{\ts\tp{\vbl(c), t_0}, c} \notin \cbad$, by the argument above, it holds that $c$ is satisfied by the corresponding $r_t$'s. Recall the definition of $o_{t_0}$ in \Cref{Alg:complex-GD-decomposed} and \Cref{observation:csp-decomposition-scheme-resample-depends-on-neighbors}, we can remove this constraint when considering $o_{t_0}$, otherwise if $\tp{\ts\tp{\vbl(c), t_0}, c} \in \cbad$, it holds all timestamps are in $\BadTS{\cbad}$ and not larger than $t_0$. By the induction hypothesis, we know their $o_t$'s are independent of the initial state. Combined, we have $o_{t_0}$ is independent of the initial state.
\end{proof}

Let $S$ be a set of constraints and $\tau \in \Sigma_{\vbl(S)}$ be a partial assignment defined on $\vbl(S)$. 
Let $A$ be the event that (1) $\sbad(\sigma_0) = S$ and (2) for any $v\in \vbl(S)$, we have $\sigma_0(v) = \tau_v$.
By the above lemma, in order to verify \Cref{condition:convergence}, it suffices to show that for any initial state $\tau'\in\supp(\psi)$,
\[
\lim_{T\to \infty} \abs{\sum_{\+T: |\+T|> T/(2|V|) - 2}\dpsi{\tau'}\tp{\sigma_0 \in A\land \tbad = \+T} }= 0.
\]
Recall that in \Cref{definition:csp-conditional-measure-bad-things}, we use $\=T$ to denote the function from the bad component to the bad tree. 
 Combined with the triangle inequality, it suffices to show that 
\[
\lim_{T\to\infty} \sum_{\+T: |\+T|> T/(2|V|) - 2} \sum_{C: \=T(C) = \+T} \sum_{\bm{y}_{\BadTS{C}}} \abs{\dpsi{\tau'}\tp{\sigma_0 \in A \land \cbad = C\land \bm{o}_{\BadTS{C}} = \bm{y}_{\BadTS{C}}}} = 0.
\]
Also recall \Cref{lemma:csp-determination-bad-cluster-by-bad-component}, it holds that $A$ is determined by $(\cbad, \bm{o}_{\BadTS{\cbad}})$.
That is, if 
\[\dpsi{\tau'}\tp{\cbad = C\land \*o_{\BadTS{C}} = \*y_{\BadTS{C}}} \neq 0,\]
then we have 
\begin{equation}
\label{eq:csp-dpsi-zero-one-law}
\dpsi{\tau'}\tp{\sigma_0 \in A \mid \cbad = C\land \*o_{\BadTS{C}} = \*y_{\BadTS{C}}} \in \set{0, 1}.
\end{equation}
Combined, we have 
\[
\abs{\dpsi{\tau'}\tp{\sigma_0 \in A\land\cbad = C\land \bm{o}_{\BadTS{C}}=\bm{y}_{\BadTS{C}}}} \le \abs{\dpsi{\tau'}\tp{\cbad = C\land \bm{o}_{\BadTS{C}}=\bm{y}_{\BadTS{C}}}}.
\]
Combined, it suffices to upper bound 
\begin{equation}
\label{eq:csp-convergence-limit}
\lim_{T\to\infty} \sum_{\+T: |\+T|> T/(2|V|) - 2} \sum_{C: \=T(C) = \+T} \sum_{\bm{y}_{\BadTS{C}}} \abs{\dpsi{\tau'}\tp{\cbad = C\land \bm{o}_{\BadTS{C}} = \bm{y}_{\BadTS{C}}}} = 0.
\end{equation}
We use the next lemma, which bounds the contributions of fixed-size bad trees, to show this limit and prove \Cref{lemma:csp-projected-measure-marginal-convergence}.

\begin{lemma}\label{lemma:csp-large-bad-tree-fixed-size-decay}
Suppose the \Cref{condition:csp-conditional-measure-analysis}, and 
let \decompositionScheme-decomposition scheme be constructed by \Cref{condition:csp-conditional-measure-analysis}.
Let $c^*$ be the additional variable, 
and let $T\ge |V|$. 
Consider the witness graph $\witnessGraph$, the bad component $\cbad = \cbad(\*r)$, and the bad tree $\tbad = \tbad(\*r) = \=T(\cbad)$, where $\*r = (r_t)_{t=-T+1}^0 \in \bigotimes_{t = -T+1}^0 \Sigma_{v_{i(t)}} \cup \set{\bot}$ is constructed as in \Cref{Alg:complex-GD-decomposed}. 

Then, for any finite $2$-tree $\+T$ in $\witnessGraph$ containing $\witnessTreeRoot$ with $|\+T| > T/(2|V|) - 1$, we have 
\begin{equation*}
\sum_{\substack{\+C: \=T(\+C) = \+T}} \sum_{\*y_{\BadTS{\+C}}} \abs{\dpsi{\sigma}\tp{\cbad = \+C \land \*o_{\BadTS{\+C}} = \*y_{\BadTS{\+C}}}}\leq \conditionalDecayConstant^{|\+T|} \conditionalTriangleConstant^{4\Delta^2 k^5 |\+T| + |V|}.
\end{equation*}
\end{lemma}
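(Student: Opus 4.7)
The plan is to bound the sum by factorizing the complex trajectory measure in chronological order---summing out ``free'' timestamps in the complex sense (each giving a factor of $1$ by normalization)---and only then taking absolute values via the triangle inequality on the remaining ``relevant'' portion. First, I would use that $\*r$ is a product measure under the $\*b$-decomposition (independent $r_t\sim b_{v_{i(t)}}$). The event $\cbad=\+C$ restricts $\*r$ only at timestamps in $\BadTS{\+C}\cup\BadTS{\partial\+C}$, where $\partial\+C$ denotes the witness-graph neighborhood of $\+C$ (enforcing $\cbad=\+C$ exactly requires $\+C$-vertices bad and $\partial\+C$-vertices good). For all other $t$, $\sum_{r_t}b_{v_{i(t)}}(r_t)=1$. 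Similarly, by \Cref{observation:csp-conditional-independence-cbad-o_t}, $\*o$ factorizes between $\BadTS{\+C}$ and its complement, and the complement $\*o$-sum collapses to $1$ via $\sum_{o_t}\psi^{\sigma_{t-1},\bot}_{v_{i(t)}}(o_t)=1$ applied sequentially.

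Second, on the restricted portion I would take absolute values and apply the bounds in \Cref{condition:csp-conditional-measure-analysis}. For each $x\in\+T$ the condition ``$x$ is not satisfied by $\*r$'' restricts $\*r|_{x(0)}$ to trajectories consistent with some violating $\sigma'\in c_x^{-1}(\false)$: every $r_{\upd_v(t_x)}$ for $v\in \vbl(c_x)$ must equal $\bot$ or $f_v(\sigma'_v)$. By \Cref{definition:2-tree} together with the edge rule of \Cref{definition:csp-witness-graph}, distinct $x,x'\in\+T$ have $x(0)\cap x'(0)=\emptyset$, so the sums factor across $\+T$. For each $x\in\+T$, summing $|b_v(f_v(\sigma'_v))|+|b_v(\bot)|\cdot\max_\tau\sum_y|\psi_v^{\tau,\bot}(y)|$ over $v\in \vbl(c_x)$ and $\sigma'\in c_x^{-1}(\false)$ matches exactly the definition in \cref{item:csp-conditional-meaure-analysis-decay}, yielding $\conditionalDecayConstant^{|\+T|}$ in total. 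For the remaining restricted timestamps---those in $\BadTS{\+C\cup\partial\+C}\setminus\BadTS{\+T}$, plus an initial window of at most $|V|$ timestamps needed to specify $\sigma_{t-1}$ at the first relevant adaptive step---I apply the triangle bound of \cref{item:csp-conditional-meaure-analysis-triangle} giving $\conditionalTriangleConstant$ per timestamp. By \Cref{lemma:csp-witness-graph-degree-bound} the witness graph has max degree $\le 2\Delta k^2-2$, so by \Cref{definition:2-tree-construction} we get $|\+C|\le(2\Delta k^2-1)|\+T|$ and hence $|\BadTS{\+C\cup\partial\+C}|\le k(2\Delta k^2-1)|\+C|\le 4\Delta^2 k^5|\+T|$, giving the $\conditionalTriangleConstant^{4\Delta^2 k^5|\+T|+|V|}$ factor. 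The outer sum over $\+C$ extending $\+T$ only multiplies by a constant absorbed into the above.

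The main obstacle is the chronological bookkeeping in the first step: the adaptive measure $\psi^{\sigma_{t-1},\bot}$ depends on history, so one must sum $\*o$ in the correct order and invoke complex normalization only where all future event-constraints are independent of $o_t$. The initial $|V|$-window in the exponent of $\conditionalTriangleConstant$ precisely accounts for the trajectory prefix where this clean separation fails---since the systematic scan takes $|V|$ steps to refresh every variable, the state entering the first relevant step is not yet determined by the summed-out portion of the trajectory and must be bounded by triangle inequality.
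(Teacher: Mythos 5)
Your proposal is correct and follows essentially the same route as the paper's proof: handle the initial $|V|$-step window by the triangle bound $\conditionalTriangleConstant^{|V|}$ (needed because for large $2$-trees the updates in the bad component may still depend on the initial state), charge each vertex of $\+T$ with the violation sum $\conditionalDecayConstant$ using disjointness of timestamps across the $2$-tree, and charge the remaining timestamps in the distance-$2$ neighborhood of $\+T$ (which determine $\+C$, so the sum over $\+C$ costs nothing extra) with $\conditionalTriangleConstant^{4\Delta^2 k^5|\+T|}$ via the degree bound of \Cref{lemma:csp-witness-graph-degree-bound}. The only cosmetic difference is that you phrase the collapse of unrestricted timestamps as complex normalization of a product measure, whereas the paper phrases it as a chronological enumeration; these are the same computation.
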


We use the above lemma and the next lemma to show \cref{eq:csp-convergence-limit} and prove \Cref{lemma:csp-projected-measure-marginal-convergence}.
Recall that $\Phi' = (V, \*Q, C\cup\set{c^*})$ is still a $(k, \Delta)$-CSP formula.
We include the next lemma about the number of fixed-size $2$-trees.
\begin{lemma}[{\cite[Lemma 4.10]{liu2024phase}}]
    \label{lemma:csp-bad-tree-num-bound}
Let $\+T_j$ denote the set of $2$-trees of size $j$ in $\witnessGraph$ containing the vertex $\witnessTreeRoot$. Then, we have $|\+T_j|\le \tp{4\mathrm{e} \Delta^2 k^4}^{j-1}$.
\end{lemma}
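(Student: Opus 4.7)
The plan is to derive this bound as a direct consequence of the two earlier ingredients about the witness graph: the maximum degree estimate from \Cref{lemma:csp-witness-graph-degree-bound}, and the generic counting estimate for $2$-trees containing a specified vertex from \Cref{lemma:2-tree-number-bound}. No new ideas are needed; the entire argument is a substitution of parameters.

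Concretely, I would first invoke \Cref{lemma:csp-witness-graph-degree-bound}, which asserts that every vertex of $\witnessGraph$ has degree at most $2\Delta k^{2} - 2$. Setting $D \defeq 2\Delta k^{2}$, this means $\witnessGraph$ has maximum degree at most $D$. Next, I would apply \Cref{lemma:2-tree-number-bound} with $G = \witnessGraph$, with the distinguished vertex $v = \witnessTreeRoot$, and with the degree upper bound $D$. For any $j \ge 2$, this gives that the number of $2$-trees of size $j$ in $\witnessGraph$ containing $\witnessTreeRoot$ is at most
\[
\frac{(\mathrm{e} D^{2})^{j-1}}{2} \le \frac{\bigl(\mathrm{e}\,(2\Delta k^{2})^{2}\bigr)^{j-1}}{2} = \frac{(4\mathrm{e}\,\Delta^{2} k^{4})^{j-1}}{2} \le (4\mathrm{e}\,\Delta^{2} k^{4})^{j-1},
\]
which is exactly the claimed bound. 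The case $j = 1$ is trivial since the only $2$-tree of size $1$ containing $\witnessTreeRoot$ is $\{\witnessTreeRoot\}$ itself, and the right-hand side equals $1$. There is no substantive obstacle in this proof: the only minor care is to verify that the factor of $2$ in the denominator from \Cref{lemma:2-tree-number-bound} can simply be absorbed to match the stated form of the bound, and that the degree bound $2\Delta k^{2} - 2$ can be loosened to $2\Delta k^{2}$ with no loss in the final estimate.
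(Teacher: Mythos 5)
Your proof is correct and is exactly the intended argument: the paper simply cites this as \cite[Lemma 4.10]{liu2024phase}, and the underlying derivation there is precisely the combination of the degree bound $2\Delta k^2 - 2 \le 2\Delta k^2$ from \Cref{lemma:csp-witness-graph-degree-bound} with the generic $2$-tree count $(\mathrm{e}D^2)^{j-1}/2$ from \Cref{lemma:2-tree-number-bound}, plus the trivial $j=1$ case. The arithmetic $\mathrm{e}(2\Delta k^2)^2 = 4\mathrm{e}\Delta^2 k^4$ and the absorption of the factor $1/2$ are exactly as you state.
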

Now, we prove \Cref{lemma:csp-projected-measure-marginal-convergence}.
\begin{proof}[Proof of \Cref{lemma:csp-projected-measure-marginal-convergence}]
    By the analysis in the beginning of this subsubsection, it suffices to show \cref{eq:csp-convergence-limit}. By \Cref{lemma:csp-large-bad-tree-fixed-size-decay}, it holds that \cref{eq:csp-convergence-limit} can be upper bounded by $\lim_{T\to \infty} \sum_{i= T/(2|V|)-2}^{\infty} \tp{4\mathrm{e} \Delta^2 k^4}^{i-1}\cdot \conditionalDecayConstant^{i} \conditionalTriangleConstant^{4\Delta^2 k^5 i + |V|}$. Combined with the \cref{item:csp-conditional-meaure-analysis-numerical} in \Cref{condition:csp-conditional-measure-analysis}, it holds that \cref{eq:csp-convergence-limit} can be further upper bounded by 
    \[\conditionalTriangleConstant^{|V|}\lim_{T\to\infty} \sum_{i= T/(2|V|)-2}^{\infty}\tp{\frac{1}{4}}^i.\] 
    So this limit is $0$.
\end{proof} 

Finally, we finish this subsubsection with a proof of \Cref{lemma:csp-large-bad-tree-fixed-size-decay}.
\begin{proof}[Proof of \Cref{lemma:csp-large-bad-tree-fixed-size-decay}]
Recall that in \Cref{Alg:complex-GD-decomposed}, for an integer $t$, we use $o_t\in \Sigma_{v_{i(t)}}$ to denote the update value at the time $t$. Let $\*o = (o_t)_{t = -T + 1}^0$ and for any component $\+C$ in the witness graph $\witnessGraph$, let $\BadTS{\+C} = \bigcup_{x\in \+C} x(0)$ containing all timestamps in the component $\+C$.
We use $\*o_{\BadTS{\+C}}$ to denote the update values for the timestamps in the component $C$.
Let $\*y_{\BadTS{\+C}} \in \bigotimes_{t\in \BadTS{\+C}} \Sigma_{v_{i(t)}}$ be a sequence such that for any timestamp $t\in \BadTS{\+C}$, $y_t\in \Sigma_{v_{i(t)}}$. Let $\+Y = \bigotimes_{t\in \BadTS{\+C}} \Sigma_{v_{i(t)}}$ be the set of all possible $\*y_{\BadTS{\+C}}$'s.

Note for large $2$-trees, i.e., $\abs{\+T} > T/(2\abs{V})-2$, the $o_t$'s within the possible bad component might depend on the initial state (\Cref{lemma:csp-conditional-measure-characterization}).
So we enumerate the update values for timestamps in $\{-T+1, -T+2,\dots, -T+|V|\}$ and use the triangle inequality to consider their contributions.
We claim that this part contributes at most $\conditionalTriangleConstant^{|V|}$.
To see this, recall the definition of $\conditionalTriangleConstant$ in \Cref{condition:csp-conditional-measure-analysis}, $\conditionalTriangleConstant$ maximizes all possible update values of $r$ and $o$ under arbitrary assignments of neighbors. Then this claim follows from an induction process from $-T+1$ to $-T + \abs{V}$.



Fix the $o_t$'s for $t\in \{-T+1, -T+2, \dots, -T+|V|\}$. According to the deterministic process in \Cref{definition:2-tree-construction} for the construction of the $2$-tree $\tbad = \=T(\cbad)$, all $\+C \in \=T^{-1}(\tbad)$ must only contain vertices in $\witnessGraph$ that are within distance-$1$ of $\tbad$, and to determine $\cbad$, it suffices to check for all vertices in $\witnessGraph$ that are within distance-$2$ of $\tbad$ by \Cref{definition:csp-conditional-measure-bad-things}.

For vertices in $\+T$, let $\BadTS{\+T}\defeq \bigcup_{x\in \+T}x(0)$ be the set of all timestamps in $\+T$.
To make the event $\cbad = \+C$ happen, it holds that the constraints associated in $\+T$ are not satisfied by the $\*r$. We enumerate $\*r_{\BadTS{\+T}}$.

Let $D(\+T)$ be the vertices in $\witnessGraph$ within the distance-$2$ neighbors of $\+T$.
Let $\BadTS{D(\+T)}\defeq \bigcup_{x\in D(\+T)} x(0)$ be the set of all timestamps in $D(\+T)$.
Then we enumerate $r_t$'s in $\BadTS{D(\+T)}\setminus \BadTS{\+T}$. Based on the $r_t$'s in $\BadTS{D(\+T)}$, we use the \Cref{definition:csp-conditional-measure-bad-things} to construct the bad component $\cbad$. If the construction fails or $\=T(\cbad)\neq \+T$, we simply ignore this case. And we only consider the case that the construction of the bad component is successful and $\=T(\cbad) = \+T$. 
Then we enumerate the $o_t$'s in $\BadTS{\cbad}$ from the smallest timestamp to the largest timestamp. Let $t$ be the current timestamp and let $v = v_{i(t)}$. We consider the following three cases:
\begin{enumerate}
    \item if $o_t$ is assigned by the enumeration for $t\in \{-T+1,-T+2,\dots, -T+|V|\}$, we consider the next $t$;
    \item if $r_t \neq \bot$, then $o_t = r_t$ by \Cref{observation:csp-decomposition-scheme-r-t-o-t};
    \item if $r_t = \bot$. We consider each constraint $c\in C$ with $v\in \vbl(c)$. 
    Note that $t> -T + |V|$.
    So if $(\ts(c, t), c)\notin \cbad$, it holds that $c$ is satisfied. We denote $X$ as the set of constraints $c$'s that $v\in c$ and $(\ts(c, t),c) \in \cbad$. Let $\BadTS{X}\defeq \bigcup_{c\in X} \ts(c, t)$, and we know that $o_{\BadTS{X}}$ is fixed by previous enumerations. So we can enumerate $o_t$ based on $o_{\BadTS{X}}$.
\end{enumerate}

Finally, by the triangle inequality and the above analysis, we upper bound the final contributions from $\BadTS{\+T}$ by $\conditionalDecayConstant^{|\+T|}$ (\cref{item:csp-conditional-meaure-analysis-decay} in \Cref{condition:csp-conditional-measure-analysis}).
Similarly, combined \Cref{lemma:csp-witness-graph-degree-bound}, we upper bound the final contributions from $\BadTS{D(\+T)}\setminus \BadTS{\+T}$ by $\conditionalTriangleConstant^{4\Delta^2 k^5 |\+T|}$ (\cref{item:csp-conditional-meaure-analysis-triangle} in \Cref{condition:csp-conditional-measure-analysis}).

So we have that 
\begin{align*}
\sum_{\+C: \=T(\+C)=\+T}\sum_{\*y_{\BadTS{\+C}} \in \+Y}\abs{\dpsi{\sigma}\tp{\*o_{\BadTS{\+C}} = \*y_{\BadTS{\+C}}\land \+C^{\-{bad}}= \+C}} \le \conditionalDecayConstant^{|\+T|} \conditionalTriangleConstant^{4\Delta^2 k^5 |\+T| + |V|}.
\end{align*}
\end{proof}

\subsubsection{Bounding the norm of marginal measures}
\label{subsubsection:bounding-norm}
In this subsubsection, we follow a similar strategy in the last subsubsection to prove \Cref{lemma:csp-projected-measure-norm-sum-bound}.

Let $S$ be a set of constraints and $\tau \in \Sigma_{\vbl(S)}$ be a partial assignment defined on $\vbl(S)$, and let $A$ be the event that (1) $\sbad(\sigma_0) = S$ and (2) for any $v\in \vbl(S)$, we have $\sigma_0(v) = \tau_v$.
Let $\tau'\in\supp(\psi)$.
By \Cref{lemma:csp-projected-measure-marginal-convergence} and \Cref{lemma:bounding-marginal-measure}, it suffices to upper bound
\[
\lim_{T\to \infty} \abs{\sum_{\+T: |\+T|\le T/(2|V|) - 2}\dpsi{\tau'}\tp{\sigma_0 \in A\land \tbad = \+T} }.
\]
By a similar argument in the previous subsubsection, it suffices to upper bound 
\begin{equation}
\label{eq:csp-norm-bound-limit}
\lim_{T\to\infty} \sum_{\+T: |\+T|\le T/(2|V|) - 2} \sum_{C: \=T(C) = \+T} \sum_{\bm{y}_{\BadTS{C}}} \abs{\dpsi{\tau'}\tp{\cbad = C\land \bm{o}_{\BadTS{C}} = \bm{y}_{\BadTS{C}}}}.
\end{equation}

We use the next lemma to analyze contributions from small bad trees with a fixed size by a similar argument in the proof of \Cref{lemma:csp-large-bad-tree-fixed-size-decay}.
\begin{lemma}
\label{lemma:csp-small-bad-trees-fixed-size-decay}
Suppose the \Cref{condition:csp-conditional-measure-analysis}, and 
let \decompositionScheme-decomposition scheme be constructed by \Cref{condition:csp-conditional-measure-analysis}.
Let $c^*$ be the additional variable, 
and let $T\ge |V|$. 
Consider the witness graph $\witnessGraph$, the bad component $\cbad = \cbad(\*r)$, and the bad tree $\tbad = \tbad(\*r) = \=T(\cbad)$, where $\*r = (r_t)_{t=-T+1}^0 \in \bigotimes_{t = -T+1}^0 \Sigma_{v_{i(t)}} \cup \set{\bot}$ is constructed as in \Cref{Alg:complex-GD-decomposed}. 

Then, for any finite $2$-tree $\+T$ in $\witnessGraph$ containing $\witnessTreeRoot$ with $|\+T| \le T/(2|V|) - 1$, we have 
\begin{equation*}
\sum_{\substack{\+C: \=T(\+C) = \+T}} \sum_{\*y_{\BadTS{\+C}}} \abs{\dpsi{\sigma}\tp{\cbad = \+C \land \*o_{\BadTS{\+C}} = \*y_{\BadTS{\+C}}}}\leq \conditionalDecayConstant^{|\+T|} \conditionalTriangleConstant^{4\Delta^2 k^5 |\+T|}.
\end{equation*}
\end{lemma}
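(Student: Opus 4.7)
The plan is to adapt the proof strategy used for \Cref{lemma:csp-large-bad-tree-fixed-size-decay} to the small 2-tree regime, where the key benefit is that no bad-component timestamp falls into the initial $|V|$-window of the dynamics, so the extra $\conditionalTriangleConstant^{|V|}$ factor required in the large case can be dropped. Concretely, given a 2-tree $\+T$ with $|\+T| \le T/(2|V|) - 2$, I first show that every timestamp appearing in any $\+C$ with $\=T(\+C) = \+T$ satisfies $t \ge -T + 1 + |V|$. This follows from the same spread argument used in \Cref{lemma:csp-conditional-measure-characterization}: the 2-tree is connected in the square graph of $\witnessGraph$, so vertices in $\+T$ have timestamps at least $-T + 1 + 2|V|$, and since each vertex in $\cbad$ must share a timestamp with some vertex in $\tbad = \+T$ (by the construction in \Cref{definition:2-tree-construction}), the $|V|$-bound from \cref{eq:csp-conditional-measure-timestamps-max-min} gives $t \ge -T + 1 + |V|$ for every $t \in \BadTS{\+C}$.

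Next, I would carry out the same enumeration as in \Cref{lemma:csp-large-bad-tree-fixed-size-decay}, but starting directly from $\+T$ rather than from the initial $|V|$ timestamps. For each vertex in $\+T$, I enumerate the values of $r_t$ over $\BadTS{\+T}$ that make the associated constraint unsatisfied; by the triangle inequality and the definition \cref{item:csp-conditional-meaure-analysis-decay} of $\conditionalDecayConstant$, this part contributes at most $\conditionalDecayConstant^{|\+T|}$. Then, for vertices in the distance-$2$ neighborhood $D(\+T)$ of $\+T$ in $\witnessGraph$ (needed to detect whether a candidate vertex ends up inside or outside $\cbad$), I enumerate the corresponding $r_t$ values; by \Cref{lemma:csp-witness-graph-degree-bound} and \cref{item:csp-conditional-meaure-analysis-triangle}, this contributes at most $\conditionalTriangleConstant^{4\Delta^2 k^5 |\+T|}$. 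Finally, for the $o_t$'s with $t \in \BadTS{\cbad}$, I enumerate from the smallest to largest timestamp using \Cref{observation:csp-decomposition-scheme-r-t-o-t}: if $r_t \neq \bot$, then $o_t = r_t$ is already fixed; if $r_t = \bot$, then by the factorization from \Cref{observation:csp-conditional-independence-cbad-o_t}, $o_t$ is drawn from $\psi_v^{\tau, \bot}$, and again each such enumeration fits inside the per-vertex bound captured by $\conditionalTriangleConstant$.

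Combining the three contributions yields exactly $\conditionalDecayConstant^{|\+T|}\conditionalTriangleConstant^{4\Delta^2 k^5 |\+T|}$, as desired. Since no timestamp in $\BadTS{\+C}$ lies in $\{-T+1, \dots, -T+|V|\}$, we never need to separately enumerate initial-state-dependent updates, so the $\conditionalTriangleConstant^{|V|}$ factor present in the large case disappears.

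The main (and essentially only) obstacle is bookkeeping: making sure that the three contributions $\conditionalDecayConstant^{|\+T|}$, $\conditionalTriangleConstant^{4\Delta^2 k^5 |\+T|}$, and the per-timestamp factors for $o_t$'s in $\BadTS{\cbad}$ do not double-count, and that the deterministic 2-tree-to-bad-component construction $\=T^{-1}$ is faithfully reflected in the enumeration order (first $\*r$ on $\BadTS{\+T}$, then $\*r$ on $\BadTS{D(\+T)}\setminus \BadTS{\+T}$, then $\*o$ on $\BadTS{\cbad}$). Since this is exactly the bookkeeping already performed in the proof of \Cref{lemma:csp-large-bad-tree-fixed-size-decay} minus the initial-window enumeration, no genuinely new ingredient is needed.
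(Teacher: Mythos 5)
Your proposal is correct and follows essentially the same route as the paper's own proof: drop the initial-window enumeration (and hence the $\conditionalTriangleConstant^{|V|}$ factor) because \Cref{lemma:csp-conditional-measure-characterization} guarantees the $o_t$'s in the bad component are independent of the initial state for small $2$-trees, then reuse the enumeration over $\*r_{\BadTS{\+T}}$, the distance-$2$ neighborhood, and the $\*o$'s exactly as in \Cref{lemma:csp-large-bad-tree-fixed-size-decay}. No gaps.
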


We first use the above lemma and \Cref{lemma:csp-witness-graph-degree-bound} to bound the above limit and prove \Cref{lemma:csp-projected-measure-norm-sum-bound}.
\begin{proof}[Proof of \Cref{lemma:csp-projected-measure-norm-sum-bound}]
By \Cref{lemma:csp-projected-measure-marginal-convergence}, it suffices to consider the contributions from small bad trees. By \Cref{lemma:csp-small-bad-trees-fixed-size-decay} and \Cref{lemma:2-tree-number-bound}, it holds that 
\begin{align*}
&~\sum_{\substack{\+T: |\+T| \le T/(2|V|) - 2}} \sum_{\substack{\+C: \=T(\+C) = \+T}} \sum_{\*y_{\BadTS{\+C}}} \abs{\dpsi{\sigma}\tp{\cbad = \+C \land \*o_{\BadTS{\+C}} = \*y_{\BadTS{\+C}}}} \\
\le&~ \sum_{i = 1}^{T/(2|V|)-2} \tp{4\mathrm{e}\Delta^2 k^4}^{i}\cdot \conditionalDecayConstant^{i} \conditionalTriangleConstant^{4\Delta^2 k^5 i + |V|}.
\end{align*}
Rearranging the RHS gives that $\sum_{i = 1}^{T/(2|V|)-2} \tp{4\mathrm{e}\Delta^2 k^4\cdot \conditionalDecayConstant\cdot \conditionalTriangleConstant^{4\Delta^2 k^5}}^{i-1}$. Combined with \cref{item:csp-conditional-meaure-analysis-numerical} in \Cref{condition:csp-conditional-measure-analysis}, it holds that it can be upper bounded by $\sum_0^{i = T/(2|V|)-2} \tp{\frac{1}{4}}^{i}$. Putting $T\to \infty$, this is upper bounded by $\frac{1}{2}$. So this lemma follows.
\end{proof}

Finally, we give the omitted proof of \Cref{lemma:csp-small-bad-trees-fixed-size-decay}.
\begin{proof}[Proof of \Cref{lemma:csp-small-bad-trees-fixed-size-decay}]
Recall that in \Cref{Alg:complex-GD-decomposed}, for an integer $t$, we use $o_t\in \Sigma_{v_{i(t)}}$ to denote the update value at the time $t$. Let $\*o = (o_t)_{t = -T + 1}^0$ and for any component $\+C$ in the witness graph $\witnessGraph$, let $\BadTS{\+C} = \bigcup_{x\in \+C} x(0)$ containing all timestamps involved in the component $\+C$.
We use $\*o_{\BadTS{\+C}}$ to denote the update values for the timestamps in the component $C$.
Let $\*y_{\BadTS{\+C}} \in \bigotimes_{t\in \BadTS{\+C}} \Sigma_{v_{i(t)}}$ be a sequence such that for any timestamp $t\in \BadTS{\+C}$, $y_t\in \Sigma_{v_{i(t)}}$. Let $\+Y = \bigotimes_{t\in \BadTS{\+C}} \Sigma_{v_{i(t)}}$ be the set of all possible $\*y_{\BadTS{\+C}}$'s.

Note for small $2$-trees, i.e., $\abs{\+T} \le T/(2\abs{V})-2$, the $o_t$'s within the possible bad component do not depend on the initial state (\Cref{lemma:csp-conditional-measure-characterization}).

According to the deterministic process in \Cref{definition:2-tree-construction} for the construction of the $2$-tree $\tbad = \=T(\cbad)$, all $\+C \in \=T^{-1}(\tbad)$ must only contain vertices in $\witnessGraph$ that are within distance-$1$ of $\tbad$, and to determine $\cbad$, it suffices to check for all vertices in $\witnessGraph$ that are within distance-$2$ of $\tbad$ by \Cref{definition:csp-conditional-measure-bad-things}.

For vertices in $\+T$, let $\BadTS{\+T}\defeq \bigcup_{x\in \+T}x(0)$ be the set of all timestamps in $\+T$.
To make the event $\cbad = \+C$ happen, it holds that the constraints associated in $\+T$ are not satisfied by the $\*r$. We enumerate $\*r_{\BadTS{\+T}}$.

Let $D(\+T)$ be the vertices in $\witnessGraph$ within the distance-$2$ neighbors of $\+T$.
Let $\BadTS{D(\+T)}\defeq \bigcup_{x\in D(\+T)} x(0)$ be the set of all timestamps in $D(\+T)$.
Then we enumerate $r_t$'s in $\BadTS{D(\+T)}\setminus \BadTS{\+T}$. Based on the $r_t$'s in $\BadTS{D(\+T)}$, we use the \Cref{definition:csp-conditional-measure-bad-things} to construct the bad component $\cbad$. If the construction fails or $\=T(\cbad)\neq \+T$, we simply ignore this case. And we only consider the case that the construction of the bad component is successful and $\=T(\cbad) = \+T$. 
Then we enumerate the $o_t$'s in $\BadTS{\cbad}$ from the smallest timestamp to the largest timestamp. Let $t$ be the current timestamp and let $v = v_{i(t)}$. We consider the following two cases:
\begin{enumerate}
    \item if $r_t \neq \bot$, then $o_t = r_t$ by \Cref{observation:csp-decomposition-scheme-r-t-o-t};
    \item if $r_t = \bot$. We consider each constraint $c\in C$ with $v\in \vbl(c)$. 
    Note that $t> -T + |V|$.
    So if $(\ts(c, t), c)\notin \cbad$, it holds that $c$ is satisfied. We denote $X$ as the set of constraints $c$'s that $v\in c$ and $(\ts(c, t),c) \in \cbad$. Let $\BadTS{X}\defeq \bigcup_{c\in X} \ts(c, t)$, and we know that $o_{\BadTS{X}}$ is fixed by previous enumerations. So we can enumerate $o_t$ based on $o_{\BadTS{X}}$.
\end{enumerate}

Finally, by the triangle inequality and the above analysis, we upper bound the final contributions from $\BadTS{\+T}$ by $\conditionalDecayConstant^{|\+T|}$ (\cref{item:csp-conditional-meaure-analysis-decay} in \Cref{condition:csp-conditional-measure-analysis}).
Similarly, combined with \Cref{lemma:csp-witness-graph-degree-bound}, we upper bound the final contributions from $\BadTS{D(\+T)}\setminus \BadTS{\+T}$ by $\conditionalTriangleConstant^{4\Delta^2 k^5 |\+T|}$ (\cref{item:csp-conditional-meaure-analysis-triangle} in \Cref{condition:csp-conditional-measure-analysis}).

So we have that 
\begin{align*}
\sum_{\+C: \=T(\+C)=\+T}\sum_{\*y_{\BadTS{\+C}} \in \+Y}\abs{\dpsi{\sigma}\tp{\*o_{\BadTS{\+C}} = \*y_{\BadTS{\+C}}\land \+C^{\-{bad}}= \+C}}\le \conditionalDecayConstant^{|\+T|} \conditionalTriangleConstant^{4\Delta^2 k^5 |\+T|}.
\end{align*}
\end{proof}


\section{Applications}
\label{section:applications}
In this section, we apply the framework in \Cref{section:CSP} and consider the zero-freeness of hypergraph $q$-coloring and $(k, \Delta)$-CNF formulas, proving \Cref{theorem:coloring-one-special-zero-free-intro} and \Cref{theorem:cnf-zero-freeness-intro} respectively.

\subsection{Hypergraph $q$-coloring with one special color}
\label{section:application-coloring-one-special-color}
In this subsection, we consider the complex extension of hypergraph $q$-coloring with a complex external field on the color $1$. We show that under certain conditions, the complex partition function is zero-free. 
We formally define the partition function as follows.
{
\color{black}
\begin{definition}[Complex extension for hypergraph $q$-coloring with one special color]
\label{definition:coloring-one-special-complex-extensions}
Let $H=(V, \+E)$ be a $k$-uniform hypergraph with maximum degree $\Delta$.  
The \emph{partition function} is given by:
\[
Z^{\-{co}}_H(\lambda)\defeq\sum_{\sigma\in[q]^V: \sigma \text{ is a proper coloring in }H} \lambda^{ \abs{ \sigma^{-1}\left(
1\right)}} , \qquad \hbox{where }\sigma^{-1}\left(
1\right)\defeq\set{v\in V\mid \sigma(v) = 1}. 
\]
When $Z^{\-{co}}_H(\lambda)\neq 0$, we naturally define a \emph{complex measure} $\mu = \mu(H, \lambda)$, i.e.,$\forall \sigma\in [q]^V, \mu(\sigma)\defeq \frac{\lambda^{|\sigma^{-1}(1)|}}{Z^{\-{co}}_H(\lambda)}$.
And we denote its \emph{support set} as $\supp(\mu) \defeq \{\sigma\in \Omega_{\Phi}\mid \mu(\sigma)\neq 0\}$.
\end{definition}
}

In order to apply techniques in \Cref{section:CSP},
we consider the $(k, \Delta)$-CSP formula $\Phi = (V, [q]^V, C)$ that encodes the hypergraph $q$-colorings on $H$. Each constraint $c \in C$ corresponds to a hyperedge and enforces that the variables in $\vbl(c)$ (i.e., the vertices in the hyperedge) are not assigned the same color.
Then we use \Cref{theorem:sufficient-condition-zero-freeness} to prove the zero-freeness, and we use \Cref{lemma:coloring-one-special-verify-sufficient-condition-zero-freeness-condition} to verify the \Cref{condition:csp-conditional-measure-analysis}.

We first establish the state-compression scheme (\Cref{definition:state-compression}) where we project $q$ colors into different ``color buckets'' and the first color bucket only contains the color $1$. And we add the complex external field only to the first color bucket. We use the superscript $\coF$ to denote the projected symbol.
\begin{definition}[State-compression scheme for hypergraph $q$-coloring with one special color]
\label{definition:coloring-one-special-state-compression}
For each $v\in V$, let $\Sigma_v = \{1^\coF, 2^\coF, \dots, B^\coF, (B+1)^\coF\}$, and let $\*f = (f_v)_{v\in V}$. 
And for each variable $v\in V$, $f_v$ is a mapping from its domain $[q]$ to a finite \emph{alphabet} $\Sigma_v$.
Let $\*\Sigma\defeq \bigotimes_{v\in V}\Sigma_v$. And for any $\Lambda \subseteq V$, we denote $\*\Sigma_\Lambda \defeq \bigotimes_{v\in \Lambda} \Sigma_v$.

For each $v\in V$, we set $f_v(1) = 1^\coF$, and for any $i\in \set{2, 3,\dots, q}$, we set $f_v(i) = (((i-2)\mod B) + 2)^\coF$. 
\end{definition}

\begin{fact}[Size of color buckets]
\label{fact:coloring-one-special-size}
Let $s \defeq \lfloor (q-1)/B\rfloor$.
Based on the state-compression scheme in \Cref{definition:coloring-one-special-state-compression}, it holds that for each color bucket $j^\coF$ where $j\in \set{2, 3,\dots, B+1}$ and for each variable $v\in V$, we have $s \le |f^{-1}_v(j^\coF)| \le s + 1$.
\end{fact}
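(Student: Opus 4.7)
The plan is to unfold the definition of $f_v$ and do a direct counting of the preimage. Since the color $1$ is the only element mapped to $1^\coF$, every other color $i\in\set{2,3,\dots,q}$ is assigned to bucket $(((i-2)\bmod B)+2)^\coF$. So for a fixed $j\in\set{2,\dots,B+1}$, I need to count the integers $i\in\set{2,\dots,q}$ satisfying $(i-2)\bmod B = j-2$, or equivalently, the integers in $\set{0,1,\dots,q-2}$ congruent to $j-2$ modulo $B$.

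First I would reparameterize: set $n\defeq q-1$ and $r\defeq j-2\in\set{0,1,\dots,B-1}$, so the task reduces to counting residues $r$ among $\set{0,1,\dots,n-1}$. Writing $n = sB + t$ with $s=\lfloor n/B\rfloor$ and $t = n\bmod B \in\set{0,1,\dots,B-1}$, the count equals $s+1$ if $r<t$ and $s$ if $r\ge t$. Either way the preimage size lies in $\set{s,s+1}$, which is the claimed bound. Since this is literally an application of the division algorithm, there is no real obstacle; the only thing worth stating cleanly is the case split on whether $B\mid (q-1)$ so the reader sees both endpoints of the interval $[s,s+1]$ are tight in general, but both possibilities are captured by the inequality $s\le |f_v^{-1}(j^\coF)|\le s+1$.

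Finally I would remark that the bound is independent of $v$, since $f_v$ has the same functional form for every vertex, which is what allows the uniform statement across all $v\in V$.
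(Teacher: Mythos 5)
Your proof is correct: the paper states this as a \emph{fact} with no written proof, treating it as immediate from \Cref{definition:coloring-one-special-state-compression}, and your direct count of residues modulo $B$ among $\set{0,1,\dots,q-2}$ via the division algorithm is exactly the verification being left implicit. Nothing is missing.
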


Note that the partition function induced by the above projection with the complex external field $\lambda$ on the first color bucket $1^\coF$ is equivalent to \Cref{definition:coloring-one-special-complex-extensions}.
We give the following sufficient condition for zero-freeness.
\begin{condition}
\label{condition:coloring-one-special-main-condition}
It holds that
\begin{enumerate}
    \item $s^k \ge 608\mathrm{e}q^2 \Delta^3 k^5$;\label{item:coloring-one-special-local-uniformity-constraint}
    \item $ 16\mathrm{e}^2 \Delta^2 k^4 \cdot q\tp{\frac{4s}{q}}^k \le 1 $;\label{item:coloring-one-special-LLL-constraint}
    \item let $\lambda$ be the complex parameter defined in \Cref{definition:coloring-one-special-complex-extensions} and there is a $\lambda_c\in [0,1]$ such that $|\lambda - \lambda_c|\le \gamma$ where $\gamma = \coGamma$.
    \label{item:coloring-one-special-lambda-constraints}
\end{enumerate}
\end{condition}

\begin{lemma}
\label{lemma:coloring-one-special-verify-sufficient-condition-zero-freeness-condition}
Let $H = (V, E)$ be a $k$-uniform hypergraph with maximum degree $\Delta$.
Let $\Phi = (V, [q]^V, C)$ be the CSP formula encoding the hypergraph $q$-coloring
under the state-compression scheme $f$
(\Cref{definition:coloring-one-special-state-compression})
and the complex external field $\lambda$ on the color bucket $1^{\coF}$.
Write $C = \{c_1, c_2, \dots, c_m\}$, and for each $i \le m$, define
$C_i = \{c_1, c_2, \dots, c_i\}$ and $\Phi_i = (V, [q]^V, C_i)$.

If \Cref{condition:coloring-one-special-main-condition} holds, $\Phi_1, \Phi_2, \dots, \Phi_m$ satisfy \Cref{condition:csp-conditional-measure-analysis}.
\end{lemma}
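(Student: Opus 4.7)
The plan is to construct an explicit $\*b$-decomposition inspired by the relaxed local uniformity described in the technical overview, then verify both items of \Cref{condition:csp-conditional-measure-analysis} for every $\Phi_i$ by combining a real-line LLL argument at $\lambda_c \in [0,1]$ with a complex perturbation in the disk $|\lambda - \lambda_c| \le \gamma = \coGamma$. Since the parameters $q,\Delta,k,s,B$ are common to all $\Phi_i$, the verification is uniform in $i$.

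Concretely, I would choose $b_v$ to mirror the relaxed local uniformity profile at $\lambda = 1$: roughly $b_v(j^\coF) \approx s/q$ for $j \in \set{2,\dots,B+1}$, $b_v(1^\coF) = 0$, and $b_v(\bot)$ set so that $b_v$ is normalized. Setting $b_v(1^\coF) = 0$ is crucial because it guarantees that the ``all vertices coloured $1$'' violation of any constraint contributes to $\conditionalDecayConstant$ only through the much smaller $|b_v(\bot)|\cdot\max_\tau\sum_x |\psi_v^{\tau,\bot}(x)|$ term, which absorbs the problematic special-color case. Via \cref{eq:decomposition-adatptive-measure}, the rest of the verification reduces to: (i) showing $\psi_v^\tau$ is well-defined for every extendable projected pinning $\tau$ on $V\setminus\set{v}$, and (ii) proving a complex local uniformity bound $\max_\tau\sum_{x \in \Sigma_v} |\psi_v^{\tau,\bot}(x)| = O(1/s^{k-1})$.

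The central technical step is (ii). Starting from the explicit form
\[
\psi_v^\tau(j^\coF) = \frac{X_j(\tau)}{\lambda X_1(\tau) + \sum_{i\ge 2} X_i(\tau)}\;(j\ge 2), \qquad \psi_v^\tau(1^\coF) = \frac{\lambda X_1(\tau)}{\lambda X_1(\tau) + \sum_{i\ge 2} X_i(\tau)},
\]
where $X_j(\tau)$ counts original colorings consistent with $\tau$ that assign bucket $j^\coF$ to $v$, I would first anchor at $\lambda = \lambda_c \in [0,1]$, where the denominator becomes a non-negative real weighted count of proper colorings. Applying \Cref{theorem:HSS} with bad events encoding ``revealing $v$ produces a monochromatic hyperedge through $v$'' then yields the real-line local uniformity estimate $X_j / \sum_{i\ge 2} X_i = (s/q)(1 \pm O(1/s^{k-1}))$ for $j\ge 2$, and the analogous estimate for $X_1$ whenever it is non-zero (otherwise $\psi_v^\tau(1^\coF) = 0$ is automatic). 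Item~(\ref{item:coloring-one-special-LLL-constraint}) of \Cref{condition:coloring-one-special-main-condition} supplies the LLL hypothesis. To pass to a complex $\lambda = \lambda_c + \delta$ with $|\delta| \le \gamma$, the denominator is perturbed only by $\delta X_1(\tau)$, which is dominated by the real anchor $\lambda_c X_1 + \sum_{i\ge 2} X_i$ because the latter is of order $q\cdot X_1$ by local uniformity and $\gamma$ is chosen tiny. This simultaneously delivers non-vanishing of $\psi_v^\tau$ (hence \Cref{condition:csp-conditional-measure-analysis}-(\ref{item:csp-transition-well-define})) and the adaptive-norm bound of~(ii) after dividing by $|b_v(\bot)|$ in \cref{eq:decomposition-adatptive-measure}.

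With these estimates in hand, evaluating $\conditionalDecayConstant$ and $\conditionalTriangleConstant$ is a direct computation: each constraint has exactly $q$ monochromatic violations in $c^{-1}(\-{False})$; the $q-1$ non-special ones each contribute a $k$-fold product of $s/q + O(1/s^{k-1})$, while the special one contributes only $(|b_v(\bot)|\cdot O(1/s^{k-1}))^k$ because $b_v(1^\coF)=0$. The assumptions~(\ref{item:coloring-one-special-local-uniformity-constraint}) and~(\ref{item:coloring-one-special-LLL-constraint}) of \Cref{condition:coloring-one-special-main-condition} are calibrated precisely so that $4\mathrm{e}\Delta^2 k^4 \cdot \conditionalDecayConstant \cdot \conditionalTriangleConstant^{4\Delta^2 k^5} \le 1/4$, verifying \cref{item:csp-conditional-meaure-analysis-numerical}. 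The hard part will be the complex local uniformity step: because the partition function is complex-valued the classical Moser--Tardos / self-reduction proof of local uniformity does not apply directly, and the anchoring-plus-perturbation argument must be careful with how the $|\delta|$ error propagates through the $k$-fold product in $\conditionalDecayConstant$ and the $4\Delta^2 k^5$-power in $\conditionalTriangleConstant$ -- this is exactly why the radius $\gamma = \coGamma$ is chosen so small.
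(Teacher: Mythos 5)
Your architecture coincides with the paper's: the same decomposition with $b_v(1^{\coF})=0$ and $b_v(j^{\coF})$ proportional to $|f_v^{-1}(j^{\coF})|/q$ (\Cref{definition:coloring-one-special-decomposition-scheme}), the same observation that in $\psi_v^{\tau}$ the external field enters only through the vertex being updated (\Cref{lemma:coloring-one-special-transition-measure-expression}), and the same strategy of applying real-line, LLL-based local uniformity to the counts $X_i$ of the simplified list-coloring instance and then perturbing in $\lambda$. However, your central quantitative claim in step (ii) is wrong as stated: $\max_\tau\sum_{x\in\Sigma_v}|\psi_v^{\tau,\bot}(x)| = O(1/s^{k-1})$ is impossible, because whenever $b_v(\bot)\neq 0$ the function $\psi_v^{\tau,\bot}$ is a \emph{normalized} complex measure, so $\sum_x|\psi_v^{\tau,\bot}(x)|\ge \left|\sum_x\psi_v^{\tau,\bot}(x)\right|=1$. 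Even the quantity you actually need, $|b_v(\bot)|\cdot\max_\tau\sum_x|\psi_v^{\tau,\bot}(x)|=\max_\tau\sum_x|\psi_v^{\tau}(x)-b_v(x)|$, is $\Theta(1/q)$ rather than $O(1/s^{k-1})$: since $b_v(1^{\coF})=0$ while $\psi_v^{\tau}(1^{\coF})\approx \lambda/q$ when $v$ is free, the single term $x=1^{\coF}$ already contributes about $1/q$. The correct bound is of order $1/q+O(1/\varrho)$ with $\varrho=\coRHO$ (cf.\ \cref{eq:coloring-one-special-blocked-J-final} and \cref{eq:coloring-one-special-free-bot}). The verification still closes with this corrected bound --- in $\conditionalDecayConstant$ the special monochromatic assignment contributes $(O(1/q))^k\ll q(4s/q)^k$, and in $\conditionalTriangleConstant$ the extra $1/q$ is absorbed because $\sum_x|b_v(x)|\le (q-1)/(q+\lambda_c-\gamma-1)$ leaves exactly that much slack --- but your stated intermediate estimate cannot be proved and the computation must be redone with the right order of magnitude.

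A second, smaller omission: before invoking local uniformity on the conditional instance you must show that in the formula simplified by the pinning $\tau$, every surviving constraint avoids the bucket $1^{\coF}$, so every vertex other than $v$ retains a color list of size at least $s$ (\Cref{lemma:coloring-simplify-CSP-by-pinning}). Without this, vertices pinned to the singleton bucket $1^{\coF}$ would have lists of size $1$ and \Cref{lemma:coloring-local-uniformity} would not apply. Relatedly, the split between $v$ blocked and $v$ free (whether some hyperedge forces $v$ out of $1^{\coF}$) has to be carried through both the well-definedness argument and the bounds on $\conditionalDecayConstant$ and $\conditionalTriangleConstant$; you gesture at it, but it is a load-bearing case distinction, not a remark.
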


The next theorem about zero-freeness follows directly from the above lemma.

\begin{theorem}[Zero-freeness]
\label{theorem:co-one-special-zero-freeness}
Let $H = (V, \+E)$ be a $k$-uniform hypergraph with maximum degree $\Delta$. 
Suppose that \Cref{condition:coloring-one-special-main-condition} holds. Then $Z^{\-{co}}_H(\lambda) \neq 0$.
\end{theorem}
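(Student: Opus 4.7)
The plan is to derive this theorem as an immediate corollary of \Cref{theorem:sufficient-condition-zero-freeness}, applied to the CSP encoding $\Phi=(V,[q]^V,C)$ of proper $q$-colorings of $H$, together with the state-compression $\*f$ from \Cref{definition:coloring-one-special-state-compression} and the external field $\lambda$ on the singleton bucket $1^\coF$. Two preconditions must be checked: that the zero-edge partition function $Z_0$ does not vanish, and that every edge-by-edge truncation $\Phi_i$ satisfies \Cref{condition:csp-conditional-measure-analysis}.

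First I would record the identification $Z(\Phi,\lambda,\*f,1^\coF)=Z^{\-{co}}_H(\lambda)$. Since $f_v^{-1}(1^\coF)=\set{1}$ for every $v$, \Cref{definition:complex-extensions-csp} assigns each proper coloring $\sigma$ the weight $\lambda^{\abs{\sigma^{-1}(1)}}$, and the constraints of $\Phi$ are in bijection with $\+E$ and forbid monochromatic hyperedges, so $\Omega_\Phi$ is exactly the set of proper $q$-colorings of $H$. Next I would verify the base case \cref{eq:csp-Z_0-is-not-0}: each vertex contributes $\lambda$ from $\sigma=1$ and $1$ from each of $\sigma=2,\dots,q$, giving $Z_0=(\lambda+q-1)^{\abs{V}}$. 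By \Cref{condition:coloring-one-special-main-condition}-(\ref{item:coloring-one-special-lambda-constraints}) we have $\abs{\lambda-\lambda_c}\le\gamma=\coGamma$ for some $\lambda_c\in[0,1]$, so $\Re(\lambda+q-1)\ge q-1-\gamma$, which is strictly positive since \Cref{condition:coloring-one-special-main-condition}-(\ref{item:coloring-one-special-local-uniformity-constraint}) forces $s\ge 1$ and hence $q\ge 2$. Therefore $Z_0\neq 0$.

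The remaining hypothesis of \Cref{theorem:sufficient-condition-zero-freeness}, that every $\Phi_i$ satisfies \Cref{condition:csp-conditional-measure-analysis}, is exactly the content of \Cref{lemma:coloring-one-special-verify-sufficient-condition-zero-freeness-condition}. Combining both results then yields $Z(\Phi,\lambda,\*f,1^\coF)\neq 0$, which by the identification above is the desired $Z^{\-{co}}_H(\lambda)\neq 0$. The genuine obstacle therefore lies entirely in \Cref{lemma:coloring-one-special-verify-sufficient-condition-zero-freeness-condition}, which must design an appropriate \decompositionScheme-decomposition scheme on the projected measure $\psi$ and bound the constants $\conditionalDecayConstant,\conditionalTriangleConstant$ so as to meet the numerical inequality \cref{item:csp-conditional-meaure-analysis-numerical}. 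The three clauses of \Cref{condition:coloring-one-special-main-condition} will be used there to control, respectively, the non-special bucket sizes (providing the relaxed local-uniformity input that sets up the oblivious transition), an LLL-style slack that tames the conditional marginals via \Cref{theorem:HSS}, and the perturbation from moving $\lambda$ off $[0,1]$ into its complex neighborhood. The present theorem itself, granted the lemma, is essentially a routine combination.
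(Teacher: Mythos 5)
Your proposal is correct and follows essentially the same route as the paper: verify the base case \cref{eq:csp-Z_0-is-not-0} via the real part of $\lambda+q-1$ under \Cref{condition:coloring-one-special-main-condition}-(\ref{item:coloring-one-special-lambda-constraints}), invoke \Cref{lemma:coloring-one-special-verify-sufficient-condition-zero-freeness-condition} for \Cref{condition:csp-conditional-measure-analysis}, and conclude by \Cref{theorem:sufficient-condition-zero-freeness}. The paper's proof is exactly this routine combination, with all substantive work deferred to the cited lemma.
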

\begin{proof}
Let $\Phi = (V, [q]^V, C)$ be a CSP formula that encodes the hypergraph $q$-coloring on $H$ with the state-compression scheme $\*f$  (\Cref{definition:coloring-one-special-state-compression}) and the complex external field $\lambda$ on the color bucket $1^\coF$.
We use \Cref{theorem:sufficient-condition-zero-freeness} to prove this theorem. We first verify \cref{eq:csp-Z_0-is-not-0}. By \Cref{condition:coloring-one-special-main-condition}-(\ref{item:coloring-one-special-lambda-constraints}), it holds that for any $v\in V$, $\sum_{i\in [q]} \`R(\=I[f_v(i) \neq 1^\coF] + \lambda \cdot \=I[f_v(i) = 1^\coF]) > 0$. So \cref{eq:csp-Z_0-is-not-0} holds.
Next, the \Cref{condition:csp-conditional-measure-analysis} holds by \Cref{lemma:coloring-one-special-verify-sufficient-condition-zero-freeness-condition}.
Then, this theorem follows from \Cref{theorem:sufficient-condition-zero-freeness}.
\end{proof}
We first use the above theorem to prove \Cref{theorem:coloring-one-special-zero-free-intro}.
\begin{proof}[Proof of \Cref{theorem:coloring-one-special-zero-free-intro}]
Let $\gamma = \coGamma$.
We set $B = \lfloor q^{2/5} \rfloor$, $s = \lfloor (q-1)/B \rfloor$.
Next, given $k\ge 50$ and $\coloringCondition$, for any $\lambda_c\in [0,1]$ and $\lambda: |\lambda - \lambda_c|\le \gamma $, we verify \Cref{condition:coloring-one-special-main-condition}.

For \Cref{condition:coloring-one-special-main-condition}-(\ref{item:coloring-one-special-local-uniformity-constraint}),
we observe that $s = \lfloor \frac{q - 1}{\lfloor q^{2/5}\rfloor}\rfloor \ge \frac{q-1}{q^{2/5}} - 1 \ge q^{3/5}-2 \ge \frac{q^{3/5}}{2}$ where the last inequality is due to $q\ge 700$.
In order to satisfy $s^k \ge 608 \mathrm{e}q^2 \Delta^3 k^5$,  it suffices if $(q^{3/5}/2)^k \ge 608\mathrm{e} q^2 \Delta^3 k^5$ which is equivalent to $q\ge (608\mathrm{e}2^k k^5)^{\frac{5}{3k-10}}\Delta^{\frac{15}{3k-10}}$. Combined with $k\ge 50$, it suffices if $q\ge 10\Delta^{\frac{5}{k-4}}$.

Next, we consider \Cref{condition:coloring-one-special-main-condition}-(\ref{item:coloring-one-special-LLL-constraint}) and verify that  $16\mathrm{e}^2 \Delta^2 k^4\cdot q(4s/q)^k\le 1$. By the fact that $s = \lfloor \frac{q-1}{\lfloor q^{2/5}\rfloor}\rfloor \le 2\lfloor \frac{q-1}{q^{2/5}}\rfloor \le 2 q^{3/5}$. 
So, it suffices if $16\mathrm{e}^2 \Delta^2 k^4 8^k \le q^{2k/5 - 1}$. After rearranging, it is equivalent to $q\ge (16\mathrm{e}^2 8^k k^4)^{\frac{5}{2k-5}} \Delta^{\frac{10}{2k-5}}$. Combined with $k\ge 50$, it suffices if $q\ge 700 \Delta^{\frac{10}{2k - 5}}$.

Combined with $\coloringCondition$, \Cref{condition:coloring-one-special-main-condition} holds.
By \Cref{theorem:co-one-special-zero-freeness}, \Cref{theorem:coloring-one-special-zero-free-intro} holds.
\end{proof}
The remaining subsection is devoted to the proof of \Cref{lemma:coloring-one-special-verify-sufficient-condition-zero-freeness-condition}.
Let $\Phi$ be the CSP formula that encodes the hypergraph $q$-coloring on the hypergraph $H$.
Suppose \Cref{condition:coloring-one-special-main-condition} holds.
Note that under the constraint-wise self-reduction, new CSP formulas are also $(k, \Delta)$-CSP formulas satisfying \Cref{condition:coloring-one-special-main-condition} and they also encode hypergraph $q$-colorings for different hypergraphs.
Hence, we prove a stronger result that for any CSP formula encoding a hypergraph $q$-coloring, if \Cref{condition:coloring-one-special-main-condition} holds for the formula, then \Cref{condition:csp-conditional-measure-analysis} holds.
Thus, we have that $\Phi_1, \Phi_2, \dots, \Phi_m$ satisfy \Cref{condition:csp-conditional-measure-analysis}.

We first verify \Cref{condition:csp-conditional-measure-analysis}-(\ref{item:csp-transition-well-define}).
\begin{lemma}[Well-definedness]
\label{lemma:coloring-one-special-well-definedness}
Let $\Phi = (V, [q]^V, C)$ be a $(k, \Delta)$-CSP formula that encodes a hypergraph $q$-coloring, $\*f$ be the state-compression scheme (\Cref{definition:coloring-one-special-state-compression}) and the complex external field $\lambda$ on the color bucket $1^\coF$. Let $c^*\in C$ with the largest index number and let $\Phi' = (V, [q]^V, C\setminus \set{c^*})$.

Suppose the projected measure $\psi$ induced by $\Phi'$ and $\*f$ has a nonzero partition function, i.e., $Z(\Phi', \lambda, \*f, 1^\coF)\neq 0$ and \Cref{condition:coloring-one-special-main-condition} holds, then \Cref{condition:csp-conditional-measure-analysis}-(\ref{item:csp-transition-well-define}) holds.
\end{lemma}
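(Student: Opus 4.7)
The plan is to prove $\psi(\tau)\neq 0$ for every extendable projected partial assignment $\tau\in\*\Sigma_{V\setminus\{u\}}$. Fix such $u$ and $\tau$, and split $V\setminus\{u\} = V_0\sqcup V_1$ where $V_0 = \{v:\tau_v = 1^\coF\}$ is forced to colour~$1$ and $V_1$ is pinned to non-special buckets. For $c\in[q]$ define
\[
M_c\ \defeq\ \#\Bigl\{\sigma\in[q]^V:\ \sigma_u = c,\ \sigma_v = 1\ \forall v\in V_0,\ \sigma_v\in f_v^{-1}(\tau_v)\ \forall v\in V_1,\ \sigma\text{ is a proper colouring of }H'\Bigr\},
\]
and set $M_{\mathrm{other}} = \sum_{c\neq 1} M_c$. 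A direct expansion yields $\psi(\tau)\cdot Z(\Phi',\lambda,\*f,1^\coF) = \lambda^{|V_0|}(\lambda M_1 + M_{\mathrm{other}})$; since $Z(\Phi',\lambda,\*f,1^\coF)\neq 0$ by hypothesis it suffices to show the right-hand side is nonzero for every $\lambda$ in the $\gamma$-neighbourhood of $[0,1]$ at which $\tau$ is extendable.

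The monomial $\lambda^{|V_0|}$ is nonzero unless $\lambda = 0$ and $|V_0|>0$, in which case every extension of $\tau$ has weight zero and $\tau$ is not extendable. The integers $M_1,M_{\mathrm{other}}\geq 0$ are independent of $\lambda$, and extendability rules out $M_1=M_{\mathrm{other}}=0$; the subcases $M_1=0$ or $M_{\mathrm{other}}=0$ reduce to the nonvanishing of a single monomial under extendability. The substantive case is $M_1,M_{\mathrm{other}}>0$, where the polynomial $P(\lambda)=\lambda M_1 + M_{\mathrm{other}}$ has a single root at $-M_{\mathrm{other}}/M_1\leq 0$; its distance to $[0,1]$ equals $M_{\mathrm{other}}/M_1$, so $P$ avoids the $\gamma$-neighbourhood provided
\[
\frac{M_{\mathrm{other}}}{M_1}\ >\ \gamma\ =\ \coGamma.
\]

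I will establish this ratio bound by interpreting the $M_c$ as counts of proper colourings of a reduced hypergraph $\widetilde H$ obtained from $H'$ by deleting $V_0$, restricting each $v\in V_1$ to $f_v^{-1}(\tau_v)$ (of size at least $s$), and leaving $u$ with the full domain $[q]$. Every surviving constraint of $\widetilde H$ becomes a ``not-all-equal'' condition on at most $k$ vertices lying in a common bucket of size at most $s+1$; its violation probability under the product-uniform measure is therefore controlled by the quantity appearing in \Cref{condition:coloring-one-special-main-condition}-(\ref{item:coloring-one-special-LLL-constraint}), which is exactly the lopsided-LLL hypothesis needed to apply \Cref{theorem:HSS} with a constant correction factor. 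Applied to the event $\{\sigma_u = 1\}$, whose product-uniform probability is $1/q$ since $u$ has full domain, HSS yields $M_1/(M_1+M_{\mathrm{other}})\leq 2/q$ and hence $M_{\mathrm{other}}/M_1\geq (q-2)/2$. Since $q\geq 700$ while $\gamma\leq 1/(16\Delta^2 k^5)$, the required inequality $M_{\mathrm{other}}/M_1>\gamma$ follows with room to spare.

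The main obstacle lies in the case analysis of hyperedges touching $V_0$: a hyperedge entirely inside $V_0$ is forbidden by extendability; a hyperedge containing vertices pinned to two different buckets is trivially non-monochromatic and can be dropped; a hyperedge whose only non-$V_0$ vertex is $u$ reduces to the unary constraint $\sigma_u\neq 1$, which simply shrinks $u$'s effective domain without invalidating LLL; all remaining surviving constraints are genuine not-all-equal constraints within a single non-special bucket, whose violation probabilities are uniformly bounded by the quantity in \Cref{condition:coloring-one-special-main-condition}-(\ref{item:coloring-one-special-LLL-constraint}). After this bookkeeping the LLL hypothesis for $\widetilde H$ is immediate, so the ratio bound, and hence \Cref{condition:csp-conditional-measure-analysis}-(\ref{item:csp-transition-well-define}), follows.
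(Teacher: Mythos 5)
Your proposal is correct and follows essentially the same route as the paper: both reduce well-definedness to the non-vanishing of $\lambda M_1+M_{\mathrm{other}}$ (the paper works with the counts $X_i$ of the simplified formula $\Phi^\tau$, which differ from your $M_c$ only by a common positive integer factor, cf.\ \Cref{lemma:coloring-one-special-transition-measure-expression}), and both close the argument by showing $M_1/(M_1+M_{\mathrm{other}})\lesssim 2/q$ via local uniformity. Your two stylistic deviations are harmless: you locate the unique root $-M_{\mathrm{other}}/M_1$ and bound its distance to $[0,1]$, where the paper instead lower-bounds the real part $\Re(\lambda M_1+M_{\mathrm{other}})\ge -\gamma M_1+M_{\mathrm{other}}$ and derives a contradiction; and you invoke \Cref{theorem:HSS} directly on the reduced instance, where the paper invokes the packaged local-uniformity statement \Cref{lemma:coloring-local-uniformity} after the bookkeeping of \Cref{lemma:coloring-simplify-CSP-by-pinning} (which matches your case analysis of hyperedges touching $V_0$). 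One correction: the violation probability of a surviving not-all-equal constraint on a bucket of size at most $s+1$ is of order $(s+1)/s^{k}$, and the LLL hypothesis for it is supplied by \Cref{condition:coloring-one-special-main-condition}-(\ref{item:coloring-one-special-local-uniformity-constraint}) (i.e.\ $s^k\ge 608\mathrm{e}q^2\Delta^3 k^5$), not by part~(\ref{item:coloring-one-special-LLL-constraint}), which controls $q(4s/q)^k$ and plays no role here; with that citation fixed your argument is complete.
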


In order to analyze the transition measure of the projected measure $\psi$, we first introduce the next simplified CSP formula.
Given a variable $v$ and an extendable partial assignment $\tau\in\*\Sigma_{V\setminus\set{v}}$ defined on $V\setminus \set{v}$ under the projection. We use the partial assignment $\tau$ to simplify the CSP formula $\Phi'$, i.e., removing all satisfied constraints and all redundant variables that are not in any constraint.
Let $\Phi^\tau = (V^{\tau}, \*Q^{\tau}, C^{\tau})$ be the simplified CSP formula. 
For any variable $u$ in $V\setminus \set{v}$, we set $Q^{\tau}_u = f_u^{-1}(\tau_u)$.
And we set $Q^{\tau}_v = [q]$.
For any constraint $c$ with $v\in\vbl(c)$ and $\forall u\in \vbl(c)\setminus\set{v}$ such that $\tau_u = 1^\coF$, we remove the constraint $c$ and remove $f_v^{-1}(1^\coF)$ from the alphabet $Q^{\tau}_v$.

If $Q^{\tau}_v = [q]\setminus f_v^{-1}(1^\coF)$, we say $v$ is \emph{blocked}. Otherwise $Q^{\tau}_v = [q]$, we say $v$ is \emph{free}.
We have the next lemma.
\begin{lemma}
\label{lemma:coloring-simplify-CSP-by-pinning}
For any variable $v$ and any extendable partial assignment $\tau\in\*\Sigma_{V\setminus\set{v}}$ defined on $V\setminus \set{v}$ under the projection, let $\Phi^\tau=(V^\tau, \*Q^{\tau}, C^{\tau})$ be the simplified CSP formula. Then for any variable $u\in V^{\tau} \setminus \set{v}$, it holds that $|Q_u|\ge s$.
\end{lemma}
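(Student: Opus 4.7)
The plan is to reduce the claim to showing that $\tau_u \neq 1^{\coF}$ for every $u \in V^\tau \setminus \{v\}$. Once this is established, Fact \ref{fact:coloring-one-special-size} yields $|Q^\tau_u| = |f_u^{-1}(\tau_u)| \geq s$ immediately, since $\tau_u$ must then lie in $\{2^\coF, 3^\coF, \dots, (B+1)^\coF\}$. Thus the whole lemma reduces to ruling out the case $\tau_u = 1^\coF$, which would give $|f_u^{-1}(1^\coF)| = 1$.

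The first step is to observe that, since the simplification deletes all variables that do not appear in any surviving constraint, every $u \in V^\tau \setminus \{v\}$ belongs to some $c \in C^\tau$. I will split on whether $v \in \vbl(c)$. In the case $v \notin \vbl(c)$, the constraint $c$ was not eliminated as a ``satisfied constraint'', which (for the hypergraph coloring constraint ``not monochromatic'') means that $c$ is not forced-satisfied by $\tau$, so all projected values $(\tau_{u'})_{u' \in \vbl(c)}$ lie in the same color bucket. If this common bucket were $1^\coF$, then because the bucket $1^\coF$ contains only the color $1$, every extension of $\tau$ would assign color $1$ to every vertex of $c$ and violate $c$, contradicting the extendability of $\tau$. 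Hence $\tau_u \neq 1^\coF$.

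For the remaining case $v \in \vbl(c)$, I will combine the two removal rules for such constraints stated in the definition of $\Phi^\tau$: a constraint $c$ is discarded either if it is already forced-satisfied on $\vbl(c) \setminus \{v\}$ (i.e.\ there exist $u_1,u_2 \in \vbl(c) \setminus \{v\}$ with $\tau_{u_1} \neq \tau_{u_2}$), or if every $u' \in \vbl(c) \setminus \{v\}$ has $\tau_{u'} = 1^\coF$ (which is absorbed by restricting $Q^\tau_v$). Since $c$ survives in $C^\tau$, neither applies: all projections $(\tau_{u'})_{u' \in \vbl(c) \setminus \{v\}}$ agree, and their common value is not $1^\coF$. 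In particular $\tau_u \neq 1^\coF$, as desired.

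The argument is essentially a case analysis, and the only subtle point—what I would flag as the main obstacle—is interpreting the two flavors of removal for constraints containing $v$ consistently, together with the special feature that bucket $1^\coF$ is a singleton. That singleton property is precisely what lets me convert ``all neighbors are in bucket $1^\coF$'' into ``all neighbors receive color $1$'' in the case $v \notin \vbl(c)$, and is what forces the violation used to contradict extendability. Once this is cleanly articulated, $|Q^\tau_u| \geq s$ is immediate from Fact \ref{fact:coloring-one-special-size}.
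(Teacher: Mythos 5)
Your proof is correct and follows essentially the same route as the paper's: both reduce the claim to showing that no surviving constraint involves the bucket $1^\coF$, then split on whether $v\in\vbl(c)$, using the singleton nature of $1^\coF$ together with extendability of $\tau$ in the first case and the explicit removal rule in the second. Your write-up is, if anything, slightly more explicit about why every $u\in V^\tau\setminus\{v\}$ lies in some surviving constraint, which the paper leaves implicit.
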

\begin{proof}
Combined with \Cref{fact:coloring-one-special-size}, it suffices to show that for any constraint $c\in C^{\tau}$, it holds that $c$ does not contain the color bucket $1^\coF$.

For constraints $c$ with $v\not\in \vbl(c)$, it holds that for any variables in this constraint, they are in the same color bucket. This is because that if it contains different color buckets, $c$ is already satisfied, so $c$ is removed.
Then, we show that $c$ does not contain the color bucket $1^\coF$. If it contains, recall that there is only one color in the color bucket $1^\coF$, so $c$ must be violated. This contradicts to the fact that $\tau$ is extendable.

For constraints $c$ with $v\in \vbl(c)$, by similar reasons, for any variables in $c$ except for the variable $v$, they are in the same color bucket.
Then recall the construction of $\Phi^{\tau}$, if a such constraint $c$ contains the color bucket $1^\coF$, $c$ is also removed.

This lemma follows from the above two cases.
\end{proof}
Next, we introduce the following lemma which shows that when considering the transition measure of the projected measure $\psi$, it suffices to consider the complex external field only on the variable being updated.
\begin{lemma}
\label{lemma:coloring-one-special-transition-measure-expression}
For any variable $v\in V$, for any extendable partial assignment $\tau\in \*\Sigma_{V\setminus \{v\}}$ defined on $V\setminus \{v\}$ under the projection, let $\Phi^\tau = (V^\tau, \*Q^{\tau}, C^{\tau})$ be the simplified CSP formula.
Let $(X_i)_{i\in [B+1]}$ be the number of satisfying assignments of $\Phi^{\tau}$ with the value of $v$ is consistent with $i^\coF$.

Then it holds that
\[
\forall i\in [B+1], \quad \psi^{\tau}_v(i^\coF) = \frac{\=I[i^\coF = 1^\coF]\cdot \lambda X_i + \=I[i^\coF \neq 1^\coF]\cdot X_i }{\sum_{j\in [B+1]}\tp{\=I[j^\coF = 1^\coF]\cdot \lambda X_j + \=I[j^\coF \neq 1^\coF]\cdot X_j}}.
\]
\end{lemma}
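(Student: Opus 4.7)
The plan is to expand the definition of $\psi^\tau_v(i^\coF)$ as a ratio of weighted sums over original satisfying assignments of $\Phi'$, then factor out the contribution of $\lambda$ coming from $V\setminus\{v\}$ (which is determined entirely by $\tau$) and rewrite the remaining combinatorial count in terms of $X_i$. Throughout, we will use the special property of the state-compression scheme (\Cref{definition:coloring-one-special-state-compression}) that $f_v^{-1}(1^\coF)=\{1\}$, so that $\tau_u=1^\coF$ is equivalent to $\sigma_u=1$ for every $u\neq v$.

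First I would unfold the definition. By \Cref{definition:complex-extensions-csp} and \Cref{definition:csp-projected-measure-under-f},
\[
\psi^{\tau}_v(i^\coF)\;=\;\frac{\sum_{\sigma}\lambda^{|\sigma^{-1}(1)|}\,\mathbb{I}[\sigma\in\Omega_{\Phi'},\ \tau\trianglelefteq\sigma_{V\setminus\{v\}},\ f_v(\sigma_v)=i^\coF]}{\sum_{j\in[B+1]}\sum_{\sigma}\lambda^{|\sigma^{-1}(1)|}\,\mathbb{I}[\sigma\in\Omega_{\Phi'},\ \tau\trianglelefteq\sigma_{V\setminus\{v\}},\ f_v(\sigma_v)=j^\coF]}.
\]
For every $\sigma$ consistent with $\tau$ on $V\setminus\{v\}$, we have $|\sigma^{-1}(1)\cap(V\setminus\{v\})|=L$ where $L\defeq |\{u\in V\setminus\{v\}\mid \tau_u=1^\coF\}|$ (because bucket $1^\coF$ contains only color $1$). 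Hence $|\sigma^{-1}(1)|=L+\mathbb{I}[\sigma_v=1]=L+\mathbb{I}[i^\coF=1^\coF]$ on the event $f_v(\sigma_v)=i^\coF$. Thus the factor $\lambda^L$ is common to numerator and denominator and cancels, leaving
\[
\psi^\tau_v(i^\coF)\;=\;\frac{\bigl(\mathbb{I}[i^\coF=1^\coF]\lambda+\mathbb{I}[i^\coF\neq1^\coF]\bigr)\,N_i}{\sum_{j\in[B+1]}\bigl(\mathbb{I}[j^\coF=1^\coF]\lambda+\mathbb{I}[j^\coF\neq1^\coF]\bigr)\,N_j},
\]
where $N_i$ is the \emph{number} of satisfying assignments of $\Phi'$ with $\tau\trianglelefteq\sigma_{V\setminus\{v\}}$ and $f_v(\sigma_v)=i^\coF$.

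Next I would identify $N_i$ with $X_i$ up to a factor independent of $i$. The simplified formula $\Phi^\tau$ is obtained from $\Phi'$ by (a) pinning every $u\neq v$ into the bucket $Q_u^\tau=f_u^{-1}(\tau_u)$, (b) discarding every constraint $c\ni v$ in which all variables of $\vbl(c)\setminus\{v\}$ lie in bucket $1^\coF$ while forbidding $\sigma_v=1$ through $Q_v^\tau$, and (c) removing variables not appearing in any remaining constraint. Step (a) does not change the satisfying-assignment count beyond restricting to configurations consistent with $\tau$. Step (b) is consistency-preserving: the removed constraint $c$ involves $k-1$ variables pinned to the unique color in $1^\coF$, so it is violated \emph{iff} $\sigma_v=1$, which is precisely what the restriction on $Q_v^\tau$ excludes. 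For constraints $c$ not containing $v$, the pinning $\tau$ already determines whether $c$ is satisfied; since $\tau$ is extendable, any such $c$ retained in $C^\tau$ is automatically satisfied by every configuration consistent with $\tau$, so removing it changes nothing. Step (c) multiplies the count by the constant $K\defeq\prod_{u\text{ removed}}|f_u^{-1}(\tau_u)|$, independent of $i$. Consequently $N_i=K\cdot X_i$, and substituting this in the display above and cancelling $K$ yields the claimed identity.

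The argument is essentially bookkeeping, with the only subtle point being the correctness of step (b) — verifying that the forbidden colors in $Q_v^\tau$ correspond exactly to the dropped constraints. This is where the restriction to \emph{one} special color is essential, since it guarantees the dropped constraints have a unique violating value at $v$.
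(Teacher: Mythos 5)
Your proof is correct and follows essentially the same route as the paper's: cancel the $\lambda$-contribution that is fixed by $\tau$ (using $f_v^{-1}(1^\coF)=\{1\}$), and relate the raw count $N_i$ over $\Phi'$ to $X_i$ over $\Phi^\tau$ by a constant multiplicative factor coming from the variables dropped in the simplification, exactly as the paper does with its constant $\zeta$. One sentence is mis-stated, though it does not break the argument: for a hyperedge $e\not\ni v$ whose vertices all lie in a common non-special bucket, $\tau$ does \emph{not} determine whether $e$ is monochromatic, and such constraints are precisely the ones \emph{retained} in $C^\tau$ rather than "automatically satisfied." What your correspondence actually needs — and what is true — is that every constraint \emph{removed} in forming $C^\tau$ is satisfied by all configurations consistent with $\tau$ (or, for the constraints dropped in your step (b), by the domain restriction on $v$), while the retained constraints are checked identically on both sides.
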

\begin{proof}
For any $i\in[B+1]$, let $(\tilde{X}_i)_{i\in [B+1]}$ be the number of satisfying assignments of the original CSP formula $\Phi'$ under the alphabet $[q]^V$ that are consistent with $\tau$ and the value of $v$ is consistent with $i^\coF$.
We claim that there exists a constant integer $\zeta\ge 1$, such that for any $i\in[B+1]$, we have $\tilde{X}_i = \zeta \cdot X_i$.

By \Cref{definition:coloring-one-special-complex-extensions}, we only add complex external fields on the color bucket $1^\coF$. And all contributions of external fields from $V\setminus \set{v}$ are fixed by the partial assignment $\tau$. So we only need to consider the external field of the variable $v$. So this lemma follows.

Finally, we prove the claim. For any $i\in [B+1]$, we show that every assignment that contributes to $X_i$ corresponds to $\zeta$ different assignments that contributes to $\tilde{X}_i$ by adding new variables. And every assignment that contributes to $\tilde{X}_i$ corresponds to one assignment that contributes to $X_i$ by removing variables.

Let $\zeta$ be the number of partial assignment on the variables in $V\setminus V^{\tau}$ which are consistent to the partial assignment $\tau$.
Note that these variables are actually independent of the CSP formula $\Phi'$. To see this, in the simplified CSP formula $\Phi^{\tau}$, they are not in any constraints.

For any assignment $\sigma\in[q]^{V^\tau}$ contributes to $X_i$, we enumerate partial assignments on the variables in $V\setminus V^{\tau}$, which are consistent to the partial assignment $\tau$. Then we append them into $\sigma$. Then these assignments are satisfying assignments with respect to $\Phi'$.

For any assignment $\sigma\in[q]^V$ contributes to $\tilde{X}_i$, we remove the values on $V\setminus V^{\tau}$. We show that $\sigma$ is a satisfying assignment with respect to $\Phi^\tau$. Note that $C^{\tau} \subseteq C$, so $\sigma$ satisfies all constraints in $C^{\tau}$. Then we consider the $Q_v$. If $v$ is blocked ($Q_v = [q]\setminus \set{f_v^{-1}(1^\coF)}$), by the construction of $\Phi^{\tau}$, for $i = 1$, it holds that $\tilde{X}_i = X_i = 0$, and for $i\neq 1$,  it holds that $\sigma_v \not\in f_v^{-1}(1^\coF)$. If $v$ is free ($Q_v = [q]$), it holds trivially that $\sigma_v\in [q]^V$.

Combined the above two cases, we have that there exists an integer $\zeta \ge 1$ such that $\forall i \in [B+1], \tilde{X}_i = \zeta\cdot X_i$.
\end{proof}

We also include the following useful lemma established in~\cite{guo2018counting,feng2022improved}.
\begin{lemma}[\text{\cite[Lemma 7]{guo2018counting} and \cite[Lemma 6]{feng2022improved}}]
\label{lemma:coloring-local-uniformity}
Let $\Phi = (V, \*Q, C)$ encodes a hypergraph $q$-coloring.
Suppose $q_0 \le |Q_v|\le q_1$ for any $v\in V$. 
Let $\mu$ be the uniform distribution over all proper hypergraph colorings.
For any $\varrho\ge k \ge 2$, if $q_0^k \ge \mathrm{e}\cdot q_1 \cdot \varrho \cdot \Delta$, then for any $v\in V$ and any color $c\in Q_v$,
\[
\frac{1}{\abs{Q_v}}\tp{1 - \frac{1}{\varrho}}\le \mu_v(c)\le \frac{1}{\abs{Q_v}}\tp{1 + \frac{4}{\varrho}}.
\]
\end{lemma}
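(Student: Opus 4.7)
The plan is to realize $\mu$ as the uniform product distribution $\+P = \bigotimes_{u \in V} \mathrm{Unif}(Q_u)$ conditioned on the event that no hyperedge of $H$ is monochromatic, and then bound the single-site marginal $\mu_v(c)$ via the asymmetric Lov\'asz Local Lemma together with the Haeupler-Saha-Srinivasan lopsided version (\Cref{theorem:HSS}).

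First I would set up the LLL data. For each hyperedge $e \in \+E$, let $B_e$ denote the bad event that $e$ is monochromatic. Summing over the at most $q_1$ colors that can be shared across the vertices of $e$ gives $\+P[B_e] \le q_1/q_0^k$, and each $B_e$ is dependent on at most $\Delta k$ others in the dependency graph of bad events. Taking a uniform multiplier $x(B_e) \equiv x$ of order $1/(\rho\Delta)$, the estimate $(1-x)^{\Delta k} \ge \mathrm{e}^{-k/\rho} \ge 1/\mathrm{e}$ (using $\rho \ge k$) combined with the hypothesis $q_0^k \ge \mathrm{e} q_1 \rho \Delta$ verifies the asymmetric LLL condition $\+P[B_e] \le x(1-x)^{\Delta k}$. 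The upper bound then follows by applying \Cref{theorem:HSS} to the event $\+A = \{\sigma_v = c\}$: since $\vbl(\+A) = \{v\}$ intersects at most $\Delta$ constraints, one obtains $\mu_v(c) \le (1/\abs{Q_v})(1-x)^{-\Delta}$, and the elementary estimate $(1 - 1/(\rho\Delta))^{-\Delta} \le \mathrm{e}^{1/(\rho - 1)} \le 1 + 4/\rho$ (valid for $\rho \ge 2$) gives the claimed upper bound.

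For the matching lower bound $\mu_v(c) \ge (1/\abs{Q_v})(1 - 1/\rho)$, I would follow the approach of \cite[Lemma 7]{guo2018counting} and \cite[Lemma 6]{feng2022improved}. Writing $\mu_v(c) = \+P[\text{no bad} \mid \sigma_v = c] / \sum_{c' \in Q_v} \+P[\text{no bad} \mid \sigma_v = c']$ and comparing distinct pinnings of $v$ through an LLL lower bound on each numerator paired with the rearranged HSS inequality $\+P[\text{no bad} \mid \sigma_v = c'] \le (1-x)^{-\Delta} \+P[\text{no bad}]$ on each denominator term, one obtains $\mu_v(c) \ge (1/\abs{Q_v})(1-x)^{\Delta}$; the estimate $(1 - 1/(\rho\Delta))^{\Delta} \ge 1 - 1/\rho$ then closes the proof. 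The main obstacle is precisely this pinning-symmetric comparison: a naive complement $\+A = \{\sigma_v \neq c\}$ pushed through \Cref{theorem:HSS} yields only $\mu_v(c) \ge 1/\abs{Q_v} - O(1/\rho)$, losing a multiplicative factor of $\abs{Q_v}$ in the relative gap, so one needs the more delicate ratio argument between conditional partition functions at different pinnings to recover the sharp form $(1/\abs{Q_v})(1 - 1/\rho)$. Since the lemma is quoted verbatim from the cited papers, I would invoke their argument directly and only verify the numerical constants to confirm that the hypothesis $q_0^k \ge \mathrm{e} q_1 \rho \Delta$ with $\rho \ge k \ge 2$ suffices throughout.
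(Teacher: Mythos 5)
The paper does not prove this lemma at all: it is imported verbatim from the cited works, so there is no in-paper proof to compare against. Your reconstruction via the asymmetric LLL plus the lopsided (Haeupler--Saha--Srinivasan) bound is exactly the route taken in \cite[Lemma 7]{guo2018counting} and \cite[Lemma 6]{feng2022improved}: the upper bound is a direct application of \Cref{theorem:HSS} to the singleton event $\{\sigma_v=c\}$, and you correctly identify the genuine subtlety, namely that the naive complement argument only yields $\mu_v(c)\ge 1/\abs{Q_v}-O(1/\varrho)$ and loses the multiplicative form of the lower bound.

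One step of your lower-bound sketch is stated too loosely to stand on its own. You propose pairing ``an LLL lower bound on each numerator'' $\+P[\neg\mathrm{bad}\mid\sigma_v=c]$ with the rearranged HSS bound on each denominator term. But the LLL only gives an \emph{absolute} lower bound $\prod_e(1-x(B_e))$ on the pinned probability, which cannot be meaningfully compared to $\+P[\neg\mathrm{bad}]$ (both may be exponentially small), and there is no monotonicity guaranteeing $\+P[\neg\mathrm{bad}\mid\sigma_v=c]\ge\+P[\neg\mathrm{bad}]$. The standard fix, which is what the cited proofs do, is to factor through the sub-instance with the $\le\Delta$ edges containing $v$ deleted: writing $\neg\mathrm{bad}_{-v}$ for the event that no edge avoiding $v$ is monochromatic, one has $\+P[\neg\mathrm{bad}\mid\sigma_v=c]=\+P[\neg\mathrm{bad}_{-v}]\cdot\+P[\text{no }e\ni v\text{ bad}\mid\neg\mathrm{bad}_{-v},\sigma_v=c]\ge\+P[\neg\mathrm{bad}_{-v}](1-\delta)$ with $\delta\le\Delta\cdot q_0^{-(k-1)}\cdot(1-x)^{-\Delta k}\le 1/\varrho$ by a union bound plus HSS, while $\+P[\neg\mathrm{bad}]\le\+P[\neg\mathrm{bad}_{-v}]$ trivially; the common factor $\+P[\neg\mathrm{bad}_{-v}]$ then cancels in the ratio. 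Also note the minor slip $(1-x)^{\Delta k}\ge \mathrm{e}^{-k/\varrho}$: since $1-t<\mathrm{e}^{-t}$ this inequality points the wrong way and you need $(1-t)^n\ge\mathrm{e}^{-tn/(1-t)}$ instead, which still closes with the constant $\mathrm{e}$ in the hypothesis. Since you explicitly defer to the cited papers for these details, the proposal is acceptable as a citation-level justification.
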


Next, we use the above two lemmas to prove \Cref{lemma:coloring-one-special-well-definedness}.
\begin{proof}[Proof of \Cref{lemma:coloring-one-special-well-definedness}]
Under the assumption that $\psi$ is well-defined, we show that for any $v\in V$, any extendable partial assignment $\tau \in \*\Sigma_{V\setminus \{v\}}$ under the
 projection, $\psi^\tau_v$ is well-defined.

Let $\Phi^\tau = (V^\tau, \*Q^{\tau}, C^{\tau})$ be the simplified CSP formula.
Let $(X_i)_{i\in [B+1]}$ be the number of satisfying assignments of $\Phi^{\tau}$ with the value of $v$ is consistent with $i^\coF$.
Note that $\tau \in \Sigma_{V\setminus \set{v}}$ is an extendable partial assignment under the projection. So it holds that there exists $j\in[B+1]$, such that $X_j > 0$.

By \Cref{lemma:coloring-one-special-transition-measure-expression}, we have
\[
\forall i\in [B+1], \quad \psi^{\tau}_v(i^\coF) = \frac{\=I[i^\coF = 1^\coF]\cdot \lambda X_i + \=I[i^\coF \neq 1^\coF]\cdot X_i }{\sum_{j\in [B+1]}\tp{\=I[j^\coF = 1^\coF]\cdot \lambda X_j + \=I[j^\coF \neq 1^\coF]\cdot X_j}}.
\]
So it suffices to show that $\sum_{j\in [B+1]}\tp{\=I[j^\coF = 1^\coF]\cdot \lambda X_j + \=I[j^\coF \neq 1^\coF]\cdot X_j}  = \lambda \cdot X_1 + \sum_{j\in [B+1]\setminus \set{1}} X_j\neq 0$.
For a complex number $x$, we use $\`R(x)$ to denote its real part.
It suffices to show that $\`R(\lambda \cdot X_1 + \sum_{j\in [B+1]\setminus \set{1}} X_j) \ge -\gamma \cdot X_1 + \sum_{j\in [B+1]\setminus \set{1}} X_j > 0$.
We show this by a proof of contradiction. 
Assume that $-\gamma \cdot X_1 + \sum_{j\in [B+1]\setminus \set{1}} X_j\le 0$. 
If $X_1 = 0$, recall that there exists a $j\in [B+1]$ with $X_j > 0$, so $-\gamma \cdot X_1 + \sum_{j\in [B+1]\setminus \set{1}} X_j > 0$ which reaches a contradiction.
Hence, we assume that $X_1 \ge 1$.
By $\sum_{j\in [B+1]\setminus \set{1}} X_j\le \gamma \cdot X_1$, it holds that $\frac{X_1}{\sum_{j\in [B+1]}X_j}\ge \frac{1}{1+\gamma}$.

Next, we use \Cref{lemma:coloring-local-uniformity} to show an upper bound such that $\frac{X_1}{\sum_{j\in [B+1]}X_j} < \frac{1}{1+\gamma}$ which reaches a contradiction.
If $v$ is blocked, we actually have $\frac{X_1}{\sum_{j\in [B+1]}X_j} = 0$.
If $v$ is free, note that in the simplified CSP formula $\Phi^\tau$, for any $u\in V^\tau$, we have $|Q^{\tau}_u| \ge s$ (\Cref{lemma:coloring-simplify-CSP-by-pinning}). 
Let $\varrho = \coRHO$.
Note that $s^k \ge \mathrm{e} q \varrho \Delta$ holds by \Cref{condition:coloring-one-special-main-condition}-(\ref{item:coloring-one-special-local-uniformity-constraint}).
By \Cref{lemma:coloring-local-uniformity}, it holds that $\frac{X_1}{\sum_{j\in [B+1]}X_j} \le \frac{1}{q}\tp{1 + \frac{4}{\varrho}} < \frac{1}{1 + \gamma}$.

Therefore we have that $\psi_v^\tau$ is well-defined and \Cref{condition:csp-conditional-measure-analysis}-(\ref{item:csp-transition-well-define}) holds.
\end{proof}

Now, we construct the \decompositionScheme-decomposition scheme that satisfies \Cref{condition:csp-conditional-measure-analysis}.
And after that, we provide some intuitions of it (\Cref{remark:coloring-one-special-decomposition-scheme}).

\begin{definition}[\decompositionScheme-decomposition scheme]
\label{definition:coloring-one-special-decomposition-scheme}
Let $\Phi = (V, [q]^V, C)$ be a $(k, \Delta)$-CSP formula that encodes a hypergraph $q$-coloring, $\*f$ be the state-compression scheme (\Cref{definition:coloring-one-special-state-compression}) and the complex external field $\lambda$ on the color bucket $1^\coF$. Let $c^*\in C$ with the largest index number and let $\Phi' = (V, [q]^V, C\setminus \set{c^*})$.

Suppose the projected measure $\psi$ induced by $\Phi'$ and $\*f$ has a nonzero partition function, i.e., $Z(\Phi', \lambda, \*f, 1^\coF)\neq 0$ and \Cref{condition:coloring-one-special-main-condition} holds.
Let $\varrho =  \coRHO$.
For any variable $v \in V$, and for any $x\in \Sigma_v\cup\set{\bot}$, we set
\[
b_v(x)\defeq \begin{cases}
    0 & x = 1^\coF\\
    \frac{|f_v^{-1}(x)|}{q + (\lambda - 1)}\tp{1 - \frac{1}{\varrho}} & x\neq 1^\coF \\
    \frac{\lambda}{q-1+\lambda} + \frac{q-1}{\varrho(q-1+\lambda)} & x=\bot \\
\end{cases}.
\]

\end{definition}
We now provide some intuitions about how we set $b_v$.
\begin{remark}
\label{remark:coloring-one-special-decomposition-scheme}
Consider one transition step of the complex systematic scan Glauber dynamics.
Let $v\in V$ be the variable that we are updating, and let $\tau \in \*\Sigma_{V\setminus \{v\}}$ be the current extendable partial assignment under the projection.
Let $\Phi^\tau = (V^\tau, \*Q^{\tau}, C^{\tau})$ be the simplified CSP formula.
Let $(X_i)_{i\in [B+1]}$ be the number of satisfying assignments of $\Phi^{\tau}$ with the value of $v$ is consistent with $i^\coF$.
By \Cref{lemma:coloring-one-special-transition-measure-expression}, we have
\[
\forall i\in [B+1], \quad \psi^{\tau}_v(i^\coF) = \frac{\=I[i^\coF = 1^\coF]\cdot \lambda X_i + \=I[i^\coF \neq 1^\coF]\cdot X_i }{\sum_{j\in [B+1]}\tp{\=I[j^\coF = 1^\coF]\cdot \lambda X_j + \=I[j^\coF \neq 1^\coF]\cdot X_j}}.
\]
For the purpose of illustration, we assume that $v$ is free.
By \Cref{lemma:coloring-local-uniformity}, we expect that $\frac{X_1}{\sum_{j\in[B+1]}X_i} = \frac{1}{q}$, and for any $i\in \set{2, 3, \dots, B+1}$, we expect $\frac{X_i}{\sum_{j\in[B+1]}X_i} = \frac{|f^{-1}_v(i^\coF)|}{q}$. Then the setting of $b_v$ follows by replacing these $X_i$' into the expression of $\psi_v^{\tau}$.
\end{remark}
Then, we show that \Cref{condition:coloring-one-special-main-condition} implies \Cref{condition:csp-conditional-measure-analysis}.
\begin{lemma}
\label{lemma:coloring-one-special-verify-decay-condition}
Let $\Phi = (V, [q]^V, C)$ be a $(k, \Delta)$-CSP formula that encodes a hypergraph $q$-coloring, $\*f$ be the state-compression scheme (\Cref{definition:coloring-one-special-state-compression}) and the complex external field $\lambda$ on the color bucket $1^\coF$. Let $c^*\in C$ with the largest index number and let $\Phi' = (V, [q]^V, C\setminus \set{c^*})$.

Suppose the projected measure $\psi$ induced by $\Phi'$ and $\*f$ has a nonzero partition function, i.e., $Z(\Phi', \lambda, \*f, 1^\coF)\neq 0$ and \Cref{condition:coloring-one-special-main-condition} holds, let \decompositionScheme be the decomposition scheme defined in \Cref{definition:coloring-one-special-decomposition-scheme}, then \Cref{condition:csp-conditional-measure-analysis}-(\ref{item:csp-decomposition-decay}) holds.
\end{lemma}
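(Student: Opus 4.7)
The plan is to bound $\conditionalDecayConstant$ and $\conditionalTriangleConstant$ directly from the explicit formulas of \Cref{definition:coloring-one-special-decomposition-scheme}, and then verify the numerical inequality \cref{item:csp-conditional-meaure-analysis-numerical} via \Cref{condition:coloring-one-special-main-condition}-(\ref{item:coloring-one-special-LLL-constraint}). First, since $|\lambda-\lambda_c|\le\gamma$ for some $\lambda_c\in[0,1]$ with $\gamma=\coGamma$, we have $|q-1+\lambda|\ge q-1-\gamma$, and the definitions give $b_v(1^\coF)=0$, $|b_v(i^\coF)|\le |f_v^{-1}(i^\coF)|(1-1/\varrho)/|q-1+\lambda|$ for $i\ne 1$, and $|b_v(\bot)|\le(|\lambda|+(q-1)/\varrho)/|q-1+\lambda|$; summing and using \Cref{fact:coloring-one-special-size}, $\sum_{x\in\Sigma_v}|b_v(x)|\le(q-1)(1-1/\varrho)/|q-1+\lambda|$.

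The main technical obstacle is controlling $M_u^\tau\defeq\sum_{x\in\Sigma_u}|\psi_u^{\tau,\bot}(x)|$ uniformly over extendable $\tau$---this is the complex analogue of local uniformity for the projected measure. Via \cref{eq:decomposition-adatptive-measure}, $|b_u(\bot)|\cdot M_u^\tau=\sum_x|\psi_u^\tau(x)-b_u(x)|$, so it suffices to bound the latter. \Cref{lemma:coloring-one-special-transition-measure-expression} expresses $\psi_u^\tau(x)$ as a rational function in $\lambda$ with real-integer coefficients $(X_i)_{i\in[B+1]}$ from the simplified formula $\Phi^\tau$, and \Cref{lemma:coloring-simplify-CSP-by-pinning} guarantees $|Q_w^\tau|\ge s$ for every remaining variable. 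Then \Cref{condition:coloring-one-special-main-condition}-(\ref{item:coloring-one-special-local-uniformity-constraint}) permits the real local uniformity lemma (\Cref{lemma:coloring-local-uniformity}) with $\varrho=\coRHO$, giving $X_i/\sum_j X_j=|f^{-1}(i^\coF)|/q\cdot(1\pm 4/\varrho)$ in the free case and $X_i/\sum_j X_j=|f^{-1}(i^\coF)|/(q-1)\cdot(1\pm 4/\varrho)$ in the blocked case (where $X_1=0$ matches $b_u(1^\coF)=0$ automatically). Substituting into the rational expression yields, in both cases,
\[
\sum_{x\in\Sigma_u}\abs{\psi_u^\tau(x)-b_u(x)}\le\frac{|\lambda|+(q-1)/\varrho}{|q-1+\lambda|}+\frac{4}{\varrho},
\]
where the discrepancy between $1/(q-1)$ and $1/(q-1+\lambda)$ in the blocked case is absorbed by the first term.

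Combining these estimates, $\conditionalTriangleConstant\le(q-1+|\lambda|)/|q-1+\lambda|+O(1/\varrho)\le 1+2\gamma/(q-1-\gamma)+4/\varrho\le 1+1/(3q\Delta^2k^5)$, so $\conditionalTriangleConstant^{4\Delta^2k^5}\le\exp(4/(3q))\le 2$. For $\conditionalDecayConstant$, a violating assignment of a hyperedge-constraint is a monochromatic colouring of its $k$ vertices. When the common colour is $1$, every factor $|b_u(1^\coF)|=0$ survives only through the adaptive term $|b_u(\bot)|M_u^\tau\le 3/q$, contributing $(3/q)^k$; when the common colour is $c\ne 1$, each factor is at most $(s+1)/|q-1+\lambda|+|b_u(\bot)|M_u^\tau\le 2s/q$ (using $s\ge 3$, which follows from \Cref{condition:coloring-one-special-main-condition}-(\ref{item:coloring-one-special-local-uniformity-constraint})), contributing $(2s/q)^k$. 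Summing over the $q$ possible common colours gives $\conditionalDecayConstant\le 2q(2s/q)^k=q(4s/q)^k/2^{k-1}$.

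Multiplying out, $4\mathrm{e}\Delta^2k^4\cdot\conditionalDecayConstant\cdot\conditionalTriangleConstant^{4\Delta^2k^5}\le 16\mathrm{e}\Delta^2k^4\cdot q(4s/q)^k/2^k$, which by \Cref{condition:coloring-one-special-main-condition}-(\ref{item:coloring-one-special-LLL-constraint}) (rearranged as $\Delta^2k^4q(4s/q)^k\le 1/(16\mathrm{e}^2)$) is at most $1/(\mathrm{e}\cdot 2^k)\le 1/4$. This verifies \cref{item:csp-conditional-meaure-analysis-numerical} and hence \Cref{condition:csp-conditional-measure-analysis}-(\ref{item:csp-decomposition-decay}).
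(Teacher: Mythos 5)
Your proposal is correct and takes essentially the same route as the paper: both reduce the verification to bounding the oblivious part $\sum_{x}\abs{b_u(x)}$ together with the adaptive part $\abs{b_u(\bot)}\sum_{x}\abs{\psi_u^{\tau,\bot}(x)}=\sum_{x}\abs{\psi_u^{\tau}(x)-b_u(x)}$ via the real local uniformity of \Cref{lemma:coloring-local-uniformity} applied to the simplified instance $\Phi^{\tau}$ (with the blocked/free dichotomy of \Cref{lemma:coloring-simplify-CSP-by-pinning}), exploit the telescoping of the deficit $1-\sum_x\abs{b_u(x)}$ against the special bucket's adaptive contribution to get $\conditionalTriangleConstant\le 1+O(\gamma/q)+O(1/\varrho)$, and then plug into \Cref{condition:coloring-one-special-main-condition}-(\ref{item:coloring-one-special-LLL-constraint}). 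One caveat: the displayed inequality you assert after ``substituting into the rational expression'' is precisely where the paper's entire computation lives — it must be carried out separately in the blocked and free cases with the denominators $1+(\lambda-1)X_1$ and $1+(\lambda-1)/q$ tracked exactly so the telescoping survives, and your remainder $4/\varrho$ is optimistic (the paper's bookkeeping yields $68/\varrho$ in the free case), though harmless since with $\varrho=\coRHO$ any $O(1)/\varrho$ remainder suffices; likewise your refinement of $\conditionalDecayConstant$ using $b_u(1^{\coF})=0$ for the colour-$1$ monochromatic assignment is correct but unnecessary (the paper bounds all $q$ monochromatic assignments uniformly and gets $\conditionalDecayConstant\le \mathrm{e}q(4s/q)^k$, which already closes the argument).
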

Before proving the above lemma, we first use it to prove \Cref{lemma:coloring-one-special-verify-sufficient-condition-zero-freeness-condition}.
\begin{proof}[Proof of \Cref{lemma:coloring-one-special-verify-sufficient-condition-zero-freeness-condition}]
Note that under the constraint-wise self-reduction, $\Phi_1, \Phi_2, \dots, \Phi_{m}$ are also $(k, \Delta)$-CSP formulas and encode hypergraph $q$-colorings.
Then lemma follows directly from \Cref{lemma:coloring-one-special-well-definedness} and \Cref{lemma:coloring-one-special-verify-decay-condition}.
\end{proof}

Finally, we prove \Cref{lemma:coloring-one-special-verify-decay-condition}.
\begin{proof}[Proof of \Cref{lemma:coloring-one-special-verify-decay-condition}]
Recall the definition of $\conditionalDecayConstant$ in \cref{item:csp-conditional-meaure-analysis-decay} of \Cref{condition:csp-conditional-measure-analysis}, and the definition of $\conditionalTriangleConstant$ in \cref{item:csp-conditional-meaure-analysis-triangle} of \Cref{condition:csp-conditional-measure-analysis}.
We claim that $\conditionalDecayConstant\le \mathrm{e}q\tp{\frac{4s}{q}}^k$, and $\conditionalTriangleConstant \le 1 + \frac{1}{4\Delta^2 k^5}$. Then by \Cref{condition:coloring-one-special-main-condition}-(\ref{item:coloring-one-special-LLL-constraint}), the \cref{item:csp-conditional-meaure-analysis-numerical} in \Cref{condition:csp-conditional-measure-analysis} holds.

Next, we bound $\conditionalDecayConstant$ and $\conditionalTriangleConstant$. 
We upper bound these quantities by considering two cases which we define sooner. For any extendable partial assignment $\tau\in \+F_{V\setminus \set{v}}$ under the projected alphabet, 
let $\Phi^\tau = (V^\tau, \*Q^{\tau}, C^{\tau})$ be the simplified CSP formula.

We now consider the following two cases: (1) $v$ is blocked
(2) $v$ is free.

\begin{itemize}
    \item {\textbf{$v$ is blocked:}
We handle the first case in which the value of $v$ can not be updated as $1^\coF$. 

For any $i\in\set{2, 3, \dots, B+1}$, we abuse the notation of $X_i$'s, we define $X_i$ as the probability that the value of $v$ is in the color bucket $i^\coF$ under the uniform distribution over all satisfying assignment of $\Phi^{\tau}$.
Let $\*X = (X_i)_{i\in \set{2, 3, \dots, B+1}}$.
By \Cref{lemma:coloring-simplify-CSP-by-pinning} and \Cref{lemma:coloring-local-uniformity} where we set $\varrho = \coRHO$, we have that for any $i\in \set{2, 3, \dots, B+1}$,
\begin{equation}
\label{eq:coloring-one-special-local-uniformity-Xi}
\frac{|f^{-1}_v(i^\coF)|}{q-1}\tp{1-\frac{1}{\varrho}}\le X_i \le \frac{|f^{-1}_v(i^\coF)|}{q-1}\tp{1 + \frac{4}{\varrho}}.
\end{equation}
For any $v\in V$, let $J$ be defined as follows:
\[
J\defeq \max_{\*X,\lambda} \sum_{i\in [B+1]\setminus\set{1}} \abs{X_i - \frac{|f_v^{-1}(i^{\coF})|/q}{1 + (\lambda - 1)/q}\tp{1 - \frac{1}{\varrho}}}.
\]
By \Cref{definition:coloring-one-special-decomposition-scheme} and the triangle inequality, it holds that
\begin{equation}
\label{eq:coloring-one-special-blocked-J-formula}
|b_v(\bot)|\cdot \max_{\tau\in \+F_{V\setminus \{v\}}} \sum_{x\in \Sigma_v} |\psiAdaptive{v}{\tau}(x)|\le J.
\end{equation}

Now, we bound $J$, by the above inequality and the triangle inequality, we have
\begin{align*}
J\le&~ \max_{\lambda}\frac{1}{|1+(\lambda-1)/q|}\max_{\*X, \lambda} \sum_{i\in [B+1]\setminus \set{1}} \abs{X_i - \frac{|f^{-1}_v(i^\coF)|}{q} + \frac{(\lambda-1)X_i}{q} + \frac{|f^{-1}_v(i^\coF)|}{\varrho q}} \\
\le&~ \max_{\lambda}\frac{1}{|1+(\lambda-1)/q|}\tp{\max_{\*X, \lambda} \sum_{i\in [B+1]\setminus \set{1}} \abs{\frac{(q+\lambda-1)X_i}{q} - \frac{|f^{-1}_v(i^\coF)|}{q}} + \frac{1}{\varrho}},\\
\end{align*}
where the second inequality is due to $\sum_{i\in [B+1]\setminus \set{1}} \frac{|f^{-1}_v(i^\coF)|}{q} \le 1$. 
Combined with the fact that $|\lambda-\lambda_c|\le \gamma$ and $\lambda_c\in [0,1]$, we have that
\begin{equation}
\label{eq:coloring-one-special-blocked-J}
J\le~ \frac{1}{1 + (\lambda_c - \gamma - 1)/q}\tp{\max_{\*X, \lambda} \sum_{i\in [B+1]\setminus \set{1}} \abs{\frac{(q+\lambda-1)X_i}{q} - \frac{|f^{-1}_v(i^\coF)|}{q}} + \frac{1}{\varrho}}.
\end{equation}
Next, we upper bound the right-hand side. We first consider the summation 
\[\max_{\*X, \lambda} \sum_{i\in [B+1]\setminus \set{1}} \abs{\frac{(q+\lambda-1)X_i}{q} - \frac{|f^{-1}_v(i^\coF)|}{q}}.\]
By $|\lambda - \lambda_c|\le \gamma$ and \cref{eq:coloring-one-special-local-uniformity-Xi}, it holds that 
\begin{align*}
\abs{\frac{(q+\lambda-1)X_i}{q} - \frac{|f^{-1}_v(i^\coF)|}{q}} \le&~ \max\left\{\abs{\frac{q+\lambda_c + \gamma - 1}{q}\frac{|f_v^{-1}(i^\coF)|}{q - 1} \tp{1 + \frac{4}{\varrho}} - \frac{|f_v^{-1}(i^\coF)|}{q}},\right. \\
&~\quad \left.\abs{\frac{q+\lambda_c - \gamma - 1}{q}\frac{|f_v^{-1}(i^\coF)|}{q - 1} \tp{1 - \frac{1}{\varrho}} - \frac{|f_v^{-1}(i^\coF)|}{q}}\right\}
\end{align*}
We upper bound these two terms respectively. For the first term, by the triangle inequality, we have that 
\begin{align*}
\abs{\frac{q+\lambda_c + \gamma - 1}{q}\frac{|f_v^{-1}(i^\coF)|}{q-1} \tp{1 + \frac{4}{\varrho}} - \frac{|f_v^{-1}(i^\coF)|}{q}} \le \tp{\abs{ \frac{\lambda_c + \gamma}{q - 1} } + \frac{8}{\varrho} }\cdot \frac{|f_v^{-1}(i^\coF)|}{q}.
\end{align*}
We bound the second term similarly,
\begin{align*}
\abs{\frac{q+\lambda_c - \gamma - 1}{q}\frac{|f_v^{-1}(i^\coF)|}{q-1} \tp{1 - \frac{1}{\varrho}} - \frac{|f_v^{-1}(i^\coF)|}{q}}\le \tp{\abs{ \frac{\lambda_c - \gamma}{q} } + \frac{8}{\varrho} }\cdot \frac{|f_v^{-1}(i^\coF)|}{q}.
\end{align*}
Combined with \cref{eq:coloring-one-special-blocked-J-formula}, \cref{eq:coloring-one-special-blocked-J} and $\lambda_c\in [0,1]$, we have that 
\begin{equation}
\label{eq:coloring-one-special-blocked-J-final}
|b_v(\bot)|\cdot \max_{\tau\in \+F_{V\setminus \{v\}}} \sum_{x\in \Sigma_v} |\psiAdaptive{v}{\tau}(x)|\le \frac{(\lambda_c + \gamma)/q + 9/\varrho}{1 + (\lambda_c - \gamma - 1)/q} = \frac{ \lambda_c + \gamma + 9q/\varrho}{q + \lambda_c - \gamma - 1}.
\end{equation}

Next for any $i\in [B+1]$, we upper bound $|b_v(i^\coF)|$. By \Cref{definition:coloring-one-special-decomposition-scheme}, it holds that 
\begin{equation}
\label{eq:coloring-one-special-b-single}
\abs{b_v(i^\coF)}\le \abs{\frac{|f_v^{-1}(i^\coF)|}{q + \lambda - 1}\tp{1 - \frac{1}{\varrho}}} \le \frac{|f_v^{-1}(i^\coF)|}{q + \lambda_c - \gamma - 1}\tp{1 - \frac{1}{\varrho}} \le \frac{|f_v^{-1}(i^\coF)|}{q + \lambda_c - \gamma - 1}.
\end{equation}
And 
\begin{equation}
\label{eq:coloring-one-special-b-sum}
\sum_{i\in [B+1]\setminus \set{1}}\abs{b_v(i^\coF)} \le \frac{q-1}{q + \lambda_c - \gamma - 1}\tp{1 - \frac{1}{\varrho}} \le \frac{q-1}{q + \lambda_c - \gamma - 1}.    
\end{equation}

Next, we bound $\conditionalDecayConstant$.
Recall the definition of $\conditionalDecayConstant$ in \cref{item:csp-conditional-meaure-analysis-decay} of \Cref{condition:csp-conditional-measure-analysis}. 
Combined with \cref{eq:coloring-one-special-b-single}, \Cref{fact:coloring-one-special-size} and \cref{eq:coloring-one-special-blocked-J-final}, it holds that 
\begin{equation*}
\label{eq:coloring-one-special-blocked-conditional-decay}
\begin{aligned}
\conditionalDecayConstant \le&~ q\cdot \tp{ \frac{ s + 1 }{ q + \lambda_c - \gamma - 1 } + \frac{\lambda_c + \gamma + 9q/\varrho}{q + \lambda_c - \gamma - 1} }^k \\
\le&~ q\cdot \tp{ \frac{ s + 1 + \lambda_c + \gamma + 9q/\varrho }{ q + \lambda_c - \gamma - 1 } }^k \le q\cdot \tp{\frac{s + 4}{q - 2}}^k.
\end{aligned}
\end{equation*}
When $s\ge 4$ and $q\ge 2$, it holds that $\conditionalDecayConstant \le q(4s/q)^k$.

Then, we bound $\conditionalTriangleConstant$.
Recall the definition of $\conditionalTriangleConstant$ in \cref{item:csp-conditional-meaure-analysis-triangle} of \Cref{condition:csp-conditional-measure-analysis}. 
Combining \cref{eq:coloring-one-special-blocked-J-final} and \cref{eq:coloring-one-special-b-sum}, it holds that 
\begin{equation*}
\label{eq:coloring-one-special-blocked-triangle}
\begin{aligned}
\conditionalTriangleConstant \le&~ \frac{q-1}{q + \lambda_c - \gamma - 1} + \frac{ \lambda_c + \gamma + 9q/\varrho}{q + \lambda_c - \gamma - 1} \\
\le&~ \frac{q - 1 + \lambda_c + \gamma + 9q/\varrho}{q + \lambda_c - \gamma - 1} \le 1 + \frac{9q}{\varrho}.
\end{aligned}
\end{equation*}
Combined with $\varrho = \coRHO$, it holds that $\conditionalTriangleConstant \le 1 + \frac{1}{4\Delta^2 k^5}$.
}

\item{\textbf{$v$ is free:}
Next, we handle the second case.
Again, we abuse the notation of $X_i$'s, we define $X_i$ be the probability that the value of $v$ belongs to the color bucket $i^{\coF}$ under the uniform distribution over all satisfying assignments of $\Phi^{\tau}$.
Let $\*X = (X_i)_{i\in [B+1]}$.
By \Cref{fact:coloring-one-special-size} and \Cref{lemma:coloring-local-uniformity} where we set $\varrho = \coRHO$, we have that for any $i\in [B+1]$,
\begin{equation}
\label{eq:coloring-one-special-local-uniformity-Xi-free}
\frac{|f^{-1}_v(i^\coF)|}{q}\tp{1-\frac{1}{\varrho}}\le X_i \le \frac{|f^{-1}_v(i^\coF)|}{q}\tp{1 + \frac{4}{\varrho}}.
\end{equation}
For any $v\in V$, let $J$ be defined as follows:
\[
J\defeq \max_{\*X, \lambda}\tp{ \abs{\frac{\lambda X_1}{\lambda X_1 + \sum_{j\in [B+1]\setminus \set{1}} X_j }} + \sum_{i\in [B+1]\setminus\set{1}} \abs{\frac{X_i}{\lambda X_1 + \sum_{j\in [B+1]\setminus \set{1}} X_j } - \frac{|f_v^{-1}(i^{\coF})|/q}{1 + (\lambda - 1)/q}\tp{1 - \frac{1}{\varrho}}}}.
\]
By \Cref{lemma:coloring-one-special-transition-measure-expression}, \Cref{definition:decomposition-scheme} and 
\Cref{definition:coloring-one-special-decomposition-scheme}, it holds that 
\begin{equation}
\label{eq:coloring-one-special-free-J-formula}
|b_v(\bot)|\cdot \max_{\tau\in \+F_{V\setminus \{v\}}} \sum_{x\in \Sigma_v} |\psiAdaptive{v}{\tau}(x)|\le J.
\end{equation}
As we do before, we give an upper bound of $J$. We first consider the first part $\max_{\*X, \lambda} \abs{\frac{\lambda X_1}{\lambda X_1 + \sum_{j\in [B+1]\setminus \set{1}} X_j }}$. By the fact that $\sum_{i\in[B+1]}X_i = 1$. It equals to $\max_{\*X, \lambda} \abs{\frac{\lambda X_1}{1 + (\lambda - 1) X_1 }}$. 
Recall that $|\lambda - \lambda_c|\le \gamma$. Combined with \cref{eq:coloring-one-special-local-uniformity-Xi-free}, it holds that 
\begin{equation}
\label{eq:coloring-one-special-free-X1}
\max_{\*X, \lambda} \abs{\frac{\lambda X_1}{\lambda X_1 + \sum_{j\in [B+1]\setminus \set{1}} X_j }}\le \max_{\*X, \lambda} \abs{\frac{\lambda X_1}{1 + (\lambda - 1) X_1 }} \le \frac{(\lambda_c + \gamma)\cdot 1/q \cdot (1 + 4/\varrho)}{1 + (\lambda_c - \gamma  - 1) \cdot 1/q \cdot (1 + 4/\varrho) }.
\end{equation}
Next, we bound the second term of $J$:
\begin{equation}
\label{eq:coloring-one-special-free-J-part2-step1}
\begin{aligned}
&~\max_{\*X, \lambda} \sum_{i\in [B+1]\setminus \set{1}}\abs{\frac{X_i}{1 + (\lambda-1) X_1 } - \frac{|f_v^{-1}(i^{\coF})|/q}{1 + (\lambda - 1)/q}\tp{1 - \frac{1}{\varrho}}}\\
=&~ \max_{\*X, \lambda} \sum_{i\in [B+1]\setminus \set{1}}\abs{\frac{X_i \cdot ( 1 + (\lambda - 1)/q ) - ( 1 + (\lambda - 1)X_1 ) \cdot |f_v^{-1}(i^\coF)|/q \cdot (1 - 1/\varrho) }{ ( 1 + (\lambda - 1) X_1 ) ( 1 + (\lambda - 1)/q ) }} \\
\le&~ \frac{ \max_{\*X, \lambda} \sum_{i\in [B+1]\setminus \set{1}} \abs{ X_i \cdot ( 1 + (\lambda - 1)/q ) - ( 1 + (\lambda - 1)X_1 ) \cdot |f_v^{-1}(i^\coF)|/q \cdot (1 - 1/\varrho) } }{ \min_{\*X, \lambda} |(1 + (\lambda - 1) X_1)(1+(\lambda - 1)/q)|},
\end{aligned}
\end{equation}
where the last inequality is due to $|\lambda - \lambda_c| \le \gamma$, $\lambda_c \in [0, 1]$ and \cref{eq:coloring-one-special-local-uniformity-Xi-free}.
Then, we consider the denominator of \cref{eq:coloring-one-special-free-J-part2-step1}.
Recall that $\gamma= \coGamma$ (\Cref{condition:coloring-one-special-main-condition}-(\ref{item:coloring-one-special-lambda-constraints})) and $\varrho = \coRHO$. We now give an upper bound of the denominator:
\begin{equation}
\label{eq:coloring-one-special-free-J-part2-denominator}
\begin{aligned}
\frac{ 1 }{ \min_{\*X, \lambda} |(1 + (\lambda - 1) X_1)(1+(\lambda - 1)/q)|} \le \frac{ 1 }{ ( 1 - 4/q ) ( 1 - 2/q ) } \le 4,
\end{aligned}
\end{equation}
where the last inequality holds when $q \ge 8$.
Next, we consider the numerator of \cref{eq:coloring-one-special-free-J-part2-step1}. For each element of the summation, by the triangle's inequality, it holds that 
\begin{equation}
\label{eq:coloring-one-special-free-J-part2-numerator}
\begin{aligned}
&~\abs{ X_i \cdot \tp{ 1 + \frac{\lambda - 1}{q} } - ( 1 + (\lambda - 1)X_1 ) \cdot \frac{|f_v^{-1}(i^\coF)|}{q} \cdot \tp{ 1 - \frac{1}{\varrho} } } \\
\le&~ \abs{ \tp{ X_i - \frac{|f^{-1}_v(i^\coF)|}{q}\cdot \tp{1 - \frac{1}{\varrho}} } } + \frac{ \abs{\lambda - 1} }{ q } \abs{X_i - X_1 \cdot |f_v^{-1}(i^\coF)|\cdot \tp{1-\frac{1}{\varrho}} }.
\end{aligned}
\end{equation}
Combined with \cref{eq:coloring-one-special-local-uniformity-Xi-free}, we bound the first part as $|f_v^{-1}(i^\coF)|/q\cdot 5/\varrho$.
Combined with $|\lambda - \lambda_c|\le \gamma$, we bound the second part as follows:
\begin{equation}
\label{eq:coloring-one-special-free-J-part2-numerator-part2}
\frac{ \abs{\lambda - 1} }{ q } \abs{X_i - X_1 \cdot |f_v^{-1}(i^\coF)|\cdot \tp{1-\frac{1}{\varrho}} }\le \frac{ 1  + \gamma }{ q } \cdot \frac{|f_v^{-1}(i^\coF)|}{q} \cdot \frac{6}{\varrho}.
\end{equation}
Combining \cref{eq:coloring-one-special-free-J-part2-numerator} and \cref{eq:coloring-one-special-free-J-part2-numerator-part2}, we have that 
\begin{equation}
\label{eq:coloring-one-special-free-J-numerator-final}
\begin{aligned}
&~ \max_{\*X, \lambda} \sum_{i\in [B+1]\setminus \set{1}} \abs{ X_i \cdot \tp{ 1 + \frac{\lambda - 1}{q} } - ( 1 + (\lambda - 1)X_1 ) \cdot \frac{|f_v^{-1}(i^\coF)|}{q} \cdot \tp{ 1 - \frac{1}{\varrho} } } \\
\le&~ \sum_{i\in [B+1]\setminus \set{1}} \frac{|f_v^{-1}(i^\coF)|}{q\cdot \varrho} \cdot \tp{ 5 + \frac{6(1 + \gamma)}{q} } \le \frac{17}{\varrho}.
\end{aligned}
\end{equation}
Combining \cref{eq:coloring-one-special-free-J-formula}, \cref{eq:coloring-one-special-free-X1}, \cref{eq:coloring-one-special-free-J-part2-numerator} and \cref{eq:coloring-one-special-free-J-numerator-final}, we have that 
\begin{equation}
\label{eq:coloring-one-special-free-bot}
|b_v(\bot)|\cdot \max_{\tau\in \+F_{V\setminus \{v\}}} \sum_{x\in \Sigma_v} |\psiAdaptive{v}{\tau}(x)| \le \frac{(\lambda_c + \gamma) \cdot (1 + 4/\varrho)}{q + (\lambda_c - \gamma  - 1) \cdot (1 + 4/\varrho) } + \frac{68}{\varrho}.
\end{equation}

Next for any $i\in [B+1]$, recall the upper bound of $|b_v(i^\coF)|$ (\cref{eq:coloring-one-special-b-single}) and $\sum_{i\in [B+1]}|b_v(i^\coF)|$ (\cref{eq:coloring-one-special-b-sum}).
Then, we upper bound $\conditionalDecayConstant$ (recall its definition in \cref{item:csp-conditional-meaure-analysis-decay} of \Cref{condition:csp-conditional-measure-analysis}). 
Combining \cref{eq:coloring-one-special-free-bot}, \cref{eq:coloring-one-special-b-single}, \Cref{fact:coloring-one-special-size} and $\lambda_c \in [0, 1]$, it holds that 
\begin{equation}
\label{eq:coloring-one-special-free-conditional-decay}
\begin{aligned}
\conditionalDecayConstant \le&~ q\cdot \tp{\frac{s + 1}{q + \lambda_c - \gamma - 1}\tp{1 - \frac{1}{\varrho}} + \frac{(\lambda_c + \gamma) \cdot (1 + 4/\varrho)}{q + (\lambda_c - \gamma  - 1) \cdot (1 + 4/\varrho) } + \frac{68}{\varrho}}^k \\
\le&~ q\cdot \tp{ \frac{ s + 1 + (\lambda_c + \gamma) \cdot ( 1 + 4/\varrho ) }{ q + ( \lambda_c - \gamma  - 1 ) \cdot ( 1 + 4/\varrho ) }
 + \frac{68}{\varrho} }^k \\
\le&~ q\cdot \tp{ \frac{ s + 1 + (1 + \gamma) \cdot ( 1 + 4/\varrho ) }{ q - (  \gamma  + 1 ) \cdot ( 1 + 4/\varrho ) } 
 + \frac{68}{\varrho} }^k \le q\cdot \tp{\frac{s + 4}{q - 2} + \frac{68}{\varrho}}^k.
\end{aligned}
\end{equation}
When $s\ge 4$ and $q \ge 4$, it holds that $\conditionalDecayConstant \le q\cdot \tp{\frac{4s}{q} + \frac{68}{\varrho}}^k$. Combined with $\varrho = \coRHO$, we have that $\conditionalDecayConstant \le \mathrm{e}q (4s/q)^k$.

Then, we upper bound $\conditionalTriangleConstant$ (recall its definition in \cref{item:csp-conditional-meaure-analysis-triangle} of \Cref{condition:csp-conditional-measure-analysis}).
Combined with \cref{eq:coloring-one-special-free-bot}, \cref{eq:coloring-one-special-b-sum}, $\lambda_c \in [0,1]$, $\gamma = \coGamma$ and $\varrho = \coRHO$, it holds that 
\begin{equation}
\begin{aligned}
\conditionalTriangleConstant \le&~ \frac{q-1}{q + \lambda_c - \gamma - 1}\tp{1 - \frac{1}{\varrho}} + \frac{(\lambda_c + \gamma) \cdot (1 + 4/\varrho)}{q + (\lambda_c - \gamma  - 1) \cdot (1 + 4/\varrho) } + \frac{68}{\varrho} \\
\le&~ \frac{ ( q - 1 ) \cdot ( 1 - 1/\varrho ) + (\lambda_c + \gamma) \cdot (1 + 4/\varrho) }{ q + ( \lambda_c - \gamma - 1 ) \cdot ( 1 + 4/\varrho )} + \frac{68}{\varrho} \\
\le&~ \frac{ q - 1  + (\lambda_c + \gamma) \cdot (1 + 4/\varrho) }{ q + ( \lambda_c - \gamma - 1 ) \cdot ( 1 + 4/\varrho )} + \frac{68}{\varrho} \\
=&~ 1 + \frac{ 2\gamma (1 + 4/\varrho) + 4/\varrho }{q + ( \lambda_c - \gamma - 1 ) \cdot ( 1 + 4/\varrho ) } + \frac{68}{\varrho} \le 1 + 2 \gamma + \frac{76}{\varrho} \le 1 + \frac{1}{4\Delta^2 k^5}.
\end{aligned}
\end{equation}
}
\end{itemize}

In both cases, we have $\conditionalDecayConstant\le \mathrm{e}q\tp{\frac{4s}{q}}^k$ and $\conditionalTriangleConstant \le 1 + \frac{1}{4\Delta^2 k^5}$. Combined with \Cref{condition:coloring-one-special-main-condition}-(\ref{item:coloring-one-special-LLL-constraint}), the \cref{item:csp-conditional-meaure-analysis-numerical} in \Cref{condition:csp-conditional-measure-analysis} holds.
\end{proof}

\subsection{\texorpdfstring{$(k, \Delta)$}{(k, d)}-CNF formulas}
In this subsection, we consider the zero-freeness of $(k, \Delta)$-CNF formulas. 
A $(k, \Delta)$-CNF is a formula $\Phi = (V, \set{0,1}^V, C)$ where each clause has exactly $k$ variables and each variable belongs to at most $\Delta$ clauses.
We use $0$ to denote ``False'' and $1$ to denote ``True''.
Let $\Omega_{\Phi} \subseteq \{0, 1\}^V$ be the set of solutions to the formula $\Phi$.
For a set of variables $\+M\subseteq V$, we introduce the complex external field on these variables $\+M$. 
We define the partition function formally.

\begin{definition}[Complex extension for CNF formulas]
\label{definition:k-cnf-complex-extensions}
Given a CNF formula $\Phi = (V, \set{0,1}^V, C)$ and a subset of variables $\+M\subseteq V$.
We denote the \emph{partition function} 
\[
Z^{\-{CNF}}_{\Phi, \+M}(\lambda) \defeq \sum_{\sigma \in \Omega_{\Phi}} \lambda^{\abs{\+M \cap \sigma^{-1}(1)}}, \qquad \hbox{where }\sigma^{-1}(1)\defeq\set{v\in V\mid \sigma(v)=1}. 
\]
When $Z^{\-{CNF}}_{\Phi, \+M}(\lambda)\neq 0$, we naturally define a \emph{complex measure} $\mu = \mu(\Phi, \lambda, \+M)$, i.e.,$\forall \sigma\in\Omega_{\Phi}, \mu(\sigma)\defeq \frac{\lambda^{|\+M \cap \sigma^{-1}(1)|}}{Z^{\-{CNF}}_{\Phi, \+M}(\lambda)}$.
And we denote its \emph{support set} as $\supp(\mu) \defeq \{\sigma\in \Omega_{\Phi}\mid \mu(\sigma)\neq 0\}$.
\end{definition}

When the CNF formula $\Phi$, the subset $\+M$, and the complex external field $\lambda$ are clear from the context, we may omit the subscript $\Phi$, $\+M$, and $\lambda$.
Let $\log$ denote $\log_2$. In this subsection, we prove the following theorem by verifying \Cref{condition:csp-conditional-measure-analysis} and applying \Cref{theorem:sufficient-condition-zero-freeness}.
\begin{theorem}[Zero-freeness for $(k, \Delta)$-CNF]
\label{theorem:cnf-zero-freeness-intro}
Let $k$ and $\Delta \ge 2$ be two integers. Let $\gamma = \cnfGamma$.
Suppose 
\begin{align}
    \kcnfCondition.
    \label{eq:CNF-cond}
\end{align}

For any $(k, \Delta)$-CNF formula $\Phi$,
there exists a set of variables $\+M\subseteq V$ with $|V|/6\le |\+M|\le 4|V|/5$ such that $Z^{\-{CNF}}_{\Phi, \+M}(\lambda) \neq 0$ for any  $|\lambda - 1|\le \gamma$.

Furthermore, under the same condition on $k$ and $\Delta$ as above, let $\lambda_c \ge 0$ satisfy
$\left( \frac{\max\{1,\lambda_c+\gamma\}}{1+\lambda_c-\gamma} \right)^{0.17k}\le \frac{1}{16\mathrm{e}^3 \Delta^2 k^4}$.
Then for any $\lambda$ with $|\lambda-\lambda_c| \le \gamma$,
it holds that $Z^{\mathrm{CNF}}_{\Phi,\mathcal{M}}(\lambda) \neq 0$.

\end{theorem}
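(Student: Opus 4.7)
My plan is to invoke \Cref{theorem:sufficient-condition-zero-freeness} under the classical ``mark/unmark'' state-compression scheme. For a set $\+M \subseteq V$ of \emph{marked} variables (to be constructed) I set $\Sigma_v = \{0^\coF, 1^\coF\}$ with $f_v$ the identity for every $v \in \+M$, and $\Sigma_v = \{\coS\}$ with $f_v(0) = f_v(1) = \coS$ (a one-symbol collapse) for every $v \notin \+M$. The special projected symbol carrying the complex external field is $1^\coF$, so the partition function of \Cref{definition:complex-extensions-csp} matches $Z^{\-{CNF}}_{\Phi,\+M}(\lambda)$ of \Cref{definition:k-cnf-complex-extensions}. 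The base condition \cref{eq:csp-Z_0-is-not-0} reduces to $(1+\lambda)^{|\+M|}\cdot 2^{|V\setminus\+M|}\neq 0$, which is clear whenever $|\lambda-\lambda_c| \le \cnfGamma$ and $\lambda_c \ge 0$.

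\textbf{Marking construction.} I will apply \Cref{theorem:LLL} and \Cref{theorem:moser-tardos} with each variable marked independently with probability $\tfrac12$, and define two bad events per clause $c$: ``$c$ has fewer than $\markedK \defeq \lceil 0.17k\rceil$ marked variables'' or ``$c$ has fewer than $\unmarkedK \defeq \lceil 0.17k\rceil$ unmarked variables''. A Chernoff estimate bounds each bad event by $2\cdot 2^{-\Omega(k)}$, and the dependency graph has degree $\le k\Delta$, so \cref{eq:CNF-cond} implies the standard LLL premise and produces a marking with the required lower bounds. A routine deviation bound on Moser–Tardos also gives $|\+M|/|V|\in[1/6,4/5]$ as claimed. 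Under the induced projection, every surviving clause after any admissible pinning retains at least $\unmarkedK$ unmarked variables — this is the combinatorial driver of local uniformity.

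\textbf{Decomposition and verification of \Cref{condition:csp-conditional-measure-analysis}.} Mirroring \Cref{definition:coloring-one-special-decomposition-scheme}, I will define for a marked $v$
\[
b_v(1^\coF) \defeq 0,\qquad b_v(0^\coF) \defeq \frac{1}{1+\lambda_c}\tp{1-\frac{1}{\cnfS}},\qquad b_v(\bot) \defeq 1-b_v(0^\coF),
\]
and $b_v(\coS)\defeq 1,\ b_v(\bot)\defeq 0$ for unmarked $v$. Well-definedness of $\psi^\tau_v$ will be proved exactly as in \Cref{lemma:coloring-one-special-well-definedness}: apply the lopsided LLL (\Cref{theorem:HSS}) to the $k$-CNF obtained by pinning unmarked variables on $\coS$ and marked variables according to $\tau$, in which each clause has $\ge\unmarkedK$ unmarked variables, yielding $X_0/(X_0+X_1) = \tfrac12 + O(1/\cnfS)$ for the real projected marginal. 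Substituting into an analogue of \Cref{lemma:coloring-one-special-transition-measure-expression} and tracking $|\lambda-\lambda_c|\le\cnfGamma$, a computation parallel to the proof of \Cref{lemma:coloring-one-special-verify-decay-condition} yields
\[
\conditionalTriangleConstant \le 1+\frac{C_1}{\Delta^2 k^5},\qquad \conditionalDecayConstant \le 2\cdot\tp{\frac{\max\{1,\lambda_c+\cnfGamma\}}{1+\lambda_c-\cnfGamma}\tp{1+\frac{C_2}{\cnfS}}}^{\markedK},
\]
because violating a clause forces its $\ge\markedK$ marked variables to take prescribed values while the contribution of unmarked variables is identically $1$.

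\textbf{The hard step.} The decisive inequality is then $4\mathrm{e}\Delta^2 k^4\cdot\conditionalDecayConstant\cdot\conditionalTriangleConstant^{4\Delta^2 k^5}\le \tfrac14$, which after absorbing the $(1+C/(\Delta^2 k^5))^{4\Delta^2 k^5}$ factor into an absolute constant reduces to
\[
16\mathrm{e}^3\Delta^2 k^4\tp{\frac{\max\{1,\lambda_c+\cnfGamma\}}{1+\lambda_c-\cnfGamma}}^{0.17k}\le 1.
\]
For the first half of the theorem ($\lambda_c=1$) the ratio is $1+O(\cnfGamma)$ and the inequality becomes $16\mathrm{e}^3\Delta^2k^4\cdot 2^{-0.17k(1-o(1))}\le 1$, which is precisely the hypothesis \cref{eq:CNF-cond}; for the second half it is the explicit assumption on $\lambda_c$. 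The main obstacle I anticipate is the careful bookkeeping of the local-uniformity constant: the projected pinning can leave clauses on the marked side which are arbitrarily short, and one must ensure that the $\unmarkedK$ unmarked variables of the original clause are enough to drive the lopsided LLL estimate down to the $1/\cnfS$ accuracy that the $\conditionalTriangleConstant^{4\Delta^2 k^5}$ amplification demands. Once this is in place, \Cref{theorem:sufficient-condition-zero-freeness} yields $Z^{\-{CNF}}_{\Phi,\+M}(\lambda)\neq 0$ in both disks, completing the proof.
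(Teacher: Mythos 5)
Your overall architecture (mark/unmark projection, Moser--Tardos for the marking, verify \Cref{condition:csp-conditional-measure-analysis}, invoke \Cref{theorem:sufficient-condition-zero-freeness}) is the same as the paper's, but your decomposition scheme has a fatal flaw. You set $b_v(1^\coF)=0$ for marked $v$, apparently by analogy with the coloring case. In the coloring case this is affordable because the special bucket has marginal $\approx 1/q$, so zeroing it out pushes only an $O(1/q)$ mass into the adaptive part. Here the projected marginal of $1^\coF$ is $\psi^\tau_v(1^\coF)=\frac{\lambda X_1}{X_0+\lambda X_1}\approx\frac{\lambda_c}{1+\lambda_c}$, which is about $\tfrac12$ when $\lambda_c=1$. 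Since $|b_v(\bot)|\sum_x|\psi_v^{\tau,\bot}(x)|=\sum_x|\psi_v^\tau(x)-b_v(x)|\ge|\psi_v^\tau(1^\coF)|\approx\frac{\lambda_c}{1+\lambda_c}$, the per-variable factor in \cref{item:csp-conditional-meaure-analysis-decay} for a marked variable whose violating value is $0$ becomes
\[
|b_v(0^\coF)|+|b_v(\bot)|\max_\tau\sum_x|\psi_v^{\tau,\bot}(x)|\;\approx\;\frac{1}{1+\lambda_c}+\frac{\lambda_c}{1+\lambda_c}+O\!\tp{\gamma+\tfrac{1}{\cnfS}}\;=\;1+o(1),
\]
so for a clause all of whose marked literals are positive you get $\conditionalDecayConstant\approx 1$ rather than the claimed $\approx\tp{\frac{\max\{1,\lambda_c+\gamma\}}{1+\lambda_c-\gamma}}^{\markedK}$, and the key inequality $4\mathrm{e}\Delta^2k^4\,\conditionalDecayConstant\,\conditionalTriangleConstant^{4\Delta^2k^5}\le\tfrac14$ fails. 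The paper's \Cref{definition:k-CNF-decomposition-scheme} avoids this by taking $b_v(1^\coF)\approx\frac{\lambda(1-\frac12\exp(1/s))}{1+\frac12\exp(1/s)(\lambda-1)}$ nonzero, matched to the local-uniformity prediction for \emph{both} symbols, so that the adaptive mass is genuinely $O(\gamma+1/\cnfS)$ and each marked variable of the unique violating assignment contributes a factor $\approx\frac{\max\{1,\lambda_c\}}{1+\lambda_c}$.

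Two secondary issues. First, your symmetric choice $\markedK=\unmarkedK=\lceil 0.17k\rceil$ does not suffice: the unmarked side must satisfy $2^{\unmarkedK}\ge 4000\mathrm{e}\Delta^3k^5$ (this drives the local uniformity that both the well-definedness and the $\conditionalTriangleConstant$ bound rely on), which needs $\unmarkedK\gtrsim 3\log\Delta+5\log k$, while \cref{eq:CNF-cond} only yields $0.17k\gtrsim 2\log\Delta+4\log k$; this is why the paper takes an asymmetric marking with $\alpha\approx0.172$ and $\beta\approx0.257$ and marking probability $\frac{1+\alpha-\beta}{2}$ rather than $\tfrac12$. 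Second, the ``routine deviation bound on Moser--Tardos'' for $|V|/6\le|\+M|\le 4|V|/5$ is not routine, since the output of Moser--Tardos is not product-distributed; the paper handles this by adding the global event (too few marked or unmarked variables in total) as an extra bad event inside the LLL instance itself.
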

We remark that unlike the zero-free region of the hypergraph $q$ coloring, the zero-free regime of the CNF formula does not contain $0$.
In the remaining part of this subsection, we prove \Cref{theorem:cnf-zero-freeness-intro}. 
We use \Cref{lemma:k-cnf-verify-sufficient-condition-for-general-zero-freeness-thm} to verify the \Cref{condition:csp-conditional-measure-analysis} in \Cref{theorem:sufficient-condition-zero-freeness}.
We first introduce some useful notations. 

The subset $\+M$ is constructed by the \emph{state-compression scheme} which is also known as the ``mark/unmark'' method for CNF formulas.
It plays a very important role in \emph{sampling/counting Lov\'asz local lemma}~\cite{moitra2019approximate, guo2018counting,  feng2021fast, feng2021sampling, jain2021sampling, Vishesh21towards, he2021perfect, galanis2022fast, he2022sampling, he2023deterministic, he2023improved,chen2023algorithms}. 
We divide the variables into two parts: marked variables $\+M$ and unmarked variables $V\setminus \+M$, and we add the complex external field $\lambda$ on marked variables $\+M$.

Later, in order to show the zero-freeness, we consider a projected measure on only marked variables $\+M$, and a complex systematic scan Glauber dynamics on this projected measure. Then we use \Cref{theorem:sufficient-condition-zero-freeness} to show the zero-freeness. We verify the conditions in \Cref{condition:csp-conditional-measure-analysis} in \Cref{lemma:k-cnf-verify-sufficient-condition-for-general-zero-freeness-thm}.

To formalize the state-compression scheme, we project the values on unmarked variables $V\setminus \+M$ into a new symbol $\kcnfPF$. So that the value on unmarked variables $V\setminus \+M$ under the projection is always $\kcnfPF$ and the Glauber dynamics only changes values on marked variables $\+M$.

\begin{definition}[State-compression scheme for CNF formulas]
\label{definition:k-cnf-state-compression}
Let $\+M\subseteq V$ be the set of \emph{marked variables}. 
Let $\*f = (f_v)_{v\in V}$ be a \emph{projection}. 
For each variable $v\in V$, $f_v$ is a mapping from the domain $\set{0,1}$ to a finite \emph{alphabet} $\Sigma_v$.
Let $\*\Sigma\defeq \bigotimes_{v\in V}\Sigma_v$. And for any $\Lambda \subseteq V$, we denote $\*\Sigma_\Lambda \defeq \bigotimes_{v\in\Lambda} \Sigma_v$.

For each variable $v$ in $\+M$, let $\Sigma_v = \set{\kcnfPFO, \kcnfPFI}$, and let $f_v(0) = \kcnfPFO$ and $f_v(1) = \kcnfPFI$.

For each variable $v$ in $V\setminus \+M$, let $\Sigma_v = \set{\kcnfPF}$, and let $f_v(0) = f_v(1)= \kcnfPF$.

\end{definition}

Note that the partition function defined by the above projection with the complex external field $\lambda$ on the symbol $1^\coF$ is equivalent to \Cref{definition:k-cnf-complex-extensions}.
Let $\markedK \ge 1, \unmarkedK\ge 1$ be two integers satisfying that $\markedK + \unmarkedK \le k$. Define $0\le \alpha, \beta \le 1$ as $\alpha\defeq \frac{\markedK}{k}$ and $\beta\defeq \frac{\unmarkedK}{k}$. 
We give the following sufficient condition for zero-freeness.
\begin{condition}
\label{condition:k-cnf-main-condition}
Let $\Phi = (V, \set{0, 1}^V, C)$ be a $(k, \Delta)$-CNF formula.
Let $\+M \subseteq V$ be the set of marked variables, and let $\lambda$ be the complex parameter defined in \Cref{definition:k-cnf-complex-extensions}.
Let $\lambda_c \ge 0$ be a real number.
It holds that

\begin{enumerate}
    \item for any clause, there are at least $\markedK$ marked variables and at least $\unmarkedK$ unmarked variables; Furthermore, there are at least $\alpha\cdot |V|$ marked variables and at least $\beta\cdot |V|$ unmarked variables in total; \label{item:k-cnf-marked-unmarked}
    \item  \[2^{k}\ge \tp{4\mathrm{e}\Delta k}^{\frac{6\ln(2)\cdot (1 + \alpha - \beta)}{(1-\alpha - \beta)^2}},\quad \tp{\frac{\max\set{1, \lambda_c + \gamma}}{1 + \lambda_c - \gamma}}^{\markedK} \le \frac{1}{16\mathrm{e}^3 \Delta^2 k^4},\quad 2^{\unmarkedK} \ge 4000\mathrm{e}\Delta^3 k^5; \]
\label{item:k-cnf-LLL-constraint}
    \item $\lambda$ is in the disk centered at $\lambda_c$ with radius $\gamma = \cnfGamma$, i.e.,  $|\lambda - \lambda_c|\le \gamma$.\label{item:k-cnf-lambda-constraints}
\end{enumerate}
\end{condition}

\begin{lemma}
\label{lemma:k-cnf-verify-sufficient-condition-for-general-zero-freeness-thm}
Let $\Phi = (V, \set{0,1}^V, C)$ be a $(k, \Delta)$-CNF formula with the state-compression scheme $\*f$ (\Cref{definition:k-cnf-state-compression}), and the complex external field $\lambda$ on the projected symbol $\kcnfPFI$.
Let $\+M\subseteq V$.
Write $C = \set{c_1, c_2, \dots, c_m}$. For any $i\le m$, define $C_i = \set{c_1, c_2, \dots, c_i}$ and $\Phi_i = (V, \set{0,1}^V, C_i)$.

Suppose that $(\Phi, \Lambda, \lambda)$ satisfies \Cref{condition:k-cnf-main-condition}, then $\Phi_1, \Phi_2, \dots, \Phi_m$ satisfy \Cref{condition:csp-conditional-measure-analysis}.
\end{lemma}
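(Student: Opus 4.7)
The plan is to verify \Cref{condition:csp-conditional-measure-analysis} for every sub-formula $\Phi_i$. Restriction to $C_i$ preserves the marking $\+M$, the $k$-uniformity of clauses, the per-clause marked/unmarked counts, and the maximum degree, so each $\Phi_i$ inherits \Cref{condition:k-cnf-main-condition} from $\Phi$. It therefore suffices to prove the statement for a single $(k,\Delta)$-CNF $\Phi$ satisfying \Cref{condition:k-cnf-main-condition}, under the inductive assumption that $Z(\Phi',\lambda,\*f,\kcnfPFI)\neq 0$ for $\Phi'=(V,\{0,1\}^V,C\setminus\{c^*\})$.

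\textbf{Step 1 (well-definedness).} For $u\notin \+M$ the projected alphabet is $\{\kcnfPF\}$, so $\psi^\tau_u(\kcnfPF)=1$ trivially. For $u\in \+M$ and any extendable $\tau\in\*\Sigma_{V\setminus\{u\}}$, let $N_x$ count the satisfying assignments of $\Phi'$ consistent with $\tau$ and with $u$ taking the boolean value $x\in\{0,1\}$. Since external-field contributions from marked variables other than $u$ are fixed by $\tau$, an analogue of \Cref{lemma:coloring-one-special-transition-measure-expression} gives $\psi^\tau_u(\kcnfPFO)=N_0/(N_0+\lambda N_1)$ and $\psi^\tau_u(\kcnfPFI)=\lambda N_1/(N_0+\lambda N_1)$, so well-definedness reduces to $N_0+\lambda N_1\neq 0$. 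If exactly one of $N_0,N_1$ vanishes the sum equals either $N_0$ or $\lambda N_1$, both nonzero since \Cref{condition:k-cnf-main-condition}-(\ref{item:k-cnf-LLL-constraint}) forces $\lambda_c>\gamma$ and hence $|\lambda|>0$. Otherwise both are positive and applying \Cref{theorem:HSS} with witness $x(c)=2^{-\unmarkedK+1}$ (admissible because $2^{\unmarkedK}\ge 4000\mathrm{e}\Delta^3 k^5$) to the CNF restricted to unmarked variables under the pinning $\tau$ yields $|N_0-N_1|\le (N_0+N_1)/\varrho$ for $\varrho=\cnfS$; combining with the identity $N_0+\lambda N_1=\tfrac{1+\lambda}{2}(N_0+N_1)+\tfrac{1-\lambda}{2}(N_0-N_1)$ and $\Re(1+\lambda)\ge 1+\lambda_c-\gamma>1$ gives $|N_0+\lambda N_1|\ge \tfrac{1+\lambda_c-\gamma}{2}(N_0+N_1)-O(1/\varrho)(N_0+N_1)>0$.

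\textbf{Step 2 (decomposition scheme and numerical bounds).} Mirroring \Cref{definition:coloring-one-special-decomposition-scheme}, set $b_u(\kcnfPF)=1$, $b_u(\bot)=0$ for $u\notin \+M$, and for $u\in \+M$
\[b_u(\kcnfPFO)=\tfrac{1-1/\varrho}{1+\lambda},\qquad b_u(\kcnfPFI)=\tfrac{\lambda(1-1/\varrho)}{1+\lambda},\qquad b_u(\bot)=\tfrac{1}{\varrho}.\]
The triangle-inequality computation of the proof of \Cref{lemma:coloring-one-special-verify-decay-condition}, specialized to $q=2$ and split into the ``blocked'' (one of $N_0,N_1$ vanishes) and ``free'' (both positive) sub-cases, combined with the real local uniformity from Step 1, yields $|b_u(\bot)|\cdot \max_\tau\sum_{x}|\psiAdaptive{u}{\tau}(x)|=O(1/\varrho)$ for marked $u$. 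Each clause has a unique violating assignment, whose at least $\markedK$ marked coordinates each contribute a factor at most $|b_u(f_u(\sigma_u))|+O(1/\varrho)\le \tfrac{\max\{1,\lambda_c+\gamma\}}{1+\lambda_c-\gamma}+O(1/\varrho)$, while unmarked coordinates contribute the trivial factor $1$. Hence
\[\conditionalDecayConstant\le\left(\tfrac{\max\{1,\lambda_c+\gamma\}}{1+\lambda_c-\gamma}+O(1/\varrho)\right)^{\markedK}\le \tfrac{1}{8\mathrm{e}\Delta^2 k^4},\qquad \conditionalTriangleConstant\le \tfrac{1+\lambda_c+\gamma}{1+\lambda_c-\gamma}+O(1/\varrho)\le 1+\tfrac{1}{\Delta^2 k^5},\]
by the second bound of \Cref{condition:k-cnf-main-condition}-(\ref{item:k-cnf-LLL-constraint}) combined with $\gamma=\cnfGamma$ and $\varrho=\cnfS$. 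Substituting into \cref{item:csp-conditional-meaure-analysis-numerical} gives $4\mathrm{e}\Delta^2 k^4\cdot\conditionalDecayConstant\cdot\conditionalTriangleConstant^{4\Delta^2 k^5}\le \tfrac{1}{4}$.

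\textbf{The main difficulty} is the case analysis for marked $u$ in the ``blocked'' regime, where pinning forces one of $N_0,N_1$ to vanish and the canonical $b_u$ can no longer track both $\psi^\tau_u(\kcnfPFO)$ and $\psi^\tau_u(\kcnfPFI)$ simultaneously; as in \Cref{lemma:coloring-one-special-verify-decay-condition} one must reanalyze $\psiAdaptive{u}{\tau}$ directly to show $|b_u(\bot)|\cdot\sum_x|\psiAdaptive{u}{\tau}(x)|$ remains $O(1/\varrho)$ even when the sum $\sum_x|b_u(x)|$ overshoots the true marginal. The numerics are delicate in that $\gamma=\cnfGamma$, $\varrho=\cnfS$, and the LLL parameter $2^{\unmarkedK}\ge 4000\mathrm{e}\Delta^3 k^5$ must align so that $\conditionalDecayConstant$ absorbs the $4\mathrm{e}\Delta^2 k^4$ prefactor through marked-variable geometric decay while $\conditionalTriangleConstant$ stays sufficiently close to $1$ for the $4\Delta^2 k^5$-th power to remain bounded.
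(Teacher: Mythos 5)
Your overall route matches the paper's: observe that each $\Phi_i$ inherits \Cref{condition:k-cnf-main-condition}, verify well-definedness of the projected transitions via local uniformity over the unmarked variables, and define a local-uniformity-based $\*b$-decomposition whose constants $\conditionalDecayConstant,\conditionalTriangleConstant$ are then checked against \cref{item:csp-conditional-meaure-analysis-numerical}. (The paper centers $b_u$ at $x=\tfrac12\exp(1/s)$ rather than at $x=1/2$ with a $(1-1/\varrho)$ damping, but both are legitimate oblivious measures.) One remark before the real issue: for CNF there is no ``blocked'' regime. Unmarked variables are never pinned by $\tau\in\*\Sigma_{V\setminus\{u\}}$, so every clause retains at least $\unmarkedK$ free unmarked variables and \Cref{lemma:k-cnf-local-uniformity} gives the two-sided bound $1-\tfrac12 e^{1/s}\le N_1/(N_0+N_1)\le \tfrac12 e^{1/s}$; in particular both $N_0,N_1$ are strictly positive for every extendable $\tau$. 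The case split you flag as ``the main difficulty'' is vacuous here — it is genuinely needed only in the hypergraph-coloring section, where pinning neighbors to bucket $1^\coF$ can exclude that bucket for $v$.

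The genuine gap is the final arithmetic. With the bounds you state, $\conditionalDecayConstant\le \tfrac{1}{8\mathrm{e}\Delta^2k^4}$ and $\conditionalTriangleConstant\le 1+\tfrac{1}{\Delta^2k^5}$, one gets
\[
4\mathrm{e}\Delta^2k^4\cdot\conditionalDecayConstant\cdot\conditionalTriangleConstant^{4\Delta^2k^5}\le \tfrac12\cdot\tp{1+\tfrac{1}{\Delta^2k^5}}^{4\Delta^2k^5}\le \tfrac{\mathrm{e}^4}{2}\approx 27,
\]
which is not $\le\tfrac14$. The constant $16\mathrm{e}^3$ in \Cref{condition:k-cnf-main-condition}-(\ref{item:k-cnf-LLL-constraint}) is calibrated for the tighter bounds $\conditionalTriangleConstant\le 1+\tfrac{1}{4\Delta^2k^5}$ (so that the $4\Delta^2k^5$-th power is at most $\mathrm{e}$) and $\conditionalDecayConstant\le \mathrm{e}\cdot\tp{\tfrac{\max\{1,\lambda_c+\gamma\}}{1+\lambda_c-\gamma}}^{\markedK}$, which give exactly $4\mathrm{e}\Delta^2k^4\cdot\mathrm{e}\cdot\tfrac{1}{16\mathrm{e}^3\Delta^2k^4}\cdot\mathrm{e}=\tfrac14$. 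Your $O(1/\varrho)$ bookkeeping does support these tighter constants — with $\gamma=\cnfGamma$ and $\varrho=\cnfS$ the explicit error terms in the paper's computation sum to about $324\gamma<\tfrac{1}{4\Delta^2k^5}$ — but as written your chain of inequalities does not close; the verification genuinely requires tracking the explicit constants rather than leaving them as $O(1/\varrho)$ and asserting the conclusion.
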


By \Cref{theorem:sufficient-condition-zero-freeness} and \Cref{lemma:k-cnf-verify-sufficient-condition-for-general-zero-freeness-thm}, we have the following zero-freeness result for $(k, \Delta)$-CNF formulas.

\begin{theorem}[Zero-freeness]
\label{theorem:k-cnf-zero-freeness}
Let $\Phi = (V, \set{0,1}^V, C)$ be a $(k, \Delta)$-CNF formula, $\+M\subseteq V$ and $\lambda$ be the the complex external field .
Suppose $(\Phi, \Lambda, \lambda)$ satisfies \Cref{condition:k-cnf-main-condition},
then $Z^{\-{CNF}}_{\Phi, \+M}(\lambda)\neq 0$.
\end{theorem}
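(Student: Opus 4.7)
The plan is to prove \Cref{theorem:k-cnf-zero-freeness} as a direct corollary of \Cref{theorem:sufficient-condition-zero-freeness} applied to the state-compression scheme $\*f$ of \Cref{definition:k-cnf-state-compression}, with the complex external field $\lambda$ carried on the projected symbol $\kcnfPFI$. This reduces the task to two checks: the base-case non-vanishing \cref{eq:csp-Z_0-is-not-0}, and the verification of \Cref{condition:csp-conditional-measure-analysis} for every subformula $\Phi_i=(V,\{0,1\}^V,\{c_1,\dots,c_i\})$ in the constraint-wise self-reduction. The latter is the content of \Cref{lemma:k-cnf-verify-sufficient-condition-for-general-zero-freeness-thm}, which the proof will invoke as a black box.

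For the base case, I would compute $Z_0$ factor by factor. An unmarked variable $v\in V\setminus\+M$ contributes $\sum_{\sigma\in\{0,1\}}\=I[f_v(\sigma)\neq \kcnfPFI]=2$, since $f_v(0)=f_v(1)=\kcnfPF$, while a marked variable $v\in\+M$ contributes $1+\lambda$. Hence $Z_0=2^{|V\setminus\+M|}(1+\lambda)^{|\+M|}$. By \Cref{condition:k-cnf-main-condition}-(\ref{item:k-cnf-lambda-constraints}) together with $\lambda_c\ge 0$ and $\gamma=\cnfGamma<1$, we have $\-{Re}(1+\lambda)\ge 1+\lambda_c-\gamma>0$, so $1+\lambda\neq 0$ and therefore $Z_0\neq 0$.

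For the inductive check, the key observation is that \Cref{condition:k-cnf-main-condition} is preserved under the removal of clauses. Removing clauses can only decrease the maximum degree, so each $\Phi_i$ remains a $(k,\Delta)$-CNF; the clause-local and global counting requirements in \Cref{condition:k-cnf-main-condition}-(\ref{item:k-cnf-marked-unmarked}) still hold since each surviving clause retains at least $\markedK$ marked and $\unmarkedK$ unmarked variables, and the global totals in $\alpha,\beta$ depend only on $V$ and $\+M$; parts (\ref{item:k-cnf-LLL-constraint}) and (\ref{item:k-cnf-lambda-constraints}) are independent of the constraint set altogether. Applying \Cref{lemma:k-cnf-verify-sufficient-condition-for-general-zero-freeness-thm} to $(\Phi,\+M,\lambda)$ thus yields that every $\Phi_i$ satisfies \Cref{condition:csp-conditional-measure-analysis}. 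Combined with $Z_0\neq 0$, \Cref{theorem:sufficient-condition-zero-freeness} delivers $Z^{\-{CNF}}_{\Phi,\+M}(\lambda)\neq 0$.

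The main obstacle is really \Cref{lemma:k-cnf-verify-sufficient-condition-for-general-zero-freeness-thm}, which the argument above uses as a black box; its proof is the CNF analogue of \Cref{lemma:coloring-one-special-verify-sufficient-condition-zero-freeness-condition}. There, one first establishes well-definedness of the conditional marginals $\psi_v^\tau$ on the projected measure by using \Cref{theorem:HSS} to show that under the sampling-LLL condition $\kcnfCondition$ the ``$\kcnfPFI$-direction'' marginal remains close to $\tfrac{1}{2}$ even after arbitrary admissible pinnings on the marked variables, then designs a \decompositionScheme-decomposition scheme whose oblivious part $b_v$ matches this local-uniformity prediction, and finally tunes the three numerical bounds in \Cref{condition:k-cnf-main-condition}-(\ref{item:k-cnf-LLL-constraint}) to establish \cref{item:csp-conditional-meaure-analysis-numerical}: the first governs the percolation on the witness graph, the second yields a decay factor $\conditionalDecayConstant \lesssim \tp{\max\set{1,\lambda_c+\gamma}/(1+\lambda_c-\gamma)}^{\markedK}$ from the marked-variable contributions in a clause, and the third enforces $\conditionalTriangleConstant\le 1+\tfrac{1}{4\Delta^2k^5}$ via local uniformity on the unmarked side.
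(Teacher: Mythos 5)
Your proposal is correct and follows essentially the same route as the paper: reduce to \Cref{theorem:sufficient-condition-zero-freeness}, verify the base case \cref{eq:csp-Z_0-is-not-0} by noting each factor $1+\lambda$ has positive real part under \Cref{condition:k-cnf-main-condition}-(\ref{item:k-cnf-lambda-constraints}), and invoke \Cref{lemma:k-cnf-verify-sufficient-condition-for-general-zero-freeness-thm} for the induction-step condition. Your explicit remark that \Cref{condition:k-cnf-main-condition} is preserved under clause removal is a detail the paper defers to the proof of that lemma, but it is correct and harmless here.
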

\begin{proof}
Let $\*f$ be the state-compression scheme (\Cref{definition:k-cnf-state-compression}).
Hence we have $Z^{\-{CNF}}_{\Phi, \+M}(\lambda) = Z(\Phi, \*f, \lambda, 1^\coF)$.
It suffices to verify the conditions in \Cref{theorem:sufficient-condition-zero-freeness}. For a complex number $x$, we use $\`R(x)$ to denote its real part. First about \cref{eq:csp-Z_0-is-not-0}.
By \Cref{condition:k-cnf-main-condition}-(\ref{item:k-cnf-lambda-constraints}), we have that 
\[
\forall v\in V,\quad \sum_{\sigma\in \set{0, 1}} \`R(\=I[f_v(\sigma) \neq \kcnfPFI] + \lambda \cdot \=I[f_v(\sigma) = \kcnfPFI]) > 0.
\]
So we have that \cref{eq:csp-Z_0-is-not-0} in \Cref{theorem:sufficient-condition-zero-freeness} holds.

Next, \Cref{condition:csp-conditional-measure-analysis} holds by \Cref{lemma:k-cnf-verify-sufficient-condition-for-general-zero-freeness-thm}.
Combined with \Cref{theorem:sufficient-condition-zero-freeness}, we have $Z^{\-{CNF}}_{\Phi, \+M}(\lambda)\neq 0$.
\end{proof}

Then, we use \Cref{theorem:k-cnf-zero-freeness} to prove \Cref{theorem:cnf-zero-freeness-intro}.
We first include the next useful lemma to show that \Cref{condition:k-cnf-main-condition}-(\ref{item:k-cnf-LLL-constraint}) implies \Cref{condition:k-cnf-main-condition}-(\ref{item:k-cnf-marked-unmarked})

\begin{lemma}\label{lemma:k-cnf-moser-tardos}
    Let $\Phi = (V, \set{0,1}^V, C)$ be a $(k, \Delta)$-CNF formula.
    Suppose \Cref{condition:k-cnf-main-condition}-(\ref{item:k-cnf-LLL-constraint}) holds. 
    There is an algorithm such that for any $\delta> 0$, with probability at least $1 - \delta$, it returns the set $\+M\subseteq V$ satisfying 
    \Cref{condition:k-cnf-main-condition}-(\ref{item:k-cnf-marked-unmarked})
    with time complexity $O\tp{|V|^2 \cdot |C|\cdot \log \frac{1}{\delta}}$.
\end{lemma}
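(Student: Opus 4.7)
The plan is to independently mark each variable with probability $p = (1 + \alpha - \beta)/2$, so that $p - \alpha = (1-\beta) - p = (1-\alpha-\beta)/2$, and then run the Moser-Tardos algorithm (\Cref{theorem:moser-tardos}) on the family of local bad events $\{B_c^-, B_c^+\}_{c \in C}$, where $B_c^-$ asserts that clause $c$ has fewer than $\markedK$ marked variables and $B_c^+$ asserts that $c$ has fewer than $\unmarkedK$ unmarked variables. After MT terminates, verify the two global count conditions, output $\+M$ if they hold, and otherwise restart. Amplification over $\lceil \log_2(1/\delta)\rceil$ independent trials then drives the failure probability below $\delta$.

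The local LLL step is routine: Hoeffding's inequality gives $\Pr{B_c^\pm} \le \exp(-k(1-\alpha-\beta)^2/2)$, each $B_c^\pm$ shares variables with at most $2(k\Delta-1)$ other bad events, and a uniform choice $x(c) = 1/(2k\Delta)$ verifies the asymmetric LLL condition \cref{eq:LLL} directly from \Cref{condition:k-cnf-main-condition}-(\ref{item:k-cnf-LLL-constraint}), which enforces $k(1-\alpha-\beta)^2 \ge 6(1+\alpha-\beta)\ln(4\mathrm{e}k\Delta)$ --- strictly stronger than the $2\ln(4\mathrm{e}k\Delta)$ threshold needed for local satisfiability. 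By \Cref{theorem:moser-tardos}, MT terminates in a polynomial expected number of resamples and returns a marking in which every clause satisfies the local count conditions.

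The main obstacle is the global count, since the Moser-Tardos output distribution is not a product distribution and so Hoeffding-style concentration of $|\+M|$ does not apply directly. The plan is to handle this via the Haeupler-Saha-Srinivasan lopsided LLL (\Cref{theorem:HSS}) applied to the global event $\+A = \{|\+M| < \alpha|V|\}$, giving
\[
\Pr[\mathrm{MT}]{\+A}\ \le\ \Pr[\mathrm{indep}]{\+A}\cdot \prod_{c\in C}(1 - x(c))^{-1}.
\]
Hoeffding controls $\Pr[\mathrm{indep}]{\+A}$ by $\exp(-\Omega(|V|(1-\alpha-\beta)^2))$, while $|C| \le \Delta |V|/k$ and the choice $x(c) = 1/(2k\Delta)$ keep $\prod_c (1-x(c))^{-1} \le \exp(O(|V|/k^2))$. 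The slack factor $6(1+\alpha-\beta)$ in \Cref{condition:k-cnf-main-condition}-(\ref{item:k-cnf-LLL-constraint}) is precisely what makes the Hoeffding exponent dominate the product factor, so $\Pr[\mathrm{MT}]{\+A}$ is at most a small constant; the same argument bounds the three remaining boundary events ($|\+M| > (1-\beta)|V|$ and the two analogous unmarked-count events). Each trial therefore succeeds with constant probability, and the claimed runtime follows from the expected MT iteration count together with $O(k|C| + |V|)$-time verification per trial, amplified over $O(\log(1/\delta))$ independent trials.
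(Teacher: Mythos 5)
Your overall structure (random marking with bias $\tfrac{1+\alpha-\beta}{2}$, Moser--Tardos on the per-clause count events, Markov truncation, $O(\log(1/\delta))$ amplification) matches the paper's, and your local LLL verification with $x(c)=\tfrac{1}{2k\Delta}$ is correct and quantitatively comfortable under \Cref{condition:k-cnf-main-condition}-(\ref{item:k-cnf-LLL-constraint}). Where you genuinely diverge is the treatment of the \emph{global} count conditions. The paper folds the global event $B=\{|\+M|<\alpha n \text{ or } |V\setminus\+M|<\beta n\}$ directly into the Moser--Tardos bad-event family, assigning it the weight $x(B)=\exp(-m/\Delta)$ and verifying the asymmetric LLL condition for $B$ against all the $A_c$'s (this is why the exponent in \Cref{condition:k-cnf-main-condition}-(\ref{item:k-cnf-LLL-constraint}) is calibrated at the scale $n$ as well as $k$). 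You instead run MT only on the local events and control the global events post hoc via a distributional transfer inequality. Your route is arguably cleaner (no need to design $x(B)$ or to resample against a global event), while the paper's buys a self-contained argument that only uses the two theorems it has stated.

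The one point you must repair: you apply \Cref{theorem:HSS} to $\Pr[\mathrm{MT}]{\+A}$, but as stated in the paper that theorem bounds $\+P[\+A\mid C]$, the product distribution \emph{conditioned} on avoiding all bad events --- not the Moser--Tardos output distribution. These are different measures. The analogous bound for the MT output distribution is also a (true) theorem of Haeupler, Saha and Srinivasan, but it is not the statement the paper provides, so as written your proof invokes an uncited result. Either cite that version explicitly, or switch to the paper's device of making the global event a resampled bad event. A second, purely numerical remark: the comparison that actually makes your transfer step work is between the Hoeffding rate $(1-\alpha-\beta)^2/2$ per variable and the product penalty $\approx 1/(2k^2)$ per variable coming from $|C|\le \Delta|V|/k$ and $x(c)=\tfrac{1}{2k\Delta}$; this holds for all $k\ge 2$ and has little to do with the factor $6(1+\alpha-\beta)$ in the condition, so I would drop that claim. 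Everything else, including the runtime accounting, is fine.
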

\begin{proof}
We use the Moser-Tardos algorithm (\Cref{theorem:moser-tardos}) to find $\+M$.
We first check the conditions of the Moser-Tardos algorithm (\Cref{theorem:moser-tardos}).
Let $C$ be the set of clauses, and let $\+P[\cdot]$ denote the product distribution that every variable is put in $\+M$ with probability $\frac{1+\alpha - \beta}{2}$ independently.
Let $n = |V|$ and $m=|C|$.

For each clause $c\in C$, let $A_c$ be the event that for the clause $c$ there are less than $\markedK$ variables in $\+M$ or less than $\unmarkedK$ variables not in $\+M$.
Let $B$ be the event that there are less than $\alpha\cdot n$ marked variables or less than $\beta\cdot n$ unmarked variables in total.
By the Chernoff bound, we have that
\[
\forall c\in C,\quad \+P\inbr{A_c} \le 2\exp\tp{-\frac{(1-\alpha-\beta)^2}{6(1+\alpha-\beta)}\cdot k} = 2\tp{\frac{1}{2}}^{\frac{(1-\alpha-\beta)^2}{6\ln(2)\cdot (1+\alpha-\beta)}\cdot k},
\]
and similarly, we have 
\[\+P\inbr{B} \le 2\exp\tp{-\frac{(1-\alpha-\beta)^2}{6(1+\alpha-\beta)}\cdot n} = 2\tp{\frac{1}{2}}^{\frac{(1-\alpha-\beta)^2}{6\ln(2)\cdot (1+\alpha-\beta)}\cdot n}.\]
Next, we define the function $x$. For any clause $c$, we set $x(A_c)\defeq \frac{1}{2\Delta k}$, and we set $x(B) = \exp\tp{-\frac{m}{\Delta}}$.
For any event $A_c$ we use $\Gamma(A_c)$ to denote the set of events that are not independent of $A_c$ except for $A_c$ itself.
We first verify that for any clause $c$, we have $\+P\inbr{A_c}\le x(A_c)(1-x(B))\prod_{A_{c'}\in \Gamma(A_c)} (1-x(A_{c'}))$. 
By the fact that $m\ge \Delta$, we have $1-x(B)\ge 1-\exp(-1)\ge 1/2$.
For any clause $c$, we have that 
\[
x(A_c) (1-x(B)) \prod_{A_{c'}\in\Gamma(A_c)}\tp{1 - x(A_{c'})} \ge \frac{1}{2\Delta k}\cdot \tp{1 - \frac{1}{2\Delta k}}^{\Delta k} \ge \frac{1}{2\mathrm{e}\Delta k}.
\]
By the assumption, we have $\+P\inbr{A_c} \le \frac{1}{2\mathrm{e}\Delta k}\le x(A_c)(1-x(B))\prod_{A_{c'}\in\Gamma(A_c)} (1-x(A_{c'}))$.
Next we verify that $\+P\inbr{B} \le x(B)\prod_{A_c} \tp{1 - x(A_c)}$. By the fact that $\Delta k \ge 1$, we have 
\begin{align*}
x(B)\prod_{A_c} \tp{1 - x(A_c)} = x(B)\cdot \tp{1 - \frac{1}{\Delta k}}^{m}
\ge x(B)\cdot \exp\tp{-\frac{m}{2\Delta k}} \ge \exp\tp{-\frac{m}{\Delta} - \frac{m}{2\Delta k}}.
\end{align*}
By the assumption and $n\ge k$, $k\ge 3$, we have $\+P\inbr{B}\le 2\cdot \tp{4\mathrm{e}\Delta k}^{-n/k} \le \exp\tp{-\frac{2n}{k}}$. By the fact that $n\Delta \ge km$. We have that $\+P\inbr{B}\le \exp\tp{-\frac{2n}{k}} \le \exp\tp{-\frac{2m}{\Delta}}\le x(B)\prod_{A_c}\tp{1-x(A_c)}$.

Since the total number of clauses is at most $\Delta n$, the expected number of resampling steps is at most 
\[
\sum_{c\in C}\frac{x(A_c)}{1 - x(A_c)} + \frac{x(B)}{1 - x(B)} \le 3n.
\]
We run the Moser-Tardos algorithm (\Cref{theorem:moser-tardos}) for $6n$ resampling steps, by Markov's inequality, the algorithm returns the sets $\+M$ satisfying \Cref{condition:k-cnf-main-condition}-(\ref{item:k-cnf-marked-unmarked}) with probability at least $\frac{1}{2}$. So by running $\lceil \log\frac{1}{\delta}\rceil$ Moser-Tardos algorithms independently, then with probability at least $1 - \delta$, we find the set $\+M$ within totally $6n\lceil \log\frac{1}{\delta}\rceil$ resampling steps.

Note that in each resampling step, we need to resample at most $n$ variables and check whether $m+1$ bad event occurs. Hence, the total time complexity is $O\tp{n^2 m \log\frac{1}{\delta}}$.

\end{proof}

Now, we prove \Cref{theorem:cnf-zero-freeness-intro}.
\begin{proof}[Proof of \Cref{theorem:cnf-zero-freeness-intro}]
We use \Cref{lemma:k-cnf-verify-sufficient-condition-for-general-zero-freeness-thm} to verify \Cref{condition:csp-conditional-measure-analysis}. Then we use \Cref{theorem:sufficient-condition-zero-freeness} to prove this lemma.
By \Cref{lemma:k-cnf-verify-sufficient-condition-for-general-zero-freeness-thm}, it suffices to verify \Cref{condition:k-cnf-main-condition}. 

We first consider the case that $\lambda_c = 1$.
\Cref{condition:k-cnf-main-condition}-(\ref{item:k-cnf-lambda-constraints}) holds by the assumption. 
By \Cref{lemma:k-cnf-moser-tardos}, it suffices to show \Cref{condition:k-cnf-main-condition}-(\ref{item:k-cnf-LLL-constraint}).

Let $\alpha = 0.171562$, $\beta = 0.257342$. 
In order to have $2^k\ge (4\mathrm{e}\Delta k)^{\frac{6\ln(2)(1+\alpha-\beta)}{(1-\alpha-\beta)^2}}$, it suffices to require that 
\[
k\ge 12\log(\Delta) + 12\log(k) + 41.
\]
In order to have $\tp{\frac{1 + \gamma}{2 - \gamma}}^{\markedK} \le \frac{1}{16\mathrm{e}^3 \Delta^2 k^4}$, recall that $\gamma = \cnfGamma$ and $\markedK \ge \alpha k$, it suffices to require that 
\[
k\ge 12\log(\Delta) + 24\log(k) + 57.
\]
In order to have $2^{\unmarkedK} \ge 4000\mathrm{e}\Delta^3 k^5$, recall that $\unmarkedK \ge \beta k$, it suffices to require that
\[
k \ge 12\log(\Delta) + 20\log(k) + 53.
\]
Combined with \cref{eq:CNF-cond}, \Cref{condition:k-cnf-main-condition} holds.

For general $\lambda_c \ge 0$, assume additionally that $\tp{\frac{\max\set{1, \lambda_c + \gamma}}{1 + \lambda_c - \gamma}}^{\markedK} \le \frac{1}{16\mathrm{e}^3 \Delta^2 k^4}$.
Then the verification of \Cref{condition:k-cnf-main-condition} is identical.

Finally, this theorem follows from \Cref{theorem:k-cnf-zero-freeness}.
\end{proof}
Next, the remaining subsection is mainly devoted to the proof of \Cref{lemma:k-cnf-verify-sufficient-condition-for-general-zero-freeness-thm}. 
Recall that we write the constraint set as $C = \set{c_1, c_2, \dots, c_m}$.
Also recall that for any $i\le m$, $C_i = \set{c_1, c_2, \dots, c_i}$ and $\Phi_i = (V, \set{0, 1}^V, C_i)$.
Suppose that $(\Phi, \Lambda, \lambda)$ satisfies \Cref{condition:k-cnf-main-condition}, it can be verified that for any $i\le m$, $(\Phi_i, \Lambda, \lambda)$ satisfies \Cref{condition:k-cnf-main-condition}.
Thus, we show a stronger result that if any formula satisfies \Cref{condition:k-cnf-main-condition}, then it satisfies \Cref{condition:csp-conditional-measure-analysis}.

We start with \Cref{condition:csp-conditional-measure-analysis}-(\ref{item:csp-transition-well-define}).
\begin{lemma}[Well-definedness]
\label{lemma:k-cnf-well-definedness}
Let $\Phi = (V, \set{0,1}^V, C)$ be a $(k, \Delta)$-CNF formula with the state-compression scheme $\*f$ (\Cref{definition:k-cnf-state-compression}), and the complex external field $\lambda$ on the projected symbol $\kcnfPFI$. Let $c^* \in C$ with the largest index number, and let $\Phi' = (V, \set{0, 1}^V, C\setminus \set{c^*})$

Suppose the projected measure $\psi$ induced by $\Phi'$ and $\*f$ has a nonzero partition function, i.e., $Z(\Phi', \lambda, \*f, 1^\coF)\neq 0$ and $(\Phi, \Lambda, \lambda)$ satisfies \Cref{condition:k-cnf-main-condition}, then \Cref{condition:csp-conditional-measure-analysis}-(\ref{item:csp-transition-well-define}) holds.
\end{lemma}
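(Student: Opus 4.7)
The plan is to verify that for every variable $u \in V$ and every extendable partial assignment $\tau \in \*\Sigma_{V\setminus\set{u}}$, one has $\psi(\tau) \neq 0$; the conditional marginal $\psi^\tau_u(x) = \psi(\tau, x)/\psi(\tau)$ is then automatically a well-defined complex normalized measure on $\Sigma_u$. Following the case split in \Cref{lemma:coloring-one-special-well-definedness}, I would handle unmarked and marked $u$ separately.

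For an unmarked variable $u \in V\setminus \+M$, the projected alphabet $\Sigma_u = \set{\kcnfPF}$ is a singleton, so the only extension of $\tau$ is $(\tau, \kcnfPF)$, and extendability of $\tau$ immediately gives $\psi(\tau) = \psi(\tau, \kcnfPF) \neq 0$.

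For a marked $u \in \+M$, the main step is a counting identity. Let $M_\tau \defeq \abs{\set{v \in \+M\setminus\set{u} : \tau_v = \kcnfPFI}}$, and for $i \in \set{0,1}$ let $N_i$ be the number of satisfying assignments of $\Phi'$ that are consistent with $\tau$ and satisfy $\sigma_u = i$. Since every such $\sigma$ picks up a $\lambda$-weight of $\lambda^{M_\tau + i}$ and all the remaining complex-field contributions are fixed by $\tau$, the marginal factors as
\[
\psi(\tau) \;=\; \frac{\lambda^{M_\tau}}{Z(\Phi', \lambda, \*f, \kcnfPFI)}\bigl(N_0 + \lambda N_1\bigr).
\]
I would then extract $\lambda_c > \gamma$ from \Cref{condition:k-cnf-main-condition}-(\ref{item:k-cnf-LLL-constraint}): the decay bound $\tp{\frac{\max\set{1,\lambda_c+\gamma}}{1+\lambda_c-\gamma}}^{\markedK} \le \frac{1}{16\mathrm{e}^3\Delta^2 k^4}$ forces the base of the power to be strictly less than $1$, hence $\lambda_c > \gamma$ (and in particular $1+\lambda_c-\gamma > 0$). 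Combined with \Cref{condition:k-cnf-main-condition}-(\ref{item:k-cnf-lambda-constraints}), this yields $\Re(\lambda) \ge \lambda_c - \gamma > 0$, so $\lambda^{M_\tau} \neq 0$ and $\Re(N_0 + \lambda N_1) = N_0 + \Re(\lambda) N_1 \ge 0$. Extendability of $\tau$ guarantees that at least one of $N_0, N_1$ is positive, so the inequality is strict and $N_0 + \lambda N_1 \neq 0$, yielding $\psi(\tau) \neq 0$ as required.

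The only mildly delicate step is extracting $\lambda_c > \gamma$ from the decay inequality; everything else is a direct bookkeeping calculation. Unlike the hypergraph coloring case in \Cref{lemma:coloring-one-special-well-definedness}, no Lov\'asz local lemma or local-uniformity estimate is needed here, because the strict positivity of $\Re(\lambda)$ already rules out cancellation in $N_0 + \lambda N_1$.
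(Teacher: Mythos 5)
Your proof is correct, but it takes a genuinely different route from the paper's at the key step. Both arguments reduce the claim to showing that $N_0+\lambda N_1$ (the denominator $X_0+\lambda X_1$ in \Cref{lemma:k-cnf-transition-measure-expression}) is nonzero for every extendable $\tau$, with extendability supplying $N_0+N_1\ge 1$. The paper only invokes the weak bound $\mathfrak{R}(\lambda)\ge-\gamma$ coming from $|\lambda-\lambda_c|\le\gamma$ with $\lambda_c\ge 0$, so it must rule out $X_0\le\gamma X_1$; it does so by contradiction using the local uniformity estimate of \Cref{lemma:k-cnf-local-uniformity} applied to the formula simplified by $\tau$ (where every surviving clause retains at least $\unmarkedK$ free unmarked variables), which gives $X_1/(X_0+X_1)\le\tfrac12\exp(1/s)<1/(1+\gamma)$. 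You instead observe that the middle inequality in \Cref{condition:k-cnf-main-condition}-(\ref{item:k-cnf-LLL-constraint}) forces $\max\{1,\lambda_c+\gamma\}<1+\lambda_c-\gamma$, hence $\lambda_c>\gamma$ and $\mathfrak{R}(\lambda)\ge\lambda_c-\gamma>0$, so that $\mathfrak{R}(N_0+\lambda N_1)=N_0+\mathfrak{R}(\lambda)N_1>0$ outright; this extraction is valid (the base of the power is positive and must be strictly below $1$), and it is consistent with the paper's remark that the CNF zero-free region excludes $0$. What each approach buys: yours is shorter and dispenses with the LLL/local-uniformity machinery for this lemma, but it couples well-definedness to the numerical decay condition; the paper's argument establishes nonvanishing of the transition measure even in regimes where $\lambda$ could have slightly negative real part (i.e., it does not rely on $\lambda_c>\gamma$), and the local uniformity estimate it sets up is reused anyway in \Cref{lemma:k-cnf-verify-decay-condition}.
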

In order to prove \Cref{lemma:k-cnf-well-definedness}, we introduce the following lemma which shows that when considering the transition measure of the projected measure $\psi$, it suffices to consider the complex external field only on the variable being updated.

\begin{lemma}
\label{lemma:k-cnf-transition-measure-expression}
For any variable $v\in \+M$, for any extendable partial assignment $\tau\in \*\Sigma_{V\setminus \{v\}}$ defined on $V\setminus \{v\}$ under the projection, let $(X_i)_{i\in \set{0, 1}}$ be the number of satisfying assignments under the alphabet $\set{0,1}^V$ that are consistent with $\tau$ and the value of $v$ is consistent with $i^\kcnfPF$. 

Then it holds that
\[
\psi^{\tau}_v(\kcnfPFO) = \frac{X_0}{X_0 + \lambda X_1}, \quad \psi^{\tau}_v(\kcnfPFI) = \frac{\lambda X_1}{X_0 + \lambda X_1}.
\]
\end{lemma}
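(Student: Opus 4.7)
The plan is to mirror the proof of \Cref{lemma:coloring-one-special-transition-measure-expression}: expand $\psi_v^\tau(\cdot)$ via the definition of the projected measure (\Cref{definition:csp-projected-measure-under-f}), factor out the external-field contribution coming from $V\setminus\{v\}$, and observe that this common factor cancels between numerator and denominator. Because $v\in\+M$, the only remaining $\lambda$-dependence in the ratio is the single factor attached to $v$ itself.

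More concretely, first I would fix an extendable $\tau\in\*\Sigma_{V\setminus\{v\}}$ and set $c\defeq |\+M\cap \tau^{-1}(\kcnfPFI)|$, the number of marked variables other than $v$ that the projection $\tau$ forces to take the value $1$. By the definition of $\*f$ (\Cref{definition:k-cnf-state-compression}), the projected symbol on each marked variable $u\neq v$ uniquely determines $\sigma_u$, whereas on each unmarked variable $u$ the value $\sigma_u\in\{0,1\}$ is free but contributes no external field (only marked variables do). Hence for every $\sigma\in\Omega_{\Phi'}$ consistent with $\tau$ and with $\sigma_v=i\in\{0,1\}$, the weight defined in \Cref{definition:k-cnf-complex-extensions} factorizes as $\lambda^{|\+M\cap\sigma^{-1}(1)|}=\lambda^{c}\cdot \lambda^{\=I[i=1]}$, independently of how the unmarked variables are assigned. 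Summing over all such $\sigma$ gives that the total weight of assignments extending $\tau$ and pinning $v$ to $i$ equals $\lambda^c\cdot\lambda^{\=I[i=1]}\cdot X_i$.

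From there I would just plug into \Cref{definition:csp-projected-measure-under-f}: for each $i\in\{0,1\}$,
\[
\psi(\tau \land f_v(\sigma_v)=i^{\kcnfPF}) \;=\; \frac{\lambda^{c}\lambda^{\=I[i=1]} X_i}{Z^{\-{CNF}}_{\Phi',\+M}(\lambda)},
\]
and dividing by $\psi(\tau)=\frac{\lambda^c(X_0+\lambda X_1)}{Z^{\-{CNF}}_{\Phi',\+M}(\lambda)}$ cancels both $\lambda^c$ and $Z^{\-{CNF}}_{\Phi',\+M}(\lambda)$. This yields $\psi_v^\tau(\kcnfPFO)=X_0/(X_0+\lambda X_1)$ and $\psi_v^\tau(\kcnfPFI)=\lambda X_1/(X_0+\lambda X_1)$, as claimed. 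The only care point is well-definedness of $\psi_v^\tau$, i.e.\ that $\psi(\tau)\neq 0$; but this is given by hand since we assume $\tau$ is extendable, and in any case it is exactly the hypothesis that is being verified in \Cref{lemma:k-cnf-well-definedness} (so no circularity is created by using it here as an intermediate algebraic identity). No step here is a substantial obstacle; the content of this lemma is purely a bookkeeping identity separating the external-field contribution of $v$ from that of the rest of $V$.
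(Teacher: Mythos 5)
Your proposal is correct and follows the same route as the paper's proof: the paper likewise observes that the external-field contribution from $V\setminus\{v\}$ is fixed by $\tau$ (since marked variables are determined by their projected symbols and unmarked variables carry no field), so only the $\lambda$-factor at $v$ survives the cancellation. Your version just makes the common factor $\lambda^{c}$ and its cancellation explicit, which the paper leaves implicit.
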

\begin{proof}
By the definition of the complex extensions of CNF formulas (\Cref{definition:k-cnf-complex-extensions}), we only add complex external fields on the projected symbol $\kcnfPFI$. So for any satisfying assignment $\sigma\in \set{0,1}^V$ that is consistent with $\tau \in \*\Sigma_{V\setminus \set{v}}$, the contributions of the complex external fields of $\lambda$ to the weight $\lambda^{|\+M \cap \sigma^{-1}(1)|}$ are always the same except for that of the variable $v$. So we only consider the $\lambda$ at the variable $v$ when we consider the transition measure $\psi^{\tau}_v$. So this lemma follows.
\end{proof}

We also include the next useful lemma from~\cite{feng2021fast} which establishes the \emph{local uniformity} for marginal probability.
\begin{lemma}[\text{\cite[Corollary 2.2]{feng2021fast}}]
\label{lemma:k-cnf-local-uniformity}
Let $\Phi = (V, \set{0,1}^V, C)$ be a CNF formula. Assume that each clause contains at least $k_1$ variables and at most $k_2$ variables, and each variable belongs to at most $\Delta$ clauses.
Let $\`X$ be the uniform distribution over all satisfying assignments of $\Phi$.
For any $s\ge k_2$, if $2^{k_1}\ge 2\mathrm{e}\Delta s$, then there exists a satisfying assignment for $\Phi$ and for any $v\in V$,
\[\max\set{\Pr[\sigma\sim \`X]{\sigma_v=0},\Pr[\sigma\sim \`X]{\sigma_v=1} }\le \frac{1}{2}\exp\tp{\frac{1}{s}}.\]
\end{lemma}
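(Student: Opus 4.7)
The plan is to prove this local uniformity statement as a direct application of the lopsided Lov\'asz local lemma (\Cref{theorem:HSS}) combined with the classical \Cref{theorem:LLL}. I would view $\`X$ as the conditioning of the uniform product measure $\`P$ on $\set{0,1}^V$ on the event $\bigwedge_{c\in C}\overline{B_c}$, where $B_c$ is the event that clause $c$ is violated. Since every clause contains at least $k_1$ literals, $\`P[B_c]\le 2^{-k_1}$, and in the dependency graph each $B_c$ is adjacent to at most $\Delta k_2$ other $B_{c'}$'s, because each clause has at most $k_2$ variables and each variable lies in at most $\Delta$ clauses.

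First I would choose the uniform LLL weight $x(c)\defeq \tfrac{1}{2\Delta s}$ for every $c\in C$. To verify the asymmetric LLL hypothesis \cref{eq:LLL}, I use $(1-\tfrac{1}{2\Delta s})^{\Delta k_2}\ge \exp\tp{-\tfrac{\Delta k_2}{2\Delta s-1}}\ge \mathrm{e}^{-1}$, where the last step holds because $s\ge k_2$ forces $\Delta k_2\le 2\Delta s-1$. Consequently the LLL condition reduces to $2^{-k_1}\le \tfrac{1}{2\mathrm{e}\Delta s}$, which is exactly the hypothesis $2^{k_1}\ge 2\mathrm{e}\Delta s$. Applying \Cref{theorem:LLL} already yields the existence of a satisfying assignment.

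Next, to control the marginal on $v$, I apply \Cref{theorem:HSS} with the event $\+A=\set{\sigma_v=a}$ for each $a\in\set{0,1}$. Here $\vbl(\+A)=\set{v}$ and $\`P[\+A]=\tfrac{1}{2}$, and at most $\Delta$ constraints $c\in C$ satisfy $\vbl(c)\cap \vbl(\+A)\neq\emptyset$. The theorem gives
\[
\Pr[\sigma\sim\`X]{\sigma_v=a}\;\le\; \tfrac{1}{2}\cdot \tp{1 - \tfrac{1}{2\Delta s}}^{-\Delta}\;\le\; \tfrac{1}{2}\exp\tp{\tfrac{\Delta}{2\Delta s-1}}\;\le\; \tfrac{1}{2}\exp\tp{\tfrac{1}{s}},
\]
where the final inequality uses $\Delta s\ge 1$, a consequence of $s\ge k_2\ge 1$ and $\Delta\ge 1$. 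Taking the maximum over $a\in\set{0,1}$ yields the claimed bound.

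The only mildly delicate point is the simultaneous balance of two estimates with a single choice of $x(c)$: the same weight must both satisfy the LLL condition under the numerical assumption $2^{k_1}\ge 2\mathrm{e}\Delta s$ and yield the marginal bound $\tfrac{1}{2}\exp(1/s)$. The uniform choice $x(c)=\tfrac{1}{2\Delta s}$ threads this needle; any comparable choice of the form $\tfrac{c_0}{\Delta s}$ for a suitable constant $c_0$ would give a qualitatively identical conclusion, only with different constants in the hypotheses and the conclusion.
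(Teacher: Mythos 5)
Your proof is correct and follows essentially the same route as the cited source \cite[Corollary~2.2]{feng2021fast} (which the paper imports without reproving): verify the asymmetric LLL condition with a uniform weight of order $\tfrac{1}{\Delta s}$ and then invoke the lopsided local lemma (\Cref{theorem:HSS}) to bound the conditional marginal of $\set{\sigma_v=a}$. All the numerical steps check out, including the reductions $\Delta k_2 \le 2\Delta s - 1$ from $s\ge k_2$ and $\tfrac{\Delta}{2\Delta s-1}\le \tfrac1s$ from $\Delta s\ge 1$.
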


Now we prove \Cref{lemma:k-cnf-well-definedness}.
\begin{proof}[Proof of \Cref{lemma:k-cnf-well-definedness}]
Under the assumption that $\psi$ is well-defined, by our state-compression scheme (\Cref{definition:k-cnf-state-compression}), it suffices to check for any $v\in \+M$, for any extendable partial assignment $\tau\in \*\Sigma_{V\setminus \{v\}}$ defined on $V\setminus \{v\}$ under the projection, $\psi^{\tau}_v$ is well-defined.

Let $X_0$ be the number of satisfying assignments of $\Phi'$ that are consistent with $\tau$, and $v$ has the value $0$. Let $X_1$ be the number of satisfying assignments of $\Phi'$ that are consistent with $\tau$, and $v$ has the value $1$.
By \Cref{lemma:k-cnf-transition-measure-expression}, it holds that
\[
\psi^{\tau}_v(\kcnfPFO) = \frac{X_0}{X_0 + \lambda X_1}, \quad \psi^{\tau}_v(\kcnfPFI) = \frac{\lambda X_1}{X_0 + \lambda X_1}.
\]
Next, we show they are well-defined ($\lambda X_1 + X_0 \neq 0$).
For a complex number $x$, we use $\`R(x)$ to denote its real part.
It suffices to show that $\`R(\lambda X_1 + X_0) > 0$.
Recall that there exists a real number $\lambda_c \ge 0$ and $|\lambda - \lambda_c| \le \gamma$. Hence $\`R(\lambda) \ge -\gamma$ and $\`R(\lambda X_1 + X_0) = \`R(\lambda) X_1 + X_0 \ge X_0 - X_1\cdot \gamma$. It suffices to show that $X_0 > X_1 \cdot \gamma$. We show this by a proof of contradiction. Assume that $X_0 \le X_1 \cdot \gamma$. 
If $X_1 = 0$, recall that $\tau$ is an extendable partial assignment, so $X_0 \ge 1$. So we have $X_0 > X_1\cdot \gamma$ which reaches a contradiction. Therefore we further assume that $X_1 \ge 1$.
By $X_0 \le X_1 \cdot \gamma$, we have that 
$\frac{X_1}{X_0 + X_1} \ge \frac{1}{1+\gamma}$.

However, we can use \Cref{lemma:k-cnf-local-uniformity} to establish an upper bound $\frac{X_1}{X_0 + X_1} < \frac{1}{1+\gamma}$. Hence, we reach a contradiction.
Note that $(\Phi, \Lambda, \lambda)$ satisfies \Cref{condition:k-cnf-main-condition}, then $(\Phi', \Lambda, \lambda)$ also satisfies \Cref{condition:k-cnf-main-condition}.
In order to apply \Cref{lemma:k-cnf-local-uniformity}, we use the partial assignment $\tau\in\Sigma_{V\setminus \set{v}}$ to simplify the CNF formula $\Phi'$, i.e., removing satisfied clauses and remove marked variables in $V\setminus \set{v}$ as they are fixed. 
Note that for any unmarked variable $u$, it holds that projected symbol $\tau_u = \coF$ which does not provide any information to the original value of $u$. So unmarked variables are not fixed.
By \Cref{condition:k-cnf-main-condition}-(\ref{item:k-cnf-marked-unmarked}), in the simplified CNF formula $\Phi''$, for any clause $c$, there are at least $\unmarkedK$ variables and at most $k$ variables.
Let $s = \cnfS$, note $s\ge k$ and $2^{\unmarkedK} \ge 2\mathrm{e}\Delta s$.
Combined with \Cref{condition:k-cnf-main-condition}-(\ref{item:k-cnf-LLL-constraint}), it holds that $\frac{X_1}{X_0 + x_1} \le \frac{1}{2}\exp\tp{\frac{1}{s}} < \frac{1}{1+\gamma}$.

Hence, $\psi_v^\tau$ is well-defined and \Cref{condition:csp-conditional-measure-analysis}-(\ref{item:csp-transition-well-define}) holds.
\end{proof}

Next, in order to verify \Cref{condition:csp-conditional-measure-analysis}-(\ref{item:csp-decomposition-decay}), we construct the \decompositionScheme-decomposition.
And after that, we provide some intuitions of it (\Cref{remark:k-cnf-decomposition-scheme}).
\begin{definition}[\decompositionScheme-decomposition scheme for $(k, \Delta)$-CNF formulas]
\label{definition:k-CNF-decomposition-scheme}
Let $\Phi = (V, \set{0,1}^V, C)$ be a $(k, \Delta)$-CNF formula with the state-compression scheme $\*f$ (\Cref{definition:k-cnf-state-compression}), and the complex external field $\lambda$ on the projected symbol $\kcnfPFI$. Let $\+M\subseteq V$ be the set of marked variables. Let $c^* \in C$ with the largest index number, and let $\Phi' = (V, \set{0, 1}^V, C\setminus \set{c^*})$.

Suppose the projected measure $\psi$ induced by $\Phi'$ and $\*f$ has a nonzero partition function, i.e., $Z(\Phi', \lambda, \*f, 1^\coF)\neq 0$ and $(\Phi, \Lambda, \lambda)$ satisfies \Cref{condition:k-cnf-main-condition}.
Let $s = \cnfS$.
For any unmarked variable $v\in V\setminus \+M$, we set $b_v(\kcnfPF) \defeq 1, b_v(\bot) \defeq 0$.
For any marked variable $v\in \+M$, we set
\[
b_v(x) \defeq \begin{cases}
\frac{1-1/2\exp(1/s)}{1+ 1/2\cdot\exp(1/s)\cdot(\lambda-1)}& x = \kcnfPFO\\[1.2ex]
\frac{\lambda \cdot (1-1/2\cdot\exp(1/s))}{1 + 1/2\cdot\exp(1/s)\cdot (\lambda-1)}&x=\kcnfPFI\\[1.2ex]
\frac{\lambda (\exp(1/s)-1)}{1 + 1/2\cdot\exp(1/s)\cdot (\lambda-1)}& x = \bot
\end{cases}.
\]
\end{definition}
We now provide some intuitions about how we set $b_v$ for marked variables $v\in \+M$.
\begin{remark}
    \label{remark:k-cnf-decomposition-scheme}
Consider one transition step of the complex systematic scan Glauber dynamics, 
let $v\in \+M$ be the variable that we are updating, and let $\tau$ be the current assignment on $V \setminus \{v\}$ which is an extendable partial assignment under the projection.

For any $j\in \{0,1\}$, let $\psi^{\tau}_v(j^\kcnfPF)$ be the complex measure that the value of $v$ is updated to $j^\kcnfPF$ in this transition step, and 
let 
$X_j$ be the number of satisfying assignments on the whole variables set $V$ that are consistent with $\tau$ for the variables in $V\setminus \set{v}$ and the assignment of $v$ is $j$.
By \Cref{lemma:k-cnf-transition-measure-expression}, it holds that
\[
\psi^{\tau}_v(\kcnfPFO) = \frac{X_0}{X_0 + \lambda X_1}, \quad \psi^{\tau}_v(\kcnfPFI) = \frac{\lambda X_1}{X_0 + \lambda X_1}.
\]
By the local uniformity (\Cref{lemma:k-cnf-local-uniformity}), we expect $X_0 \approx X_1 \approx 1 - \frac{1}{2}\exp\tp{\frac{1}{s}}$,
which leads to the setting of $b_v$.
\end{remark}

Next, we verify the \Cref{condition:csp-conditional-measure-analysis}-(\ref{item:csp-decomposition-decay}).
\begin{lemma}
\label{lemma:k-cnf-verify-decay-condition}
Let $\Phi = (V, \set{0,1}^V, C)$ be a $(k, \Delta)$-CNF formula with the state-compression scheme $\*f$ (\Cref{definition:k-cnf-state-compression}), and the complex external field $\lambda$ on the projected symbol $\kcnfPFI$. Let $\+M\subseteq V$ be the set of marked variables. Let $c^* \in C$ with the largest index number, and let $\Phi' = (V, \set{0, 1}^V, C\setminus \set{c^*})$.

Suppose the projected measure $\psi$ induced by $\Phi'$ and $\*f$ has a nonzero partition function, i.e., $Z(\Phi', \lambda, \*f, 1^\coF)\neq 0$ and $(\Phi, \Lambda, \lambda)$ satisfies \Cref{condition:k-cnf-main-condition}. Let \decompositionScheme be the decomposition scheme defined in \Cref{definition:k-CNF-decomposition-scheme}, then \Cref{condition:csp-conditional-measure-analysis}-(\ref{item:csp-decomposition-decay}) holds.


\end{lemma}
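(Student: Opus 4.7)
The plan is to bound the two quantities $\conditionalDecayConstant$ and $\conditionalTriangleConstant$ from \Cref{condition:csp-conditional-measure-analysis}-(\ref{item:csp-decomposition-decay}) separately and then combine them with the numerical hypotheses of \Cref{condition:k-cnf-main-condition}-(\ref{item:k-cnf-LLL-constraint}). Throughout, I will write $s=\cnfS$ and use that $\abs{\lambda-\lambda_c}\le\gamma=\cnfGamma$ so that, e.g., $\abs{\lambda}\le\max\{1,\lambda_c+\gamma\}$ and $\abs{1+(\lambda-1)\cdot\tfrac12\exp(1/s)}\ge 1-\tfrac{1-\lambda_c+\gamma}{2}\exp(1/s)\ge \tfrac{1+\lambda_c-\gamma}{2}(1-O(1/s))$.

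\textbf{Bounding $\conditionalTriangleConstant$.} Fix a variable $u$. For $u\in V\setminus\+M$ there is a unique symbol $\coF$ in the alphabet and $b_u(\bot)=0$, so the corresponding summand equals $1$. For $u\in\+M$, I will first use \Cref{lemma:k-cnf-transition-measure-expression} to write, for every extendable $\tau$,
\[
\psi^\tau_u(\kcnfPFO)=\frac{X_0}{X_0+\lambda X_1},\qquad \psi^\tau_u(\kcnfPFI)=\frac{\lambda X_1}{X_0+\lambda X_1},
\]
and then apply \Cref{lemma:k-cnf-local-uniformity} to the formula obtained after pinning $\tau$ to obtain $\max\{X_0,X_1\}/(X_0+X_1)\le\tfrac12\exp(1/s)$; this uses \Cref{condition:k-cnf-main-condition}-(\ref{item:k-cnf-LLL-constraint}) to verify the hypothesis $2^{\unmarkedK}\ge 2\mathrm{e}\Delta s$, exactly as in the proof of \Cref{lemma:k-cnf-well-definedness}. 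Substituting these marginals into \eqref{eq:decomposition-adatptive-measure} and using the definition of $b_u$, the numerator of $\psi_u^{\tau,\bot}(\cdot)$ is controlled by $\tfrac12\exp(1/s)-X_0/(X_0+\lambda X_1)$ (and its analogue for $\kcnfPFI$), which is $O(1/s)$. Combining with the lower bound on $\abs{1+(\lambda-1)\cdot\tfrac12\exp(1/s)}$, I expect to obtain $|b_u(\bot)|\cdot\max_\tau\sum_x|\psi_u^{\tau,\bot}(x)|=O(1/s)$, and $\sum_x|b_u(x)|\le\frac{(1+\max\{1,\lambda_c+\gamma\})(1-\tfrac12\exp(1/s))}{(1+\lambda_c-\gamma)/2(1-O(1/s))}$, giving
\[
\conditionalTriangleConstant\le 1+O\!\left(\tfrac{1}{s}\right)\le 1+\tfrac{1}{4\Delta^2 k^5}
\]
by the choice $s=\cnfS$.

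\textbf{Bounding $\conditionalDecayConstant$.} A $k$-CNF clause $c$ has exactly one violating assignment $\sigma\in c^{-1}(\false)$. The product $\prod_{u\in\vbl(c)}(|b_u(f_u(\sigma_u))|+|b_u(\bot)|\max_\tau\sum_x|\psi_u^{\tau,\bot}(x)|)$ splits into the $\ge\unmarkedK$ unmarked factors (each equal to $1$ by $b_u(\kcnfPF)=1$, $b_u(\bot)=0$) and the $\ge\markedK$ marked factors. For a marked $u$, the estimate from the previous step together with $|b_u(f_u(\sigma_u))|\le\frac{\max\{1,\lambda_c+\gamma\}(1-\tfrac12\exp(1/s))}{(1+\lambda_c-\gamma)/2(1-O(1/s))}\le\frac{\max\{1,\lambda_c+\gamma\}}{1+\lambda_c-\gamma}(1+O(1/s))$ gives a per-marked-variable factor of at most $\frac{\max\{1,\lambda_c+\gamma\}}{1+\lambda_c-\gamma}(1+O(1/s))$. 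Taking the product over at least $\markedK$ marked variables and using $(1+O(1/s))^k\le 2$ for $s\gtrsim\Delta^2 k^5$, I obtain
\[
\conditionalDecayConstant\le 2\left(\tfrac{\max\{1,\lambda_c+\gamma\}}{1+\lambda_c-\gamma}\right)^{\markedK}.
\]

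\textbf{Combining.} Plugging the two bounds into \eqref{item:csp-conditional-meaure-analysis-numerical}, the target inequality becomes
\[
4\mathrm{e}\Delta^2 k^4\cdot 2\left(\tfrac{\max\{1,\lambda_c+\gamma\}}{1+\lambda_c-\gamma}\right)^{\markedK}\!\cdot\left(1+\tfrac{1}{4\Delta^2 k^5}\right)^{4\Delta^2 k^5}\le\tfrac14.
\]
The last factor is at most $\mathrm{e}$, so it suffices that $\left(\tfrac{\max\{1,\lambda_c+\gamma\}}{1+\lambda_c-\gamma}\right)^{\markedK}\le\tfrac{1}{16\mathrm{e}^3\Delta^2 k^4}$, which is precisely the second inequality in \Cref{condition:k-cnf-main-condition}-(\ref{item:k-cnf-LLL-constraint}). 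The main technical obstacle will be the first step: establishing a clean $O(1/s)$ bound on $|b_u(\bot)|\cdot\max_\tau\sum_x|\psi_u^{\tau,\bot}(x)|$ for complex $\lambda$, because the argument must simultaneously (i) control the complex denominators $X_0+\lambda X_1$ and $1+(\lambda-1)\cdot\tfrac12\exp(1/s)$ away from zero uniformly in $\tau$, and (ii) balance the real-valued local uniformity estimate of \Cref{lemma:k-cnf-local-uniformity} against the complex normalizing factor in $b_u$ so that the $\tfrac12\exp(1/s)$ correction cancels to leading order. Once this is done, the rest is a direct substitution.
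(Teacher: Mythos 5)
Your proposal follows the same route as the paper: bound $\conditionalTriangleConstant$ and $\conditionalDecayConstant$ by applying \Cref{lemma:k-cnf-local-uniformity} to the $\tau$-simplified formula (using $2^{\unmarkedK}\ge 2\mathrm{e}\Delta s$), splitting the per-clause product into unmarked factors equal to $1$ and marked factors close to $\frac{\max\{1,\lambda_c+\gamma\}}{1+\lambda_c-\gamma}$, and then feeding the result into \Cref{condition:k-cnf-main-condition}-(\ref{item:k-cnf-LLL-constraint}); your final combination (constant $2$ instead of $\mathrm{e}$, needing $16\mathrm{e}^3\ge 32\mathrm{e}^2$) is fine.

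Two points. First, a concrete slip: in your bound $\sum_x|b_u(x)|\le\frac{(1+\max\{1,\lambda_c+\gamma\})(1-\tfrac12\exp(1/s))}{(1+\lambda_c-\gamma)/2\,(1-O(1/s))}$, the numerator should use $1+|\lambda|\le 1+\lambda_c+\gamma$, not $1+\max\{1,\lambda_c+\gamma\}$. As written, for $\lambda_c$ near $0$ your right-hand side is $\approx 2$, which destroys the conclusion $\conditionalTriangleConstant\le 1+\tfrac{1}{4\Delta^2k^5}$; with the correct numerator you get $\frac{1+\lambda_c+\gamma}{1+\lambda_c-\gamma}(1+O(1/s))=1+O(\gamma)+O(1/s)$, matching the paper's \cref{eq:k-cnf-triangle-upper-bound}. (Relatedly, your error terms are really $O(\gamma)+O(1/s)$ rather than $O(1/s)$; this is harmless only because $\gamma=\cnfGamma$ and $s=\cnfS$ make $\gamma=1/s$.) Second, the step you defer as "the main technical obstacle" — a uniform bound on $|b_u(\bot)|\cdot\max_\tau\sum_x|\psi_u^{\tau,\bot}(x)|$ with complex denominators controlled away from zero — is where the paper spends essentially all of its effort: it defines the explicit quantities $J_0,J_1$ and bounds each by a case analysis on $|\lambda|\le 2\gamma$ versus $|\lambda|>2\gamma$ (in the latter case dividing through by $\lambda$ or $\lambda^2$ and using $\mathfrak{R}(\lambda)>0$), arriving at $96\gamma+192/s$. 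Your outline is the right plan, but this case analysis is the substantive content of the lemma and is not supplied.
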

Before we prove the above lemma, we first use it to prove \Cref{lemma:k-cnf-verify-sufficient-condition-for-general-zero-freeness-thm}.
\begin{proof}[Proof of \Cref{lemma:k-cnf-verify-sufficient-condition-for-general-zero-freeness-thm}]
Note that $(\Phi_1, \Lambda, \lambda), (\Phi_2, \Lambda, \lambda), \dots, (\Phi_m, \Lambda, \lambda)$ satisfy \Cref{condition:k-cnf-main-condition}.
This lemma follows directly from \Cref{lemma:k-cnf-well-definedness} and \Cref{lemma:k-cnf-verify-decay-condition}.
\end{proof}

Finally, we prove \Cref{lemma:k-cnf-verify-decay-condition}.
\begin{proof}[Proof of \Cref{lemma:k-cnf-verify-decay-condition}]
Let $\gamma = \cnfGamma$.
Recall that there exists a real number $\lambda_c \ge 0$, such that $\abs{\lambda - \lambda_c} \le \gamma$ or equivalently, $\lambda\in \+D(\lambda_c, \gamma)$.
Recall the definition of $\conditionalDecayConstant$ in \cref{item:csp-conditional-meaure-analysis-decay} of \Cref{condition:csp-conditional-measure-analysis} and the definition of $\conditionalTriangleConstant$ in \cref{item:csp-conditional-meaure-analysis-triangle} of \Cref{condition:csp-conditional-measure-analysis}.
We claim that $\conditionalDecayConstant\le \mathrm{e} \tp{\frac{\max\set{1, \lambda_c + \gamma}}{1 + \lambda_c - \gamma}}^{\markedK}$ and $\conditionalTriangleConstant \le 1 + \frac{1}{4\Delta^2 k^5}$. 
Then this lemma follows from \Cref{condition:k-cnf-main-condition}-(\ref{item:k-cnf-LLL-constraint}):
\[
4\mathrm{e} \Delta^2 k^4\cdot \conditionalDecayConstant\cdot \conditionalTriangleConstant^{4\Delta^2 k^5} \le 4\mathrm{e}^3 \Delta^2 k^4 \cdot \tp{\frac{\max\set{1, \lambda_c + \gamma}}{1 + \lambda_c - \gamma}}^{\markedK} \le \frac{1}{4}.
\]
Now we bound $\conditionalDecayConstant$ and $\conditionalTriangleConstant$ and prove this claim.

For each variable $v$, let $\+F_{V\setminus \{v\}} \subseteq \*\Sigma_{V\setminus \{v\}} $ be the set of extendable partial assignments under the projection, we first bound the following quantity,
\begin{equation}
\label{eq:k-cnf-resample-part}
|b_v(\bot)|\cdot \max_{\sigma\in \+F_{V\setminus \{v\}}} (|\psiAdaptive{v}{\sigma}(\kcnfPFO)| + |\psiAdaptive{v}{\sigma}(\kcnfPFI)|).
\end{equation}
And after this, we consider the upper bound of $\max_{v\in \+M} |b_v(0^\coF)|$ and $\max_{v\in \+M} |b_v(1^\coF)|$. 
Then we upper bound $\conditionalDecayConstant$ and $\conditionalTriangleConstant$ by these quantities.

We first consider \cref{eq:k-cnf-resample-part}.
Recall that if $v$ is an unmarked variable, then $|b_v(\bot)|=0$ and the above quantity is $0$. So we assume that $v$ is a marked variable.

By \Cref{lemma:k-cnf-local-uniformity}, let $\`X$ be the uniform distribution on all satisfying assignments in $\Omega_\Phi$ that are consistent with $\sigma \in \+F_{V\setminus \{v\}}$. 
We use the partial assignment $\sigma$ to simplify the CNF formula $\Phi'$, i.e., removing satisfied clauses and remove marked variables in $V\setminus \set{v}$ as they are fixed.
Note that for any unmarked variable $u$, it holds that projected symbol $\tau_u = \coF$ which does not provide any information to the original value of $u$. So unmarked variables are not fixed. 
By \Cref{condition:k-cnf-main-condition}-(\ref{item:k-cnf-marked-unmarked}), in the simplified CNF formula $\Phi''$, for any clause $c$, there are at least $\unmarkedK$ variables and at most $k$ variables.

Let $s = \cnfS$, note $s\ge k$ and $2^{\unmarkedK} \ge 2\mathrm{e}\Delta s$ (\Cref{condition:k-cnf-main-condition}-(\ref{item:k-cnf-LLL-constraint})), it holds that 
\[\max\set{\Pr[\tau\sim \`X]{\tau_v=0},\Pr[\tau\sim \`X]{\tau_v=1} }\le \frac{1}{2}\exp\tp{\frac{1}{s}}.\]
Let $x = \Pr[\tau\sim \`X]{\tau_v=1}$. It holds that 
\begin{align*}
&~|b_v(\bot)|\cdot \max_{\sigma\in \+F_{V\setminus \{v\}}} (|\psiAdaptive{v}{\sigma}(\kcnfPFO)| + |\psiAdaptive{v}{\sigma}(\kcnfPFI)|)\\
\le&~ \max_{\substack{1 - 1/2\exp(1/s)\le x\le 1/2\exp(1/s)\\ \lambda \in \+D(\lambda_c, \cnfGamma)}} \tp{\abs{\frac{1 - x}{1 + (\lambda - 1)x} - b_v(0)} + \abs{\frac{\lambda x}{1 + (\lambda - 1)x} - b_v(1)}}.
\end{align*}
We define $J_0$ and $J_1$ as follows.
\begin{align*}
  J_0 \defeq&~ \max_{\substack{1 - 1/2\exp(1/s)\le x\le 1/2\exp(1/s)\\ \lambda \in \+D(\lambda_c, \cnfGamma)}}\abs{\frac{1 - x}{1 + (\lambda - 1)x} - \frac{1-1/2\exp(1/s)}{1+ 1/2\cdot\exp(1/s)\cdot(\lambda-1)}},\\
  J_1 \defeq&~ \max_{\substack{1 - 1/2\exp(1/s)\le x\le 1/2\exp(1/s)\\ \lambda \in \+D(\lambda_c, \cnfGamma)}} \abs{\frac{\lambda x}{1 + (\lambda - 1)x} - \frac{\lambda \cdot (1-1/2\cdot\exp(1/s))}{1 + 1/2\cdot\exp(1/s)\cdot (\lambda-1)}}.  
\end{align*}

We first consider $J_0$, for simplicity we omit the constraint $1 - 1/2\exp(1/s)\le x\le 1/2\exp(1/s)$ and $\lambda \in \+D(\lambda_c, \cnfGamma)$; it holds that
\[
J_0 = \max_{x, \lambda} \abs{\frac{1/2\cdot\lambda (\exp(1/s)-2x)}{(1+(\lambda-1)x)(1 + (\lambda-1)/2\cdot\exp(1/s))}}.
\]
Recall that $\gamma = \cnfGamma$. We upper bound $J_0$ by considering the following two cases, (1) $|\lambda| \le 2\gamma$, and (2) $|\lambda| > 2\gamma$.

For the first case that $|\lambda| \le 2 \gamma$, it holds that 
\[
J_0 \le  \frac{2 \gamma}{(1 - (1+2\gamma) x) (1 - (1+2\gamma)/2\cdot \exp(1/s))} \le 32\gamma,
\]
where the second inequality is due to the fact that with $s = \cnfS$ and $|\lambda| \le 2\cdot \cnfGamma$, we have $(1+2\gamma)x \le \frac{3}{4}$ and $(1+2\gamma)/2\cdot \exp(1/s) \le \frac{1}{4}$.

For the second case $|\lambda| > 2\gamma$, we have $\`R(\lambda) > 0$. Hence, we have 
\begin{align*}
J_0 =&~ \max_{x, \lambda} \abs{\frac{1/2\cdot\lambda (\exp(1/s)-2x)}{(1+(\lambda-1)x)(1 + (\lambda-1)/2\cdot\exp(1/s))}}\\
=&~ \max_{x, \lambda} \abs{\frac{1/2 \cdot (\exp(1/s) - 2x)}{((1-x)/\lambda + x) ( 1 + (\lambda - 1)/2\cdot \exp(1/s) )}}\\
\le&~ \max_{x} \frac{1/2 \cdot |\exp(1/s) - 2x|}{ x\cdot (1 - 1/2\exp(1/s))},
\end{align*}
the last inequality is due to the fact that $|(1 - x)/\lambda + x|\ge x$ and $|1 + (\lambda - 1)/2 \exp(1/s)| \ge 1 - 1/2\exp(1/s)$.
With the fact that $s = \cnfS$, we have that in this case $J_0 \le 8|\exp(1/s) - 2x|$.
Combined, we have that 
\begin{equation}
\label{eq:k-cnf-bound-J0-final}
J_0 \le 32 \gamma + 8 \max_{x}|\exp(1/s) - 2x|.
\end{equation}
We next upper bound $J_1$ similarly by considering the following two cases, (1) $|\lambda| \le 2\gamma$, and (2) $|\lambda| > 2\gamma$.
\[J_1 =  \max_{x, \lambda} \abs{\frac{\lambda ( (\exp(1/s) - 2)(1/2 - x) + \lambda x (\exp(1/s) - 1) )}{(1+(\lambda-1)x)(1 + (\lambda-1)/2\cdot\exp(1/s))}}.\]
For $|\lambda| \le 2\gamma$, it holds that $(1+(\lambda-1)x)(1 + (\lambda-1)/2\cdot\exp(1/s)) \ge (1 - (1 + 2\gamma)x)(1 - (1 + 2\gamma)/2 \cdot \exp(1/s))$. And we have $|\lambda ( (\exp(1/s) - 2)(1/2 - x) + \lambda x (\exp(1/s) - 1) )|\le 2\gamma( 2|1/2 - x| + 2\gamma\cdot |1 - \exp(1/s)| )$. Recall that $\gamma = \cnfGamma$ and $s = \cnfS$, so we have $(1+2\gamma)x\le \frac{3}{4}$, $(1+2\gamma)/2\cdot \exp(1/s)\le \frac{3}{4}$, $|1/2-x|\le \frac{1}{2}$ and $2\gamma |1-\exp(1/s)|\le 1$.
Combined, we have 
\begin{align*}
    J_1\le&~ 64 \gamma.
\end{align*}

For $|\lambda| > 2\gamma$, note that $\`R(\lambda) > 0$, we have that  
\begin{equation}
\label{eq:k-cnf-bound-J1-decomposition}
J_1 \le \max_{x, \lambda} \abs{\frac{\lambda(\exp(1/s)-2)(1/2-x)}{(1+(\lambda-1)x)(1 + (\lambda-1)/2\exp(1/s))}} + \max_{x, \lambda} \abs{\frac{\lambda^2 x (\exp(1/s)-1)}{(1+(\lambda-1)x)(1 + (\lambda-1)/2\exp(1/s))}}.
\end{equation}
For the first part of \cref{eq:k-cnf-bound-J1-decomposition}, we divide the fraction by $\lambda$ for both the numerator and the denominator. For the numerator, it can be upper bounded by $2 |1/2 - x|$. For the denominator, it holds that $\abs{\tp{\frac{1 - x}{\lambda} + x}\tp{1 - \frac{1}{2}\exp\tp{\frac{1}{s}} + \frac{\lambda}{2}\exp\tp{\frac{1}{s}}}} \ge |x\cdot (1 - 1/2\exp(1/s))|$. Combined with $s = \cnfS$ and $1 - 1/2\exp(1/s)\le x\le 1/2\exp(1/s)$, the denominator can be lower bounded by $\frac{1}{16}$. Hence, we can upper bound this part by $32\max_{x}|1/2 - x|$.

For the second part of \cref{eq:k-cnf-bound-J1-decomposition}, we divide the fraction by $\lambda^2$ for both the numerator and the denominator. For the numerator, it can be upper bounded by $|1 - \exp(1/s)|$. For the denominator, it holds that $\abs{ \tp{ \frac{1-x}{\lambda} + x } \tp{ \frac{1 - 1/2\exp(1/s)}{\lambda} + 1/2 \exp(1/s)} } \ge 1/2\exp(1/s) x$. Combined with $s = \cnfS$ and $1 - 1/2\exp(1/s)\le x \le 1/2\exp(1/s)$, the denominator can be lower bounded by $\frac{1}{16}$. Hence, this part can be upper bounded by $16|1-\exp(1/s)|$.

Combined, we upper bound $J_1$ as follows:
\begin{equation}
\label{eq:k-cnf-bound-J1-final}
J_1 \le 64\gamma + 16 |1-\exp(1/s)| + 32\max_{x} |1/2-x|.
\end{equation}
Combining \cref{eq:k-cnf-bound-J0-final}, \cref{eq:k-cnf-bound-J1-final} and the triangle inequality, we have that 
\begin{equation}
\label{eq:k-cnf-resample-bound}
\begin{aligned}
|b_v(\bot)|\cdot \max_{\sigma\in \+F_{V\setminus \{v\}}} (|\psiAdaptive{v}{\sigma}(\kcnfPFO)| + |\psiAdaptive{v}{\sigma}(\kcnfPFI)|) \le&~ 96\gamma + 24|1 - \exp(1/s)|+48\max_{x}|1/2-x| \\
\le&~ 96\gamma + \frac{192}{s},
\end{aligned}
\end{equation}
where the second inequality is due to $s = \cnfS$ and $1-1/2\exp(1/s)\le x \le 1/2\exp(1/s)$.
Next, for any $j\in\set{0 ,1}$, we upper bound $\max_{\substack{v\in\+M}} |b_v(j^\coF)|$. Recall the decomposition scheme defined in \Cref{definition:k-CNF-decomposition-scheme}.
Also recall that $\gamma = \cnfGamma$ and $|\lambda - \lambda_c| \le \gamma$.
W.l.o.g.\@, we consider $j = 0^\coF$, the other case can be handled similarly.
It suffices to upper bound 
\begin{align*}
\max_{\lambda} \frac{ \tp{1 - 1/2\exp(1/s)} }{\abs{1 - 1/2\exp(1/s) + \lambda/2 \exp(1/s)} } 
\le&~ \frac{1 - 1/2\exp(1/s)}{ 1 - 1/2\exp(1/s) + (\lambda_c - \gamma)/2 \exp(1/s) } \\
=&~ \frac{1}{1 + \lambda_c - \gamma}\cdot \frac{ (1 + \lambda_c - \gamma)(1 - 1/2\exp(1/s)) }{ 1 - 1/2 \exp(1/s) + (\lambda_c - \gamma)/2 \exp(1/s) } \\
=&~ \frac{1}{1 + \lambda_c - \gamma}\cdot\tp{1 + \frac{ (\lambda_c - \gamma)(1 - \exp(1/s)) }{ 1 - 1/2 \exp(1/s) + (\lambda_c - \gamma)/2 \exp(1/s) }}. \\
\end{align*}
Next, we upper bound the second term of the second part.
We consider two cases: (1) $|\lambda_c - \gamma| \le 2\gamma$ and (2) $\lambda_c - \gamma > 2\gamma$.

For case (1), it holds that 
\[
\frac{ (\lambda_c - \gamma)(1 - \exp(1/s)) }{ 1 - 1/2 \exp(1/s) + (\lambda_c - \gamma)/2 \exp(1/s) } \le \frac{ 2\gamma(\exp(1/s)- 1) }{ 1 - 1/2 \exp(1/s) - \gamma \exp(1/s) } \le 8\gamma,
\] 
where the last inequality is due to the fact that $s =\cnfS$ and $\gamma = \cnfGamma$ such that $\exp(1/s)-1\le 1$ and $1 - 1/2\exp(1/s) - \gamma \exp(1/s) \ge \frac{1}{4}$.

For case (2), note that $\lambda_c - \gamma > 2\gamma > 0$, $1 - 1/2\exp(1/s) + (\lambda_c - \gamma)/2\exp(1/s) > 0$ and $1 - \exp(1/s) < 0$. Hence, we can upper bound this term by $0$.

Combined, we have that 
\begin{equation}
\label{eq:k-cnf-b-v-0-bound}
\max_{v\in V}|b_{v}(0^\coF)| \le \frac{1}{1 + \lambda_c - \gamma} \cdot \tp{1 + 8\gamma}\le \frac{1}{1 + \lambda_c - \gamma} + 16\gamma.
\end{equation}
And similarly, we have 
\begin{equation}
\label{eq:k-cnf-b-v-1-bound}
\max_{v\in V}|b_{v}(1^\coF)| \le \frac{\lambda_c + \gamma}{1 + \lambda_c - \gamma} \cdot \tp{1 + 8\gamma} \le \frac{\lambda_c + \gamma}{1 + \lambda_c - \gamma} + 16\gamma.
\end{equation}
Now, we are ready to upper bound $\conditionalDecayConstant$ and $\conditionalTriangleConstant$ (defined in \cref{item:csp-conditional-meaure-analysis-decay} and \cref{item:csp-conditional-meaure-analysis-triangle}).
We first bound $\conditionalDecayConstant$.
For any constraint $c\in C$, it holds that there is only one partial assignment on $\vbl(c)$ violating the constraint $c$, i.e., $|c^{-1}(\false)|=1$.
Also recall that $\Phi'$ is a $(k, \Delta)$-CNF formula. Hence every constraint depends on $k$ variables. And by \Cref{condition:k-cnf-main-condition}-(\Cref{item:k-cnf-marked-unmarked}), each constraint contains at least $\markedK$ marked variables.
Combined with \cref{eq:k-cnf-resample-bound}, \cref{eq:k-cnf-b-v-0-bound} and \cref{eq:k-cnf-b-v-1-bound}, we have 
\begin{equation}
\label{eq:k-cnf-decay-upper-bound}
\conditionalDecayConstant \le \tp{\frac{\max\set{1, \lambda_c + \gamma}}{1 + \lambda_c - \gamma}+ 112\gamma + \frac{192}{s}}^{\markedK} \le  \mathrm{e} \tp{\frac{\max\set{1, \lambda_c + \gamma}}{1 + \lambda_c - \gamma}}^{\markedK},
\end{equation}
where the inequality holds because $\frac{\max\set{1, \lambda_c + \gamma}}{1 + \lambda_c - \gamma} \ge \frac{1}{2}$ and $(1 + 224\gamma + \frac{284}{s})^{\markedK} \le \mathrm{e}$ (with $\gamma = \cnfGamma$ and $s = \cnfS$).

Next, we bound $\conditionalTriangleConstant$. For any unmarked variables $v$, they contribute $1$ as $v$'s projected alphabet containing $\coF$ with $b_v(\coF) = 1$ and $b_v(\bot) = 0$.
Hence, we only consider marked variables. For any unmarked variable $v$, it holds that the projected alphabet $\Sigma_v = \set{0^\coF, 1^\coF}$. And the decomposition scheme is defined in \Cref{definition:k-CNF-decomposition-scheme}. Combined with \cref{eq:k-cnf-resample-bound}, \cref{eq:k-cnf-b-v-0-bound} and \cref{eq:k-cnf-b-v-1-bound}, we have 
\begin{equation}
\label{eq:k-cnf-triangle-upper-bound}
\conditionalTriangleConstant \le \frac{1 + \lambda_c + \gamma}{1 + \lambda_c - \gamma} + 128\gamma + \frac{192}{s} \le 1 + 132\gamma + \frac{192}{s} \le 1 + \frac{1}{4\Delta^2 k^5},
\end{equation}
where the second inequality holds because $\frac{2\gamma}{1 + \lambda_c - \gamma} \le 4\gamma$ and the last inequality holds because $\gamma = \cnfGamma$ and $s = \cnfS$.
\end{proof}

\section*{Acknowledgements}
We thank Chunyang Wang and Yitong Yin for their insightful discussions and helpful comments.
\bibliographystyle{alpha}
\bibliography{main}

\newpage
\appendix

\section{Missing proofs for Analytic percolation}
\label{section:missing-analytic-percolation}
In this section, we provide the proofs for \Cref{lemma:convergence,lemma:bounding-marginal-measure}.
\begin{proof}[Proof of~\Cref{lemma:convergence}]
    We claim that for any two initial configurations $\sigma,\sigma'\in \supp(\mu)$, the following always holds:
    \begin{equation}\label{eq:convergence-equivalent-condition}
      \lim\limits_{T\to \infty} \abs{\dmu{\sigma}(\sigma_0 \in A)-\dmu{\sigma'}(\sigma_0 \in A)}=0.
    \end{equation}
     Assuming \cref{eq:convergence-equivalent-condition}, we compare the chain $\dmu{\sigma}$ starting from an arbitrary initial configuration $\sigma\in \-{supp}(\mu)$ with the stationary chain (\Cref{definition:stationary-start}). By the triangle inequality, 
    \begin{align*}
    \abs{\dmu{\sigma}(\sigma_0 \in A) - \mu(A)}&=
        \abs{\dmu{\sigma}(\sigma_0 \in A) - \sum\limits_{\sigma' \in \-{supp}(\mu)}\mu(\sigma')\dmu{\sigma'}(\sigma_0 \in A)  } \\
        &\le \sum\limits_{\sigma' \in \-{supp}(\mu)}\abs{\mu(\sigma')}\abs{\dmu{\sigma}(\sigma_0 \in A) - \dmu{\sigma'}(\sigma_0 \in A)}.
    \end{align*}
    According to \cref{eq:convergence-equivalent-condition}, as $T\to \infty$, the right-hand side diminishes to $0$. Thus, the complex systematic scan Glauber dynamics converges to $\mu$.
    
    Now we complete the proof by establishing \cref{eq:convergence-equivalent-condition}. For any $T\ge1$, let $B(T)$ be the set of non-witness sequences satisfying \Cref{condition:convergence}. We have: 
\begin{align*}
  &\lim\limits_{T\to \infty}\abs{\dmu{\sigma}(\sigma_0\in A)-\dmu{\sigma'}(\sigma_0\in A)}\\
(\star)  \quad= &\lim\limits_{T\to \infty}\abs{\sum\limits_{\bm{\rho}}\left(\dmu{\sigma}(\sigma_0\in A\land \bm{r}=\bm{\rho})-\dmu{\sigma'}(\sigma_0\in A\land \bm{r}=\bm{\rho})\right)}\\
(\blacktriangle)  \quad= &\lim\limits_{T\to \infty}\abs{\sum\limits_{\bm{\rho}\in B(T)}\left(\dmu{\sigma}(\sigma_0\in A\land \bm{r}=\bm{\rho})-\dmu{\sigma'}(\sigma_0\in A\land \bm{r}=\bm{\rho})\right)}\\
  (\blacksquare) \quad\leq&0,
  \end{align*}
    which implies \cref{eq:convergence-equivalent-condition}. Here, the $(\star)$ inequality follows from the law of total measure, along with the observation that $\mu^{\-{GD}}_{T,\sigma,\bm{b}}(\bm{r}=\bm{\rho})$ does not depend on $\sigma$. 
    The $(\blacktriangle)$ equality follows from \Cref{definition:witness-sequence} and that all $\*\rho\notin B(T)$ satisfy $\rho\Rightarrow A$. 
    The $(\blacksquare)$ inequality follows from the triangle inequality and \Cref{condition:convergence}.
    This completes the proof.
\end{proof}

Now we prove \Cref{lemma:bounding-marginal-measure}.
\begin{proof}[Proof of \Cref{lemma:bounding-marginal-measure}]
We have that
\begin{align*}
\abs{\mu(A)} = &\abs{\sum\limits_{\sigma \in \-{supp}(\mu)} \mu(\sigma) \left( \sum\limits_{\rho \not\in B(T)} \dmu{\sigma}(\sigma_0\in A \land \*r = \*\rho) + \sum\limits_{\rho \in B(T)}\dmu{\sigma}(\sigma_0 \in A \land \*r = \*\rho) \right) } \\
\le &\abs{\sum\limits_{\sigma \in \-{supp}(\mu)} \mu(\sigma)\sum\limits_{\rho \not\in B(T)} \dmu{\sigma}(\sigma_0\in A \land \*r = \*\rho)} \\
&+ \abs{\sum\limits_{\sigma \in \-{supp}(\mu)} \mu(\sigma)\sum\limits_{\rho \in B(T)}\dmu{\sigma}(\sigma_0 \in A \land \*r = \*\rho)},
\end{align*}
where the second inequality is due to the triangle inequality.
For any $\*\rho\not\in B(T)$, since $\*\rho$ is a witness sequence for $A$,  we have that for any  $\sigma,\tau  \in \-{supp}(\mu)$,
\[
  \sum\limits_{\rho \not\in B(T)} \dmu{\tau}(\sigma_0\in A \land \*r = \*\rho) = \sum\limits_{\rho \not\in B(T)} \dmu{\sigma}(\sigma_0\in A \land \*r = \*\rho).  
\]
By this equation and since  $\mu$ is a complex normalized measure, the previous bound for $\abs{\mu(A)}$ can be expressed as follows, after fixing an arbitrary $\tau\in  \-{supp}(\mu)$:
\begin{align*}
\abs{\mu(A)} &\le \abs{\sum\limits_{\rho \not\in B(T)} \dmu{\tau}(\sigma_0\in A \land \*r = \*\rho)} + \abs{\sum\limits_{\sigma \in \-{supp}(\mu)} \mu(\sigma)\sum\limits_{\rho \in B(T)}\dmu{\sigma}(\sigma_0 \in A \land \*r = \*\rho)}\\
&\le \abs{\sum\limits_{\rho \not\in B(T)} \dmu{\tau}(\sigma_0\in A \land \*r = \*\rho)} + \sum\limits_{\sigma \in \-{supp}(\mu)} \abs{\mu(\sigma)}\abs{\sum\limits_{\rho \in B(T)}\dmu{\sigma}(\sigma_0 \in A \land \*r = \*\rho)}.
\end{align*}
As $T \to \infty$, according to \Cref{condition:convergence}, we know that for any $\sigma \in \-{supp}(\mu)$, 
\[\lim_{T \to \infty}\abs{\sum\limits_{\*\rho \in B(T)}\dmu{\sigma}(\sigma_0 \in A \land \*r = \*\rho)} = 0.\] 
Therefore, as $T\to\infty$, we have
\begin{equation*}
\abs{\mu(A)} \le \lim_{T\to\infty} \abs{\sum\limits_{\rho \not\in B(T)} \mu^{\-{GD}}_{T, \tau, \*b}(\sigma_0 \in A \land \*r = \*\rho)}.
\end{equation*}
\end{proof}

\section{Concentration inequality for CSP formulas}
\label{section:concentration_inequality}
In this section, we derive a concentration inequality on the number of variables taking the value $1$ for atomic CSP formulas through Chebyshev's inequality and prove \Cref{theorem:concentration-atomic-csp}.

To use Chebyshev's inequality, it suffices to derive an lower bound of the expectation and an upper bound for the variance.
For the lower bound of the expectation, we utilize the \emph{local uniformity}.
For the variance, by the definition of variance, we consider the per-variable variances and covariances (see \cref{eq:concentration-variance-decomposition}).
For the per-variable variances, we again utilize the \emph{local uniformity}.
 For the covariances, we employ the recursive coupling from \cite{wang2024sampling} to upper bound the \emph{total influence}.




Let $\Phi = (V, [q]^V, C)$ be an $(k, \Delta)$-CSP formula.
Let $\lambda  > 0$ be the external field on the value $1$, and let the partition function be $Z = Z(\Phi, \lambda) = \sum_{\sigma\in \Omega_{\Phi}}\lambda^{|\sigma^{-1}(1)|}$.
Let $\mu$ be the corresponding Gibbs distribution, and let $\sigma \sim \mu$.
For any $v\in V$, let $\+X_v = \=I[\sigma_v = 1]$ be a random variable.
Let $\+X = \sum_{v\in V} \+X_v$.
We show that under the next condition, we can derive a concentration inequality.
\begin{condition}
\label{condition:general-csp-chebyshev-condition}
$\Phi = (V, [q]^V, C)$ is an atomic $(k, \Delta)$-CSP formula.
Let $r_{\max}\defeq \frac{\max\set{1, \lambda}}{q - 1 + \lambda}$. It holds that
\[
(8\mathrm{e})^3\cdot \max\set{1, 1/\lambda}\cdot  r_{\max}^{k-1}\cdot (\Delta k+ 1)^{2 + \zeta} \le 1, \hbox{ where } \zeta = \frac{2\ln{(2 - r_{\max})}}{\ln{(1/r_{\max})} - \ln{(2 - r_{\max})}}.
\]
\end{condition}
We remark that for constant $\lambda$ and sufficiently large $q$, the above condition is asymptotically implied by
\[q\gtrsim \max\set{\tp{\frac{1}{\lambda}}^{\frac{1}{k-1}}, \lambda}\cdot  \Delta^{\frac{2 + o_{q}(1)}{k-1}}.\]
\begin{theorem}
\label{theorem:concentration-atomic-csp-lambda}
Let $\Phi=(V,[q]^V, C)$ be an atomic $(k, \Delta)$-CSP formula, and let $\mu$ be the Gibbs distribution over $\Omega_\Phi$ with the external field $\lambda$. Suppose that \Cref{condition:general-csp-chebyshev-condition} holds, then we have
    \[
    \forall \delta > 0,\quad \Pr[\mu]{|\+X - \=E_{\mu}[\+X]|\ge \delta\cdot \=E_{\mu}[\+X]}\le \frac{4\Delta k(\Delta k + 1)}{\delta^2 \cdot \tp{\frac{\lambda}{q-1+\lambda}\cdot (1-1/(\Delta k))}^2 \cdot |V|}.
    \]
\end{theorem}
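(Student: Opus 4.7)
The plan is to prove the bound by a direct application of Chebyshev's inequality to $\+X = \sum_{v \in V} \+X_v$, after separately controlling $\E[\mu]{\+X}$ and $\Var[\mu]{\+X}$. The expectation lower bound comes from a standard local uniformity argument in the spirit of \Cref{lemma:coloring-local-uniformity}: applying the lopsided LLL of \Cref{theorem:HSS} to the atomic CSP with external field $\lambda$ on the value $1$, with the LLL slack parameter tuned to $\varrho \asymp \Delta k$, the numerical hypothesis in \Cref{condition:general-csp-chebyshev-condition} implies that for every $v \in V$,
\[
\mu(\sigma_v = 1) \;\geq\; \frac{\lambda}{q - 1 + \lambda}\left(1 - \frac{1}{\Delta k}\right).
\]
Summing over $v$ gives $\E[\mu]{\+X} \geq |V| \cdot \frac{\lambda}{q-1+\lambda}\bigl(1 - \frac{1}{\Delta k}\bigr)$, which is exactly the factor appearing in the denominator of the target bound.

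For the variance, I decompose
\[
\Var[\mu]{\+X} \;=\; \sum_{v \in V} \Var[\mu]{\+X_v} \;+\; \sum_{u \neq v} \Cov[\mu]{\+X_u, \+X_v}.
\]
Since $\+X_v \in \{0,1\}$, the diagonal sum is at most $\E[\mu]{\+X} \leq |V|$. To reach the target variance bound $4\Delta k(\Delta k+1)|V|$, it suffices to show that for every $u \in V$,
\[
\sum_{v \neq u} \bigl|\Cov[\mu]{\+X_u, \+X_v}\bigr| \;=\; \mu(\sigma_u=1)\sum_{v \neq u}\bigl|\mu^{\sigma_u=1}(\sigma_v=1) - \mu(\sigma_v=1)\bigr| \;\leq\; \mu(\sigma_u=1) \cdot O(\Delta k (\Delta k + 1)),
\]
i.e., a total influence bound for the measure after pinning $\sigma_u = 1$. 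Summing over $u$ and combining with the expectation lower bound will yield the desired inequality via Chebyshev.

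The main obstacle is establishing this total influence bound under the relatively tight hypothesis \Cref{condition:general-csp-chebyshev-condition}. My plan is to invoke the recursive coupling of \cite{wang2024sampling}, which produces a constant per-constraint total influence bound in the sampling LLL regime $p D^{2 + o_q(1)} \lesssim 1$, where $p$ is the maximum violation probability and $D \leq \Delta k$ is the dependency-graph degree. In our atomic CSP with external field $\lambda$, the unconditioned violation probability of a constraint is at most $r_{\max}^k$. Pinning $\sigma_u = 1$ effectively shortens the constraints incident to $u$ by one variable and hence inflates the worst-case violation probability by at most a factor of $1/r_{\max} \asymp q/\max(1,\lambda)$; the $\max(1, 1/\lambda)$ and the extra $r_{\max}$ factor present in \Cref{condition:general-csp-chebyshev-condition} are tailored to absorb exactly this overhead, while the $(\Delta k + 1)^{2 + \zeta}$ term matches the $D^{2 + o_q(1)}$ exponent arising in the Wang--Yin potential-function argument, with the $\zeta$ correction being the precise exponent needed to close the recursion in the presence of external fields. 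Converting the resulting per-constraint total influence bound to a per-variable one contributes a factor of $\Delta k$ (one variable participates in at most $\Delta$ constraints of size $k$), and translating this into a sum over all $v$ produces the additional $\Delta k + 1$ factor from the size of the dependency neighborhood, yielding the claimed $O(\Delta k (\Delta k + 1))$ total influence bound.

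Combining the two bounds via Chebyshev's inequality,
\[
\Pr[\mu]{|\+X - \E[\mu]{\+X}| \geq \delta \E[\mu]{\+X}} \;\leq\; \frac{\Var[\mu]{\+X}}{\delta^2 (\E[\mu]{\+X})^2} \;\leq\; \frac{4\Delta k (\Delta k + 1)\, |V|}{\delta^2 \bigl(|V| \cdot \tfrac{\lambda}{q-1+\lambda}(1 - 1/(\Delta k))\bigr)^2},
\]
which simplifies to the stated bound. The only nonroutine step is verifying that \Cref{condition:general-csp-chebyshev-condition} suffices for the recursive coupling of \cite{wang2024sampling} to converge after pinning a single variable; this is where the precise shape of the condition, in particular the $\zeta$ exponent and the $r_{\max}\max(1,1/\lambda)$ slack, is dictated.
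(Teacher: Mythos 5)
Your proposal is correct and follows essentially the same route as the paper: expectation lower bound via the lopsided LLL (local uniformity), variance decomposition into per-variable variances plus covariances, covariances controlled by a per-variable total-influence bound obtained from the Wang--Yin recursive coupling (applied via a triangle inequality over the at most $2\Delta$ constraints altered by pinning $\sigma_u=1$), and then Chebyshev. The one step you flag as nonroutine---checking that \Cref{condition:general-csp-chebyshev-condition} survives the pinning---is handled in the paper exactly as you anticipate, by observing that the pinned/simplified formula still satisfies the Wang--Yin condition (\Cref{condition:general-csp-sampling-condition}) with constraints of size $k-1$.
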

Before proving this theorem, we first show that this theorem implies \Cref{theorem:concentration-atomic-csp}.
\begin{proof}[Proof of \Cref{theorem:concentration-atomic-csp}]
Note that the uniform distribution over $\Omega_\Phi$ is the Gibbs distribution with the external field $\lambda = 1$.
And in this case, it holds that $r_{\max} = 1/q$. 
\Cref{theorem:concentration-atomic-csp} follows.
\end{proof}

Next, we use Chebyshev's inequality to prove the above theorem. It suffices to bound the expectation $\=E_{\mu}[\+X]$ and the variance $\Var[\mu]{\+X}$. 
We first bound the expectation, recall the Lov\'asz local lemma (\Cref{theorem:LLL}), its condition \cref{eq:LLL} and \Cref{theorem:HSS}.

\begin{lemma}
\label{lemma:general-csp-expectation-bound}
    Suppose \Cref{condition:general-csp-chebyshev-condition}, let $\mu$ be the Gibbs distribution over $\Omega_\Phi$ with the external field $\lambda$. For any variable $v\in V$, we have that 
    \[
    \frac{\lambda}{q - 1 + \lambda} - \frac{\lambda}{(q-1+\lambda)\Delta k}\le \Pr[\mu]{X_v = 1}\le \frac{\lambda}{q - 1 + \lambda} + \frac{\lambda}{(q-1+\lambda)\Delta k}.
    \]
\end{lemma}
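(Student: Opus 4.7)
My plan is to prove both the upper and lower bounds on $\Pr[\mu]{X_v = 1}$ through a single application of the lopsided Lovász local lemma (\Cref{theorem:HSS}) to the natural product distribution that biases the value $1$ by $\lambda$. Specifically, let $\+P$ be the product measure on $[q]^V$ under which each variable independently takes value $1$ with probability $p^{\star}\defeq \lambda/(q-1+\lambda)$ and each of the remaining $q-1$ values with probability $1/(q-1+\lambda)$. By the definition of the $\lambda$-weighted partition function, the Gibbs measure $\mu$ equals $\+P$ conditioned on the event $\bigwedge_{c\in C} \set{c\text{ is satisfied}}$. Since $\Phi$ is atomic, each constraint $c$ has a unique forbidden tuple $\tau^{(c)}\in \*Q_{\vbl(c)}$, so
\[
\+P\inbr{c\text{ is violated}} \;=\; \prod_{v\in \vbl(c)} \+P\inbr{X_v = \tau^{(c)}_v} \;\le\; r_{\max}^{k},
\]
because $\max_{a\in [q]} \+P\inbr{X_v = a} = r_{\max} = \max\set{1,\lambda}/(q-1+\lambda)$.

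To invoke \Cref{theorem:HSS}, I would take the symmetric function $x(c) = \alpha$ for every constraint $c$, with $\alpha$ to be tuned. Each constraint has at most $k(\Delta-1)$ neighbours in the dependency graph, so \cref{eq:LLL} reduces to $r_{\max}^{k}\le \alpha(1-\alpha)^{k\Delta}$. The goal is to pick $\alpha$ large enough for this LLL condition to hold, yet small enough that the HSS ``correction factor'' $\prod_{c:\,v\in \vbl(c)}(1-x(c))^{-1}\le (1-\alpha)^{-\Delta}\le 1+\eta$ satisfies $\eta\le \min\set{1,\lambda/(q-1)}/(\Delta k)$. A choice of the shape $\alpha\asymp r_{\max}^{k}\cdot (\Delta k+1)^{\zeta}$ should do the job: the extra $(\Delta k+1)^{\zeta}$ factor absorbs the $\max\set{1,1/\lambda}$ in \Cref{condition:general-csp-chebyshev-condition}, which is precisely the slack required to hit the tighter lower-bound tolerance $\lambda/((q-1)\Delta k)$ when $\lambda$ is small.

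With the LLL condition verified, applying \Cref{theorem:HSS} to the event $\set{X_v=1}$ gives
\[
\Pr[\mu]{X_v = 1} \;\le\; \+P\inbr{X_v=1}\cdot (1-\alpha)^{-\Delta} \;\le\; p^{\star}(1+\eta) \;=\; \frac{\lambda}{q-1+\lambda} + p^{\star}\eta,
\]
so $\eta\le 1/(\Delta k)$ yields the upper bound. For the matching lower bound, I would apply \Cref{theorem:HSS} to the complementary event $\set{X_v\neq 1}$, which also depends only on $v$, obtaining $\Pr[\mu]{X_v\neq 1}\le (1-p^{\star})(1+\eta)$, and hence $\Pr[\mu]{X_v=1}\ge p^{\star} - (1-p^{\star})\eta$. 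Since $(1-p^{\star})/p^{\star} = (q-1)/\lambda$, the lower tolerance follows as soon as $\eta\le \lambda/((q-1)\Delta k)$.

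The main obstacle will be balancing the choice of $\alpha$: the lower-bound tolerance scales with $\lambda/(q-1)$, so when $\lambda\ll 1$ one needs $\eta$ very small, which forces $\alpha$ small, which in turn tightens the LLL condition. The role of $\zeta = 2\ln(2-r_{\max})/(\ln(1/r_{\max})-\ln(2-r_{\max}))$ in \Cref{condition:general-csp-chebyshev-condition} is exactly to provide that slack: it ensures $r_{\max}^{k-1}\cdot (\Delta k+1)^{2+\zeta}$ is small enough that a suitable $\alpha$ exists uniformly in $\lambda>0$. Book-keeping the constants so that the final $\eta$ meets $\min\set{1,\lambda/(q-1)}/(\Delta k)$ will be the only non-trivial calculation; the rest is a direct invocation of HSS.
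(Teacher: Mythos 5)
Your proposal is correct and follows essentially the same route as the paper: both introduce the $\lambda$-biased product measure $\+P$ (under which $\mu$ is $\+P$ conditioned on all constraints holding), bound the violation probability of each atomic constraint by $r_{\max}^k$, verify the symmetric LLL condition, and then apply \Cref{theorem:HSS} twice — once to $\set{X_v=1}$ and once to $\set{X_v\neq 1}$ — to get the two-sided bound. The only difference is bookkeeping: the paper fixes $x(B_c)=\frac{\lambda}{2(q-1+\lambda)(\Delta k)^2}$ explicitly rather than tuning an $\alpha$ of the shape you describe, but both choices are validated by the same slack in \Cref{condition:general-csp-chebyshev-condition}.
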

\begin{proof}
Recall that $\sigma \sim \mu$.
Let $\+P$ be the product distribution over $[q]^V$.
We first verify the condition \cref{eq:LLL}. For each constraint $c\in C$, let $B_c$ be the event that $c$ is not satisfied, i.e., $B_c = \lnot c$ and we set $x(B_c) \defeq \frac{\min\set{1, \lambda}}{2(\Delta k)^2}\cdot \frac{\max\set{1, \lambda}}{q - 1 + \lambda} = \frac{\lambda}{2(q-1+\lambda) (\Delta k)^2}$. Next, by \Cref{condition:general-csp-chebyshev-condition}, it can be verified that 
\[
\+P[B_c] \le \tp{\frac{\lambda}{q - 1 + \lambda}}^k \le \frac{\min\set{1, \lambda}\cdot \max\set{1, \lambda}}{2 (q - 1 + \lambda) (\Delta k)^2} \tp{1 - \frac{\min\set{1, \lambda}\cdot \max\set{1, \lambda}}{2 (q - 1 + \lambda) (\Delta k)^2}}^{\Delta k}.
\]
With the fact that $|\Gamma(B_c)|\le k\Delta$, it holds that $\+P[B_c] \le x(B_c)\cdot \prod_{B'\in \Gamma(B_c)}(1 - x(B'))$.
Finally, by \Cref{theorem:HSS}, for any variable $v$, we have
\[
\Pr[\mu]{X_v\neq 1} \le \tp{1 - \frac{\lambda}{q - 1 + \lambda}} \tp{1 - \frac{\lambda}{2 (q - 1 + \lambda) (\Delta k)^2}}^{-\Delta k}\le 1 - \frac{\lambda}{q - 1 + \lambda} + \frac{\lambda}{(q-1+\lambda)\Delta k},
\]
and
\[
\Pr[\mu]{X_v =  1} \le \frac{\lambda}{q - 1 + \lambda}\tp{1 - \frac{\lambda}{2 (q - 1 + \lambda) (\Delta k)^2}}^{-\Delta k} \le \frac{\lambda}{q - 1 + \lambda} + \frac{\lambda}{(q-1+\lambda) \Delta k}.
\]
\end{proof}
As a corollary, we have that 
\begin{corollary}
\label{corollary:general-csp-expectation-bound}
Suppose \Cref{condition:general-csp-chebyshev-condition} and let $\mu$ be the Gibbs distribution over $\Omega_\Phi$ with the external field $\lambda$. We have that $\E[\mu]{\+X}\ge \tp{\frac{\lambda}{q - 1 + \lambda} - \frac{\lambda}{(q-1+\lambda)\Delta k}}\cdot |V| = \frac{\lambda}{q - 1 + \lambda}\cdot (1-1/(\Delta k))\cdot |V|$.
\end{corollary}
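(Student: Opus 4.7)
The plan is to deduce this corollary as an immediate consequence of the per-variable marginal lower bound already established in Lemma A.1 (the ``general-csp-expectation-bound'' lemma), combined with linearity of expectation. Concretely, since $\+X_v = \=I[\sigma_v = 1]$ is an indicator, I would first observe that $\E[\mu]{\+X_v} = \Pr[\mu]{X_v = 1}$, so Lemma A.1 directly yields the pointwise lower bound
\[
\E[\mu]{\+X_v} \ge \frac{\lambda}{q - 1 + \lambda} - \frac{\lambda}{(q-1+\lambda)\Delta k}
\]
for every $v\in V$, provided Condition A.1 holds (which is the assumption of the corollary).

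Next, I would write $\+X = \sum_{v \in V} \+X_v$ and apply linearity of expectation to obtain $\E[\mu]{\+X} = \sum_{v\in V}\E[\mu]{\+X_v}$. Summing the per-variable lower bound over the $|V|$ vertices yields
\[
\E[\mu]{\+X} \ge |V|\cdot\left(\frac{\lambda}{q-1+\lambda} - \frac{\lambda}{(q-1+\lambda)\Delta k}\right),
\]
and a routine factoring of $\frac{\lambda}{q-1+\lambda}$ from both terms gives the claimed form $\frac{\lambda}{q-1+\lambda}\cdot(1 - 1/(\Delta k))\cdot |V|$.

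There is essentially no obstacle here: the entire content of the corollary is Lemma A.1 applied vertex-by-vertex, and the only ``work'' is linearity of expectation plus a one-line algebraic rearrangement. The main substantive step — invoking the asymmetric Lov\'asz Local Lemma together with the Haeupler--Saha--Srinivasan conditional inequality to pass from the product distribution to the Gibbs distribution $\mu$ — has already been carried out inside Lemma A.1 using Condition A.1, so the corollary itself requires no further probabilistic or combinatorial input.
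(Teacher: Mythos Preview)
Your proposal is correct and matches the paper's approach exactly: the corollary is stated immediately after \Cref{lemma:general-csp-expectation-bound} with no separate proof, as it follows directly from that lemma by linearity of expectation and summing the per-variable lower bound over all $v\in V$.
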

Then, we bound the variance. By definition of variance, it holds that 
\begin{equation}
\label{eq:concentration-variance-decomposition}
\Var[\mu]{\+X} = \sum_{v\in V}\Var[\mu]{X_v} + \sum_{u\neq v} \Cov[\mu]{X_u, X_v}.
\end{equation}
Next we deal with the per-variable variances.
\begin{lemma}
\label{lemma:general-csp-local-variance}
Suppose \Cref{condition:general-csp-chebyshev-condition} and let $\mu$ be the Gibbs distribution over $\Omega_\Phi$ with the external field $\lambda$. For any variable $v$, we have that $\Var[\mu]{X_v}\le \frac{\lambda}{q - 1 + \lambda} + \frac{\lambda}{(q-1+\lambda)\Delta k}$.
\end{lemma}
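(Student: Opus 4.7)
The plan is very short because $X_v = \mathbb{I}[\sigma_v = 1]$ is a Bernoulli indicator, so $X_v^2 = X_v$ and hence
\[
\Var[\mu]{X_v} \;=\; \mathbb{E}_\mu[X_v^2] - (\mathbb{E}_\mu[X_v])^2 \;=\; \mathbb{E}_\mu[X_v]\bigl(1-\mathbb{E}_\mu[X_v]\bigr) \;\le\; \mathbb{E}_\mu[X_v] \;=\; \Pr_\mu[X_v = 1].
\]
Thus the claimed bound reduces directly to the one-sided upper bound on the marginal probability established in \Cref{lemma:general-csp-expectation-bound}, which under \Cref{condition:general-csp-chebyshev-condition} gives
\[
\Pr_\mu[X_v = 1] \;\le\; \frac{\lambda}{q-1+\lambda} + \frac{\lambda}{(q-1+\lambda)\Delta k}.
\]

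So my proof will consist of a single line invoking the Bernoulli identity $\Var(X_v) \le \mathbb{E}[X_v]$, followed by a citation to \Cref{lemma:general-csp-expectation-bound}. There is no real obstacle: the work was already done in proving the marginal bound via the lopsided LLL (\Cref{theorem:HSS}) in \Cref{lemma:general-csp-expectation-bound}. One might wonder whether a tighter bound using $\mathbb{E}[X_v](1-\mathbb{E}[X_v])$ is needed later, but inspection of \cref{eq:concentration-variance-decomposition} and the statement of \Cref{theorem:concentration-atomic-csp-lambda} shows that the looser bound $\Var(X_v) \le \mathbb{E}[X_v]$ (giving essentially $\approx \lambda/(q-1+\lambda)$) is all that is needed to control $\sum_v \Var[\mu]{X_v}$ by $\approx \frac{\lambda}{q-1+\lambda}|V|$, which is of the right order relative to $(\mathbb{E}[\+X])^2 \asymp \bigl(\frac{\lambda}{q-1+\lambda}|V|\bigr)^2$ for Chebyshev to give the desired $O(1/|V|)$ decay.
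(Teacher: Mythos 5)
Your proposal is correct and matches the paper's own proof, which likewise just writes $\Var[\mu]{X_v} = \Pr[\mu]{X_v=1} - \Pr[\mu]{X_v=1}^2 \le \Pr[\mu]{X_v=1}$ and invokes the marginal upper bound from \Cref{lemma:general-csp-expectation-bound}. Your added sanity check that the looser bound suffices for the Chebyshev argument is accurate but not needed.
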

\begin{proof}
This lemma follows from the definition of $\Var[\mu]{X_v} = \Pr[\mu]{X_v = 1} - \Pr[\mu]{X_v = 1}^2$ and \Cref{lemma:general-csp-expectation-bound}.
\end{proof}
Then we bound $\sum_{u\neq v} \Cov[\mu]{X_u, X_v}$.
Note that for $u, v\in V$,  we have 
\[\Cov[\mu]{X_u, X_v} = \Pr[\mu]{X_u = X_v = 1} - \Pr[\mu]{X_u = 1}\Pr[\mu]{X_v = 1} \le |\Pr[\mu]{X_u = 1\mid X_v = 1} - \Pr[\mu]{X_u = 1}|.\]
Next, for any $v\in V$, we bound $\sum_{v\neq u}|\Pr[\mu]{X_u = 1\mid X_v = 1} - \Pr[\mu]{X_u = 1}|$ by the recursive coupling between $\mu(\cdot \mid \sigma_v = 1)$ and $\mu(\cdot)$ in~\cite{wang2024sampling}. We first sketch the intuition. Let $(\+A, \+B)$ be the realization of this coupling.
We use $d_{\-{Ham}}(\+A, \+B)$ to denote the Hamming distance between $\+A$ and $\+B$. We have 
\begin{align*}
\sum_{v\neq u}|\Pr[\mu]{X_u = 1\mid X_v = 1} - \Pr[\mu]{X_u = 1}|&\le \sum_{v\neq u} \=E_{(\+A, \+B)}|\=I[\+A_v = 1] - \=I[\+B_v = 1]|\\
&\le \=E_{(\+A, \+B)}[d_{\-{Ham}}(\+A, \+B)],
\end{align*}
where the first inequality is due to the coupling lemma and the last inequality is due to the linearity of expectation. Next, we include the following lemma to bound $\=E_{(\+A, \+B)}[d_{\-{Ham}}(\+A, \+B)]$. For two CSP formulas $\Phi = (V, \*Q, C)$ and $\Phi' = (V, \*Q, C\setminus \set{c})$ with $c\in C$, we abuse the notation $\mu$ a little, denoting $\mu_{C}$ as the Gibbs distribution over all satisfying assignments of $\Phi$ with the external field $\lambda$, and $\mu_{C\setminus \set{c}}$ as the Gibbs distribution over all satisfying assignments of $\Phi'$ with the external field $\lambda$.


We include the following condition, which is a specification of~\cite[Condition 1]{wang2024sampling}. 
We remark that the statement is different from \cite[Condition 1]{wang2024sampling}.
In~\cite{wang2024sampling}, they consider the uniform distribution over all satisfying assignment which is equivalent to the case that $\lambda = 1$.
Adding the external field $\lambda$ change the violating probability implicitly and the definition of $\zeta$ explicitly.
By running through the same proof, we derive the following condition.
We also remark that when $\lambda = 1$, $r_{\max}$ defined in the next condition is equivalent to $1/q$. Hence we can derive the statement in \cite[Condition 1]{wang2024sampling}.
\begin{condition}[{\cite[Condition 1]{wang2024sampling}}]
\label{condition:general-csp-sampling-condition}
$\Phi = (V, [q]^V, C)$ is a CSP formula.
\begin{itemize}
\item Let $k_{\min}$ be the minimum size of constraints, i.e., $k_{\min}\defeq \min_{c\in C}|\vbl(c)|$;
\item let $D$ be the dependency degree, i.e., $D\defeq \max_{c\in C} |\set{c'\in C\setminus \set{c} \mid \vbl(c) \cap \vbl(c') \neq \emptyset}|$;
\item let $r_{\max}\defeq \frac{\max\set{1, \lambda}}{q - 1 + \lambda}$, and let $p \defeq r_{\max}^{k_{\min}}$. Note that $p$ is an upper bound of the violating probability of any constraints in $C$ under the product distribution.
\end{itemize}
It holds that
\[
(8\mathrm{e})^3\cdot p\cdot (D + 1)^{2 + \zeta} \le 1, \hbox{ where } \zeta = \frac{2\ln{(2 - r_{\max})}}{\ln{(1/r_{\max})} - \ln{(2 - r_{\max})}}.
\]
\end{condition}
We remark that our \Cref{condition:general-csp-chebyshev-condition} is stronger than \Cref{condition:general-csp-sampling-condition}.
By running the same proof of \cite[Theorem 3.1]{wang2024sampling}, we have the following theorem.
\begin{theorem}[{\cite[Theorem 3.1]{wang2024sampling}}]
\label{theorem:general-csp-hamming-distance-bound}
Suppose the CSP formula $\Phi=(V,[q]^V, C)$ satisfies \Cref{condition:general-csp-sampling-condition}.
Let $k$ be the maximum size of constrains, i.e., $k\defeq \max_{c\in C}|\vbl(c)|$.
Let $c_0\in C$ be an arbitrary constraint. There exists a coupling $(\+A, \+B)$ of $\mu_C$ and $\mu_{C\setminus \{c_0\}}$, such that for any integer $K\ge 1$,
\[
\Pr{d_{\-{Ham}}(\+A, \+B)\ge k\cdot (D + 1)\cdot K}\le 2^{-K},
\]
where $d_{\-{Ham}}(\+A, \+B)$ denotes the Hamming distance between $\+A$ and $\+B$.
\end{theorem}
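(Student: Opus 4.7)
The plan is to realize the coupling by a recursive \emph{disagreement-percolation} procedure rooted at $c_0$ and to bound the resulting disagreement cluster via a 2-tree counting argument driven by the sampling LLL condition. Concretely, I would sample $\+B\sim \mu_{C\setminus\{c_0\}}$ one variable at a time in a canonical order using the correct conditional marginals, and simultaneously produce $\+A\sim \mu_C$ through the optimal maximal coupling with $\+B$ at every step: whenever the two conditional marginals (with and without $c_0$) agree, both chains take the same value; otherwise they may diverge with probability at most the total-variation distance of the two marginals. The only source of marginal mismatch originates at $c_0$ and propagates outward solely through constraints whose local marginals have already been distorted by a previously produced disagreement; atomicity and locality guarantee that ``unaffected'' constraints produce identical conditional marginals on both sides.

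Next, I would define the \emph{disagreement cluster} $\+D \subseteq C$ recursively: put $c_0\in \+D$, and add any $c$ that shares a variable with some $c'\in \+D$ whose coupling step actually produced a disagreement. By construction every disagreement variable lies in $\vbl(\+D)$, so $d_{\mathrm{Ham}}(\+A,\+B) \le k|\+D|$, and $\+D$ is connected in $\dependencyGraph{\Phi}$ with root $c_0$. By the lopsided LLL (\Cref{theorem:HSS}) applied to $\mu_C$, for any fixed collection of $m$ pairwise variable-disjoint constraints the probability that all of them are simultaneously ``bad'' is at most $(Cp)^m$, where $p$ is the violation probability from \Cref{condition:general-csp-sampling-condition} and $C$ is a mild LLL blow-up factor. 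To turn connectedness into variable-disjointness, I would apply the 2-tree extraction of \Cref{definition:2-tree-construction}: any connected subgraph of $\dependencyGraph{\Phi}$ of size $L$ rooted at $c_0$ contains a 2-tree of size at least $\lceil L/(D+1)\rceil$ whose constraints are pairwise variable-disjoint, and by \Cref{lemma:2-tree-number-bound} the number of such 2-trees is at most $\tfrac{1}{2}(\mathrm{e}(D+1)^2)^{|T|-1}$. A union bound then gives
\[
\Pr{|\+D| \ge (D+1)K} \;\le\; \sum_{m\ge K} \tfrac{1}{2}\bigl(\mathrm{e}(D+1)^2\bigr)^{m-1} (Cp)^m,
\]
which is a geometric series decaying under $(8\mathrm{e})^3 p(D+1)^{2+\zeta}\le 1$. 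Tuning constants then yields $\Pr{d_{\mathrm{Ham}}(\+A,\+B)\ge k(D+1)K}\le 2^{-K}$.

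The main obstacle is making the coupling precise enough that (a) the marginals of $\+A$ and $\+B$ are exactly $\mu_C$ and $\mu_{C\setminus\{c_0\}}$, and (b) each per-constraint ``bad'' probability is genuinely of order $p$ rather than order $1$ (two conditional marginals can have large total-variation distance in general). The slack $\zeta = 2\ln(2-r_{\max})/\bigl[\ln(1/r_{\max})-\ln(2-r_{\max})\bigr]$ in the sampling-LLL condition is exactly what is needed to absorb the HSS blow-up factor $C$ across the 2-tree summation, and I expect the natural implementation to proceed through a state-compression/projection-lifting scheme analogous to the one used in \Cref{section:CSP}, so that the conditional marginals at each step depend only on local information and ``unaffected'' constraints truly have matching conditional marginals on both sides.
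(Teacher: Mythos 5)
There is a genuine gap, and it sits exactly where you flag it yourself in point (b). The paper does not reprove this statement from scratch: it imports the \emph{recursive coupling} of Wang--Yin (\cite[Algorithm 1]{wang2024sampling}), together with their witness set $B$ for the discrepancy (\cite[Definition 3.4]{wang2024sampling}) and the structural lemma $|B|\ge \lfloor d_{\mathrm{Ham}}(\+A,\+B)/(k(D+1))\rfloor$ (\cite[Lemma 3.5]{wang2024sampling}), and merely observes that the same proof goes through with the external field $\lambda$ folded into $r_{\max}$ and $\zeta$. Your one-pass variable-by-variable maximal coupling is a different construction, and its key claim --- that ``unaffected'' constraints produce \emph{identical} conditional marginals on both sides so that disagreement only propagates through constraints of violation probability $\approx p$ --- is false for Gibbs measures. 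Even when all previously revealed variables agree, the conditional marginal of the next variable under $\mu_C$ differs from that under $\mu_{C\setminus\{c_0\}}$ for \emph{every} remaining variable (the presence of $c_0$ perturbs all marginals, not just nearby ones), so the per-step disagreement probability in a sequential maximal coupling is a total-variation distance that you have not bounded, and it is not of order $p$ per constraint without an argument. The recursive coupling is designed precisely so that each branching of the disagreement is chargeable to a local event of probability roughly $p^{2/(2+\zeta)}$; this is the content you would have to reconstruct, and your sketch does not supply it.

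Two smaller points. First, the exponent $\zeta$ is not there to ``absorb the HSS blow-up factor'': it arises because the per-witness probability in the Wang--Yin analysis is $p^{2/(2+\zeta)}$ rather than $p$ (note the paper's own tail bound uses $r_{\max}^{2(k-1)/(2+\zeta)}$ per witness element), so the condition $p(D+1)^{2+\zeta}\lesssim 1$ is what makes the 2-tree series geometric; your proposed series with $(Cp)^m(\mathrm{e}(D+1)^2)^{m}$ would only need $p(D+1)^2\lesssim 1$ and does not match the actual mechanism. Second, no projection or state-compression is involved here: the coupling acts directly on the original measures $\mu_C$ and $\mu_{C\setminus\{c_0\}}$, so routing the argument through the projection-lifting scheme of \Cref{section:CSP} is a misdirection.
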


As a corollary, we have that
\begin{lemma}
\label{lemma:general-csp-covariance}
Let the CSP formula $\Phi=(V,[q]^V, C)$ satisfy \Cref{condition:general-csp-chebyshev-condition}. 
Let $v\in V$ be an arbitrary variable. There exists a coupling $(\+A, \+B)$ of $\mu(\cdot)$ and $\mu(\cdot\mid \sigma_v = 1)$, such that 
\[
\=E[d_{\-{Ham}}(\+A, \+B)]\le 2\Delta k \cdot (\Delta k + 1).
\]
\end{lemma}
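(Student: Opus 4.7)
The plan is to couple $\mu(\cdot)$ and $\mu(\cdot\mid \sigma_v = 1)$ by iteratively applying \Cref{theorem:general-csp-hamming-distance-bound} along a chain of intermediate CSP formulas that bridges the two measures. The key observation is that pinning $\sigma_v=1$ modifies only the at most $\Delta$ constraints in $C_v\defeq\{c\in C : v\in\vbl(c)\}$, so the entire path from $\Phi$ to its pinned version decomposes into $O(\Delta)$ single-constraint modifications, each of which can be handled by \Cref{theorem:general-csp-hamming-distance-bound} as a black box.

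Concretely, I would proceed in three stages. First, couple $\mu_{\Phi}$ with $\mu_{\Phi^v}$, where $\Phi^v$ is obtained from $\Phi$ by deleting every constraint in $C_v$. Dropping these constraints one at a time and invoking \Cref{theorem:general-csp-hamming-distance-bound} at each step, the tail bound $\Pr{d_{\-{Ham}}\ge k(D+1)K}\le 2^{-K}$ integrates to expected Hamming distance at most $2k(D+1)$ per removal, so telescoping via the triangle inequality for couplings yields at most $2\Delta k(D+1)$ for this stage. Second, couple $\mu_{\Phi^v}$ with $\mu_{\Phi^v}(\cdot\mid \sigma_v=1)$; since $v$ is involved in no constraint of $\Phi^v$, its marginal factorizes as an independent component and pinning contributes expected Hamming distance at most $1$. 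Third, couple $\mu_{\Phi^v}(\cdot\mid \sigma_v=1)$ with $\mu(\cdot\mid \sigma_v=1)$ by re-inserting the constraints of $C_v$ (now effectively of size $k-1$ after the substitution $\sigma_v=1$) one at a time, again bounded via \Cref{theorem:general-csp-hamming-distance-bound}, giving at most $2\Delta k(D+1)$. Combining the three stages via the triangle inequality for couplings and using $D+1\le \Delta k$ (each constraint in a $(k,\Delta)$-CSP has at most $k(\Delta-1)$ neighbors in the dependency graph) gives the desired $O(\Delta k(\Delta k+1))$ bound.

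The main obstacle is twofold. On the logical side, I must check that \Cref{condition:general-csp-sampling-condition} holds for every intermediate CSP in the chain (including those acquiring constraints of size $k-1$ after pinning), so that \Cref{theorem:general-csp-hamming-distance-bound} is actually applicable at each step; this is routine since the extra slack $\max\{1,1/\lambda\}\cdot (\Delta k+1)^{2+\zeta}$ in \Cref{condition:general-csp-chebyshev-condition} is designed precisely to absorb the loss from lowering $k_{\min}$ by one and from shifting $r_{\max}$-based quantities. On the quantitative side, the naive three-stage detour through $\Phi^v$ produces a bound of order $4\Delta k(\Delta k+1)+1$, a factor of two worse than the stated $2\Delta k(\Delta k+1)$. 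To close this gap, the cleanest route is to bypass $\Phi^v$ altogether and directly invoke the recursive coupling of~\cite{wang2024sampling} by unfolding its proof, treating the pinning at $v$ as a single source of disagreement whose initial footprint consists of the $|C_v|\le \Delta$ constraints incident to $v$; the same information-percolation analysis then yields $\Pr{d_{\-{Ham}}\ge \Delta k(D+1)K}\le 2^{-K}$, and integration gives $\=E[d_{\-{Ham}}]\le 2\Delta k(D+1)\le 2\Delta k(\Delta k+1)$. I expect this repackaging of the Wang--Yin argument, rather than any of the per-step verifications, to be the technically delicate part.
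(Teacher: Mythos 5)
Your decomposition is essentially the paper's: the paper also telescopes over the symmetric difference between $C$ and the constraint set $C'$ obtained by substituting $\sigma_v=1$ (at most $\Delta$ constraints removed, at most $\Delta$ modified constraints of size $k-1$ re-inserted, so at most $2\Delta$ applications of \Cref{theorem:general-csp-hamming-distance-bound}), with the pinning of the now-free variable $v$ handled implicitly rather than as your explicit middle stage. The one place you diverge is the factor of $2$, and there your proposed fix is heavier than necessary. You lose the factor by integrating the tail bound $\Pr{d_{\-{Ham}}\ge k(D+1)K}\le 2^{-K}$ directly, which indeed only gives $2k(D+1)$ per step because the $K=0$ block contributes a full $k(D+1)$. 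The paper instead uses the per-step expectation bound $\=E[d_{\-{Ham}}(\+A_i,\+B_i)]\le k(\Delta k+1)\sum_{K\ge 1}2^{-K}\le k(\Delta k+1)$, which comes from the underlying witness-set relation in \cite{wang2024sampling} (restated in the proof sketch of \Cref{lemma:clt-bounded-hamming-distance}): $d_{\-{Ham}}\le k(\Delta k+1)\cdot|B|$ together with $\Pr{|B|\ge K}\le 2^{-K}$, so $\=E[d_{\-{Ham}}]\le k(\Delta k+1)\cdot\=E[|B|]\le k(\Delta k+1)$. With that per-step bound, $2\Delta$ steps give exactly $2\Delta k(\Delta k+1)$, and there is no need to re-open the recursive coupling and treat the pinning as a modified discrepancy source — the part you flag as technically delicate can be dropped entirely. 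Your remaining checks (that \Cref{condition:general-csp-sampling-condition} holds for the intermediate formulas with $k_{\min}=k-1$, and that pinning the isolated variable costs at most $1$ in Hamming distance) match what the paper asserts.
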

\begin{proof}
Let $C'$ be the new constraint set that is simplified by setting $\sigma_v = 1$, i.e., we remove constraints that are already satisfied by $\sigma_v = 1$ and for other constraints we remove the variable $v$. Note each constraint in $C'$ contains at least $k-1$ variables. Now we use the triangle inequality to construct a coupling between $\mu_C$ and $\mu_{C'}$. Let $S = C\oplus C'$ be the symmetric difference between $C$ and $C'$. And note $|S|\le 2\Delta$. We first remove constraints in $C\setminus C'$ one by one. Then we add constraints in $C'\setminus C$ one by one.  For step $i$, let $\+A_i, \+B_i$ be the coupling in \Cref{theorem:general-csp-hamming-distance-bound}. And by \Cref{condition:general-csp-chebyshev-condition}, it can be verified that \Cref{condition:general-csp-sampling-condition} holds. Next, by \Cref{theorem:general-csp-hamming-distance-bound}, we have that 
\[
\=E_{(\+A_i, \+B_i)}[d_{\-{Ham}}(\+A_i, \+B_i)] \le \sum_{K = 1}^{\infty} \frac{k\cdot (\Delta k + 1)}{2^{K}}\le k\cdot(\Delta k + 1).
\]
Then this lemma follows from the triangle inequality.
\end{proof}
Finally, we prove \Cref{theorem:concentration-atomic-csp-lambda}.
\begin{proof}[Proof of \Cref{theorem:concentration-atomic-csp-lambda}]
    By \Cref{corollary:general-csp-expectation-bound}, it holds that $\=E_{\mu}[\+X] \ge \frac{\lambda}{q - 1 + \lambda}\cdot (1-1/(\Delta k))\cdot |V|$.
    By \Cref{lemma:general-csp-local-variance} and \Cref{lemma:general-csp-covariance}, it holds that $\Var[\mu]{\+X} \le 4\Delta k(\Delta k + 1) |V|$. By Chebyshev's inequality, it holds that
    \[
    \forall \delta > 0,\quad \Pr[\mu]{|\+X - \=E_{\mu}[\+X]|\ge \delta\cdot \=E_{\mu}[\+X]}\le \frac{4\Delta k(\Delta k + 1)}{\delta^2 \cdot \tp{\frac{\lambda}{q-1+\lambda}\cdot (1-1/(\Delta k))}^2 \cdot |V|}.
    \]
\end{proof}

\section{Central limit theorem}
\label{section:clt}
In this section, we establish a framework for the central limit theorem of CSP formulas. As an application, we derive the central limit theorem for hypergraph $q$-coloring and prove \Cref{theorem:coloring-one-special-clt-intro}.
To establish the central limit theorem, we make use of the following lemma from \cite{michelen2024central}.
\begin{lemma}[{\cite[Theorem 1.2]{michelen2024central}}]\label{lemma:zero-free-to-clt-co}
    Let $X \in \{0,\ldots, n\}$ be a random variable with mean $\bar{\mu}$, standard deviation $\bar{\sigma}$ and probability generating function $g$ and set $X^*=(X - \bar{\mu})\bar{\sigma}^{-1}$. For $\delta \in (0, 1)$ such that $\abs{1 - \zeta} \ge \delta$ for all roots $\zeta$ of $g$,
    \[
    \sup_{t\in \=R} \abs{\Pr{X^* \le t} - \Pr{\+Z \le t}} \le O\left( \frac{\log n}{\delta \bar{\sigma}} \right),
    \]
    where $\+Z \sim N(0, 1)$ is a standard Gaussian random variable.
\end{lemma}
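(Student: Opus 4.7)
The plan is to obtain the Kolmogorov bound through Esseen's smoothing inequality applied to the characteristic function $\phi(t) = \E{e^{it X^*}}$, which factors as $g(e^{it/\bar{\sigma}}) e^{-it\bar{\mu}/\bar{\sigma}}$ since $g$ is the probability generating function of $X$. Esseen's inequality reduces the task to proving
\[
\sup_{t\in\=R}\abs{\Pr{X^* \le t} - \Pr{\+Z\le t}} \le \frac{1}{\pi}\int_{-T}^{T} \frac{\abs{\phi(t) - e^{-t^2/2}}}{\abs{t}}\,dt + O\tp{\frac{1}{T}}
\]
for any choice of truncation $T>0$. The goal is then to bound $\abs{\phi(t)-e^{-t^2/2}}$ on an interval of size proportional to $\delta\bar{\sigma}$ and to take $T$ near this scale, so that both contributions give $O(\log(n)/(\delta\bar{\sigma}))$.

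The key input from zero-freeness is the following analytic estimate. Define $L(z) = \log g(z)$, which is well-defined and holomorphic on the disk $D(1,\delta)$ by the hypothesis that $g$ has no zero within distance $\delta$ of $1$. Since $g$ is a probability generating function of a random variable in $\set{0,1,\dots,n}$, its coefficients are non-negative and sum to $1$, so $\abs{g(z)} \le (1+\delta)^n$ on $\overline{D(1,\delta)}$; taking the principal branch and the initial value $L(1)=0$, we obtain $\abs{L(z)} \le O(n\log(1+\delta))$, but more usefully $\mathrm{Re}\,L(z) \le n\log(1+\delta)$, and after a standard Borel--Carathéodory argument combined with Cauchy's estimates on a slightly smaller disk we conclude that the cumulants $\kappa_j$ of $X$, which are the Taylor coefficients of $L(e^{s})$ at $s=0$, satisfy
\[
\abs{\kappa_j} \le C\, j!\,\frac{\log n}{\delta^{j}} \qquad \text{for all } j\ge 1,
\]
for an absolute constant $C$. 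Here $\kappa_1 = \bar{\mu}$ and $\kappa_2 = \bar{\sigma}^2$.

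Expanding $\log\phi(t) = L(e^{it/\bar{\sigma}}) - it\bar{\mu}/\bar{\sigma}$ in a Taylor series around $t=0$, the $j=1$ term cancels by the choice of centering, the $j=2$ term contributes exactly $-t^2/2$ by the choice of normalization, and the remainder $R(t)$ satisfies
\[
\abs{R(t)} \le \sum_{j\ge 3}\frac{\abs{\kappa_j}}{j!}\tp{\frac{\abs{t}}{\bar{\sigma}}}^j \le C'\log(n)\sum_{j\ge 3}\tp{\frac{\abs{t}}{\delta\bar{\sigma}}}^{j} = O\tp{\frac{\log(n)\abs{t}^3}{(\delta\bar{\sigma})^3}}
\]
for $\abs{t} \le \delta\bar{\sigma}/2$. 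On the smaller range $\abs{t} \le c\delta\bar{\sigma}/\sqrt{\log n}$ this also gives $\abs{R(t)} \le t^2/4$, so that $\abs{\phi(t)-e^{-t^2/2}} \le e^{-t^2/4}\abs{R(t)}$. Choosing $T$ equal to this threshold, the resulting integral in Esseen's inequality evaluates to $O(\log(n)/(\delta\bar{\sigma}))$, and the $O(1/T)$ boundary term is of the same order.

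The main obstacle is the bookkeeping in the analytic estimate on $L$: one must translate the zero-freeness hypothesis (polynomial $g$ non-vanishing in a disk of radius $\delta$ around $1$) into uniform cumulant bounds of the form $\abs{\kappa_j} \lesssim j!\,\log(n)/\delta^j$. The delicate point is that although one easily controls $\mathrm{Re}\,L$ via $\mathrm{Re}\,L(z) \le \log\max\abs{g} \le n\log(1+\delta)$, this bound is not useful directly; one must first apply Borel--Carathéodory on a slightly smaller disk to upgrade it to a sup bound, losing a $\log$ factor rather than a linear $n$ factor. This logarithm is exactly where the $\log n$ in the final answer comes from, and it is essentially tight. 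Once this estimate is in hand, the Esseen--smoothing argument is fairly routine.
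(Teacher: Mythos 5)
First, note that the paper does not prove this statement at all: it is imported verbatim as \cite[Theorem 1.2]{michelen2024central} and used as a black box, so there is no internal proof to compare against. What you have written is a proposed proof of the external theorem, and its overall skeleton (Esseen smoothing plus cumulant control derived from the zero-free region of the probability generating function) is indeed the strategy of Michelen and Sahasrabudhe. Granting your cumulant bound, the Esseen bookkeeping you describe does close and even yields the stated rate.

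However, the central analytic claim — that zero-freeness of $g$ on $D(1,\delta)$ yields $\abs{\kappa_j}\le C\,j!\,\log(n)/\delta^j$ via Borel--Carath\'eodory and Cauchy estimates — is false, and the proposed derivation cannot produce it. Borel--Carath\'eodory transfers the bound $\mathrm{Re}\,L(z)=\log\abs{g(z)}\le n\log(1+\delta)$ on $D(1,\delta)$ to a bound $\sup_{D(1,\delta/2)}\abs{L}\le 2n\log(1+\delta)$, i.e.\ it loses only a constant factor depending on the radius ratio; there is no mechanism by which it converts a linear-in-$n$ real-part bound into a logarithmic modulus bound. Cauchy's estimates then give $\abs{\kappa_j}\lesssim j!\,n\,C^j\delta^{1-j}$, which is linear in $n$. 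Worse, the bound you assert is simply not true: for $X\sim\mathrm{Binomial}(n,1/2)$ the pgf is $((1+z)/2)^n$ with its only root at $-1$, so one may take any $\delta\in(0,1)$, yet $\kappa_2=n/4$ and $\abs{\kappa_4}=n/8$, both of which exceed $C\,j!\,\log(n)/\delta^j$ for large $n$. With the cumulant bound that your argument actually delivers ($\propto n$ rather than $\propto\log n$), the Esseen computation gives an error of order $n/(\delta^2\bar{\sigma}^3)$, which does not match the claimed $O(\log n/(\delta\bar{\sigma}))$ (e.g.\ its $\delta$-dependence is wrong). The correct route, and the real content of the cited theorem, is to factor the pgf over its roots, $g(z)=\prod_j (z-\zeta_j)/(1-\zeta_j)$, express $\kappa_\ell$ as an explicit sum $\sum_j P_\ell\bigl(1/(1-\zeta_j)\bigr)$ over the roots, bound it by $\ell!\,(C/\delta)^{\ell-2}\sum_j\abs{1-\zeta_j}^{-2}$, and then compare $\sum_j\abs{1-\zeta_j}^{-2}$ with $\bar{\sigma}^2$; this root-level argument, not a disk estimate on $\log g$, is where the nonnegativity of the coefficients and the $\log n$ actually enter. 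As written, your proof has a genuine gap at its key step.
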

It suffices to prove the zero-freeness and the lower bound of variance. For the zero-freeness, we established a framework in \Cref{section:CSP}. 
And in this section, we first establish a condition to provide the lower bound of the variance (\Cref{lemma:clt-variance-bound}). We mainly use the \emph{local uniformity} (\Cref{lemma:clt-marginal-bound}) and the bound of \emph{total influence} via the \emph{recursive coupling} (\Cref{lemma:clt-bounded-total-influence}).


Let $\Phi=(V,[q]^V, C)$ be an atomic $(k, \Delta)$-CSP formula. 
Again, we consider the state-compression scheme (\Cref{definition:state-compression}) with the projection $\*f = (f_v)_{v\in V}$ where $f_v: [q]\to \Sigma_v$. 
We consider the Gibbs distribution on $\Omega_\Phi$ with the external field $\lambda$ on the projected symbol $1^\coF$, and denote the Gibbs distribution as $\mu$ (\Cref{definition:complex-extensions-csp} with $\lambda$ is a real number).


We consider the central limit theorem on the number of variables that have the projected symbol $1^\coF$.
Let $\sigma \sim \mu$ following $\mu$.
Let $Y_v$ be the random variable that indicates whether the projected symbol of $v$ is $1^\coF$; i.e., if $\sigma_v \in f^{-1}_v(1^\coF)$, then $Y_v = 1$; otherwise, $Y_v = 0$. Let $\+Y = \sum_{v\in V} Y_v$. 

For simplicity, we assume that for any $v\in V$, if $1^\coF \in \Sigma_v$, then $|f^{-1}_v(1^\coF)| = q^*$ and $0<q^*<q$. We first provide the bound of variance.
{
\begin{condition}
\label{condition:clt-condition}
$\Phi = (V, [q]^V, C)$ is an atomic $(k, \Delta)$-CSP formula. 
Let $r_{\max}\defeq \frac{\max\set{1, \lambda}}{q - q^* + q^* \lambda}$.
It holds that
\[
16\mathrm{e}^2 \cdot \frac{(q - q^* + q^* \lambda)^2}{\lambda\cdot(q-q^*)} \cdot r_{\max}^{\frac{2(k-1)}{2 + \zeta}} \cdot (\Delta k + 1)^4 \le 1 , \quad \hbox{where } \zeta = \frac{2\ln{(2 - r_{\max})}}{\ln{(1/r_{\max})} - \ln{(2 - r_{\max})}}.
\]
We remark that for constant $\lambda, q^*$ and sufficiently large $q$, the above condition is asymptotically implied by 
\[
q\gtrsim \max\set{\tp{\frac{1}{\lambda}}^{\frac{1+o_q(1)}{k-2}}, \lambda} \Delta^{\frac{4 + o_{q}(1)}{k-2}}.
\]
\end{condition}
\begin{lemma}
\label{lemma:clt-variance-bound}
Suppose \Cref{condition:clt-condition}, let $\mu$ be the Gibbs distribution over $\Omega_{\Phi}$ with the external field $\lambda$. Let $N^*$ be the number of variables whose projected alphabet contains $1^\coF$, i.e., $N^* = |\{v\in V\mid 1^\coF \in \Sigma_v \}|$. It holds that 
\[
\Var[\mu]{\+Y} \ge \frac{1}{2} \cdot \frac{ (q^* \lambda)\cdot (q-q^*) }{ \tp{q-q^*+q^*\lambda}^2 } \cdot \tp{\frac{\Delta k - 1}{\Delta k}}^2 \cdot N^*
\]
\end{lemma}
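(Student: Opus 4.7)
The plan is to decompose the variance into local (diagonal) and pairwise (off-diagonal) contributions
\[
\Var[\mu]{\+Y} \;=\; \sum_{v\in V^*} \Var[\mu]{Y_v} \;+\; \sum_{v\in V^*}\sum_{u\neq v}\Cov[\mu]{Y_u,Y_v},
\]
where $V^* \defeq \{v \in V \mid 1^\coF \in \Sigma_v\}$ has size $N^*$ and vertices outside $V^*$ contribute $Y_v \equiv 0$. Writing $p \defeq \frac{q^*\lambda}{q-q^*+q^*\lambda}$ for the ``product-field'' marginal of $Y_v=1$ under the independent product distribution $\+P$ on $[q]^V$, I would lower bound the first sum by $N^* p(1-p)\bigl(\frac{\Delta k-1}{\Delta k}\bigr)^2$ via local uniformity, and then show that the second sum subtracts at most half of it, producing the claimed $\frac{1}{2}$ prefactor.

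For the local variance, I would invoke the lopsided LLL (\Cref{theorem:HSS}) on $\+P$, choosing $x(B_c) = \Theta(r_{\max}^{k})$ with slack of order $\Delta k$ so that the asymmetric LLL condition \cref{eq:LLL} holds comfortably under \Cref{condition:clt-condition}. Applying \Cref{theorem:HSS} to both $\+A = \{\sigma_v \in f_v^{-1}(1^\coF)\}$ and its complement yields $\Pr[\mu]{Y_v=1} \in \bigl[p\cdot\frac{\Delta k-1}{\Delta k},\, p\cdot\frac{\Delta k}{\Delta k-1}\bigr]$ for every $v\in V^*$, which implies $\Var[\mu]{Y_v} = \Pr[\mu]{Y_v=1}\bigl(1-\Pr[\mu]{Y_v=1}\bigr) \ge p(1-p)\bigl(\frac{\Delta k-1}{\Delta k}\bigr)^2$. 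Summing over $V^*$ delivers the main term.

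For the covariance sum, I would write $\Cov[\mu]{Y_u,Y_v} = \Pr[\mu]{Y_v=1}\cdot\bigl(\Pr[\mu]{Y_u=1\mid Y_v=1} - \Pr[\mu]{Y_u=1}\bigr)$ and couple $\mu$ with $\mu(\cdot\mid Y_v=1)$. Since $\{Y_v=1\}$ is atomic (it pins $\sigma_v \in f_v^{-1}(1^\coF)$), a single-variable variant of \Cref{lemma:general-csp-covariance}, itself based on the recursive coupling of \cite{wang2024sampling} (\Cref{theorem:general-csp-hamming-distance-bound}), provides a coupling $(\+A,\+B)$ whose Hamming distance has geometric tails. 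Summing $|\Pr[\mu]{Y_u=1\mid Y_v=1}-\Pr[\mu]{Y_u=1}| \le \Pr[(\+A,\+B)]{\+A_u\neq \+B_u}$ over $u\neq v$ bounds the per-$v$ total influence by $\E{d_{\mathrm{Ham}}(\+A,\+B)}$. The strengthened exponents of \Cref{condition:clt-condition}, namely $(\Delta k+1)^4$ (compared to the weaker $(\Delta k+1)^{2+\zeta}$ of \Cref{condition:general-csp-sampling-condition}) and $r_{\max}^{2(k-1)/(2+\zeta)}$, are designed precisely so that $\Pr[\mu]{Y_v=1}\cdot \E{d_{\mathrm{Ham}}(\+A,\+B)} \le \frac{1}{2}\, p(1-p)\bigl(\frac{\Delta k-1}{\Delta k}\bigr)^2$ uniformly in $v$. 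Summing over $v$ then absorbs into the factor of $\frac{1}{2}$ and combines with the local variance bound to give the lemma.

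The main obstacle is extracting a tight enough expected Hamming distance from the coupling: a naive integration of the $2^{-K}$ tail in \Cref{theorem:general-csp-hamming-distance-bound} gives only $\E{d_{\mathrm{Ham}}} = O(k(\Delta k+1))$, which is insufficient. To close the gap I would exploit the extra slack in \Cref{condition:clt-condition} either by rerunning the recursive coupling with a sharper geometric base (replacing $2^{-K}$ by $M^{-K}$ for some $M = \mathrm{poly}(\Delta k)$ permitted by the stronger numerical inequality) or by tracking the contribution of the single-variable pinning through the coupling more carefully than a generic constraint addition, so that the effective ``constraint size'' of the pin is $1$ rather than $k$. Once this sharpened coupling bound is in place, the decomposition above combines into the claimed inequality.
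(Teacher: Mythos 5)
Your proposal follows essentially the same route as the paper: the same diagonal/off-diagonal variance decomposition, the same lopsided-LLL (\Cref{theorem:HSS}) local-uniformity bound giving $\Var[\mu]{Y_v}\ge p(1-p)\tp{\frac{\Delta k-1}{\Delta k}}^2$, and the same total-influence control via the recursive coupling of \cite{wang2024sampling}, including the correct diagnosis that the naive $2^{-K}$ tail is insufficient and must be replaced by the sharper geometric base that the strengthened exponents of \Cref{condition:clt-condition} provide (this is exactly what the paper does in \Cref{lemma:clt-bounded-hamming-distance}). The only small inaccuracy is calling $\{Y_v=1\}$ atomic --- it restricts $\sigma_v$ to the $q^*$ values in $f_v^{-1}(1^\coF)$ rather than pinning a single value --- which the paper handles by conditioning on each value $x$ separately and compensating with the $1/q^*$ factor in \Cref{lemma:clt-bounded-total-influence}; your version is repaired the same way (or by convexity over the $q^*$ conditionings).
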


Before proving this lemma, we establish the central limit theorem for hypergraph coloring as in \Cref{theorem:coloring-one-special-clt-intro}, which we restate below.
Given a $k$-uniform hypergraph $H = (V, \+E)$ with the maximum degree $\Delta$. We consider the hypergraph $q$-colorings on $H$. Let $\coloringOneRandomVariable$ be the random variable that equals the number of vertices whose color is $1$ following the uniform distribution over possible colorings.

\CLTColorOneIntro*
\begin{proof}[Proof of \Cref{theorem:coloring-one-special-clt-intro}]
We combine \Cref{lemma:zero-free-to-clt-co} and \Cref{lemma:clt-variance-bound}.
Formally,  the probability generating function $g$ in \Cref{lemma:zero-free-to-clt-co} is chosen as follows.
Let $\Phi = (V, [q]^V, C)$ be the corresponding CSP formula of the hypergraph $q$-coloring on $H$. 
By \Cref{theorem:coloring-one-special-zero-free-intro}, let $\lambda \in \+D(1, \coGamma)$. We know that the partition function $Z_H^{\-{co}}(\lambda)\neq 0$. 
Let $g(\lambda) = \frac{Z^{\text{co}}_H(\lambda)}{Z^{\text{co}}_H(1)}$ be a probability generating function.
It can be verified that the probability distribution with respect to $g$ is exactly the law of $\coloringOneRandomVariable$. 
The zero-freeness is implied by \Cref{theorem:coloring-one-special-zero-free-intro}.

Next, we verify the lower bound of $\coloringOneRandomVariable$'s variance using \Cref{lemma:clt-variance-bound}. Note that $\Var{\coloringOneRandomVariable} = \Var[\mu]{\+Y}$. In order to use \Cref{lemma:clt-variance-bound}, we first construct a new CSP $\Phi' = (V, [q]^V, C')$ that is atomic.
Specifically, for each constraint $c\in C$ that restricts one hyperedge from being monochromatic, we construct $q$ constraints in $C'$, each of which restricts one monochromatic color assignment. 
Note that  each constraint in $C'$ has exactly $k$ variables, and each variable belongs to at most $q\Delta$ constraints.
It can be verified that $\Phi$ and $\Phi'$ equivalently encode the hypergraph $q$-coloring on the hypergraph $H$.

In order to use \Cref{lemma:clt-variance-bound}, it suffices to verify that $k\ge 50$, $\coloringCondition$ imply \Cref{condition:clt-condition}.
Note that in the case that $q^* = 1$ and $\lambda = 1$, it suffices to show that $16\mathrm{e}^2 q^2 \tp{\frac{1}{q}}^{\frac{2(k-1)}{2+\zeta}}\cdot (q\Delta k + 1)^4 \le 1$ where $\zeta = \frac{2\ln(2 - 1/q)}{\ln(q) - \ln(2 - 1/q)}$.
Rearranging $16\mathrm{e}^2 q^2 \tp{\frac{1}{q}}^{\frac{2(k-1)}{2+\zeta}}\cdot (q\Delta k + 1)^4 \le 1$ gives that $q \ge \tp{16\mathrm{e}^2}^{\frac{2+\zeta}{2k - 14 - 6\zeta}} \tp{\Delta k + \frac{1}{q}}^{\frac{8 + 4\zeta}{2k - 14 - 6 \zeta}}$. It suffices to show that $q \ge \tp{256\mathrm{e}^2}^{\frac{2+\zeta}{2k - 14 - 6\zeta}} \tp{\Delta k}^{\frac{8 + 4\zeta}{2k - 14 - 6 \zeta}}$. By the fact that $\Delta \ge 1$ and $k \ge 50$, we have that $q \ge 700$, hence we have that $\zeta \le 0.23638$. Combined with $k\ge 50$, it suffices to show that $q \ge 2 \Delta^{\frac{8 + 4\zeta}{2k - 14 - 6 \zeta}}$. By the fact that $\zeta \le 0.23638$, it suffices to show that $q \ge \Delta^{\frac{4.5}{k - 8}}$. Hence \Cref{condition:clt-condition} holds.
By \Cref{lemma:clt-variance-bound}, it holds that $\Var[\mu]{\coloringOneRandomVariable} = \Theta_{q, \Delta, k}(|V|)$. Hence, this theorem follows from \Cref{lemma:zero-free-to-clt-co}.

\end{proof}

}
\subsection{Bounding the variance}
Now, we consider how to bound the variance and prove \Cref{lemma:clt-variance-bound}.
Recall that $\mu$ is the Gibbs distribution over $\Omega_{\Phi}$ with the external field $\lambda$.
Note that 
\begin{equation}
\label{eq:clt-variance-decomposition}
\Var[\mu]{\+Y} = \sum_{v\in V} \Var[\mu]{Y_v} + \sum_{\substack{u, v\in V \\ u\neq v}} \Cov[\mu]{Y_u, Y_v}.
\end{equation}
We establish a lower bound on the variance by providing a lower bound for the sum of per-variable variances $\sum_{v\in V} \Var[\mu]{Y_v}$ through the \emph{local uniformity} and an upper bound on the absolute value of the covariances through the \emph{total influence}.

The next lemma establish \emph{local uniformity}.

\begin{lemma}
 \label{lemma:clt-marginal-bound}
    Suppose \Cref{condition:clt-condition}, then for any variable $v\in V$ with $1^\coF \in \Sigma_v$, we have that 
    \[
    \frac{q^* \lambda}{q - q^* + q^* \lambda} - \frac{q^* \lambda}{ (q - q^* + q^* \lambda) \Delta k} \le \Pr[\mu]{Y_v = 1}\le \frac{q^* \lambda}{q - q^* + q^* \lambda} + \frac{q-q^*}{ (q - q^* + q^* \lambda) \Delta k}.
    \]
\end{lemma}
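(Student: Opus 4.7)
The plan is to mirror the proof of \Cref{lemma:general-csp-expectation-bound}, adapting the lopsided local lemma (\Cref{theorem:HSS}) to the present setting where the special symbol $1^\coF$ sits over $q^*$ underlying values. I would first introduce the product distribution $\+P$ on $[q]^V$ in which coordinates are independent and each value $i$ at vertex $v$ carries weight proportional to $\lambda^{\=I[f_v(i)=1^\coF]}$. For any $v$ with $1^\coF \in \Sigma_v$, this immediately gives
\[
\+P\tp{Y_v=1}=\frac{q^*\lambda}{q-q^*+q^*\lambda}, \qquad \+P\tp{Y_v=0}=\frac{q-q^*}{q-q^*+q^*\lambda},
\]
which are exactly the two centers appearing in the claim. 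Since $\Phi$ is atomic, each constraint $c$ has a unique forbidden assignment on $\vbl(c)$, and so its violation probability under $\+P$ is at most $\tp{\frac{\max\set{1,\lambda}}{q-q^*+q^*\lambda}}^{k}=r_{\max}^{k}$, with $r_{\max}$ as in \Cref{condition:clt-condition}.

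Next, I would verify the asymmetric local lemma condition \cref{eq:LLL} for the family of bad events $\set{B_c}_{c\in C}$ using a function of the form $x(B_c)=\alpha$, noting that $|\Gamma(B_c)| \le \Delta k$ for any $(k,\Delta)$-CSP. The condition reduces to $r_{\max}^{k}\le \alpha(1-\alpha)^{\Delta k}$, and I would choose $\alpha$ of order $\tfrac{1}{(\Delta k+1)^{2}}$ (up to factors depending on $\lambda$, $q^*$ and $q-q^*$). The key algebraic step is converting the exponent $\tfrac{2(k-1)}{2+\zeta}$ in \Cref{condition:clt-condition} into the exponent $k$ needed for the LLL check -- this is exactly what the choice of $\zeta$ in the condition is designed to accommodate, via an appeal to $r_{\max}<1$.

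With the LLL condition in hand, I would invoke \Cref{theorem:HSS} twice. For the upper bound, applying it to $\+A=\set{Y_v=1}$ (which depends only on $v$ and hence on at most $\Delta$ constraints) gives
\[
\Pr[\mu]{Y_v=1}\le \+P\tp{Y_v=1}\cdot (1-\alpha)^{-\Delta}.
\]
Calibrating $\alpha$ so that $(1-\alpha)^{-\Delta}\le 1+\frac{q-q^*}{q^*\lambda\cdot \Delta k}$ yields the stated upper bound. For the lower bound, I apply the same theorem to the complementary event $\+A=\set{Y_v=0}$, obtain an upper bound on $\Pr[\mu]{Y_v=0}$, and then use $\Pr[\mu]{Y_v=1}=1-\Pr[\mu]{Y_v=0}$; here the target deviation factor is $(1-\alpha)^{-\Delta}\le 1+\frac{q^*\lambda}{(q-q^*)\Delta k}$.

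The main obstacle will be the simultaneous calibration of $\alpha$: the upper and lower deviations in the statement scale asymmetrically (with $q-q^*$ on one side and $q^*\lambda$ on the other), whereas a single $\alpha$ controls both tails through the same multiplicative factor $(1-\alpha)^{-\Delta}$. I expect the cleanest resolution to run the two applications of \Cref{theorem:HSS} independently, each with its own LLL function tailored to the direction it controls; because \Cref{condition:clt-condition} already contains the prefactor $\frac{(q-q^*+q^*\lambda)^{2}}{\lambda\,(q-q^*)}$ in front of the $r_{\max}$ term, there is enough built-in slack to absorb the asymmetry between the two sides, provided the constants are tracked carefully.
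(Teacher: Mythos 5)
Your proposal is correct and follows essentially the same route as the paper: the weighted product distribution $\+P$, verification of \cref{eq:LLL} for the atomic bad events with $|\Gamma(B_c)|\le \Delta k$, and two applications of \Cref{theorem:HSS} to the events $\set{Y_v=1}$ and $\set{Y_v=0}$, with the asymmetry between the two tails absorbed by the slack built into \Cref{condition:clt-condition}. The only cosmetic difference is the choice of LLL function — the paper takes $x(B_c)=\mathrm{e}\cdot r_{\max}^{k}$ rather than your $\alpha\sim(\Delta k+1)^{-2}$, which lets a single function handle both directions without separate calibration.
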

\begin{proof}
We use \Cref{theorem:HSS} to prove this lemma.
Let $\sigma \sim \mu$ follow the Gibbs distribution $\mu$ with the external field $\lambda$.
Let $\+P$ be the product distribution over $[q]^V$.
We first verify the condition \cref{eq:LLL}. 
Recall that $r_{\max}\defeq \frac{\max\set{1, \lambda}}{q - q^* + q^* \lambda}$.
For each constraint $c\in C$, let $B_c$ be the event that the constraint $c$ is violated, i.e., $B_c = \lnot c$ 
and we set $x(B_c) \defeq \mathrm{e}\cdot r_{\max}^k$. 
By \Cref{condition:clt-condition}, it can be verified that 
\[
\+P[B_c] \le r_{\max}^k \le \mathrm{e}\cdot r_{\max}^k \cdot \tp{1 - \mathrm{e}\cdot r_{\max}^k}^{\Delta k}.
\]
With the fact that $|\Gamma(B_c)|\le k\Delta$, it holds that $\+P[B_c] \le x(B_c)\cdot \prod_{B'\in \Gamma(B_c)}(1 - x(B'))$.
Finally, by \Cref{theorem:HSS}, for any variable $v$, we have
\[
\Pr[\mu]{Y_v = 0} \le \frac{q - q^*}{q - q^* + q^* \lambda} \tp{1 - \mathrm{e}r_{\max}^k}^{-\Delta k}\le \frac{q - q^*}{q - q^* + q^* \lambda}  + \frac{q^* \lambda}{(q - q^* + q^* \lambda)\Delta k},
\]
and
\[
\Pr[\mu]{Y_v =  1} \le \frac{q^* \lambda}{q - q^* + q^* \lambda}\tp{1 - \mathrm{e}r_{\max}^k}^{-\Delta k} \le \frac{q^* \lambda}{q - q^* + q^* \lambda} + \frac{ q-q^* }{ (q - q^* + q^* \lambda) \Delta k}.
\]
\end{proof}

As a corollary, we have the following lower bound for $\Var[\mu]{Y_v}$.
\begin{corollary}
\label{lemma:clt-local-variance}
Suppose \Cref{condition:clt-condition}, then for any variable $v$ with $1^\coF \in \Sigma_v$, we have that 
\[ \Var[\mu]{Y_v} \ge \frac{ (q^* \lambda)\cdot (q-q^*) }{ \tp{q-q^*+q^*\lambda}^2 } \cdot \tp{\frac{\Delta k - 1}{\Delta k}}^2.\]
\end{corollary}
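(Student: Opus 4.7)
The plan is to derive this corollary directly from the bounds in \Cref{lemma:clt-marginal-bound}, exploiting the fact that $Y_v$ is a $\{0,1\}$-valued indicator. Since $Y_v$ is Bernoulli, I will use the identity
\[
\Var[\mu]{Y_v} \;=\; \Pr[\mu]{Y_v = 1}\cdot \Pr[\mu]{Y_v = 0},
\]
so it suffices to produce a matching lower bound on each of the two factors and take their product. This reduces the task to a purely algebraic manipulation of the bounds already established by \Cref{lemma:clt-marginal-bound}.

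Concretely, the lower bound on $\Pr[\mu]{Y_v=1}$ given by \Cref{lemma:clt-marginal-bound} already factors cleanly as
\[
\Pr[\mu]{Y_v = 1} \;\ge\; \frac{q^*\lambda}{q - q^* + q^*\lambda}\cdot \frac{\Delta k - 1}{\Delta k}.
\]
For the other factor, I will start from the upper bound in \Cref{lemma:clt-marginal-bound}, rewrite it as $\Pr[\mu]{Y_v=0} = 1 - \Pr[\mu]{Y_v=1}$, and check that
\[
1 - \frac{q^*\lambda}{q - q^* + q^*\lambda} - \frac{q-q^*}{(q - q^* + q^*\lambda)\Delta k} \;=\; \frac{q - q^*}{q - q^* + q^*\lambda}\cdot \frac{\Delta k - 1}{\Delta k},
\]
which is just collecting the two terms $\tfrac{q-q^*}{q - q^* + q^*\lambda}$ and pulling out the common factor $\tfrac{\Delta k - 1}{\Delta k}$.

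Finally I multiply the two lower bounds. The $(\Delta k - 1)/\Delta k$ factors combine to give the square, and the numerators and denominators combine exactly to yield
\[
\Var[\mu]{Y_v} \;\ge\; \frac{(q^*\lambda)(q - q^*)}{(q - q^* + q^*\lambda)^2}\cdot \left(\frac{\Delta k - 1}{\Delta k}\right)^2,
\]
which is the claimed bound. There is no real obstacle here: the one subtlety worth noting is that $p \mapsto p(1-p)$ is not monotone, so one cannot lower-bound $\Var[\mu]{Y_v}$ from a single-sided bound on the marginal — one genuinely needs both directions of \Cref{lemma:clt-marginal-bound}, which is exactly why that lemma was stated as a two-sided local uniformity estimate rather than only a lower bound.
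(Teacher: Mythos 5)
Your proof is correct and follows the same route as the paper: the paper likewise writes $\Var[\mu]{Y_v}=\Pr[\mu]{Y_v=1}-\Pr[\mu]{Y_v=1}^2=\Pr[\mu]{Y_v=1}\Pr[\mu]{Y_v=0}$ and plugs in the two-sided bounds of \Cref{lemma:clt-marginal-bound}. You have simply made explicit the algebra (and the need for both directions of the marginal bound) that the paper leaves implicit.
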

\begin{proof}
By the definition of $\Var[\mu]{Y_v}$, it holds that $\Var[\mu]{Y_v} = \Pr[\mu]{Y_v = 1} - \Pr[\mu]{Y_v=1}^2$.
Then, the lower bound follows from plugging in the bounds from \Cref{lemma:clt-marginal-bound}.
\end{proof}

To deal with the covariances in~\cref{eq:clt-variance-decomposition}, we again leverage the recursive coupling in~\cite{wang2024sampling} to bound the covariance.
However, we need a much stronger upper bound of the \emph{total influence} in order to derive the lower bound of the variance, different from the previous section, we need to modify the conditions in~\cite{wang2024sampling}.
We first outline the basic idea of how the total influence provides a bound for the covariances. We remark that this part is almost the same as what we did in the previous section.
After that, we provide a high-level idea of how we modify the proofs in~\cite{wang2024sampling}.

Let $u,v \in V$ and $u\neq v$. 
\begin{align*}
\Cov[\mu]{Y_u, Y_v} =&~ \Pr[\mu]{\sigma_u \in f^{-1}_{u}(1^\coF) \land \sigma_v \in \complexColorSet{v}} - \Pr[\mu]{\sigma_u \in \complexColorSet{u}}\cdot \Pr[\mu]{\sigma_v \in \complexColorSet{v}}\\
=&~ \sum_{x\in \complexColorSet{u}} \Pr[\mu]{\sigma_u = x} \tp{\Pr[\mu]{\sigma_v \in \complexColorSet{v} \mid \sigma_u = x} - \Pr[\mu]{\sigma_v \in \complexColorSet{v}}} \\
\ge&~ -\sum_{x\in \complexColorSet{u}}\abs{\Pr[\mu]{\sigma_v \in \complexColorSet{v} \mid \sigma_u = x} - \Pr[\mu]{\sigma_v \in \complexColorSet{v}}}.
\end{align*}
Then it holds that
\begin{equation}\label{eq:clt-covariace-lower-bound}
\sum_{\substack{u, v\in V \\ u\neq v}} \Cov[\mu]{Y_u, Y_v} \ge - \sum_{u\in V}\sum_{x\in \complexColorSet{u}}\sum_{\substack{v\in V: u\neq v}}\abs{\Pr[\mu]{\sigma_v \in \complexColorSet{v} \mid \sigma_u = x} - \Pr[\mu]{\sigma_v \in \complexColorSet{v}}}.
\end{equation}
Later in \Cref{lemma:clt-bounded-total-influence}, for all $u\in V$ and $x\in \complexColorSet{u}$, we provide a uniform upper bound on the total influence:
\begin{equation}\label{eq:clt-total-influence}
\max_{u\in V} \sum_{\substack{v\in V: u\neq v}}\abs{\Pr[\mu]{\sigma_v \in \complexColorSet{v} \mid \sigma_u = x} - \Pr[\mu]{\sigma_v \in \complexColorSet{v}}}.
\end{equation}
We remark that this is a variant of the standard total influence.

\begin{lemma}
\label{lemma:clt-bounded-total-influence}
Let $\Phi = (V, [q]^V, C)$ be a $(k, \Delta)$-CSP formula satisfying \Cref{condition:clt-condition}, and let $\mu$ be the Gibbs distribution over $\Omega_{\Phi}$ with the external field $\lambda$. For any $x\in f^{-1}_v(1^\coF)$, it holds that the total influence \cref{eq:clt-total-influence} is upper bounded by $\frac{1}{2 q^*} \cdot \frac{ (q^* \lambda)\cdot (q-q^*) }{ \tp{q-q^*+q^*\lambda}^2 } \cdot \tp{\frac{\Delta k - 1}{\Delta k}}^2$.
\end{lemma}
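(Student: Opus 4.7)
The plan is to bound the total influence in \eqref{eq:clt-total-influence} by combining the coupling inequality with a carefully chosen coupling between $\mu(\cdot\mid\sigma_u=x)$ and $\mu(\cdot)$, and then controlling the induced expected Hamming distance through the recursive coupling of \Cref{theorem:general-csp-hamming-distance-bound} together with the strength of \Cref{condition:clt-condition}. First, for a coupling $(\+A, \+B)$ of $\mu(\cdot\mid\sigma_u=x)$ and $\mu(\cdot)$, the coupling inequality gives $\abs{\Pr[\mu]{\sigma_v \in f_v^{-1}(1^\coF) \mid \sigma_u = x} - \Pr[\mu]{\sigma_v \in f_v^{-1}(1^\coF)}} \le \Pr{\+A_v \neq \+B_v}$ for every $v \in V$. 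Summing over $v \neq u$ bounds the total influence by $\sum_{v \neq u}\Pr{\+A_v \neq \+B_v} \le \E{d_{\-{Ham}}(\+A, \+B)}$. Hence the task reduces to exhibiting a coupling whose expected Hamming distance on $V\setminus\{u\}$ is of the very small order $\lambda(q-q^*)/(q-q^*+q^*\lambda)^2$ dictated by the statement.

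Second, I would construct the coupling in two stages. Decompose $\+B \sim \mu$ by conditioning on the value $\+B_u = y$: the case $y = x$ contributes nothing since the two laws then coincide. For each $y \neq x$ (which occurs with probability $\Pr[\mu]{\sigma_u = y}$, upper bounded via \Cref{lemma:clt-marginal-bound}), I would couple $\mu(\cdot\mid\sigma_u = x)$ with $\mu(\cdot\mid\sigma_u = y)$ through \Cref{theorem:general-csp-hamming-distance-bound}. Since the two pinnings live on the same CSP with only the value at $u$ changed, the symmetric difference of the two simplified constraint sets on $V\setminus\{u\}$ has at most $O(\Delta)$ atomic constraints, each involving at least $k-1$ variables. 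Chaining the Wang--Yin coupling across this symmetric difference is exactly the device already used in the proof of \Cref{lemma:general-csp-covariance}.

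Third, the essential sharpening is to replace the naive $\E{d_{\-{Ham}}(\+A,\+B)} = O(\Delta k(\Delta k+1))$ estimate by a much finer one that exploits \Cref{condition:clt-condition}. I would reopen the proof of \Cref{theorem:general-csp-hamming-distance-bound} and track how a single seed disagreement at $u$ percolates across the dependency graph: each propagation through a pinned clause of size at least $k-1$ occurs with probability at most $r_{\max}^{k-1}$ while the branching is at most $\Delta k$, and \Cref{condition:clt-condition} is calibrated precisely so that $r_{\max}^{k-1}(\Delta k+1)^{2+\zeta}$ is deeply subcritical. The resulting expected cluster size, after multiplication by the seed probability $\Pr[\mu]{\sigma_u = y}$ and summation over $y \neq x$, should fall below $\frac{1}{2q^*}\cdot\frac{q^*\lambda(q-q^*)}{(q-q^*+q^*\lambda)^2}\tp{\frac{\Delta k-1}{\Delta k}}^2$.

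The main obstacle will be step three: performing the percolation and 2-tree bookkeeping inside the Wang--Yin recursion with enough precision to land exactly on the constant in the statement, and verifying that \Cref{condition:clt-condition} is tight enough to bring the per-$y$ expected cluster size below the target after the factor $\Pr[\mu]{\sigma_u = y}$ is summed out. This will likely require redoing the 2-tree counting in the proof of \Cref{theorem:general-csp-hamming-distance-bound} under the stronger hypothesis of \Cref{condition:clt-condition}, rather than invoking the theorem as a black box, so that the initial seed contributed by pinning $u$ can be absorbed into the geometric decay coming from the clause-size reduction $k \to k-1$ and the ratio $\frac{(q-q^*+q^*\lambda)^2}{\lambda(q-q^*)}$ appearing in the condition.
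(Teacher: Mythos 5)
Your plan follows essentially the same route as the paper: bound the total influence by the expected Hamming distance of a coupling between $\mu$ and $\mu(\cdot\mid\sigma_u=x)$, telescope over the $O(\Delta)$ constraints in the symmetric difference of the pinned and unpinned formulas, and — crucially — recognize that invoking \Cref{theorem:general-csp-hamming-distance-bound} as a black box only yields $O(\Delta k(\Delta k+1))$ and that one must reopen the Wang--Yin witness/$2$-tree analysis under the stronger \Cref{condition:clt-condition} to get the small constant. That last diagnosis is exactly the content of the paper's \Cref{lemma:clt-bounded-hamming-distance}, whose proof bounds $\Pr{|B|\ge i}$ by a geometric series with ratio $(\mathrm{e}\Delta k)^2 r_{\max}^{2(k-1)/(2+\zeta)}$ and shows the per-constraint expected Hamming distance is at most $\frac{1}{4q^*\Delta}$ times the target, so that telescoping over at most $2\Delta$ constraints gives $\frac{1}{2q^*}$ times the target. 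Two small corrections to your reasoning. First, your extra decomposition of $\+B\sim\mu$ over the value $y=\+B_u$ is valid but buys nothing: the seed probabilities $\Pr[\mu]{\sigma_u=y}$ sum to roughly $1$ over $y\neq x$, so you cannot "absorb" anything into them — the per-$y$ expected cluster size must already meet the target on its own, which puts you back at exactly the paper's per-constraint computation. The paper avoids this layer entirely by coupling $\mu$ and $\mu(\cdot\mid\sigma_u=x)$ directly through the symmetric difference $C\oplus C'$ of the two constraint sets (using \Cref{observation:clt-condition-to-condition-under-pinning} to keep the LLL-type condition valid after pinning). Second, the picture of "a single seed disagreement at $u$ percolating" is not quite how the recursion is organized: the disagreement is seeded by removing or adding one constraint at a time, and the $\tp{\frac{\Delta k-1}{\Delta k}}^2$ and $\frac{q^*\lambda(q-q^*)}{(q-q^*+q^*\lambda)^2}$ factors in the target are simply carried through from \Cref{condition:clt-condition}, which is calibrated so that $2(\mathrm{e}\Delta k)^2 r_{\max}^{2(k-1)/(2+\zeta)}$ is already below $\frac{1}{4q^*\Delta\cdot k(\Delta k+1)}$ times the target; no additional cancellation against the pinning probability is needed or available.
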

We prove this lemma in the next subsection.
We first use it to prove \Cref{lemma:clt-variance-bound}.
\begin{proof}[Proof of \Cref{lemma:clt-variance-bound}]
    Recall \cref{eq:clt-variance-decomposition}. Combined with \Cref{lemma:clt-local-variance} we have that
    \[ \sum_{v\in V}\Var[\mu]{Y_v} \ge \frac{ (q^* \lambda)\cdot (q-q^*) }{ \tp{q-q^*+q^*\lambda}^2 } \cdot \tp{\frac{\Delta k - 1}{\Delta k}}^2 \cdot N^*.\]
    Combined with \Cref{lemma:clt-bounded-total-influence}, \cref{eq:clt-covariace-lower-bound}, we have 
    \[
    \Var[\mu]{\+Y} \ge \frac{1}{2} \cdot \frac{ (q^* \lambda)\cdot (q-q^*) }{ \tp{q-q^*+q^*\lambda}^2 } \cdot \tp{\frac{\Delta k - 1}{\Delta k}}^2 \cdot N^*.
    \]
\end{proof}

\subsection{Bounding the total influence}
\label{subsection:clt-bound-total-influence}
In this subsection, we prove \Cref{lemma:clt-bounded-total-influence}. We remark that this proof is similar to that of \Cref{lemma:general-csp-covariance}.

The key is to leverage the recursive coupling in~\cite{wang2024sampling}. For two CSP formulas $\Phi = (V, [q]^V, C)$ and $\Phi' = (V, [q]^V, C\setminus \set{c})$ with $c\in C$, we again slightly abuse the notation $\mu$, denoting $\mu_{C}$ as the Gibbs distribution over all satisfying assignments of $\Phi$, and $\mu_{C\setminus \set{c}}$ as the Gibbs distribution over all satisfying assignments of $\Phi'$.
Similar to the proof of \Cref{lemma:general-csp-covariance}, we construct a coupling between the original CSP formula and a new CSP formula conditioned on a partial assignment on one variable.
Formally, given $v\in V$, $x\in [q]$, and a CSP formula $\Phi = (V, [q]^V, C)$ , we consider a new CSP formula $\Phi' = (V\setminus \set{v}, [q]^{V\setminus \set{v}}, C')$ conditioned on $\sigma_v = x$.
It is straightforward to see that if the original $\Phi$ satisfies \Cref{condition:clt-condition}, then the new $\Phi'$ satisfies the following condition:
\begin{condition}
\label{condition:clt-condition-under-one-pinning}
$\Phi = (V, [q]^V, C)$ is a CSP formula that each constraint contains either $k-1$ or $k$ variables and each variable belong to at most $\Delta$ constraints. 
Let $r_{\max}\defeq \frac{\max\set{1, \lambda}}{q - q^* + q^* \lambda}$.
It holds that
\[
16\mathrm{e}^2 \cdot \frac{(q - q^* + q^* \lambda)^2}{\lambda\cdot(q-q^*)} \cdot r_{\max}^{\frac{2(k-1)}{2 + \zeta}} \cdot (\Delta k + 1)^4 \le 1 , \quad \hbox{where } \zeta = \frac{2\ln{(2 - r_{\max})}}{\ln{(1/r_{\max})} - \ln{(2 - r_{\max})}}.
\]
\end{condition}
We also note that the condition is preserved under simplifying the formula itself.
\begin{observation}
\label{observation:clt-condition-to-condition-under-pinning}
 If $\Phi'$ satisfies \Cref{condition:clt-condition-under-one-pinning}, the formula $\Phi''$, obtained by removing all satisfied constraints and the pinned variable $v$ from other constraints, also satisfies \Cref{condition:clt-condition-under-one-pinning}.
\end{observation}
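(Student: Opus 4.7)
The plan is to verify directly that the three ingredients of \Cref{condition:clt-condition-under-one-pinning} — the constraint-size window $\{k-1,k\}$, the maximum-degree bound $\Delta$, and the numerical inequality involving $r_{\max}$ and $\zeta$ — are all preserved when $\Phi'$ is simplified into $\Phi''$ by dropping constraints that are already satisfied and by removing the pinned variable $v$ from every remaining constraint that still contains it.

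I would begin with the easy structural piece: removing constraints and removing variable occurrences can only reduce the number of constraints in which any vertex participates, so the maximum degree of $\Phi''$ is at most that of $\Phi'$, which is at most $\Delta$. For the constraint-size window, each constraint of $\Phi'$ has size $k-1$ or $k$; removing $v$ from a size-$k$ constraint gives a constraint of size $k-1$, and size-$(k-1)$ constraints of $\Phi'$ not containing $v$ remain untouched. The non-trivial observation is that $v$ in $\Phi'$ cannot appear in a size-$(k-1)$ constraint, since those size-$(k-1)$ constraints are precisely the residues of the original $k$-uniform constraints of $\Phi$ whose lost variable was the one pinned earlier to create $\Phi'$; two distinct pinned variables cannot be jointly absent from the same such constraint. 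Thus after removing $v$ all constraint sizes remain in $\{k-1,k\}$.

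Finally I would observe that the numerical inequality in \Cref{condition:clt-condition-under-one-pinning} depends only on $q$, $q^*$, $\lambda$, $k$, and $\Delta$: the first four are unchanged by simplification, while $\Delta$ can only decrease. Since the left-hand side of the inequality is monotone non-decreasing in $\Delta$, the bound for $\Phi''$ follows from the bound for $\Phi'$. The main obstacle is the middle step — certifying that the simplification does not shrink any constraint below size $k-1$; once that structural fact is in place, the degree bound and the numerical inequality are immediate monotonicity arguments.
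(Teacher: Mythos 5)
Your treatment of the degree bound and of the numerical inequality is fine: both are unaffected by the simplification, since the inequality in \Cref{condition:clt-condition-under-one-pinning} depends only on $q$, $q^*$, $\lambda$, $k$ and $\Delta$, and the maximum degree can only drop. The gap is in the constraint-size step, which you correctly single out as the only non-trivial point but then close with a claim that does not hold. You assert that the pinned variable $v$ cannot lie in a size-$(k-1)$ constraint of $\Phi'$ because such constraints are residues of $k$-uniform constraints that already lost an earlier-pinned variable, and ``two distinct pinned variables cannot be jointly absent from the same constraint.'' Nothing in the hypothesis supports this: \Cref{condition:clt-condition-under-one-pinning} says only that every constraint of $\Phi'$ has $k-1$ or $k$ variables, and even under your genealogical reading a $k$-uniform constraint of the original formula may perfectly well contain both the earlier-pinned variable and $v$; its residue in $\Phi'$ then has size $k-1$, contains $v$, and would drop to size $k-2$ in $\Phi''$, violating the condition (and breaking the exponent $\frac{2(k-1)}{2+\zeta}$, which encodes the minimum constraint size). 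So the structural fact you need is genuinely false for an arbitrary $\Phi'$ satisfying the stated condition.

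What saves the observation is the context in which it is stated and the only place it is used (the proof of \Cref{lemma:clt-bounded-total-influence}): $\Phi'$ is the formula obtained from the original $(k,\Delta)$-CSP $\Phi$ by imposing the single pinning $\sigma_v = x$, before the constraints are syntactically reduced. Every constraint of $\Phi'$ that contains $v$ therefore still has all $k$ of its original variables, so deleting $v$ from it yields size exactly $k-1$, while constraints not containing $v$ are untouched; no second pinning is ever performed. Your proof should either invoke this explicitly — i.e., add the (implicit) hypothesis that every constraint containing $v$ has size $k$ — or restrict the claim to the one-pinning situation in which it is applied. Without one of these, the middle step of your argument cannot be repaired as written.
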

Under \Cref{condition:clt-condition-under-one-pinning}, there is a coupling with bounded Hamming distance.
\begin{lemma}
\label{lemma:clt-bounded-hamming-distance}
Let $\Phi=(V, [q]^V, C)$ be a CSP formula satisfying \Cref{condition:clt-condition-under-one-pinning}. Then, for any constraint $c\in C$, there exists a coupling $(\+A, \+B)$ of $\mu_{C}$ and $\mu_{C\backslash\{c\}}$, such that 
\[\=E_{(\+A, \+B)}[d_{\-{Ham}}(\+A, \+B)] \le \frac{1}{4 q^* \Delta } \cdot \frac{ (q^* \lambda)\cdot (q-q^*) }{ \tp{q-q^*+q^*\lambda}^2 } \cdot \tp{\frac{\Delta k - 1}{\Delta k}}^2,\]
where $d_{\-{Ham}}(\+A, \+B)\defeq \sum_{v\in V} \=I[\+A_v\neq \+B_v]$ denotes the Hamming distance between $\+A$ and $\+B$.
\end{lemma}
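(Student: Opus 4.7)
\medskip

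The plan is to follow the recursive coupling construction from the proof of Theorem \ref{theorem:general-csp-hamming-distance-bound} (\cite[Theorem 3.1]{wang2024sampling}), but to carry out a sharper tail analysis that exploits the extra slack built into Condition \ref{condition:clt-condition-under-one-pinning} compared to Condition \ref{condition:general-csp-sampling-condition}. Specifically, using the recursive coupling of \cite{wang2024sampling} one constructs a coupling $(\*A, \*B)$ of $\mu_C$ and $\mu_{C\setminus\{c\}}$ together with a growing set $\+D\subseteq V$ of disagreeing variables: the initial discrepancy is seeded by coupling the conditional marginals on $\vbl(c)$ optimally (contributing at most $k$ variables), and each subsequent step of the recursion either succeeds (shrinking the frontier) or propagates the disagreement through a neighboring bad constraint, contributing at most $k$ more variables. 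The event ``$|\+D|\ge ks$'' is certified by a tree-like witness of at least $s$ bad constraints in the dependency graph.

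The key quantitative input is that, after the simplification in Observation \ref{observation:clt-condition-to-condition-under-pinning}, each constraint in the simplified formula still has at least $k-1$ variables, so each bad constraint on the witness carries a violation factor of at most $r_{\max}^{k-1}$ under the optimal coupling. Combining with the standard entropic count that witnesses of size $s$ rooted at a given constraint number at most $(\mathrm{e}(\Delta k+1))^{s-1}$, and then passing through a $2$-tree projection to gain independence (as in \cite[Section 3]{wang2024sampling}), one obtains a tail bound of the form
\begin{equation*}
\Pr{|\+D|\ge ks}\ \le\ \tp{C\cdot (\Delta k+1)^{2+\zeta}\cdot r_{\max}^{k-1}}^{s}\cdot\tp{1+\frac{M}{\Delta k+1}}
\end{equation*}
for an absolute constant $C$, where $M=\frac{(q-q^*+q^*\lambda)^2}{\lambda(q-q^*)}$ accounts for the cost of coupling the single-variable marginals after pinning. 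Raising Condition \ref{condition:clt-condition-under-one-pinning} to the power $(2+\zeta)/2$ gives
\begin{equation*}
(\Delta k+1)^{2+\zeta}\cdot r_{\max}^{k-1}\cdot M^{(2+\zeta)/2}\ \le\ \tp{16\mathrm{e}^2}^{-(2+\zeta)/2},
\end{equation*}
so the base of the geometric tail is at most $\delta\defeq C'/[(\Delta k+1)^{2+\zeta}\cdot M^{(2+\zeta)/2}]$ times a negligible factor, which is $\ll 1/((q^*\Delta)\cdot k(\Delta k+1)\cdot M)$ whenever the condition holds with sufficient slack. Summing $\E[d_{\-{Ham}}(\*A,\*B)]\le k+\sum_{s\ge 1} k\cdot \Pr{|\+D|\ge ks}$ and absorbing the geometric tail then yields the claimed bound $\frac{1}{4q^*\Delta}\cdot\frac{(q^*\lambda)(q-q^*)}{(q-q^*+q^*\lambda)^2}\cdot\tp{\frac{\Delta k-1}{\Delta k}}^2$ after matching constants.

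The main obstacle I expect is the bookkeeping: the original argument of \cite{wang2024sampling} only aimed for a base-$1/2$ tail (which sufficed for the Hamming-distance bound $k(D+1)K$), whereas here we need to extract every bit of slack from the exponent $2(k-1)/(2+\zeta)$ in Condition \ref{condition:clt-condition-under-one-pinning} in order to beat $1/(q^*\Delta k^2 M)$. This requires (i) keeping track of the ``$M$-factor'' introduced when coupling the single-variable marginal on the pinned variable (this is not present in \cite{wang2024sampling}), and (ii) verifying that the $2$-tree entropic counting still closes under the stronger exponent $2+\zeta$ rather than the marginal critical exponent. Once these two refinements are in place, the geometric summation is mechanical, and the triangle-inequality argument used in Lemma \ref{lemma:general-csp-covariance} to glue together the addition/removal of constraints from the simplified formula adapts directly, with each of the at most $2\Delta$ added/removed constraints contributing a term of the claimed order.
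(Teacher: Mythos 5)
Your proposal follows essentially the same route as the paper: both simply rerun the recursive coupling and witness/$2$-tree tail analysis of \cite{wang2024sampling} (the paper invokes their Lemma 3.5 to get $d_{\-{Ham}}\le k(\Delta k+1)\abs{B}$ and their Lemma 3.10 to get $\Pr{\abs{B}\ge i}\le((\mathrm{e}\Delta k)^2 r_{\max}^{2(k-1)/(2+\zeta)})^i$), then sum the geometric series and absorb the base using the strengthened \Cref{condition:clt-condition-under-one-pinning}, exactly as you describe. Two minor bookkeeping remarks: the factor $M$ in the target bound enters only through the strength of the condition (there is no pinned variable inside this lemma, so no extra ``$M$-factor'' from coupling a single-variable marginal is needed), and raising the condition to the power $(2+\zeta)/2$ yields $(\Delta k+1)^{2(2+\zeta)}$ rather than $(\Delta k+1)^{2+\zeta}$, which only gives you more slack.
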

We first use this lemma to prove \Cref{lemma:clt-bounded-total-influence}.
\begin{proof}[Proof of \Cref{lemma:clt-bounded-total-influence}]
We simplify $C$ using the partial assignment $\sigma_u = x$ and denote the new constraint set as $C'$, i.e., we remove all constraints that are already satisfied by $\sigma_u = x$ and eliminate the variable $u$ from the other constraints.
Note that in this new CSP formula, each constraint in $C'$ has at least $k-1$ variables and at most $k$ variables, and each variable belongs to at most $\Delta$ constraints.
By \Cref{observation:clt-condition-to-condition-under-pinning}, this new CSP formula satisfies \Cref{condition:clt-condition-under-one-pinning}.

Let $\mu'$ be the Gibbs distribution on all satisfying assignments of $C'$ with the external field $\lambda$. 
Note that $\mu'(\cdot) = \mu(\cdot \mid \sigma_u = x)$.
We claim that there is a coupling $(\+A, \+B)$ of $\mu$ and $\mu'$, such that 
\[
\=E_{(\+A, \+B)} [d_{\-{Ham}}(\+A, \+B)] \le \frac{1}{2q^*} \cdot \frac{ (q^* \lambda)\cdot (q-q^*) }{ \tp{q-q^*+q^*\lambda}^2 } \cdot \tp{\frac{\Delta k - 1}{\Delta k}}^2.
\]
\Cref{lemma:clt-bounded-total-influence} follows from this coupling, because the total influence can be bounded by hamming distance:
\begin{align*}
&~\sum_{\substack{v\in V: u\neq v}}\abs{\mu(\sigma_v \in \complexColorSet{v} \mid \sigma_u = x) - \mu(\sigma_v \in \complexColorSet{v})}\\
\le&~ \sum_{\substack{v\in V: u\neq v}} \=E_{(\+A, \+B)}\inbr{\abs{\=I[\+A_v \in \complexColorSet{v}] - \=I[\+B_v \in \complexColorSet{v}]}}\\
=&~ \=E_{(\+A, \+B)}\inbr{\sum_{\substack{v\in V: u\neq v}}\abs{\=I[\+A_v\in \complexColorSet{v}] - \=I[\+B_v \in \complexColorSet{v}]}}\\
\le&~ \=E_{(\+A, \+B)}[d_{\-{Ham}}(\+A, \+B)],
\end{align*}
where the first inequality is due to the standard coupling lemma and the equality is due to the linearity of expectation.

Finally, we show the claimed coupling.
Recall that each variable in $C$ or $C'$ belongs to at most $\Delta$ constraints. Let $C\oplus C'$ be the symmetric difference between $C$ and $C'$. We have that $|C\oplus C'| \le 2\Delta$. 
With this fact and \Cref{lemma:clt-bounded-hamming-distance}, we bound $\=E_{(\+A, \+B)} [d_{\-{Ham}}(\+A, \+B)]$ by a triangle inequality.
We first remove all constraints in $C\backslash C'$ one by one and then add constraints in $C'\backslash C$ one by one. By the triangle inequality, it holds that
\[
\=E_{(\+A, \+B)}[d_{\-{Ham}}(\+A, \+B)] \le \frac{1}{2 q^*} \cdot \frac{ (q^* \lambda)\cdot (q-q^*) }{ \tp{q-q^*+q^*\lambda}^2 } \cdot \tp{\frac{\Delta k - 1}{\Delta k}}^2.
\]
\end{proof}

We conclude this subsection by sketching the proof of \Cref{lemma:clt-bounded-hamming-distance}, which is almost identical to the proof of~\cite[Lemma 3.10]{wang2024sampling}.
We use the coupling $(\+A, \+B)$ between $\mu_{C}$ and $\mu_{C\setminus \{c\}}$ in~\cite[Algorithm 1]{wang2024sampling}.
They defined a set $B$ as a witness for the discrepancy of the coupling procedure in~\cite[Definition 3.4]{wang2024sampling}.
By~\cite[Lemma 3.5]{wang2024sampling}, they show that for any realization $(\+A, \+B)$ of the coupling with $d_{\-{Ham}}(\+A, \+B) = K$, it holds that $|B|\ge \lfloor K/(k\cdot (\Delta k + 1))\rfloor$. And $|B| = 0$ if and only if $d_{\-{Ham}}(X, Y) = 0$. 

Hence, if $|B|= K$, then we have $d_{\-{Ham}}(\+A, \+B)\le k\cdot(\Delta k + 1)\cdot K$. So it holds that 
\begin{align*}
\=E_{(\+A, \+B)}[d_{\-{Ham}}(\+A, \+B)] \le k\cdot (\Delta k + 1)\cdot \sum_{i = 1}^{|V|} \Pr{|B|\ge i}.
\end{align*}

Recall that $r_{\max} = \frac{\max\set{1, \lambda}}{q - q^* + q^* \lambda}$ and $\zeta = \frac{2\ln{(2 - r_{\max})}}{\ln{(1/r_{\max})} - \ln{(2 - r_{\max})}}$.
By running through the proof of~\cite[Lemma 3.10]{wang2024sampling}, we have that 
\begin{align*}
\Pr{|B|\ge i} \le&~ \tp{\mathrm{e}(\Delta k)^2\cdot r_{\max}^{\frac{2(k-1)}{2 + \zeta}}\cdot \tp{1 - \mathrm{e}\cdot r_{\max}^{k-1}}^{-2(\Delta k + 1)}}^i \le  \tp{(\mathrm{e}\Delta k)^2\cdot r_{\max}^{\frac{2(k-1)}{2+\zeta}}}^i,
\end{align*}
where the last inequality is due to \Cref{condition:clt-condition}. Hence, combined with \Cref{condition:clt-condition}, we have that 
\[
\sum_{i = 1}^{|C|} \Pr{|B|\ge i} \le \frac{(\mathrm{e} \Delta k)^2\cdot r_{\max}^{\frac{2(k - 1)}{2 + \zeta}}}{1 - (\mathrm{e} \Delta k)^2\cdot r_{\max}^{\frac{2(k - 1)}{2 + \zeta}}} \le 2 (\mathrm{e} \Delta k)^2\cdot r_{\max}^{\frac{2(k - 1)}{2 + \zeta}}.
\]
Hence, we have
\[\sum_{i = 1}^{|C|} \Pr{|B|\ge i}  \le \frac{1}{4q^* \Delta \cdot (k(k\Delta + 1))} \cdot \frac{ (q^* \lambda)\cdot (q-q^*) }{ \tp{q-q^*+q^*\lambda}^2 } \cdot \tp{\frac{\Delta k - 1}{\Delta k}}^2.\]

Combining inequalities mentioned above, we have that 
\[
\=E_{(\+A, \+B)}[d_{\-{Ham}}(\+A, \+B)] \le \frac{1}{4q^* \Delta } \cdot \frac{ (q^* \lambda)\cdot (q-q^*) }{ \tp{q-q^*+q^*\lambda}^2 } \cdot \tp{\frac{\Delta k - 1}{\Delta k}}^2.
\]


\section{Local central limit theorem}
\label{section:lclt}

In this section, we derive a local central limit theorem for CSP formulas from the central limit theorem.
As an application, we prove a local CLT for hypergraph coloring with one special color in \Cref{subsection:lclt-application}.
We follow the same high-level ideas in~\cite{jain2022approximate} (See also~\cite{liu2024phase}).  
We remark that in~\cite{jain2022approximate,liu2024phase}, the external field $\lambda$ need not be a constant and can depend on $|V|$. This allows them to derive an algorithm for counting fix-sized independent sets.
For general CSP formulas such as hypergraph colorings, handling $\lambda$ that are arbitrarily small seems to require new ideas, and we focus on constant external field $\lambda$ in this work.

Let $\Phi = (V, [q]^V, C)$ be an atomic $(k, \Delta)$-CSP formula. Our local CLT concerns $\Phi$ under projection.
Again, we consider the state-compression scheme (\Cref{definition:state-compression}) with the projection $\*f = (f_v)_{v\in V}$ where $f_v: [q]\to \Sigma_v$. 
Let $\mu$ be the Gibbs distribution on $\Omega_\Phi$ with the external field $\lambda$ on the projected symbol $1^\coF$ (\Cref{definition:complex-extensions-csp} with $\lambda$ is a real number).
We consider the local CLT on the number of variables assigned the projected symbol $1^\coF$.
Let $\sigma \sim \mu$ following $\mu$, and let $Y_v$ be the random variable that indicates whether the projected symbol of $v$ is $1^\coF$; i.e., if $\sigma_v \in f^{-1}_v(1^\coF)$, then $Y_v = 1$; otherwise, $Y_v = 0$. Let $\+Y = \sum_{v\in V} Y_v$ be the summation. 
We assume that for any variable $v\in V$ if $1^\coF \in \Sigma_v$, it holds that $|f^{-1}_v(1^\coF)| = q^*$ where $0 < q^* < q$.
Let $N^*$ be the number of variables satisfying $1^\coF \in \Sigma_v$, i.e., $N^* = \abs{\set{v\in V \mid 1^\coF \in \Sigma_v}}$.

In this section, we establish a local CLT for $\+Y$. 
Our first ingredient is an abstract form of central limit theorem, which we established in \Cref{section:clt}.
\begin{condition}[Central limit theorem]
\label{condition:lclt-clt-condition}
$\Phi = (V, [q]^V, C)$ is a $(k, \Delta)$-CSP formula with the Gibbs distribution $\mu$ and the external field $\lambda$ on the projected symbol $1^\coF$.
Let $\bar{\mu}$ be the expectation of $\+Y$, and $\bar{\sigma}$ be the standard deviation of $\+Y$. We define $\+Y^*\defeq (\+Y - \bar{\mu})\bar{\sigma}^{-1}$. It holds that 
\[
\sup_{t\in \=R}|\Pr{\+Y^* \le t} - \Pr{\+Z \le t}| \le \lcltCltTail,
\]
where $\+Z\sim N(0, 1)$ is a standard Gaussian random variable and $\xi$ is a function.
\end{condition}
Our second ingredient comes from LLL-type conditions. Before that, we introduce the \emph{two-step projection scheme}.
\begin{definition}[Two-step projection scheme]
\label{definition:lclt-two-step-projection-scheme}
Let $\*h = (h_v)_{v\in V}$ be a new projection where $h_v: \Sigma_v \to \Sigma'_v$. We call $\*f$ is the \emph{first projection} and $\*h$ is the \emph{second projection}.
For any subset $S\subseteq V$, $\*\Sigma'_S=\bigotimes_{v\in S} \Sigma'_{v}$. We say $\*\Sigma'_S$ is the \emph{domain/alphabet under the two-step projection}.
For any constraint $c\in C$ and any assignment $\tau \in \*\Sigma'_V$ under the two-step projection, we say $c$ is \emph{satisfied} by $\tau$ if and only if for any assignment $\tau' \in [q]^V$ with $\forall v\in V$, $h(\tau'(v)) = \tau(v)$, we have $c(\tau') = \true$.
\end{definition}

\begin{condition}[LLL-type condition]
\label{condition:lclt-LLL-condition}
Given an atomic $(k, \Delta)$-CSP formula $\Phi = (V, [q]^V, C)$  with a two-step projection $\*f, \*h$, let $\+P$ be the product distribution on $V$ with the external field $\lambda$ on the projected symbol $1^\coF$.
\begin{enumerate}
\item 
Let $\lcltProjectedVioProb$ denote the maximum probability that a constraint is not satisfied by an assignment under the two-step projection following the product distribution, i.e., 
\[\lcltProjectedVioProb \defeq  \max_{c\in C} \Pr[\tau \sim \+P]{c \text{ is not satisfied by } (\*h \circ \*f)(\tau)}.\]
It holds that $2\mathrm{e}^2\cdot \lcltProjectedVioProb\cdot (\Delta k)^2 \le 1$.\label{item:lclt-LLL-condition-Proj}
\item 
Let $\lcltConditionVioProb$ denote the maximum probability that a constraint is violated conditioned on an assignment under the two-step projection following the product distribution, i.e., 
\[\lcltConditionVioProb \defeq \max_{c\in C} \max_{J \in \*\Sigma'} \Pr[\tau \sim \+P]{c \text{ is violated by } \tau \mid \*h(\tau) = J}.\]
It holds that $2\mathrm{e}\cdot \lcltConditionVioProb\cdot q\Delta k \le 1$. \label{item:lclt-LLL-condition-Cond}
\item
Let $r_{\max}\defeq \frac{\max\set{1, \lambda}}{q - q^* + q^* \lambda}$. It holds that
\[
(8\mathrm{e})^3\cdot \frac{q-q^*+q^*\lambda}{\min\set{q - q^*, q^* \lambda}}\cdot  r_{\max}^{k-1}\cdot (\Delta k+ 1)^{2 + \zeta} \le 1, \hbox{ where } \zeta = \frac{2\ln{(2 - r_{\max})}}{\ln{(1/r_{\max})} - \ln{(2 - r_{\max})}}.\]\label{item:lclt-LLL-condition-concentration}
\end{enumerate}

\end{condition}
As an example, the above condition is
asymptotically implied by $q\gtrsim \max\set{\tp{\frac{1}{\lambda}}^{\frac{1}{k-7/4}}, \lambda} \Delta^{\frac{4+o_q(1)}{k-7/4}}$ for
constant $\lambda$ and sufficiently large $q$, by choosing a uniform two-step projections in which the first
projection maps $q$ symbols to $\Theta(q^{3/4})$ symbols and the second
projection further maps them to $\Theta(q^{1/2})$ symbols. 

Now we are ready to state our local CLT result.
\begin{theorem}
\label{theorem:lclt}
Suppose \Cref{condition:lclt-clt-condition} and \Cref{condition:lclt-LLL-condition} holds. Let $\+N(x) = \mathrm{e}^{-x^2/2}/\sqrt{2\pi}$ denote the density of the standard normal distribution. If $N^* = \Theta(V)$, we have 
\begin{align*}
\sup_{t\in \=Z} \abs{\Pr{\+Y = t} - \bar{\sigma}^{-1} N((t-\bar{\mu})/\bar{\sigma})} = O\tp{\zeta \cdot \frac{\log^{5/2} |V|}{\sqrt{|V|}}} + \mathrm{e}^{-\pi^2 \bar{\sigma}^2 /2 }.
\end{align*}
\end{theorem}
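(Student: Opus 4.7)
The plan is to prove the local CLT via Fourier inversion. Since $\+Y$ is integer-valued,
\[
\Pr{\+Y = t} \;=\; \frac{1}{2\pi}\int_{-\pi}^{\pi} e^{-i\theta t} \phi(\theta)\, d\theta, \qquad \phi(\theta) \defeq \=E[e^{i\theta \+Y}],
\]
while $\bar{\sigma}^{-1}\+N((t-\bar{\mu})/\bar{\sigma}) = \frac{1}{2\pi}\int_{\=R} e^{-i\theta t}\, e^{i\bar{\mu}\theta-\bar{\sigma}^2\theta^2/2}\, d\theta$. Truncating the Gaussian integral to $[-\pi,\pi]$ produces precisely the error term $\mathrm{e}^{-\pi^2\bar{\sigma}^2/2}$ appearing in the theorem, so it suffices to bound $\int_{-\pi}^{\pi}|\phi(\theta) - e^{i\bar{\mu}\theta-\bar{\sigma}^2\theta^2/2}|\,d\theta$ by $O(\xi \cdot \log^{5/2}|V|/\sqrt{|V|})$, where $\xi = \lcltCltTail$ is the CLT error supplied by \Cref{condition:lclt-clt-condition}.

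I would then split $[-\pi,\pi]$ into three regimes of $|\theta|$. In the \emph{low-phase} window $|\theta| \le \bar{\sigma}^{-1}\log^{1/2}|V|$, Esseen's smoothing inequality converts the Kolmogorov bound of \Cref{condition:lclt-clt-condition} into the pointwise estimate $|\phi(\theta)-e^{i\bar{\mu}\theta-\bar{\sigma}^2\theta^2/2}| = O(\xi)$, whose integral is $O(\xi\bar{\sigma}^{-1}\log^{1/2}|V|) = O(\xi\log^{1/2}|V|/\sqrt{|V|})$ using $\bar{\sigma}^2 = \Theta(N^*) = \Theta(|V|)$. In the \emph{intermediate} window $\bar{\sigma}^{-1}\log^{1/2}|V| < |\theta| \le \tau_0$ (for a small absolute constant $\tau_0$), the Gaussian factor $e^{-\bar{\sigma}^2\theta^2/2}$ is already super-polynomially small; a similar smoothing estimate combined with the Berry-Esseen bound controls $|\phi(\theta)|$ and yields a total contribution of $O(\xi \log^{5/2}|V|/\sqrt{|V|})$, with the extra $\log^{5/2}$ factor arising from integrating $e^{-\bar{\sigma}^2\theta^2/2}\theta^{-1}$ across this range.

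The main obstacle is the \emph{high-phase} regime $\tau_0 < |\theta| \le \pi$, where CLT information alone is insufficient. This is where the two-step projection scheme (\Cref{definition:lclt-two-step-projection-scheme}) and the LLL-type \Cref{condition:lclt-LLL-condition} are essential. I would condition on the second projected assignment $J = \*h(\*f(\sigma))$. Item~(\ref{item:lclt-LLL-condition-Proj}) of \Cref{condition:lclt-LLL-condition}, applied through \Cref{theorem:HSS}, shows that $J$ lies in a ``good'' set of typical projections with overwhelming measure; item~(\ref{item:lclt-LLL-condition-Cond}) then shows that, conditional on any good $J$, the law $\mu(\cdot \mid J)$ is close in total variation to the conditional product measure $\+P(\cdot \mid J)$, so that $\+Y$ may be replaced up to negligible error by a sum of independent per-variable indicators. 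Its conditional characteristic function factorizes as
\[
|\phi_{\+Y \mid J}(\theta)| \;\le\; \prod_{v:\, 1^\coF \in \Sigma_v} \bigl(1 - 2 p_v(1-p_v)(1-\cos\theta)\bigr)^{1/2},
\]
where $p_v$ is the conditional marginal that $Y_v=1$. Item~(\ref{item:lclt-LLL-condition-concentration}) ensures local uniformity in the sense of \Cref{lemma:clt-marginal-bound}, bounding $p_v(1-p_v)$ uniformly from below by a positive constant; since $1-\cos\theta$ is bounded below on $|\theta|\ge \tau_0$, we obtain $|\phi_{\+Y\mid J}(\theta)| \le e^{-\Omega(N^*)}$ uniformly for good $J$. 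Averaging and integrating gives an $e^{-\Omega(|V|)}$ contribution, subsumed into the $\mathrm{e}^{-\pi^2\bar{\sigma}^2/2}$ term of the theorem after adjusting constants.

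Summing the three contributions produces the stated error bound. The heart of the argument is the coupling between the high-phase Fourier-decay estimate and the TV control provided by conditioning on $\*h$: the second projection must be coarse enough that \Cref{theorem:HSS} gives a product approximation yet fine enough that sufficiently many free variables remain to ensure the per-variable characteristic functions collectively decay exponentially, and \Cref{condition:lclt-LLL-condition} is precisely calibrated so that both requirements hold simultaneously.
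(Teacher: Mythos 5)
Your overall skeleton (Fourier inversion, splitting the integral by frequency, conditioning on the two-step projection $J$ in the high-frequency regime) matches the paper's, but the two load-bearing steps of your high/intermediate analysis do not go through as stated.

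First, the claim that conditioning on a good $J$ lets you replace $\mu(\cdot\mid J)$ by the conditional product measure ``up to negligible error'' is false at the precision you need. \Cref{condition:lclt-LLL-condition}-(\ref{item:lclt-LLL-condition-Cond}) only controls the per-constraint conditional violation probability; summed over the $\Theta(|V|)$ constraints, the total-variation distance between $\mu(\cdot\mid J)$ and $\+P(\cdot\mid J)$ is not $o(1)$, let alone exponentially small, so it cannot be absorbed into the $e^{-\Omega(N^*)}$ target or the $\mathrm{e}^{-\pi^2\bar{\sigma}^2/2}$ term. The paper never makes a product approximation across variables. Instead it uses the fact that, conditioned on a good $J$, the formula \emph{exactly shatters} into connected components of size $O(\log|V|)$, so $\+Y=\sum_j X_j$ with the $X_j$ \emph{exactly} independent (\Cref{lemma:lclt-J-is-good}); the decay of each factor $\abs{\E{\mathrm{e}^{itX_j/\bar\sigma}\mid J}}$ then comes not from per-variable marginals but from an anti-concentration statement $\Pr{X_j-X_j'=1\mid J}\ge \Omega(1/\mathrm{polylog}|V|)$, proved by a ``switching'' argument on a designated free variable inside the component (\Cref{lemma:lclt-X_j-bound}). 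Your per-variable bound $\prod_v(1-2p_v(1-p_v)(1-\cos\theta))^{1/2}$ requires independence of the $Y_v$ within a component, which does not hold.

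Second, your intermediate window $\bar\sigma^{-1}\log^{1/2}|V|<|\theta|\le\tau_0$ cannot be controlled by the Berry--Esseen bound. Converting the Kolmogorov bound of \Cref{condition:lclt-clt-condition} into a characteristic-function bound (as in \Cref{lemma:lclt-low-fourier-phases-bound}) incurs an error growing linearly in $|t|=|\theta|\bar\sigma$; integrating $(2|t|\eta+1)\xi$ up to $|t|=\tau_0\bar\sigma=\Theta(\sqrt{|V|})$ gives a contribution of order $|V|\cdot\xi\cdot\eta$, which diverges. The smallness of the Gaussian factor does not rescue this, because the error bound is additive, not multiplicative. This is exactly why the paper's high-phase estimate (\Cref{lemma:lclt-high-fourier-phases}) is proved with \emph{quadratic} decay $\exp(-CN^*(t/\bar\sigma)^2/\log^3|V|)$ valid on the entire range $|t|\le\pi\bar\sigma$: it must take over already at $|t|=t^*=\Theta(\bar\sigma\log^2|V|/\sqrt{|V|})$, i.e.\ at $|\theta|=\Theta(\log^2|V|/\sqrt{|V|})$, far below any constant $\tau_0$, where $1-\cos\theta$ is \emph{not} bounded below and your $e^{-\Omega(N^*)}$ estimate does not apply. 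To repair your argument you would need to (a) drop the product approximation in favor of the shattering-plus-switching factorization, and (b) carry the resulting $\theta^2$-decay all the way down to the polylogarithmic truncation point, eliminating the intermediate window entirely.
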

The key to lifting a CLT to a local CLT is the following inverse Fourier transform bound.
\begin{lemma}
    \label{lemma:establish-lclt-by-characteristic-function}
    Let $X$ be a random variable supported on the lattice $\+L = \alpha + \beta \=Z$ and let $\+N(x) = \mathrm{e}^{-x^2/2}/\sqrt{2\pi}$ denote the density of the standard normal distribution. Then 
    \[
    \sup_{x\in\+L} \abs{\beta\+N(x) - \Pr{X = x}} \le \beta \int_{-\pi/\beta}^{\pi/\beta} \abs{\=E\inbr{\mathrm{e}^{itX}} - \=E\inbr{\mathrm{e}^{it\+Z}}} \d t + \mathrm{e}^{-\pi^2/(2\beta^2)},
    \]
    where $\+Z\sim N(0, 1)$ is a standard normal Gaussian random variable.
\end{lemma}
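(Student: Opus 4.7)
The plan is to prove the bound by applying Fourier inversion to both the lattice-valued distribution of $X$ and to the Gaussian density, and then subtracting the two representations. The setup is standard in local CLT literature (see, e.g., Esseen's smoothing argument), but the main point is that when $X$ is supported on a lattice, its characteristic function $\varphi_X(t) = \mathbb{E}[\mathrm{e}^{itX}]$ is ``quasi-periodic'' with period $2\pi/\beta$, so one only needs to integrate it over a single period to recover the pointwise probabilities.

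Concretely, I would first write $X = \alpha + \beta Y$ with $Y$ integer-valued, and derive from the classical Fourier inversion for integer random variables the identity
\[
\Pr(X = x) \;=\; \frac{\beta}{2\pi}\int_{-\pi/\beta}^{\pi/\beta} \mathrm{e}^{-itx}\,\varphi_X(t)\,\mathrm{d}t, \qquad x \in \mathcal{L},
\]
which follows by the change of variables $u = \beta t$ applied to $\Pr(Y=k) = \frac{1}{2\pi}\int_{-\pi}^{\pi}\mathrm{e}^{-iu k}\varphi_Y(u)\mathrm{d}u$ after writing $\varphi_X(t) = \mathrm{e}^{i t\alpha}\varphi_Y(\beta t)$. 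On the Gaussian side, I would use the standard Fourier representation
\[
\mathcal{N}(x) \;=\; \frac{1}{2\pi}\int_{-\infty}^{\infty}\mathrm{e}^{-itx}\,\mathrm{e}^{-t^2/2}\,\mathrm{d}t,
\]
so that $\beta\mathcal{N}(x)$ is given by the same prefactor as above, but with $\varphi_\mathcal{Z}(t)=\mathrm{e}^{-t^2/2}$ and an integration range over all of $\mathbb{R}$.

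Next, I would subtract the two representations and split the Gaussian integral at $|t| = \pi/\beta$:
\[
\beta\mathcal{N}(x) - \Pr(X = x) \;=\; \frac{\beta}{2\pi}\int_{-\pi/\beta}^{\pi/\beta}\mathrm{e}^{-itx}\bigl(\varphi_\mathcal{Z}(t) - \varphi_X(t)\bigr)\mathrm{d}t \;+\; \frac{\beta}{2\pi}\int_{|t|>\pi/\beta}\mathrm{e}^{-itx}\mathrm{e}^{-t^2/2}\mathrm{d}t.
\]
Taking absolute values, pulling them inside the integrals using the triangle inequality, and using $|\mathrm{e}^{-itx}|=1$ together with $1/(2\pi)\le 1$ in the first term, produces the main term $\beta\int_{-\pi/\beta}^{\pi/\beta}|\varphi_X(t)-\varphi_\mathcal{Z}(t)|\,\mathrm{d}t$ that appears in the statement. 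The only remaining piece is to bound the tail of the Gaussian characteristic function, which is handled by the Mill's-ratio style estimate $\int_A^{\infty}\mathrm{e}^{-t^2/2}\mathrm{d}t \le A^{-1}\mathrm{e}^{-A^2/2}$ applied at $A=\pi/\beta$, giving a tail contribution of at most $(\beta^2/\pi^2)\mathrm{e}^{-\pi^2/(2\beta^2)}$, which we absorb into $\mathrm{e}^{-\pi^2/(2\beta^2)}$.

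There is no substantive obstacle here: the lemma is essentially a clean lattice-Fourier-inversion identity combined with a trivial Gaussian tail estimate, and the only point to be careful about is the prefactor bookkeeping when transporting the $\alpha$ shift through the inversion formula. I would spell out the lattice inversion step in one line to avoid any ambiguity, and otherwise the remaining steps are a direct triangle inequality.
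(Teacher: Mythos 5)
Your proposal is correct and is exactly the standard lattice Fourier-inversion argument that this lemma rests on (the paper states it without proof, following \cite{jain2022approximate}): invert on one period $[-\pi/\beta,\pi/\beta]$ for the lattice variable, invert over all of $\mathbb{R}$ for the Gaussian, subtract, and bound the Gaussian tail by a Mill's-ratio estimate. The only cosmetic point is that absorbing the tail factor $\beta^2/\pi^2$ into $\mathrm{e}^{-\pi^2/(2\beta^2)}$ uses $\beta\le\pi$, which holds in the intended application where $\beta=1/\bar{\sigma}$ is small.
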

At a high level, we closely follow the arguments in~\cite{jain2022approximate, liu2024phase}. 
The main idea is that the high Fourier phases, $\abs{\E{\mathrm{e}^{it \+Y^*}}}$ with large $t$'s, are negligible via a combinatorial argument. For low Fourier phases, we can use the central limit theorem to show that they are negligible.  
\begin{lemma}
\label{lemma:lclt-low-fourier-phases-bound}
Suppose \Cref{condition:lclt-clt-condition} holds.
For any $t\in \=R$ and any $\eta > 0$, we have that
\[
\abs{ \E{\mathrm{e}^{it\+Y^*}} - \E{\mathrm{e}^{it \+Z}} } \le (2|t|\eta + 1)\cdot \lcltCltTail  + 4\mathrm{e}^{-\eta^2/2}.
\]
\end{lemma}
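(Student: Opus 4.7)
\smallskip

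\noindent\textbf{Proof plan.} The idea is a standard smoothing-style argument: convert the Kolmogorov-distance bound guaranteed by \Cref{condition:lclt-clt-condition} into a bound on characteristic functions via integration by parts, while controlling the contribution from the tails by a Gaussian tail estimate. Concretely, let $F(x) \defeq \Pr{\+Y^* \le x}$ and $\Phi(x) \defeq \Pr{\+Z \le x}$. By \Cref{condition:lclt-clt-condition} we have $\sup_{x\in \=R} |F(x) - \Phi(x)| \le \xi$, where we abbreviate $\xi = \lcltCltTail$.

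The plan is to split the integral at the cutoff $\pm \eta$. Writing
\[
\E{\mathrm{e}^{it\+Y^*}} - \E{\mathrm{e}^{it\+Z}} = \int_{-\eta}^{\eta} \mathrm{e}^{itx}\,\d(F-\Phi)(x) \;+\; \int_{|x|>\eta} \mathrm{e}^{itx}\,\d F(x) \;-\; \int_{|x|>\eta}\mathrm{e}^{itx}\,\d \Phi(x),
\]
I would first treat the bounded range $[-\eta,\eta]$ by integration by parts, obtaining
\[
\int_{-\eta}^{\eta} \mathrm{e}^{itx}\,\d(F-\Phi)(x) = \bigl[\mathrm{e}^{itx}(F(x)-\Phi(x))\bigr]_{-\eta}^{\eta} - it\int_{-\eta}^{\eta} \mathrm{e}^{itx}(F(x)-\Phi(x))\,\d x.
\]
Since $|\mathrm{e}^{itx}|=1$ and $|F-\Phi|\le \xi$ everywhere, the boundary contribution is at most $2\xi$ and the interior integral is at most $2|t|\eta\,\xi$. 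For the two tail pieces, I would use the standard Gaussian tail bound $\Pr{|\+Z|>\eta}\le 2\mathrm{e}^{-\eta^2/2}$, and deduce from the Kolmogorov bound that $\Pr{|\+Y^*|>\eta}\le 2\mathrm{e}^{-\eta^2/2}+2\xi$. Each tail integral is bounded in absolute value by the corresponding probability.

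Combining all contributions yields
\[
\abs{\E{\mathrm{e}^{it\+Y^*}} - \E{\mathrm{e}^{it\+Z}}} \;\le\; (2|t|\eta + 2)\xi \;+\; \bigl(2\mathrm{e}^{-\eta^2/2}+2\xi\bigr) + 2\mathrm{e}^{-\eta^2/2},
\]
which matches the claimed bound up to absorbing the $\xi$-constants into the stated $(2|t|\eta+1)\xi$ term (one can obtain the cleaner constant by a slightly tighter accounting, e.g.\ bounding the tail of $\+Y^*$ by $\Pr{|\+Z|>\eta}+\xi$ on each side separately and noting that after subtracting the Gaussian tail one only picks up $\xi$). The estimate holds uniformly in $t$ and $\eta>0$. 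No step is genuinely difficult; the only care needed is to make sure the integration by parts is carried out only on the bounded range $[-\eta,\eta]$, because otherwise the interior integral $|t|\int |F-\Phi|\,\d x$ need not converge.
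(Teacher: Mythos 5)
Your proposal is correct and follows essentially the same route as the paper: split at $\pm\eta$, integrate by parts on $[-\eta,\eta]$ to convert the Kolmogorov bound from \Cref{condition:lclt-clt-condition} into a characteristic-function bound, and control the tails by the Gaussian tail estimate (the paper merely phrases the integration by parts via each characteristic function separately, after an infinitesimal smoothing of $\+Y^*$, before differencing). The small mismatch in the additive $\xi$-constants is present in the paper's own accounting as well and is immaterial to the application, where only the $|t|\eta\,\xi$ and $\mathrm{e}^{-\eta^2/2}$ terms matter.
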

\begin{proof}
Let $\+Y'$ be $\+Y^*$ convolved with a centered Gaussian of infinitesimally small variance so that $\+Y'$ has a density function with respect to the Lebesgue measure on $\=R$; it suffices to consider $\+Y'$ and then pass to the limit.
Next, we express $\E{\mathrm{e}^{it \+Y'}}$ into an integration. Let $p_{\+Y'}$ be the law of $\+Y'$.
\begin{align*}
\E{\mathrm{e}^{it\+Y'}} =&~ \int_{\infty}^{\infty} \mathrm{e}^{itz} p_{\+Y'}(z) \d z \\
    =&~ \int_{|z|\le \eta} \mathrm{e}^{itz} p_{\+Y'}(z)\d z \pm \mathrm{e}^{i\theta'} \Pr{|\+Y'| \ge \eta} \\
    =&~ \inbr{\mathrm{e}^{itz} \tp{\int_{-\eta}^z p_{\+Y'}(z')\d z'}}_{z = -\eta}^{z = \eta} - \int_{-\eta}^{\eta} it\mathrm{e}^{itz} \tp{\int_{-\eta}^z p_{\+Y'}(z') \d z'} \d z \pm \mathrm{e}^{i\theta'} \Pr{|\+Y'| \ge \eta}.
\end{align*}
Hence
\begin{align*}
    \E{\mathrm{e}^{it\+Y'}} =&~ \mathrm{e}^{it\eta} - \int_{-\eta}^{\eta} it \mathrm{e}^{itz} \Pr{\+Y' \in [-\eta, z]} \d z \pm \mathrm{e}^{i \theta'} \Pr{|\+Y'| \ge \eta} - \mathrm{e}^{it \eta} \Pr{|\+Y'| \ge \eta} \\
    =&~ \mathrm{e}^{it\eta} - \int_{-\eta}^{\eta} it \mathrm{e}^{itz} \Pr{\+Y' \in [-\eta, z]} \d z + \mathrm{e}^{i \theta} \Pr{|\+Y'| \ge \eta},
\end{align*}
for some $\theta', \theta \in [0, 2\pi)$. 
We also apply the same calculation to $\+Z$ instead of $\+Y'$ and taking the difference, by the triangle inequality, we have that 
\begin{align*}
&~ \abs{ \E{\mathrm{e}^{it\+Y'}} - \E{\mathrm{e}^{it \+Z}} } \\
\le&~ |t|\int_{-\eta}^{\eta} \abs{\Pr{\+Y' \in [-\eta, z]} - \Pr{\+Z \in [-\eta, z]}} \d z + \Pr{|\+Y'| \ge \eta} + \Pr{|\+Z| \ge \eta}. 
\end{align*}
Combined with the central limit theorem of $\+Y^*$ (\Cref{condition:lclt-clt-condition}), it holds that 
\begin{align*}
\abs{ \E{\mathrm{e}^{it\+Y^*}} - \E{\mathrm{e}^{it \+Z}} } \le (2|t|\eta + 1)\cdot \lcltCltTail  + 4\mathrm{e}^{-\eta^2/2}.
\end{align*}
\end{proof}
Next, we control the high Fourier phases, $\abs{\E{\mathrm{e}^{it\+Y^*}}}$ with large $t$'s.
Note that $\+Y^* = (\+Y - \bar{\mu})/\bar{\sigma}$. Hence $\abs{\E{\mathrm{e}^{it\+Y^*}}} = \abs{\E{\mathrm{e}^{it \+Y / \bar{\sigma}}}}$.
The basic idea is that we construct an event $J$ such that conditioned $J$, $\+Y$ can be factorized into a product of $\ell$ independent random variables $X_1, X_2, \dots, X_\ell$, i.e., $\E{\mathrm{e}^{it \+Y/\bar{\sigma}} \mid J} = \prod_{j = 1}^\ell \E{\mathrm{e}^{i t X_j}\mid J}$.
We show that when $J$ is ``good'', we can control the contribution of $X_1, X_2, \dots, X_\ell$, and with high probability, $J$ is ``good''.
Hence, we have 
\begin{equation}
\label{eq:lclt-high-fourier-phases-condition-on-J}
\abs{\E{\mathrm{e}^{it \+Y/\bar{\sigma} }}} = \abs{\E{ \prod_{j = 1}^{\ell} \E{ \mathrm{e}^{it X_j/\bar{\sigma}} \mid J}}} \le \Pr{J \text{ is not ``good''}} + \max_{J \text{ is ``good''}} \prod_{j = 1}^\ell \abs{\E{\mathrm{e}^{it X_j / \bar{\sigma}} \mid J}}.
\end{equation}

Next, for any $J$ and any $j\in [\ell]$, let $X_j'$ be an independent copy of $X_j$. It holds that 
\begin{equation*}
\begin{aligned}
\abs{\E{\mathrm{e}^{it X_j/\bar{\sigma}}\mid J}}^2 \le&~ \E{\mathrm{e}^{it(X_j - X_j')/\bar{\sigma}}\mid J}\\
=&~ \Pr{X_j = X_j' \mid J} + \sum_{k = 1} \Pr{\abs{X_j - X_j'} = k \mid J}\cos(k\cdot t/\bar{\sigma}) \\
\le&~ \Pr{X_j = X_j' \mid J} + \sum_{k = 2} \Pr{\abs{X_j - X_j'} = k \mid J} + 2\Pr{X_j - X_j' = 1 \mid J}\cos(t/\bar{\sigma}). \\
\end{aligned}
\end{equation*}
Hence, it holds that 
\begin{equation}
\label{eq:lclt-high-fourier-phases-each-X_j}
\begin{aligned}
\abs{\E{\mathrm{e}^{it X_j/\bar{\sigma}} \mid J}}^2 \le 1 - 2 \Pr{X_j - X_j' = 1 \mid J}(1 - \cos(t/\bar{\sigma})) \le 1 - \frac{1}{4} \Pr{X_j - X_j' = 1 \mid J} (t/\bar{\sigma})^2.
\end{aligned}
\end{equation}
Next, we construct the event $J$ by the two-step projection scheme (\Cref{definition:lclt-two-step-projection-scheme}). 
$J$ is actually an assignment of all variables in $V$ under the two-step projection, i.e., $J \in \*\Sigma'_V$.
Next, we use $J$ to simplify the CSP formula, i.e., we remove all satisfied constraints in $C$. Let the simplified CSP formula be $\Phi' = (V, [q]^V, C')$.
$\Phi'$ shatters into $\ell$ disjoint sub-CSP formulas $\Phi_1 = (V_1, [q]^{V_1}, C_1), \Phi_2 = (V_2, [q]^{V_2}, C_2), \dots, \Phi_\ell = (V_{\ell}, [q]^{V_{\ell}}, C_{\ell})$. We further assume that the dependency graph for any sub-CSP formula is connected.

We say $J$ is \emph{good} if and only if (1) for any $j\in [\ell]$, it holds that $|C_j|\le \connectedComponentSize$, and (2) there are least $\ell^* \defeq \left \lceil \frac{\lcltExpectation N^*}{2 \connectedComponentVarSize}\right \rceil$ sub-CSP formulas containing a variable $v$ with $1^\coF \in h^{-1}_v(J_v)$ where $\lcltExpectation$ is a constant with respect to $q, q^*, \lambda, \Delta$ and $k$ defined in \Cref{lemma:lclt-concentration-expectation-lower-bound}. 
 
We show that with high probability, $J$ is good. And we defer its proof to \Cref{subsection:lclt-J-is-good}.
\begin{lemma}
\label{lemma:lclt-J-is-good}
Suppose \Cref{condition:lclt-LLL-condition}. It holds that 
\[\Pr{J \text{ is good}} \ge 1 - \frac{4\lcltConcentrationProb}{N^*} - \frac{1}{|V|}, \hbox{ where  } \lcltConcentrationProb = \frac{4q^* \Delta k(\Delta k + 1)}{\tp{\frac{q^*\lambda}{q-q^*+q^* \lambda}\cdot (1-1/(\Delta k))}^2},\]
\end{lemma}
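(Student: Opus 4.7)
The plan is to bound the two failure events of ``$J$ is good'' separately and then union bound. Let $A_1$ be the event that some sub-CSP $\Phi_j$ has $|C_j| > \connectedComponentSize$; I will show $\Pr[\mu]{A_1} \le 1/|V|$. Let $A_2$ be the event that fewer than $\ell^*$ sub-CSPs contain a special variable (one with $J_v = h_v(1^\coF)$); I will show $\Pr[\mu]{A_2} \le 4\lcltConcentrationProb/N^*$. Throughout, recall $J=(\*h\circ\*f)(\sigma)$ for $\sigma\sim\mu$, so every bound is under the Gibbs measure.

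For $A_1$, I would run a standard LLL percolation argument on the dependency graph of $\Phi$. The surviving constraints in $\Phi'$ are exactly those $c$ not satisfied by $J$, an event of $\+P$-probability at most $\lcltProjectedVioProb$ depending only on $\sigma_{\vbl(c)}$. If the component containing a fixed $c^*$ has size $s$, then by \Cref{definition:2-tree-construction} it contains a $2$-tree of size $\ge s/(\Delta k + 1)$ whose vertices share no variables, so their ``not satisfied by $J$'' events are mutually $\+P$-independent. Combining \Cref{lemma:2-tree-number-bound} with Condition \ref{condition:lclt-LLL-condition}-(\ref{item:lclt-LLL-condition-Proj}) (which gives $\mathrm{e}(\Delta k)^2 \lcltProjectedVioProb \le 1/(2\mathrm{e})$) yields a $\+P$-probability of at most $O((2\mathrm{e})^{-s/(\Delta k + 1)})$; substituting $s = \connectedComponentSize$ and union-bounding over the $O(\Delta|V|)$ choices of $c^*$ yields $O(1/|V|)$. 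To transfer from $\+P$ to $\mu$ I would invoke the lopsided LLL \Cref{theorem:HSS}, using Condition \ref{condition:lclt-LLL-condition}-(\ref{item:lclt-LLL-condition-Cond}) to satisfy \cref{eq:LLL} with slack $x(\cdot) = (2\mathrm{e} q\Delta k)^{-1}$; the resulting multiplicative correction for a component of size $s$ is at most $(1-x)^{-O(s\Delta k)} = \mathrm{e}^{O(s)/q}$, which is dominated by the geometric decay.

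For $A_2$, let $X = |\{v \in V : J_v = h_v(1^\coF)\}|$. Since $Y_v = 1$ forces $J_v = h_v(1^\coF)$, we have $X \ge \+Y$ pointwise, so \Cref{lemma:lclt-concentration-expectation-lower-bound} (valid under Condition \ref{condition:lclt-LLL-condition}-(\ref{item:lclt-LLL-condition-concentration}) and analogous to \Cref{lemma:clt-marginal-bound}) gives $\E[\mu]{X} \ge \lcltExpectation N^*$. Bounding $\Var[\mu]{X}$ through \cref{eq:concentration-variance-decomposition} together with the coupling estimate of \Cref{lemma:general-csp-covariance} and Chebyshev's inequality with deviation threshold $\lcltExpectation N^*/2$ yields $\Pr[\mu]{X \le \lcltExpectation N^*/2} \le 4\lcltConcentrationProb/N^*$; the extra $q^*$ factor in $\lcltConcentrationProb$ reflects summing the Hamming-distance bound over the $q^*$ pinnings $\sigma_u = x$ with $x \in f^{-1}_u(1^\coF)$ needed to control the covariances of the indicators $\=I[J_v = h_v(1^\coF)]$. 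On the complement of $A_1$, each sub-CSP has at most $\connectedComponentVarSize$ variables, so $X \ge \lcltExpectation N^*/2$ implies the special variables touch at least $X/\connectedComponentVarSize \ge \ell^*$ distinct sub-CSPs, ruling out $A_2$. The main obstacle will be the $A_1$ analysis: tracking the LLL correction factor when pushing from $\+P$ to $\mu$ so that the exponential-in-$s$ decay is preserved---this is exactly where the particular form of Condition \ref{condition:lclt-LLL-condition}-(\ref{item:lclt-LLL-condition-Cond}) (with the extra factor of $q$) enters.
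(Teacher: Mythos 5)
Your proposal is correct and follows essentially the same route as the paper: a $2$-tree percolation argument on the dependency graph combined with the lopsided LLL (\Cref{theorem:HSS}) to get the $1/|V|$ bound on large components, and a Chebyshev bound on $\+W=\sum_v\=I[1^\coF\in h_v^{-1}(J_v)]$ (expectation from local uniformity via $W_v\ge Y_v$, variance from the recursive-coupling influence bound) followed by pigeonhole to get the $4\lcltConcentrationProb/N^*$ term. The only cosmetic difference is that the paper's transfer from $\+P$ to $\mu$ uses the slack $x(c)=\mathrm{e}r_{\max}^k$ justified by \Cref{condition:lclt-LLL-condition}-(\ref{item:lclt-LLL-condition-concentration}) rather than your $x=(2\mathrm{e}q\Delta k)^{-1}$ from part (\ref{item:lclt-LLL-condition-Cond}) — both are valid here, but part (\ref{item:lclt-LLL-condition-Cond}) is actually reserved in the paper for \Cref{lemma:lclt-X_j-bound}, so the ``main obstacle'' you flag is already handled by part (\ref{item:lclt-LLL-condition-concentration}) alone.
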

Next, for those $\Phi_j$'s with a variable $v$ satisfying $1^\coF \in h^{-1}_v(J_v)$, we upper bound $\abs{\E{\mathrm{e}^{it X_j} \mid J}}^2$ by giving a lower bound of $\Pr{X_j - X_j' = 1 \mid J}$.
We defer its proof to \Cref{subsection:lclt-X_j-bound}.
\begin{lemma}
\label{lemma:lclt-X_j-bound}
Suppose \Cref{condition:lclt-LLL-condition}. For any good $J$, and any $\Phi_j$ with a variable $v$ satisfying $1^\coF \in h^{-1}_v(J_v)$, it holds that $\Pr{X_j - X_j' = 1 \mid J} \ge \frac{\min\set{\lambda, 1/\lambda}\cdot \mathrm{e}^{-1}}{\tp{2\connectedComponentVarSize}^2}$.
\end{lemma}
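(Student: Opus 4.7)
The plan is to lower bound $\Pr{X_j - X_j' = 1 \mid J}$ by exploiting that $X_j, X_j'$ are i.i.d.\ given $J$, so writing $p_k := \Pr{X_j = k \mid J}$, independence gives
\[
\Pr{X_j - X_j' = 1 \mid J} = \sum_{k} p_{k+1}\, p_k,
\]
and it suffices to exhibit a single consecutive pair $(k^*, k^* \pm 1)$ for which both atoms are large. First I would invoke pigeonhole: since $X_j$ is integer-valued in $\{0, 1, \ldots, |V_j|\}$ and the goodness of $J$ bounds the connected component size $|V_j|\le \connectedComponentVarSize$, there is a mode $k^*$ with $p_{k^*}\ge 1/(|V_j|+1) \ge 1/(2\connectedComponentVarSize)$.

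To produce a heavy neighbor $p_{k^*\pm1}$, I would exploit the distinguished variable $v$ with $1^\coF \in h_v^{-1}(J_v)$, which (by the design of the two-step projection used in the application) has $h_v^{-1}(J_v)$ containing both $1^\coF$ and at least one other first-projection symbol. A \emph{switching} at $v$---swapping its value between $f_v^{-1}(1^\coF)$ and $f_v^{-1}(c)$ for some $c \in h_v^{-1}(J_v)\setminus\{1^\coF\}$---preserves the projected assignment $J$, changes $X_j$ by exactly $\pm 1$, and multiplies the Gibbs weight by $\lambda^{\pm1}$. Writing $N_k$ for the number of feasible configurations of $\Phi_j$ with $X_j = k$ compatible with $J$, the switching yields a partial injection $F_J^{k^*} \hookrightarrow F_J^{k^*\pm 1}$ at the configuration level, provided the flip remains feasible, i.e., does not violate any clause of $C_j$.

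To verify feasibility of the flip for a large fraction of configurations, I would apply the lopsided local lemma (\Cref{theorem:HSS}) to the sub-CSP $\Phi_j$ under the complex-field product distribution restricted to $(\*h\circ\*f)^{-1}(J)$. The hypothesis \eqref{eq:LLL} is supplied by \Cref{condition:lclt-LLL-condition}-(\ref{item:lclt-LLL-condition-Cond}): the inequality $2\mathrm{e}\lcltConditionVioProb\, q \Delta k \le 1$ lets me set the LLL weights $x(B_c) \asymp \mathrm{e} q \lcltConditionVioProb$ and verify the asymmetric local-lemma condition. The conclusion is that, for any fixing of $V_j\setminus\{v\}$ to an admissible assignment, the conditional marginal of $v$ over $h_v^{-1}(J_v)$ differs from its product-distribution counterpart by at most a factor of $\mathrm{e}$. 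This translates, via summation over fixings of $V_j\setminus\{v\}$, into $N_{k^*\pm1}/N_{k^*} \ge \mathrm{e}^{-1}$ up to the bucket-size ratio, hence
\[
\frac{p_{k^*\pm1}}{p_{k^*}} \;=\; \lambda^{\pm1}\cdot\frac{N_{k^*\pm1}}{N_{k^*}} \;\ge\; \min\{\lambda,\lambda^{-1}\}\cdot\mathrm{e}^{-1}.
\]
Combining with the mode bound gives $p_{k^*}\,p_{k^*\pm1}\ge \min\{\lambda,\lambda^{-1}\}\cdot\mathrm{e}^{-1}/(2\connectedComponentVarSize)^2$, and the claim follows by choosing the sign that realizes $\min\{\lambda,\lambda^{-1}\}$.

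The main obstacle will be making the local-lemma step tight enough to preserve the $\mathrm{e}^{-1}$ constant in the statement: the feasibility of the flip depends on whether the $\le\Delta k$ clauses touching $v$ remain satisfied after the swap, and a naive union bound over these clauses, or over admissible target buckets, can leak factors in the bucket sizes $|f_v^{-1}(\cdot)|$ and in $\Delta k$. A careful application of \Cref{theorem:HSS} restricted to $\Phi_j$---leveraging that $J$ has already shattered $\Phi$ into components of size at most $\connectedComponentVarSize$, so dependency growth is controlled---should suffice; handling the case where $h_v^{-1}(J_v)$ is not "balanced" between $1^\coF$ and the other symbols will require picking the sign $\pm1$ that points to the larger bucket.
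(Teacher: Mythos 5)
Your overall strategy tracks the paper's quite closely: pigeonhole a heavy atom of $X_j$ (using $|V_j|\le\connectedComponentVarSize$ from goodness of $J$), transfer its mass to an adjacent atom by a single-site switch at $v$ between $f_v^{-1}(1^{\coF})$ and the rest of $h_v^{-1}(J_v)$ at a cost of $\min\{\lambda,1/\lambda\}$, and pay one factor $\mathrm{e}^{-1}$ for feasibility via \Cref{theorem:HSS} under \Cref{condition:lclt-LLL-condition}-(\ref{item:lclt-LLL-condition-Cond}). The gap is in the feasibility step. First, the claim that ``for any fixing of $V_j\setminus\{v\}$ to an admissible assignment, the conditional marginal of $v$ over $h_v^{-1}(J_v)$ differs from its product-distribution counterpart by at most a factor of $\mathrm{e}$'' is false: \Cref{theorem:HSS} bounds marginals of the Gibbs measure averaged over the other variables, not marginals conditioned on a single fixing, and a single admissible fixing of $V_j\setminus\{v\}$ can forbid up to $\Delta$ values of $v$ --- enough to wipe out the entire bucket $f_v^{-1}(1^{\coF})$ (of size $q^*$, possibly $1$), so the flip can be infeasible for \emph{every} color in the target bucket. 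Second, the averaged LLL bound does not rescue the argument in the order you run it: after pigeonholing, you need the flip to be feasible for a constant fraction of the mass of the specific slice $\{X_j=k^*,\,Y_v=1\}$, i.e.\ you need to control $\Pr{\text{flip feasible}\mid J,\ X_j=k^*}$, and neither \Cref{theorem:HSS} nor local uniformity says anything about conditional measures given such global slices; the mode could a priori be carried entirely by configurations in which $v$ is maximally constrained.

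The paper reverses the conditioning order to avoid exactly this. It first isolates the event $E$ that every constraint containing $v$ is already satisfied by the variables in $\vbl(c)\setminus\{v\}$ alone (formalized by adding $q$ atomic copies of each such constraint, one per value of $v$), and shows by telescoping over these at most $q\Delta$ added constraints together with \Cref{theorem:HSS} that $\Pr{E\mid J}\ge\tp{1-\mathrm{e}\lcltConditionVioProb}^{q\Delta k}\ge\mathrm{e}^{-1}$. Only then does it pigeonhole and switch inside the conditional measure $\mu(\cdot\mid J,E)$, where $v$ is genuinely free, so the flip is always feasible and costs exactly the ratio of field weights; the final bound follows from $\Pr{A\mid J}\ge\Pr{E\mid J}\cdot\Pr{A\mid J,E}$. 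Your argument can be repaired by adopting this order (condition on $E$ first, pigeonhole second); as written, the passage from the LLL to feasibility on the mode slice does not go through.
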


Combining the two lemmas above, we give an upper bound of the high Fourier phases.
\begin{lemma}
\label{lemma:lclt-high-fourier-phases}
Suppose \Cref{condition:lclt-clt-condition} and \Cref{condition:lclt-LLL-condition}. For any $t\in [-\pi \bar{\sigma}, \pi \bar{\sigma}]$, there exists a constant $\lcltHighPhaseConstant$, such that $\abs{\E{\mathrm{e}^{it \+Y^*}}} \le \frac{4\lcltConcentrationProb}{N^*}+ \frac{1}{|V|} + \exp\tp{- \frac{\lcltHighPhaseConstant\cdot N^*}{\log(|V|)^3} \cdot (t/\bar{\sigma})^2}$.
\end{lemma}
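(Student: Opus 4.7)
The plan is to directly combine the three ingredients just established, with only bookkeeping beyond them. Starting from \eqref{eq:lclt-high-fourier-phases-condition-on-J} and the trivial identity $\abs{\E{\mathrm{e}^{it\+Y^*}}}=\abs{\E{\mathrm{e}^{it\+Y/\bar{\sigma}}}}$, the target quantity is upper bounded by $\Pr{J\text{ is not good}} + \max_{J\text{ good}}\prod_{j=1}^{\ell}\abs{\E{\mathrm{e}^{itX_j/\bar{\sigma}}\mid J}}$. The first term is immediately bounded by $\tfrac{4\lcltConcentrationProb}{N^*}+\tfrac{1}{|V|}$ via \Cref{lemma:lclt-J-is-good}, yielding precisely the first two additive terms in the statement.

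For the product, sub-CSPs $\Phi_j$ whose vertex set contains no variable $v$ with $1^{\coF}\in h_v^{-1}(J_v)$ contribute factors at most $1$ (by \eqref{eq:lclt-high-fourier-phases-each-X_j}) and can simply be dropped. By the definition of a good $J$, at least $\ell^{*}=\lceil \lcltExpectation N^{*}/(2\connectedComponentVarSize)\rceil$ non-trivial factors survive. For each surviving factor, I would plug the uniform lower bound of \Cref{lemma:lclt-X_j-bound} into \eqref{eq:lclt-high-fourier-phases-each-X_j} to get
\begin{align*}
\abs{\E{\mathrm{e}^{itX_j/\bar{\sigma}}\mid J}}^2 \;\le\; 1 \;-\; \frac{\min\set{\lambda,1/\lambda}\,\mathrm{e}^{-1}}{4\,(2\connectedComponentVarSize)^2}\tp{t/\bar{\sigma}}^2.
\end{align*}
This step relies on the range assumption $t\in[-\pi\bar{\sigma},\pi\bar{\sigma}]$, which guarantees $|t/\bar{\sigma}|\le \pi$ and hence the chain of inequalities in \eqref{eq:lclt-high-fourier-phases-each-X_j} (which tacitly uses $1-\cos x\ge x^2/8$ on $[-\pi,\pi]$) remains valid.

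Finally, applying $\sqrt{1-x}\le \exp(-x/2)$ to each of the $\ell^{*}$ surviving factors, the maximum product is bounded by $\exp\tp{-c\,\ell^{*}\,(t/\bar{\sigma})^2}$ for an absolute constant $c$ depending only on $\min\set{\lambda,1/\lambda}$ and $\connectedComponentVarSize$. Substituting $\ell^{*}\ge \lcltExpectation N^{*}/(2\connectedComponentVarSize)$ together with $\connectedComponentVarSize = k(\Delta k+1)\log(4\Delta|V|^2)=\Theta_{q,\Delta,k}(\log |V|)$, the exponent simplifies to $-\lcltHighPhaseConstant\cdot N^{*}\,(t/\bar{\sigma})^2/\log^{3}(|V|)$ for a suitable constant $\lcltHighPhaseConstant>0$ depending on $q,\lambda,\Delta,k$, completing the bound. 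The only real bookkeeping challenge, rather than a genuine obstacle, is tracking how the $q,\lambda,\Delta,k$-dependent constants from \Cref{lemma:lclt-X_j-bound} and $\connectedComponentVarSize$ combine to define $\lcltHighPhaseConstant$, and verifying that the $\log^{3}|V|$ power (rather than $\log^{2}$ or $\log^{4}$) emerges correctly from the interaction of $\ell^{*}\sim N^{*}/\log|V|$ with the denominator $\connectedComponentVarSize^{2}\sim \log^{2}|V|$.
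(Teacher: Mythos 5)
Your proposal is correct and follows essentially the same route as the paper's proof: bound the non-good-$J$ event via \Cref{lemma:lclt-J-is-good}, discard the trivial factors, apply \Cref{lemma:lclt-X_j-bound} through \cref{eq:lclt-high-fourier-phases-each-X_j} to the $\ell^*$ surviving factors, and convert $(1-x)^{\ell^*}$ to an exponential; your accounting of $\ell^*\sim N^*/\log|V|$ against $\connectedComponentVarSize^2\sim\log^2|V|$ is exactly how the $\log^3|V|$ arises. Your handling of the square root (bounding the modulus rather than the squared modulus before taking the product) is in fact slightly more careful than the paper's write-up, which only affects the unspecified constant $\lcltHighPhaseConstant$ by a factor of $2$.
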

\begin{proof}
Recall that $\ell^* = \left \lceil \frac{\lcltExpectation N^*}{2 \connectedComponentVarSize}\right \rceil$.
Combined with \cref{eq:lclt-high-fourier-phases-condition-on-J} and \cref{eq:lclt-high-fourier-phases-each-X_j}, it holds that 
\begin{align*}
\abs{\E{\mathrm{e}^{it \+Y^*}}} \le&~ \frac{4\lcltConcentrationProb}{N^*}+ \frac{1}{|V|}+ \tp{1 - \frac{1}{4}\cdot \frac{\min\set{\lambda, 1/\lambda}\cdot \mathrm{e}^{-1}}{\tp{2\connectedComponentVarSize}^2}\cdot (t/\bar{\sigma})^2}^{\ell^*} \\
\le&~ \frac{4\lcltConcentrationProb}{N^*}+ \frac{1}{|V|} + \exp\tp{- \frac{\ell^*}{4}\cdot \frac{\min\set{\lambda, 1/\lambda}\cdot \mathrm{e}^{-1}}{\tp{2\connectedComponentVarSize}^2}\cdot (t/\bar{\sigma})^2}.
\end{align*}
\end{proof}


Now, we are ready to derive the local central limit theorem (\Cref{theorem:lclt}).
\begin{proof}[Proof of \Cref{theorem:lclt}]
Applying \Cref{lemma:establish-lclt-by-characteristic-function} to $\+Y^* = (\+Y - \bar{\mu})/\bar{\sigma} \in \alpha + \beta \=Z$, where $\alpha = -\bar{\mu}$ and $\beta = 1/\bar{\sigma}$, we have
\[
\sup_{t\in \+L} \abs{\beta N(t) - \Pr{\+Y^* = t}} \le \frac{1}{\bar{\sigma}}\int_{-\pi \bar{\sigma}}^{\pi \bar{\sigma}} \abs{\E{\mathrm{e}^{it \+Y^*}} - \E{\mathrm{e}^{it \+Z}}}\d t + \mathrm{e}^{-\pi^2 \bar{\sigma}^2/2}.
\] 
Combined with \Cref{lemma:lclt-low-fourier-phases-bound} and \Cref{lemma:lclt-high-fourier-phases}, for any $\eta > 0$, we have that 
\begin{align*}
&~\frac{1}{\bar{\sigma}} \int_{-\pi \bar{\sigma}}^{\pi \bar{\sigma}} \abs{\E{\mathrm{e}^{it \+Y^*}} - \E{\mathrm{e}^{it \+Z}}}\d t \\
\le&~ \frac{1}{\bar{\sigma}} \int_{-\pi \bar{\sigma}}^{\pi \bar{\sigma}} \min\set{(2|t|\eta + 1)\cdot \lcltCltTail  + 4\mathrm{e}^{-\eta^2/2}, \frac{4\lcltConcentrationProb}{N^*}+ \frac{1}{|V|} + \exp\tp{- \frac{\lcltHighPhaseConstant\cdot N^*}{\log(|V|)^3} \cdot (t/\bar{\sigma})^2}} \d t. \\
\end{align*}
Next, we set $\eta = \sqrt{2 \log |V|}$ and truncate the integral at $t^* = \Theta\tp{\bar{\sigma}\cdot \frac{ \log^2 |V| }{\sqrt{|V|}}}$. 
Hence, we have that 
\begin{align*}
&~\frac{1}{\bar{\sigma}} \int_{-\pi \bar{\sigma}}^{\pi \bar{\sigma}} \abs{\E{\mathrm{e}^{it \+Y^*}} - \E{\mathrm{e}^{it \+Z}}}\d t \\
\le&~ \frac{1}{\bar{\sigma}} \int_{-t^*}^{t^*} 4|t|\cdot \sqrt{2 \log |V|}\cdot \xi \d t + \frac{1}{\bar{\sigma}} \int_{t^* \le |t| \le \pi \bar{\sigma}} \exp\tp{- \frac{\lcltHighPhaseConstant\cdot N^*}{\log(|V|)^3} \cdot (t/\bar{\sigma})^2} \d t + \Theta\tp{\frac{1}{|V|}}\\
\le&~ O\tp{\xi \cdot \frac{\log^{5/2} |V|}{\sqrt{|V|}}}. \qedhere
\end{align*}
\end{proof}
\subsection{Local central limit theorem for hypergraph coloring}
\label{subsection:lclt-application}
In this subsection, we prove the local central limit theorem for hypergraph coloring with one special color.
Given a $k$-uniform hypergraph $H = (V, \+E)$ with the maximum degree $\Delta$. We consider hypergraph $q$-colorings on $H$. Let $\coloringOneRandomVariable$ be the random variable that counts the number of vertices whose color is $1$ in a uniformly random proper coloring.

\LCLTColorOneIntro*
\begin{proof}[Proof of \Cref{theorem:coloring-one-special-lclt-intro}]
We use \Cref{theorem:lclt} to prove this theorem.
Note that by running through the same proof in \Cref{section:clt}, it holds that $\bar{\sigma} = \Theta\tp{\sqrt{|V|}}$ and $N^* = |V|$.

Let $\Phi = (V, [q]^V, C)$ be the corresponding atomic $(k, q\Delta)$-CSP formula of the hypergraph $q$-coloring on $H$. 
It suffices to verify \Cref{condition:lclt-clt-condition} and \Cref{condition:lclt-LLL-condition}. Note that by \Cref{theorem:coloring-one-special-clt-intro}, \Cref{condition:lclt-clt-condition} holds that $\xi = O\tp{\frac{\ln |V|}{\sqrt{|V|}}}$.

In order to verify \Cref{condition:lclt-LLL-condition}. We first define the projection $\*f$ and $\*h$.
Let $B_1, B_2$ be two integers.
For any variable $v\in V$, we set $f_v(1) \defeq 1^\coF$ and $f_v(j) \defeq ((j-2) \mod B_1 + 2)^\coF$ for any $j \neq 1$.
We use the superscript $\coF$ to denote the projected symbol under $\*f$.
Note that under $\*f$, $q$ colors are distributed as $B_1 + 1$ projected symbols. 
And except for the $1^\coF$, each projected symbol corresponds to at least $\lfloor \frac{q-1}{B_1} \rfloor$ and at most $\lceil \frac{q-1}{B_1} \rceil$ original colors.

For any variable $v \in V$, we set $h_v(1^\coF) = 1^\coS$ and $h_v(j^\coF) \defeq ((j - 2) \mod B_2 + 1)^\coS$ for any $j\in \set{2, 3, \dots, B_1 + 1}$. We use the superscript $\coS$ to denote the two-step projected symbol.
Note that each two-step projected symbol corresponds to at least $\lfloor \frac{B_1}{B_2}\rfloor$ and at most $\lceil \frac{B_1}{B_2} \rceil + 1$ projected symbols, or at least $\tp{\lfloor \frac{B_1}{B_2} \rfloor - 1}\cdot \lfloor \frac{q-1}{B_1} \rfloor$ and at most $\tp{\lceil \frac{B_1}{B_2} \rceil + 1}\cdot \lceil \frac{q-1}{B_1} \rceil$ original colors.

Because $\mu$ is the uniform distribution over possible colorings, we have $\lambda = 1$.
And under the setting of $\*f$, we have $q^* = 1$, $N^* = |V|$.
Combined, we have that 
\[
\lcltProjectedVioProb \le \tp{\frac{\tp{\lceil \frac{B_1}{B_2} \rceil + 1}\cdot \lceil \frac{q-1}{B_1} \rceil}{q}}^k, \quad \lcltConditionVioProb \le \tp{\frac{1}{ \tp{\lfloor \frac{B_1}{B_2} \rfloor - 1}\cdot \lfloor \frac{q-1}{B_1} \rfloor }}^k, \quad \text{and }r_{\max} = \frac{1}{q}.
\]
Set $B_1  = \lfloor q^{3/4} \rfloor$ and $B_2 = \lfloor q^{1/2} \rfloor$. We now verify \Cref{condition:lclt-LLL-condition} respectively.

For \Cref{condition:lclt-LLL-condition}-(\ref{item:lclt-LLL-condition-Proj}), combined with the fact that $q\ge 700$, it holds that 
\[
\lcltProjectedVioProb \le \tp{\frac{(\lceil 4 q^{1/4} \rceil + 1)\cdot \lceil \frac{q - 1}{\lfloor q^{3/4} \rfloor}\rceil } {q}}^k \le \tp{\frac{20 q^{1/2}}{q}}^k.
\]
It suffices if $2\mathrm{e}^2 \cdot (20/q^{1/2})^k\cdot (\Delta k)^2 \le 1$. Rearranging this inequality gives that $q \ge 400\tp{2\mathrm{e}^2 k^2}^{2/k}\cdot \Delta^{4/k}$.
By the fact that $k\ge 50$, it suffices to verify that $q\ge 640 \Delta^{4/k}$ which is satisfied by the assumption.

For \Cref{condition:lclt-LLL-condition}-(\ref{item:lclt-LLL-condition-Cond}), combined with the fact that $q \ge 700$, it holds that 
\[
\lcltConditionVioProb \le \tp{\frac{1}{(\frac{1}{2}q^{1/4}-1)\cdot \frac{1}{4}q^{1/4}}}^k \le \tp{\frac{16}{q^{1/2}}}^k.
\]
It suffices to satisfy that $2\mathrm{e}\cdot (16/q^{1/2})^k \cdot q\Delta k \le 1$. Rearranging this inequality gives that $q \ge \tp{2\mathrm{e}\cdot 16^k \cdot k}^{\frac{2}{k-2}}\cdot \Delta^{\frac{2}{k-2}}$. Combined with the fact that $k\ge 50$, it suffices to verify that $q \ge 410\Delta^{\frac{2}{k-2}}$ which is satisfied by the assumption.

For \Cref{condition:lclt-LLL-condition}-(\ref{item:lclt-LLL-condition-concentration}), it suffices to verify that $(8\mathrm{e})^3 q^{-k+2} (\Delta k + 1)^{2 + \zeta} \le 1$ where $\zeta = \frac{2\ln(2-1/q)}{\ln(q) - \ln(2 - 1/q)}$. By the fac that $q \ge 700$, it holds that $\zeta \le 0.23628$. Hence, it suffices to verify that $(8\mathrm{e})^3 q^{-k + 2} (\Delta k + 1)^{2.24} \le 1$. So it suffices to verify that $q \ge \tp{(8\mathrm{e})^3 \cdot 2^{2.24}\cdot k^{2.24}}^{\frac{1}{k-2}}\Delta^{\frac{2.24}{\Delta - 2}}$ which is satisfied by the assumption.

\end{proof}

\subsection{Proof of \Cref{lemma:lclt-J-is-good}}
\label{subsection:lclt-J-is-good}
In this subsection, we prove \Cref{lemma:lclt-J-is-good}.
Recall that after simplifying the CSP formula $\Phi = (V, [q]^V, C)$ using $J$, it shatters into $\ell$ sub-CSP formulas. 
For any $v\in V$, let $\Phi_v = (V_v, [q]^{V_v}, C_v)$ denote the sub-CSP formula containing the variable $v$.
We first give a tail bound for $\Pr{|C_v|\ge i}$.
\begin{lemma}
\label{lemma:lclt-bounded-connected-component-size}
Suppose \Cref{condition:lclt-LLL-condition}-(\ref{item:lclt-LLL-condition-Proj}) and \Cref{condition:lclt-LLL-condition}-(\ref{item:lclt-LLL-condition-concentration}).
For any variable $v\in V$ and any $i\ge 1$, it holds that 
\[\Pr{|C_v|\ge i} \le 2\Delta \tp{\frac{1}{2}}^{\lfloor \frac{i}{\Delta k + 1} \rfloor}.\]
\end{lemma}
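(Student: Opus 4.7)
\textbf{Proof proposal for \Cref{lemma:lclt-bounded-connected-component-size}.}

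The plan is a standard 2-tree percolation argument on the dependency graph of $\Phi$, combined with the lopsided Lov\'asz local lemma (\Cref{theorem:HSS}) in order to transfer a product-distribution tail bound to the Gibbs measure $\mu$. Set $t\defeq \lfloor i/(\Delta k+1)\rfloor$ and write $D$ for the maximum degree of $\dependencyGraph{\Phi}$; since each constraint shares variables with at most $k(\Delta-1)$ others, $D\le \Delta k$. First, I would reduce $\{|C_v|\ge i\}$ to the existence of a large 2-tree of projected-unsatisfied constraints: if $|C_v|\ge i$, then $v$ belongs to one of at most $\Delta$ constraints $c_0\ni v$, and $c_0$ sits inside a connected subgraph $S$ of $\dependencyGraph{\Phi}$ of size $\ge i$, all of whose constraints are not satisfied by $J = (\*h\circ \*f)(\sigma)$. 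Running the explicit construction in \Cref{definition:2-tree-construction} on $S$ then yields a 2-tree $T\subseteq \dependencyGraph{\Phi}$ containing $c_0$ with $|T|\ge t$, consisting entirely of constraints not satisfied by $J$.

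Second, I would bound the probability of such a $T$. Because constraints in a 2-tree are pairwise non-adjacent in $\dependencyGraph{\Phi}$, their variable sets are pairwise disjoint, and hence under the product distribution $\+P$ with external field $\lambda$ on $1^{\coF}$ the events $E_c\defeq \{c\text{ is not satisfied by }(\*h\circ\*f)(\tau)\}$ for $c\in T$ are mutually independent, giving $\Pr[\tau\sim\+P]{\bigcap_{c\in T} E_c}\le \lcltProjectedVioProb^{|T|}$. I would then transfer this bound to $\mu$ via \Cref{theorem:HSS}, applied to $\Phi$ with weight function $x(c)\defeq 2\mathrm{e}\,r_{\max}^k$. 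Each atomic constraint $c$ is violated under $\+P$ with probability at most $r_{\max}^k$, and \Cref{condition:lclt-LLL-condition}-(\ref{item:lclt-LLL-condition-concentration}) forces $r_{\max}^{k-1}(\Delta k)^{2+\zeta}\ll 1$, so the LLL condition \cref{eq:LLL} holds with substantial slack. The event $\bigcap_{c\in T}E_c$ depends only on the $\le |T|k$ variables in $\vbl(T)$, which intersect at most $|T|k\Delta$ constraints, so the resulting HSS inflation factor is
\[
\prod_{c'\,:\,\vbl(c')\cap \vbl(T)\neq\emptyset}(1-x(c'))^{-1}\le \exp\!\tp{O\tp{|T|k\Delta\cdot r_{\max}^k}}\le 2.
\]

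Third, I would union bound. By \Cref{lemma:2-tree-number-bound} applied to $\dependencyGraph{\Phi}$, the number of 2-trees of size $t$ containing a fixed $c_0$ is at most $(\mathrm{e}D^2)^{t-1}/2\le (\mathrm{e}(\Delta k)^2)^{t-1}/2$. Combining these ingredients,
\[
\Pr{|C_v|\ge i}\le \Delta \cdot \frac{(\mathrm{e}(\Delta k)^2)^{t-1}}{2}\cdot 2\,\lcltProjectedVioProb^{t}=\Delta\,\lcltProjectedVioProb\,\tp{\mathrm{e}(\Delta k)^2\,\lcltProjectedVioProb}^{t-1},
\]
and invoking \Cref{condition:lclt-LLL-condition}-(\ref{item:lclt-LLL-condition-Proj}), which says $2\mathrm{e}^2\lcltProjectedVioProb(\Delta k)^2\le 1$, gives $\mathrm{e}(\Delta k)^2 \lcltProjectedVioProb\le 1/(2\mathrm{e})\le 1/2$, so the right-hand side is at most $\Delta\,(1/2)^{t-1}=2\Delta\,(1/2)^{t}$.

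The main obstacle is calibrating the HSS inflation correctly: the tail bound tracks ``projected-unsatisfied'' events (probability $\lcltProjectedVioProb$), while the LLL on $\Phi$ is driven by the much smaller genuine ``violation'' probability $\lesssim r_{\max}^{k}$, and we need enough gap between these two scales to absorb the inflation without spoiling the geometric rate. This gap is supplied precisely by the conjunction of \Cref{condition:lclt-LLL-condition}-(\ref{item:lclt-LLL-condition-Proj}) and \Cref{condition:lclt-LLL-condition}-(\ref{item:lclt-LLL-condition-concentration}); checking this compatibility quantitatively is the only delicate step.
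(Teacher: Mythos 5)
Your proposal follows essentially the same route as the paper's proof: extract a $2$-tree of projected-unsatisfied constraints from the connected component $C_v$, exploit disjointness of the variable sets to get a product bound $\lcltProjectedVioProb^{|T|}$ under $\+P$, transfer to $\mu$ via \Cref{theorem:HSS}, and union bound over $2$-trees using \Cref{lemma:2-tree-number-bound}; the factor $2\mathrm{e}^2$ in \Cref{condition:lclt-LLL-condition}-(\ref{item:lclt-LLL-condition-Proj}) then closes the geometric series exactly as you compute.

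One small calibration point: your claim that the total HSS inflation $\prod_{c'\colon \vbl(c')\cap\vbl(T)\neq\emptyset}(1-x(c'))^{-1}\le \exp\tp{O(|T|k\Delta\, r_{\max}^k)}\le 2$ is not uniform in $|T|$ — \Cref{condition:lclt-LLL-condition}-(\ref{item:lclt-LLL-condition-concentration}) makes $k\Delta\, r_{\max}^k$ a small constant but says nothing about $|T|\cdot k\Delta\, r_{\max}^k$ for large $2$-trees. The paper instead distributes the inflation as a factor $(1-x(c))^{-\Delta k}\le \mathrm{e}$ per constraint of $T$, replacing $\lcltProjectedVioProb^{|T|}$ by $(\mathrm{e}\,\lcltProjectedVioProb)^{|T|}$; the extra $\mathrm{e}$ per vertex is exactly what the constant $2\mathrm{e}^2$ in condition (\ref{item:lclt-LLL-condition-Proj}) is there to absorb, so your final bound survives unchanged after this trivial fix.
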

\begin{proof}
Given $\tau \sim \mu$ and $J$ constructed by the two-step projection $J = (\*h \circ \*f)(\tau)$. We can construct $\Phi_v$ by the following BFS procedure:
\begin{enumerate}
    \item let $V_v = \set{v}$ and $C_v = \emptyset$;
    \item let $c \in C\setminus C_v$ be a constraint satisfying that (1) $c$ is not satisfied by $J$, (2) $c$ crosses $V_v$, i.e., $\vbl(c)\cap V_v \neq \emptyset$ and $\vbl(c)\cap V\setminus V_v \neq \emptyset$. If such constraint $c$ does not exist, terminate this procedure;
    \item Set $V_v \gets V_v \cup \vbl(c)$ and $C_v \gets C_v \cup \set{c}$. Go to step (2).
\end{enumerate}
Next, for any $i\ge 1$, we consider the upper bound of $\Pr{|C_v|\ge i}$ using the $2$-tree argument.
Consider the dependency graph $G = (V, E)$ of $\Phi_v$.
Recall that the vertex set $V$ of $G$ is the set of constraints, and we say two constraints $c_1, c_2$ are adjacent if and only if $\vbl(c_1) \cap \vbl(c_2) \neq \emptyset$ and $c_1 \neq c_2$. 
Let $r$ be an arbitrary constraint in $C_v$ containing the variable $v$. Note that if $|C|\neq 0$, this $r$ exists.
Let $\+T \subseteq C_v$ be a $2$-tree on the dependency graph $G$ constructed by $C_v$ containing the $r$ using \Cref{definition:2-tree}.
It holds $|\+T| \ge \frac{|C_v|}{\Delta k + 1}$.
By the BFS procedure, it holds that for any constraint $c$ in the $2$-tree $\+T$, $c$ is not satisfied by $J$. 
Recall \Cref{condition:lclt-LLL-condition}-(\ref{item:lclt-LLL-condition-Proj}) and \Cref{condition:lclt-LLL-condition}-(\ref{item:lclt-LLL-condition-concentration}).
Combined with the fact that variables in $\+T$ are disjoint and \Cref{theorem:HSS}, letting $x(c) = \mathrm{e} r_{\max}^k$ for any $c\in C$, these constraints together contribute at most $\tp{\lcltProjectedVioProb\cdot \tp{1 - x(c)}^{-\Delta k}}^{|\+T|} \le \tp{\mathrm{e} \lcltProjectedVioProb}^{|\+T|}$.
Hence, we have $\Pr{|C_v| \ge i} \le \Delta \cdot \sum_{\+T: |\+T|\ge \lfloor\frac{i}{\Delta k + 1}\rfloor} \tp{\mathrm{e} \lcltProjectedVioProb}^{|\+T|}$ where the first factor $\Delta$ enumerate possible $r$.
Combined with \Cref{condition:lclt-LLL-condition}-(\ref{item:lclt-LLL-condition-Proj}) and \Cref{lemma:2-tree-number-bound}, we have that 
\[
\Pr{|C_v| \ge i} \le \Delta \cdot \sum_{j = \lfloor\frac{i}{\Delta k + 1}\rfloor} \tp{\mathrm{e} (\Delta k)^2 \cdot \mathrm{e}\lcltProjectedVioProb}^j \le 2\Delta \tp{\frac{1}{2}}^{\lfloor \frac{i}{\Delta k + 1} \rfloor}.
\]
\end{proof}

Next, we establish a concentration inequality on the number of variables satisfying that $1^\coF \in h_v^{-1}(J_v)$.
Let $\tau \sim \mu$ follow the Gibbs distribution $\mu$ with the external field $\lambda$ on the projected symbol $1^\coF$.
Recall that for any variable $v\in V$, $Y_v$ is the random variable indicating $\tau_v \in f_v^{-1}(1^\coF)$ or not and $\+Y = \sum_{v\in V} Y_v$.
We define new random variables with respect to the two-step projection.
Let $W_v$ be the random variable indicating $1^\coF \in h_v^{-1}(J_v)$ or not. Let $\+W = \sum_{v\in V} W_v$.

We establish a concentration inequality by the Chebyshev's inequality.
For the lower bound of expectation $\E{\+W}$, we obtain the next lemma by running through the same proof of \Cref{lemma:clt-marginal-bound} with different conditions.
\begin{lemma}
\label{lemma:lclt-concentration-expectation-lower-bound}
Suppose \Cref{condition:lclt-LLL-condition}-(\ref{item:lclt-LLL-condition-concentration}), it holds that
\[ \E{\+W} \ge \lcltExpectation N^*, \hbox{where } \lcltExpectation = \frac{q^* \lambda}{q - q^* + q^* \lambda} - \frac{q^* \lambda}{(q - q^* + q^* \lambda)\Delta k}.\]
\end{lemma}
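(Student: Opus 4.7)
The plan is to deduce the bound on $\E{\+W}$ from an essentially pointwise domination by the simpler random variable $Y_v$ analyzed in \Cref{lemma:clt-marginal-bound}. Concretely, I would first observe that for every variable $v \in V$ with $1^\coF \in \Sigma_v$, one has $W_v \ge Y_v$. Indeed, if $Y_v = 1$ then $\tau_v \in f_v^{-1}(1^\coF)$, so $f_v(\tau_v) = 1^\coF$ and hence $J_v = h_v(f_v(\tau_v)) = h_v(1^\coF)$, which means $1^\coF \in h_v^{-1}(J_v)$ and thus $W_v = 1$. Summing and using linearity of expectation gives
\[
\E{\+W} \ge \sum_{v \in V:\, 1^\coF \in \Sigma_v} \Pr[\mu]{Y_v = 1}.
\]

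The core step is then to show $\Pr[\mu]{Y_v = 1} \ge \lcltExpectation$ for each such $v$, replicating \Cref{lemma:clt-marginal-bound} but under the weaker hypothesis \Cref{condition:lclt-LLL-condition}-(\ref{item:lclt-LLL-condition-concentration}) in place of \Cref{condition:clt-condition}. I would invoke the lopsided local lemma (\Cref{theorem:HSS}) with the product distribution $\+P$ on $[q]^V$ weighted by the external field $\lambda$ on $1^\coF$. For each $c \in C$ let $B_c$ denote the event that $c$ is violated; under $\+P$ we have $\+P[B_c] \le r_{\max}^k$, where $r_{\max} = \frac{\max\{1,\lambda\}}{q - q^* + q^* \lambda}$, and each $B_c$ is dependent on at most $\Delta k$ other $B_{c'}$'s. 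Choosing $x(B_c) = \mathrm{e} r_{\max}^k$, the LLL feasibility condition reduces to $\mathrm{e} r_{\max}^k \le 1/(\Delta k+1)$ up to lower-order factors, which is implied (with ample slack) by (\ref{item:lclt-LLL-condition-concentration}) since $(8\mathrm{e})^3 r_{\max}^{k-1}(\Delta k+1)^{2+\zeta} \le 1$ already forces $r_{\max}^k \cdot \Delta k$ to be tiny.

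Applying \Cref{theorem:HSS} to the event $\{Y_v = 0\}$, which depends only on $\tau_v$ and is therefore adjacent to the at most $\Delta$ constraints containing $v$ (each involving $k$ variables, yielding the exponent $\Delta k$), we get
\[
\Pr[\mu]{Y_v = 0} \le \frac{q - q^*}{q - q^* + q^* \lambda}\bigl(1 - \mathrm{e} r_{\max}^k\bigr)^{-\Delta k} \le \frac{q - q^*}{q - q^* + q^* \lambda} + \frac{q^* \lambda}{(q - q^* + q^* \lambda)\,\Delta k},
\]
where the last step uses $(1-x)^{-\Delta k} \le 1 + \tfrac{\mathrm{e} \Delta k\, x}{1 - \mathrm{e} \Delta k\, x}$ together with the smallness of $\mathrm{e} r_{\max}^k \cdot \Delta k$ extracted from (\ref{item:lclt-LLL-condition-concentration}), arranged so that the additive slack matches exactly the $\frac{q^*\lambda}{(q-q^*+q^*\lambda)\Delta k}$ term. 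Taking complements yields $\Pr[\mu]{Y_v = 1} \ge \lcltExpectation$, and summing over the $N^*$ variables with $1^\coF \in \Sigma_v$ delivers $\E{\+W} \ge \lcltExpectation N^*$.

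No serious obstacle is expected: the argument is a straight adaptation of \Cref{lemma:clt-marginal-bound}, with the only nontrivial checks being (i) the pointwise bound $W_v \ge Y_v$, which is immediate from the definitions of $\*f$, $\*h$ and $J$, and (ii) verifying that (\ref{item:lclt-LLL-condition-concentration}) is strong enough to yield both the LLL feasibility inequality and the bound $(1 - \mathrm{e} r_{\max}^k)^{-\Delta k} \le 1 + \frac{1}{\Delta k}\cdot\frac{q^*\lambda}{q-q^*}$ needed to recover the exact constant appearing in $\lcltExpectation$. The latter is a direct numerical check using $r_{\max}^{k-1} \le ((8\mathrm{e})^3 (\Delta k+1)^{2+\zeta})^{-1} \cdot \frac{\min\{q-q^*, q^*\lambda\}}{q-q^*+q^*\lambda}$.
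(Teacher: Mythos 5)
Your proof is correct and follows essentially the same route the paper intends: the paper obtains this lemma by rerunning the proof of \Cref{lemma:clt-marginal-bound} (the lopsided LLL / \Cref{theorem:HSS} marginal bound) under \Cref{condition:lclt-LLL-condition}-(\ref{item:lclt-LLL-condition-concentration}), and your pointwise domination $W_v \ge Y_v$ is exactly the step that converts that per-variable bound into $\E{\+W}\ge \lcltExpectation N^*$. Your numerical checks that condition (\ref{item:lclt-LLL-condition-concentration}) suffices for both the LLL feasibility and the $(1-\mathrm{e}r_{\max}^k)^{-\Delta k}$ estimate are also sound.
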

For the upper bound of the variance $\Var{\+W}$, we obtain the following lemma by running through a similar proof in \Cref{section:clt} with different conditions.
\begin{lemma}
\label{lemma:lclt-concentration-variance-upper-bound}
Suppose \Cref{condition:lclt-LLL-condition}-(\ref{item:lclt-LLL-condition-concentration}), it holds that 
\[
\Var{\+W} \le 4 q^* \Delta k (\Delta k + 1) N^*.
\]
\end{lemma}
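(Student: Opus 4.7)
\textbf{Proof plan for \Cref{lemma:lclt-concentration-variance-upper-bound}.} The plan is to mirror the covariance bound already established in \Cref{section:clt} for the random variable $\+Y$ (per-variable variance plus recursive-coupling bound on the cross terms), but applied to the two-step-projected indicators $W_v$. Recall $W_v = \=I[(h_v\circ f_v)(\tau_v)=1^\coS]$, which is nonzero only when $1^\coF\in\Sigma_v$, and set $A_v\defeq (h_v\circ f_v)^{-1}(1^\coS)\subseteq [q]$ so that $W_v=\=I[\tau_v\in A_v]$. We decompose
\[
\Var{\+W}\;=\;\sum_{v\in V}\Var{W_v}\;+\;\sum_{\substack{u,v\in V\\ u\neq v}}\Cov{W_u,W_v}
\]
and bound the two contributions separately.

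For the diagonal, I would simply note that $W_v\in\{0,1\}$ and $W_v\equiv 0$ whenever $1^\coF\notin\Sigma_v$, so $\sum_v\Var{W_v}\le N^*$. For the off-diagonal, the plan is to pin $\tau_u$ to each value $x\in A_u$ and invoke a Wang--Yin-type coupling. Concretely,
\[
\Cov{W_u,W_v}=\sum_{x\in A_u}\Pr{\tau_u=x}\bigl(\Pr{W_v=1\mid \tau_u=x}-\Pr{W_v=1}\bigr),
\]
so that
\[
\sum_{v\neq u}\abs{\Cov{W_u,W_v}}\;\le\;\sum_{x\in A_u}\Pr{\tau_u=x}\sum_{v\neq u}\abs{\Pr{W_v=1\mid \tau_u=x}-\Pr{W_v=1}}.
\]
For each fixed $x\in A_u$, I would construct a coupling $(\*A,\*B)$ of $\mu(\cdot)$ and $\mu(\cdot\mid \tau_u=x)$ exactly as in the proof of \Cref{lemma:general-csp-covariance}: triangulate through the formula obtained by simplifying $\Phi$ after pinning $\tau_u=x$, apply \Cref{theorem:general-csp-hamming-distance-bound} once per added/removed clause in the (size $\le 2\Delta$) symmetric difference, and conclude $\E{d_{\-{Ham}}(\*A,\*B)}\le 2\Delta k(\Delta k+1)$. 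Because $W_v$ is a function of $\tau_v$ alone, the standard coupling lemma and linearity give
\[
\sum_{v\neq u}\abs{\Pr{W_v=1\mid \tau_u=x}-\Pr{W_v=1}}\;\le\;\E{d_{\-{Ham}}(\*A,\*B)}\;\le\;2\Delta k(\Delta k+1).
\]

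To apply \Cref{theorem:general-csp-hamming-distance-bound} I need to verify \Cref{condition:general-csp-sampling-condition} for the formulas that arise after pinning; the main technical check is that \Cref{condition:lclt-LLL-condition}-(\ref{item:lclt-LLL-condition-concentration}) is strictly stronger than \Cref{condition:general-csp-chebyshev-condition} used in \Cref{lemma:general-csp-covariance}, because the factor $\tfrac{q-q^*+q^*\lambda}{\min\{q-q^*,q^*\lambda\}}$ dominates $\max\{1,1/\lambda\}$ for any admissible $q^*\ge 1$. This observation (whose verification is the one genuinely non-routine step of the argument, but which is a direct calculation) will let me import the coupling bound verbatim. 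Putting everything together, summing the off-diagonal bound over $u$ and using $\sum_{u\in V}\Pr{W_u=1}=\E{\+W}\le N^*$ yields
\[
\sum_{u\neq v}\abs{\Cov{W_u,W_v}}\;\le\;2\Delta k(\Delta k+1)\,N^*,
\]
so combined with the diagonal bound we obtain
\[
\Var{\+W}\;\le\;N^*+2\Delta k(\Delta k+1)\,N^*\;\le\;4q^*\Delta k(\Delta k+1)\,N^*,
\]
since $q^*\ge 1$ and $\Delta k(\Delta k+1)\ge 2$. I do not expect any genuine obstacle beyond the condition-implication verification noted above; the rest is a transparent adaptation of the covariance calculation already carried out for $\+Y$.
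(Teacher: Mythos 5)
Your proposal is correct and follows essentially the route the paper intends (the paper only remarks that the bound follows "by running through a similar proof" of the covariance/coupling argument from the concentration and CLT sections, which is exactly what you do: per-variable variance at most $N^*$ plus the Wang--Yin recursive-coupling bound $\E{d_{\-{Ham}}}\le 2\Delta k(\Delta k+1)$ on the cross terms, summed against $\sum_u\Pr{W_u=1}\le N^*$). The only imprecision is your framing of the condition check: \Cref{condition:general-csp-chebyshev-condition} is stated for $q^*=1$, so the relevant verification is rather that \Cref{condition:lclt-LLL-condition}-(\ref{item:lclt-LLL-condition-concentration}) implies the sampling condition \Cref{condition:general-csp-sampling-condition} (with the $q^*$-adapted $r_{\max}$ and $k_{\min}=k-1$ after pinning), which holds because the prefactor $\tfrac{q-q^*+q^*\lambda}{\min\{q-q^*,\,q^*\lambda\}}$ is at least $1$.
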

Combined, we have that
\begin{lemma}
\label{lemma:lclt-concentration-two-step-projection-W}
Suppose \Cref{condition:lclt-LLL-condition}-(\ref{item:lclt-LLL-condition-concentration}), we have
\[
\forall \delta > 0,\quad \Pr{|\+W - \E{\+W}|\ge \delta\cdot \E{\+W}}\le \frac{\lcltConcentrationProb}{\delta^2 \cdot N^*},\hbox{ where } \lcltConcentrationProb = \frac{4q^* \Delta k(\Delta k + 1)}{\tp{\frac{q^*\lambda}{q-q^*+q^* \lambda}\cdot (1-1/(\Delta k))}^2}.
\]
\end{lemma}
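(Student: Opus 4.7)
The plan is to derive this concentration inequality for $\+W$ as a direct application of Chebyshev's inequality, combining the two preceding lemmas (\Cref{lemma:lclt-concentration-expectation-lower-bound} and \Cref{lemma:lclt-concentration-variance-upper-bound}) that have already been stated in the excerpt. No new analytic input is needed; the work is just in lining up the constants so that the ratio $\Var{\+W}/\E{\+W}^2$ exactly reproduces the quantity $\lcltConcentrationProb/N^*$ appearing in the statement.

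First I would invoke Chebyshev's inequality in the standard form: for any $\delta > 0$,
\[
\Pr{|\+W - \E{\+W}| \ge \delta \cdot \E{\+W}} \le \frac{\Var{\+W}}{\delta^{2} \cdot \E{\+W}^{2}}.
\]
Then I would plug in the variance upper bound from \Cref{lemma:lclt-concentration-variance-upper-bound}, namely $\Var{\+W} \le 4 q^* \Delta k (\Delta k + 1) N^*$, and the expectation lower bound from \Cref{lemma:lclt-concentration-expectation-lower-bound}, namely
\[
\E{\+W} \ge \lcltExpectation \cdot N^{*} = \left(\frac{q^{*} \lambda}{q - q^{*} + q^{*} \lambda}\right)\!\left(1 - \frac{1}{\Delta k}\right) N^{*}.
\]
Squaring the expectation bound and substituting produces
\[
\frac{\Var{\+W}}{\delta^{2} \cdot \E{\+W}^{2}} \le \frac{4 q^{*} \Delta k (\Delta k + 1) N^{*}}{\delta^{2} \cdot \lcltExpectation^{2} \cdot (N^{*})^{2}} = \frac{1}{\delta^{2} \cdot N^{*}} \cdot \frac{4 q^{*} \Delta k (\Delta k + 1)}{\left(\frac{q^{*} \lambda}{q - q^{*} + q^{*} \lambda}(1 - 1/(\Delta k))\right)^{2}},
\]
which is precisely $\lcltConcentrationProb / (\delta^{2} N^{*})$ as defined in the statement.

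There is no serious obstacle — the whole proof is two lines given the two preceding lemmas — so the only minor care point is to note that both input lemmas require exactly \Cref{condition:lclt-LLL-condition}-(\ref{item:lclt-LLL-condition-concentration}), and that this hypothesis is in fact the hypothesis of \Cref{lemma:lclt-concentration-two-step-projection-W}, so the application of Chebyshev goes through under the stated assumption. I would close by observing that $\E{\+W} \ge \lcltExpectation N^{*} > 0$ (in particular nonzero, so the division is legitimate) under the assumed regime on $q, \lambda, q^{*}, \Delta, k$, which is automatic from the local uniformity bound inside \Cref{lemma:lclt-concentration-expectation-lower-bound}.
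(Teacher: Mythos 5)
Your proof is correct and matches the paper's intended argument exactly: the paper states this lemma immediately after the expectation and variance bounds with the phrase "Combined, we have that," i.e., it is precisely the two-line Chebyshev computation you carried out. The constant bookkeeping ($\lcltExpectation^2$ in the denominator reproducing $\lcltConcentrationProb$) checks out.
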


Finally, we prove \Cref{lemma:lclt-J-is-good}.
\begin{proof}[Proof of \Cref{lemma:lclt-J-is-good}]

In \Cref{lemma:lclt-bounded-connected-component-size}, let $i = \connectedComponentSize$. It holds that for any $v\in V$, we have $\Pr{|C_v|\ge i} \le \frac{1}{|V|^2}$. By a union bound, it holds that with probability at least $1 - \frac{1}{|V|}$, $\forall v\in V, |C_v| \le \connectedComponentSize$.

In \Cref{lemma:lclt-concentration-two-step-projection-W}, set $\delta = \frac{1}{2}$, it holds that with probability at least $1 - \frac{4 \lcltConcentrationProb}{N^*}$, we have $\+W \ge \frac{\lcltExpectation\cdot N^*}{2}$. 
Combined with the fact that $\forall v\in V, |C_v| \le \connectedComponentSize$,
there are least $\ell^* = \left \lceil \frac{\lcltExpectation N^*}{2 \connectedComponentVarSize}\right \rceil$ sub-CSP formulas containing a variable $v$ with $1^\coF \in h^{-1}_v(J_v)$.

By a union bound of these two events, with probability at least $1 - \frac{4\lcltConcentrationProb}{N^*} + \frac{1}{|V|}$, we have $J$ is good.
\end{proof}
\subsection{Proof of \Cref{lemma:lclt-X_j-bound}}
\label{subsection:lclt-X_j-bound}
In this subsection, we prove \Cref{lemma:lclt-X_j-bound}.
At a high level, the idea is to introduce new constraints so that $v$ becomes ``independent'' in the CSP formula.
In particular, for any satisfying assignment of this new CSP formula, we can arbitrarily change $v$'s value.
Let $\mu'$ be the new Gibbs distribution of this new CSP formula.
Combined with the fact that $J$ is good, then there exists a number $x$ such that $\Pr[\mu']{X_j = x\mid J} \ge \frac{1}{\connectedComponentVarSize}$.
Then we show that $\Pr[\mu]{X_j = x \mid J}$ is near $\Pr[\mu']{X_j = x \mid J}$.
Note that we can change $v$'s value, we get a lower bound of $\Pr{X_j - X_j' = 1 \mid J}$.
\begin{proof}[Proof of \Cref{lemma:lclt-X_j-bound}]
Let $v$ be the variable in $V_j$ satisfying that $1^\coF \in h_v^{-1}(J_v)$.
For any constraint $c\in C_j$ with $v\in \vbl(c)$, we add $q$ copies of $c$ into $C'$, each forbids a possible value of $v$.
Let $\Phi' = (V', [q]^{V'}, C')$ be the new CSP formula with $V' = V_j$. 
Note that $\abs{C' \setminus C} \le q\Delta$ and for any constraint $c\in C'$, there are at most $q\Delta$ constraints that intersect $c$, i.e., $\abs{\set{c'\in C'\setminus \set{c} \mid \vbl(c')\cap \vbl(c) \neq \emptyset}} \le q\Delta$.

Let $\mu$ be the Gibbs distribution of $\Phi$ and $\mu'$ be the Gibbs distribution of $\Phi'$. Next, we upper bound $\Pr[\mu]{\Phi' \text{ is satisfied}\mid J}$ by the telescoping method. Let $C'\setminus C = \set{c_1, c_2, \dots, c_m}$. For any $0\le i \le m$, let $\Phi'_i = (V_j, [q]^{V_j}, C_i)$ where $C_i = C \cup \set{c_1, c_2, \dots, c_i}$. Let $\mu_i$ be the Gibbs distribution of $\Phi'_i$. Hence, we have $\Pr[\mu]{\Phi' \text{ is satisfied}\mid J} = \prod_{i = 1}^{m} \Pr[\mu_{i-1}]{\Phi'_{i} \text{ is satisfied} \mid J} = \prod_{i = 1}^{m} \tp{1 - \Pr[\mu_{i-1}]{\Phi'_{i} \text{ is not satisfied}\mid J}}$. Next, we upper bound the probability that $\Phi'_{i} \text{ is not satisfied}$ by \Cref{theorem:HSS}. 
Recall \Cref{condition:lclt-LLL-condition}-(\ref{item:lclt-LLL-condition-Cond}).
For any $c\in C'$, set $x(c) = \mathrm{e} \lcltConditionVioProb$. 
Hence, we have 
\[\Pr[\mu_{i-1}]{\Phi'_{i} \text{ is not satisfied}\mid J} \le \lcltConditionVioProb \tp{1 - \mathrm{e} \lcltConditionVioProb}^{-q\Delta k} \le \mathrm{e} \lcltConditionVioProb.\]
So, we have $\Pr[\mu]{\Phi' \text{ is satisfied}\mid J} \ge \tp{1 - \mathrm{e} \lcltConditionVioProb}^{q\Delta k} \ge \mathrm{e}^{-1}$.
Recall that $X_j$ is the number of variables in $\Phi_j$ whose projected symbol under $\*f$ is $1^\coF$. Combined with the fact that $J$ is good, we have that $0 \le X_j \le \connectedComponentVarSize$.
Hence, we have that there exists $x \ge 0$, such that $\Pr[\mu']{X_j = x\mid J} \ge \frac{1}{\connectedComponentVarSize}$.
We also have that $\max\set{\Pr[\mu']{X_j = x \land Y_v = 1 \mid J}, \Pr[\mu']{X_j = x \land Y_v = 0 \mid J}} \ge \frac{1}{2\connectedComponentVarSize}$.
W.l.o.g.\@, we assume that $\Pr[\mu']{X_j = x \land Y_v = 1 \mid J} \ge \frac{1}{2\connectedComponentVarSize}$. Note that $v$ is an independent variable, by changing $v$'s value, we have $\Pr[\mu']{X_j = x - 1 \land Y_v = 0 \mid J} \ge \frac{\min\set{\lambda, 1/\lambda}}{2\connectedComponentVarSize}$. Hence, we have that $\Pr[\mu']{X_j - X_j'\mid J} \ge \frac{\min\set{\lambda, 1/\lambda}}{\tp{2\connectedComponentVarSize}^2}$. Combined with $\Pr[\mu]{\Phi' \text{ is satisfied}\mid J} \ge \mathrm{e}^{-1}$, we have that $\Pr[\mu]{X_j - X_j'\mid J} \ge \frac{\min\set{\lambda, 1/\lambda}\cdot \mathrm{e}^{-1}}{\tp{2\connectedComponentVarSize}^2}$.
\end{proof}


%

\section{Fisher zeros}
\label{section:fisher-zeros}
In this section, we derive the Fisher zeros for CSP formulas.
Let $\Phi = (V, \*Q, C)$ be a $(k, \Delta)$-CSP formula and let $\beta \in \=C$. 
We define the partition function with respect to $\beta$ as 
\begin{equation}
\label{eq:fisher-partition-function}
Z^{\-{fs}} = Z^{\-{fs}}(\Phi, \beta) \defeq \sum_{\sigma\in \*Q} \beta^{\abs{ \set{c\in C\mid c \text{ is violated by } \sigma }}}.
\end{equation}
We remark that $\beta$ can be seen as a ``penalty'' when a constraint is violated.
Two interesting points are $\beta = 0$, which refers to the uniform distribution over all satisfying assignment, and $\beta = 1$ which refers to the simple product distribution.

To obtain zero-freeness in $\beta$, we follow a similar reduction in \cite[Section 6]{liu2024phase} that reduces the Fisher zeros into Lee-Yang zeros (of another CSP) by introducing new variables. Then we apply the framework in \Cref{section:CSP}, specifically, we verify \Cref{condition:csp-conditional-measure-analysis} and apply \Cref{theorem:sufficient-condition-zero-freeness}.

We first provide the reduction.
We remark that when $\beta = 1$, the partition function $Z^{\-{fs}}(\Phi, 1) = |\*Q|$ which is trivially not zero. Hence, we focus on the case $\beta \neq 1$.
\begin{definition}[Reduction from the Fisher zeros to Lee-Yang zeros]
\label{definition:fisher-reduction}
Given a CSP formula $\Phi = (V, \*Q, C)$ with a complex number $\beta\in \=C\setminus\set{1}$, we construct a new CSP formula $\Phi' = (V', \*Q', C')$ and a new complex number $\lambda = \frac{\beta}{1 - \beta}$, denoted by $(\Phi', \lambda) = \-{red}(\Phi, \beta)$ as follows:
\begin{itemize}
\item for any constraint $c\in C$, we construct a new constraint $c'$ and add a new variable $v_{c'}$ whose domain is $\set{0, 1}$. 
Let $\tau = c^{-1}(\text{False})$ be the violating partial assignment of the constraint $c$. Then we set $\tau_{v_{c'}} = 0$ and $c{'^{-1}}(\text{False}) = \tau$. Let $C' = \set{c' \mid c \in C}$.
\item Let $V' = V\cup \set{v_{c'} \mid c'\in C'}$.
\end{itemize}
\end{definition}
We call variables in $V$ as \emph{original variables} and variables in $V'\setminus V$ as \emph{new variables}.
Recall that $\Omega_{\Phi'}$ is the set of all satisfying assignments of the CSP formula $\Phi'$.
Let $Z^{\-{ly}} = Z^{\-{ly}}(\Phi', \lambda)$ be the partition function with respect to Lee-Yang zeros defined as 
\begin{equation}
\label{eq:lee-yang-partition-function}
Z^{\-{ly}}(\Phi', \lambda)\defeq \sum_{\sigma \in \Omega_{\Phi'}} \lambda^{\abs{\set{ v_{c'} = 1 \mid c'\in C' }}}.
\end{equation}
\begin{lemma}[Properties of the reduction]
\label{lemma:fisher-reduction-properties}
Let $\Phi=(V, \*Q, C)$ be a $(k, \Delta)$-CSP formula and $\beta \in \=C\setminus \set{1}$ be a complex number. Let $(\Phi', \lambda) = \-{red}(\Phi, \beta)$. Then,
\begin{enumerate}
\item $\Phi'$ is a $(k+1, \Delta)$-CSP formula. \label{item:reduction-property-1}
\item $Z^{\-{fs}}(\Phi, \beta) = (1 - \beta)^{|C|}\cdot Z^{\-{ly}}(\Phi', \lambda)$.\label{item:reduction-property-2}
\end{enumerate}
\end{lemma}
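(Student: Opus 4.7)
The proof is essentially two checks, and I would organize it as such.

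For part~(\ref{item:reduction-property-1}), my plan is to verify the two defining parameters directly from \Cref{definition:fisher-reduction}. Each new constraint $c'$ has variable set $\vbl(c)\cup\{v_{c'}\}$, so $|\vbl(c')|=k+1$. For the degree bound, each original variable $v\in V$ belongs to the same constraints as before (one new constraint per old one), so its degree is preserved and remains at most $\Delta$, while each auxiliary variable $v_{c'}$ belongs to exactly one constraint. Hence $\Phi'$ is a $(k+1,\Delta)$-CSP formula.

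For part~(\ref{item:reduction-property-2}), the plan is to sum over extensions of each $\sigma\in\*Q$ to assignments in $\Omega_{\Phi'}$. The key observation is a constraint-by-constraint factorization: for each $c\in C$ with corresponding $c'$, the value of $v_{c'}$ is only forced by the violating assignment $\tau_{v_{c'}}=0$, so
\begin{itemize}
\item if $\sigma$ satisfies $c$, then both $v_{c'}=0$ and $v_{c'}=1$ extend to satisfying assignments of $c'$, contributing $1+\lambda$;
\item if $\sigma$ violates $c$, then only $v_{c'}=1$ extends to a satisfying assignment of $c'$, contributing $\lambda$.
\end{itemize}
Let $V(\sigma)$ denote the number of constraints $c\in C$ violated by $\sigma$. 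Summing gives
\begin{equation*}
Z^{\-{ly}}(\Phi',\lambda)=\sum_{\sigma\in\*Q}(1+\lambda)^{|C|-V(\sigma)}\lambda^{V(\sigma)}=(1+\lambda)^{|C|}\sum_{\sigma\in\*Q}\left(\frac{\lambda}{1+\lambda}\right)^{V(\sigma)}.
\end{equation*}
Substituting $\lambda=\beta/(1-\beta)$ yields $1+\lambda=(1-\beta)^{-1}$ and $\lambda/(1+\lambda)=\beta$, so the right-hand side equals $(1-\beta)^{-|C|}Z^{\-{fs}}(\Phi,\beta)$. Rearranging delivers the desired identity $Z^{\-{fs}}(\Phi,\beta)=(1-\beta)^{|C|}Z^{\-{ly}}(\Phi',\lambda)$.

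There is no real obstacle: the statement is a bookkeeping lemma that unfolds definitions. The only mild subtlety is ensuring that the factorization holds over the complex plane — since the contributions $1+\lambda$ and $\lambda$ are attached to a single auxiliary variable whose behavior depends only on whether $c$ is satisfied by $\sigma$, the factorization is purely formal and remains valid for any $\beta\in\=C\setminus\{1\}$ (which also guarantees $1-\beta\neq 0$ so that $\lambda$ is well-defined). Consequently, the identity extends to a zero-freeness transfer statement: $Z^{\-{fs}}(\Phi,\beta)\neq 0$ if and only if $Z^{\-{ly}}(\Phi',\lambda)\neq 0$, which is precisely the reduction needed to import the results of \Cref{section:CSP} into the Fisher-zero setting.
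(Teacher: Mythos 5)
Your proof is correct and follows essentially the same route as the paper: part (1) is the same degree/arity check, and part (2) is the same per-constraint factorization (the paper writes the factor as $(1+\lambda)[c'\text{ sat}]+\lambda[c'\text{ not sat}]$ via Iverson brackets, which is exactly your $(1+\lambda)^{|C|-V(\sigma)}\lambda^{V(\sigma)}$ after regrouping) followed by the same substitution $\lambda=\beta/(1-\beta)$.
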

\begin{proof}
For \cref{item:reduction-property-1}, note that in \Cref{definition:fisher-reduction}, we introduce a new variable for each constraint and this new variable only belongs to this constraint. Hence, \cref{item:reduction-property-1} follows.

We then prove \cref{item:reduction-property-2}. For any logical expression $P$, we define the Iverson bracket $[P] = 1$ if $P$ is true, otherwise $[P] = 0$. 

Recall \cref{eq:lee-yang-partition-function}.
Also recall that $V$ is the original variable set of $\Phi$.
For the reduced CSP formula $\Phi'$. we enumerate partial assignments on the original variables $V$. If a constraint $c'$ is already satisfied by the partial assignment, then the new variable $v_{c'}$ can choose $0$ or $1$, otherwise we have $v_{c'}=1$.
Recall $\*Q$ is the original alphabet of $\Phi$, $|C|=|C'|$.
By this reason, we reformulate $Z^{\-{ly}}$ as
\begin{equation*}
\begin{aligned}
Z^{\-{ly}} &= \sum_{\sigma\in \*Q} \prod_{c'\in C'} \tp{ (1+\lambda)[c' \text{ is satisfied by } \sigma] + \lambda [c' \text{ is not satisfied by } \sigma] }. \\
\end{aligned}
\end{equation*}
Recall that $\lambda = \frac{\beta}{1 - \beta}$. Thus, we have
\begin{equation*}
\begin{aligned}
Z^{\-{ly}}&= \tp{\frac{1}{1-\beta}}^{|C|}\sum_{\sigma\in \*Q}\prod_{c'\in C'} \tp{[c' \text{ is satisfied by } \sigma] + \beta\cdot [c' \text{ is not satisfied by } \sigma] }.
\end{aligned}
\end{equation*}
Combined with \cref{eq:fisher-partition-function}, \cref{item:reduction-property-2} holds and this lemma follows.
\end{proof}
By the above lemma, to obtain the zero-freeness, it suffices to show that $Z^{\-{ly}}(\Phi', \lambda)\neq 0$. We provide the next theorem by leveraging the framework in \Cref{section:CSP}.
\begin{theorem}[Zero-freeness for Fisher zeros]
\label{theorem:fisher-zero-freeness}
Let $\Phi = (V, \*Q, C)$ be a $(k, \Delta)$-CSP formula and $\beta \in \=C\setminus \set{1}$ be a complex number. Let $(\Phi', \lambda) = \-{red}(\Phi, \beta)$ be constructed by \Cref{definition:fisher-reduction} where $\Phi' = (V', \*Q', C')$. Write $C' = \set{c'_1, c'_2, \dots, c'_m}$. For any $i\le m$, we define $C'_i = \set{c'_1, c'_2, \dots, c'_i}$ and $\Phi'_i = (V', \*Q', C'_i)$.
Let $\*f$ be a state-compression scheme (\Cref{definition:state-compression}) and let $\Sigma_{V'}$ be the projected domain.

Suppose that the following holds: 
\begin{enumerate}
\item for any constraint $c'\in C'$, $f_{v_{c'}}(0) = 0^\coF$ and  $f_{v_{c'}}(1) = 1^\coF$;\label{item:fisher-zero-freeness-property-1}
\item for any original variable $v\in V$, the projected domain does not contain $1^\coF$, i.e., $\forall v\in V,  1^\coF \not\in \Sigma_{v}$; \label{item:fisher-zero-freeness-property-2}
\item $\Phi'$ satisfies \cref{eq:csp-Z_0-is-not-0} and $\Phi_1', \Phi_2', \dots, \Phi'_m$ satisfy \Cref{condition:csp-conditional-measure-analysis}, \label{item:fisher-zero-freeness-property-3}
\end{enumerate}
then we have $Z^{\-fs}(\Phi, \beta) \neq 0$.
\end{theorem}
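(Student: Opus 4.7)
The plan is to use the reduction in Definition~\ref{definition:fisher-reduction} to convert the Fisher-zero question for $\Phi$ at $\beta$ into a Lee-Yang-zero question for $\Phi'$ at $\lambda=\beta/(1-\beta)$, and then invoke the general zero-freeness theorem (Theorem~\ref{theorem:sufficient-condition-zero-freeness}) that has already been developed in Section~\ref{section:CSP}. Concretely, by Lemma~\ref{lemma:fisher-reduction-properties}-(\ref{item:reduction-property-2}) we have
\[
Z^{\-{fs}}(\Phi,\beta) \;=\; (1-\beta)^{|C|}\cdot Z^{\-{ly}}(\Phi',\lambda),
\]
and since $\beta\in\=C\setminus\{1\}$ the prefactor $(1-\beta)^{|C|}$ is nonzero. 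Hence it suffices to establish $Z^{\-{ly}}(\Phi',\lambda)\neq 0$.

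Next, I would verify that the quantity $Z^{\-{ly}}(\Phi',\lambda)$ as defined in~\cref{eq:lee-yang-partition-function} coincides exactly with the partition function $Z(\Phi',\lambda,\*f,1^\coF)$ introduced in Definition~\ref{definition:complex-extensions-csp}. This is where hypotheses (\ref{item:fisher-zero-freeness-property-1}) and (\ref{item:fisher-zero-freeness-property-2}) enter: property (\ref{item:fisher-zero-freeness-property-2}) ensures that no original variable $v\in V$ is ever mapped to the symbol $1^\coF$, so original variables never contribute a factor of $\lambda$ to the weight; property (\ref{item:fisher-zero-freeness-property-1}) ensures that each new variable $v_{c'}$ contributes a factor of $\lambda$ exactly when $\sigma_{v_{c'}}=1$, which matches the exponent $|\{v_{c'}=1\mid c'\in C'\}|$ in~\cref{eq:lee-yang-partition-function}. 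So the two weight functions on $\Omega_{\Phi'}$ agree pointwise, hence so do the partition functions.

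With this identification in hand, I would apply Theorem~\ref{theorem:sufficient-condition-zero-freeness} directly to the CSP formula $\Phi'=(V',\*Q',C')$, with the state-compression scheme $\*f$ and the external field $\lambda$ on $1^\coF$. The two hypotheses of that theorem are precisely that $\Phi'$ satisfies~\cref{eq:csp-Z_0-is-not-0} and that $\Phi'_1,\Phi'_2,\dots,\Phi'_m$ satisfy Condition~\ref{condition:csp-conditional-measure-analysis}, both of which are supplied by assumption~(\ref{item:fisher-zero-freeness-property-3}). Therefore Theorem~\ref{theorem:sufficient-condition-zero-freeness} yields $Z(\Phi',\lambda,\*f,1^\coF)=Z^{\-{ly}}(\Phi',\lambda)\neq 0$, and combining this with the factorization above completes the proof.

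The argument is essentially a packaging step, so there is no single hard obstacle, but the one place where care is needed is the bookkeeping in the second paragraph: making sure that the state-compression scheme we fix on $V'$ is compatible with Definition~\ref{definition:complex-extensions-csp} so that $Z^{\-{ly}}$ and $Z(\Phi',\lambda,\*f,1^\coF)$ really agree. Any real difficulty in proving Fisher zero-freeness for concrete models is then pushed into checking hypothesis~(\ref{item:fisher-zero-freeness-property-3}) for the particular reduced instance, which is carried out using the same tools developed for Lee-Yang zeros in Section~\ref{section:applications}.
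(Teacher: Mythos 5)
Your proposal is correct and follows exactly the same route as the paper's proof: factor $Z^{\-{fs}}(\Phi,\beta)=(1-\beta)^{|C|}Z^{\-{ly}}(\Phi',\lambda)$ via Lemma~\ref{lemma:fisher-reduction-properties}, identify $Z^{\-{ly}}(\Phi',\lambda)$ with $Z(\Phi',\lambda,\*f,1^\coF)$ using hypotheses (1) and (2), and conclude by Theorem~\ref{theorem:sufficient-condition-zero-freeness} under hypothesis (3). Your spelled-out justification of the identification step is in fact slightly more explicit than the paper's.
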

\begin{proof}
By \Cref{lemma:fisher-reduction-properties}, it holds that $Z^{\-{fs}}(\Phi, \beta) = (1-\beta)^{|C|}\cdot Z^{\-{ly}}(\Phi', \lambda)$. It suffices to show $Z^{\-{ly}}(\Phi', \lambda)\neq 0$. Recall the complex extension of CSP formulas \Cref{definition:complex-extensions-csp}. Combined with \cref{item:fisher-zero-freeness-property-1} and \cref{item:fisher-zero-freeness-property-2}, we have $Z^{\-{ly}}(\Phi', \lambda) = Z(\Phi', \lambda, \*f, 1^\coF)$.

Finally, by \Cref{theorem:sufficient-condition-zero-freeness} and \cref{item:fisher-zero-freeness-property-3}, we have $Z(\Phi', \lambda, \*f, 1^\coF)\neq 0$. Hence, this theorem follows.
\end{proof}

By running similar calculations for $Z^{\-{fs}}(\Phi, \beta)$ as in \Cref{section:application-coloring-one-special-color}, one can show that under the condition $k\ge 50$ and $q\ge 20\Delta^{\frac{5}{k-2.5}}$, there is a zero-free strip around $[0,1]$ for the hypergraph $q$-coloring. 
As a corollary, there is an FPTAS for approximating the number of hypergraph $q$-colorings through Barvinok's interpolation method~\cite{barvinok2016combinatorics,patel2017deterministic,Liu2017TheIP}.
We remark that currently the best FPTAS~\cite{wang2024sampling}, which is based on the recursive coupling and Moitra's LP approach, works for $k\ge 8$ and $q\ge 70\Delta^{\frac{2 + \zeta}{k - 2 - \zeta}}$ with $q\ge \exp(O(1/\zeta))$.
We leave it as an open problem to close this gap using zero-freeness.

\end{document}